\documentclass[11pt]{article}

\usepackage[margin=1in]{geometry}
\usepackage{amsmath,amssymb,amsthm,mathtools}
\usepackage{enumitem}
\usepackage{array,booktabs}
\usepackage{microtype}
\usepackage[table]{xcolor}
\usepackage{aliascnt}
\usepackage[
  colorlinks=true,
  linkcolor=blue,
  citecolor=blue,
  urlcolor=blue,
  linkbordercolor=blue,
  citebordercolor=blue,
  urlbordercolor=blue
]{hyperref}
\AtBeginDocument{\hypersetup{pdfborder={0 0 0.7}}}
\usepackage[nameinlink,capitalize,noabbrev]{cleveref}
\usepackage{lmodern}

\AddToHook{cmd/thebibliography/after}{\addcontentsline{toc}{section}{\refname}}

\newtheorem{theorem}{Theorem}[section]
\newaliascnt{proposition}{theorem}
\newtheorem{proposition}[proposition]{Proposition}
\aliascntresetthe{proposition}
\newaliascnt{lemma}{theorem}
\newtheorem{lemma}[lemma]{Lemma}
\aliascntresetthe{lemma}
\newaliascnt{corollary}{theorem}
\newtheorem{corollary}[corollary]{Corollary}
\aliascntresetthe{corollary}
\theoremstyle{definition}
\newaliascnt{definition}{theorem}

\aliascntresetthe{definition}
\theoremstyle{remark}
\newaliascnt{remark}{theorem}

\aliascntresetthe{remark}
\crefname{proposition}{proposition}{propositions}
\crefname{lemma}{lemma}{lemmas}
\crefname{corollary}{corollary}{corollaries}
\crefname{definition}{definition}{definitions}
\crefname{remark}{remark}{remarks}

\newcommand{\C}{\mathbb C}
\newcommand{\R}{\mathbb R}
\DeclareMathOperator{\E}{\mathbb E}
\DeclareMathOperator{\Prb}{\mathbb P}
\DeclareMathOperator{\Tr}{Tr}
\DeclareMathOperator{\rank}{rank}
\newcommand{\Id}{\mathrm I}
\newcommand{\ii}{\mathrm i}
\newcommand{\ee}{\mathrm e}
\newcommand{\dd}{\mathop{}\!\mathrm d}
\newcommand{\Ddir}{\mathrm D}
\newcommand{\F}{\mathrm F}
\newcommand{\vct}[1]{\boldsymbol{#1}}
\newcommand{\cD}{\mathcal D}
\newcommand{\cH}{\mathcal H}
\newcommand{\cI}{\mathcal I}
\newcommand{\cK}{\mathcal K}
\newcommand{\ip}[2]{\langle #1,#2\rangle}
\newcommand{\norm}[1]{\left\lVert #1\right\rVert}
\newcommand{\abs}[1]{\left\lvert #1\right\rvert}
\newcommand{\eps}{\varepsilon}

\title{Optimal Low-Rank Quantum State Tomography with\\
Bounded-Sample Joint Measurements}
  \author{Ashwin Nayak\thanks{Department
  of Combinatorics and Optimization, and Institute for Quantum Computing, University of Waterloo.
  Email: \texttt{ashwin.nayak@uwaterloo.ca}.}
  \and
  Xingyu Zhou\thanks{Department of Computer Science, University of British Columbia.
  Email: \texttt{zxingyu@cs.ubc.ca}.}}
\date{}

\begin{document}
\maketitle

\begin{abstract}
We determine the optimal sample complexity of low-rank quantum state tomography
when each measurement may act jointly on at most \(t\) samples.  For
sufficiently small \(\eps\), estimating an unknown state on \(\C^d\) of rank
at most \(r\) to trace norm error \(\eps\) with constant success probability
requires, and is achievable with,
\[
  \Theta\left(
    \frac{dr}{\eps^2}
    \max\left\{1,\frac r{\sqrt t}\right\}
  \right)
\]
samples.  The lower bound allows the protocol to choose each joint measurement
adaptively using all previous classical outcomes; the matching upper bound is
nonadaptive.  Thus joint measurements on at most \(t\) samples improve the complexity of algorithms making
single-sample measurements by at most a factor \(\sqrt t\). Further, measuring order \(r^2\)
samples jointly is necessary and sufficient to attain the unrestricted
collective rate.

For the lower bound, we vary the support of a state with fixed uniform
spectrum and bound the Fisher information trace of every
joint measurement on \(t\) samples.  The adaptive Fisher chain rule and the
van Trees inequality then give the trace norm lower bound.  For the upper
bound, we construct and analyze a nonadaptive tomography protocol based on a
Gaussian joint measurement.  An explicit second moment identity and a
conditional Gaussian law outside the state's support give a rank-dependent
error analysis, yielding the matching rate.
\end{abstract}

\clearpage
\setcounter{tocdepth}{2}
\tableofcontents
\clearpage

\section{Introduction}
\label{sec:introduction}

Quantum state tomography reconstructs a classical description of an unknown
state~\(\rho\), or more precisely of an approximation to~$\rho$, from measurements on independent samples of the state.  A state on
\(\C^d\) is represented by a positive semidefinite matrix of trace one. For a
\(q\)-qubit system, \(d=2^q\), so an unrestricted state has order \(d^2\)
real parameters. Intuitively, we would therefore expect a tomography algorithm to require of the order of~$d^2$ samples for a sufficiently accurate approximation. Perhaps surprisingly, the optimal sample complexity of the problem was established only a decade ago, by O'Donnell and Wright~\cite{ODonnellWright2016} and Haah, Harrow, Ji, Wu, and Yu~\cite{HaahHarrowJiWuYu2017}. These works proved that~$\Theta(d^2/ \varepsilon^2)$ samples of~$\rho$ are necessary and sufficient to learn the state to within~$\varepsilon$ in trace distance.  

Pure quantum states, central targets of state preparation, have rank
\(r=1\) even as the ambient dimension \(d\) grows exponentially with
the number of qubits.  They can be specified with only order \(d\) real parameters, and an early result due to Hayashi~\cite{Hayashi98-pure-state-tomography} shows how they may be learnt with correspondingly fewer samples, namely~$O(d)$ samples. Simpler and more efficient algorithms have been discovered since (see, e.g., Ref.~\cite{GutaKahnKuengTropp2020}). A natural question is whether mixed states with larger but bounded rank can similarly be learnt with fewer than~$\Theta(d^2)$ samples. In addition to being of theoretical interest, such states are also observed in practice when preparation noise leaves most of the spectral weight on a few eigenvectors. 
For example, Gross, Liu, Flammia, Becker, and Eisert~\cite{GrossLiuFlammiaBeckerEisert2010} discuss a reconstructed state from an experiment with eight ions, with~\(d=256\) and~\(99\%\)
of its spectral weight on just~\(11\) eigenvectors.
Motivated by these considerations, low rank tomography and its variants have been studied extensively. The optimal sample complexity for low rank tomography has again been shown to match the number of parameters required to specify the state. More precisely, $\Theta(dr/\varepsilon^2)$ samples of a state~$\rho$ with rank at most~$r$ are necessary and sufficient for reconstructing an~$\varepsilon$ approximation in trace distance~\cite{ODonnellWright2016,HaahHarrowJiWuYu2017,ScharnhorstSpileckiWright2025}. 

In this work, we return to the problem of learning states with bounded rank. While the sample-optimal algorithms reduce the number of samples needed to the intrinsic dimension of the state, they do so by employing a joint measurement on all the samples used. Joint measurements are, in a sense, unavoidable: algorithms that measure every sample separately in a nonadaptive fashion require~$\Omega(dr^2/\eps^2)$ samples~\cite{FlammiaGrossLiuEisert2012,HaahHarrowJiWuYu2017,LoweNayak2025}. This sample complexity is also known to be optimal (see Ref.~\cite{KuengRauhutTerstiege2017} with details provided by Ref.~\cite[Sec.~II.A]{HaahHarrowJiWuYu2017} and Ref.~\cite[Sec.~B.2]{LoweNayak2025} for one algorithm; and Ref.~\cite{GutaKahnKuengTropp2020} for a second algorithm). Thus restricting the measurements to one sample at a time can cost a factor of order~\(r\) in sample complexity.

The above two extremes of measurement strategies also demand different experimental resources.  A protocol with joint measurements may require many samples to be stored and processed coherently at once, whereas a protocol measuring individual samples requires no coherent storage or manipulation across samples. Joint measurements of a large number of samples may not be feasible from a practical point of view. Moreover, we may only have access to one or at best a few quantum devices that prepare the state, and the storage of states over time may be difficult. Even assuming we have access to a suitable system that can be prepared with sufficiently many samples, joint measurements may be computationally more demanding in time complexity. Finally, only a limited set of measurements may be available in an experimental set-up. This motivates the consideration of learning models that incorporate such limitations, and especially models in which a bounded number of samples are measured jointly. Measurements in which each sample is measured separately have been called by several different names in the literature: \emph{single-copy\/}, \emph{unentangled\/}, \emph{incoherent\/}, or \emph{independent\/}. Similarly, joint measurements have also been called \emph{entangled\/}, \emph{coherent\/}, or \emph{with quantum memory\/}. 

Joint measurements on a bounded number of samples interpolate between
the extremes: each measurement acts on at most~\(t\) fresh samples, after which only
classical information is retained by the algorithm. A further degree of freedom that is available with such measurements is adaptivity --- the ability to tune measurements based on the outcomes of the previous measurements. The parameter~\(t\) thus limits joint quantum processing while
allowing classical adaptivity across measurements.

Until recently, little was known about sample complexity of state tomography with adaptive bounded-sample joint measurements. Lowe and Nayak~\cite{LoweNayak2025} (first presented at QIP~2022) lifted the~$\Omega(dr^2/\eps^2)$ bound to adaptive single-sample measurements, when the measurements are restricted to a fixed set such as the set of efficiently implementable measurements (i.e., those with polynomial-size circuits). Chen, Huang, Li, Liu, and Sellke~\cite{ChenHuangLiLiuSellke2023} subsequently proved an~$\Omega(d^3/\eps^2)$ unconditional lower bound for any adaptive single-sample measurement strategy (with a finite number of outcomes). They also demonstrated the power of adaptivity when approximating the state with respect to infidelity. Both results apply when no guarantee on the rank of the state is given.

Chen, Li, and Liu~\cite{ChenLiLiu2024} established an almost~\(\sqrt t\) factor improvement over single-sample tomography with adaptive~$t$-sample joint measurements for states of arbitrary rank (i.e., without a promise on the rank), for~\(t\le\min\{d^2,(\sqrt d/\eps)^{0.2}\}\). Their lower bound assumes \(t\le\eps^{-0.1}\), sufficiently small \(\eps\), and sufficiently large \(d\). The upper bound was subsequently improved by Pelecanos, Spilecki, and Wright~\cite{PelecanosSpileckiWright2025}, matching the lower bound due to Chen, Li, and Liu within this restricted range. For low-rank states, however, the optimal sample complexity for adaptive strategies remained open, even in the case of single-sample measurements. It was not clear whether the techniques from prior works could be strengthened to derive optimal rank-sensitive bounds.

We determine the entire rank-dependent interpolation between single-sample and the fully joint measurement strategies, up to universal constants. The lower bound allows for arbitrary classical adaptivity, while the upper bound is achieved by a matching nonadaptive algorithm.

\begin{theorem}[Adaptive tomography lower bound]
  \label{thm:adaptive-block-lower-bound}
  There are universal constants \(c,\eps_0>0\) such that the following
  holds. Let \(d\ge2\), \(1\le r\le d\), and \(t\ge1\) be integers, and let
  \(0<\eps\le\eps_0\). Suppose an adaptive protocol uses a total of \(n\) samples of an arbitrary unknown state \(\rho\) on \(\C^d\) of rank at most \(r\), makes joint
  measurements on at most \(t\) samples at a time, and outputs a state \(\widehat\rho\) such that
  \[
    \Prb_\rho\left[
      \norm{\widehat\rho-\rho}_1\le\eps
    \right]\ge\frac23.
  \]
  Then
  \begin{equation}
    n\ge
    c\,\frac{dr}{\eps^2}
    \max\left\{1,\frac{r}{\sqrt t}\right\}.
    \label{eq:main-adaptive-block-lower-bound}
  \end{equation}
\end{theorem}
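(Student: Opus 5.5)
The lower bound has two regimes. The term \(dr/\eps^2\) holds even for unrestricted collective measurements, by the known bound \(\Omega(dr/\eps^2)\) of O'Donnell--Wright and Haah et al., which applies to any protocol. So the new content is the term \(dr^2/(\eps^2\sqrt t)\), which dominates when \(t\le r^2\). I will prove it by a Bayesian Cram\'er--Rao argument, as the abstract indicates.

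\emph{Local family and prior.} Fix the uniform spectrum \(1/r\) on an \(r\)-dimensional support and vary only the support:
\[
  \rho_\theta=U_\theta\,\frac{P_0}{r}\,U_\theta^\dagger,
  \qquad
  U_\theta=\exp\bigl(\ii\,\eta\,(A_\theta-A_\theta^\dagger)\bigr),
  \qquad
  A_\theta=\sum_{i\le r<j}\theta_{ij}\,|j\rangle\langle i| .
\]
Here \(P_0\) projects onto the first \(r\) basis vectors and \(\theta\in\R^{2r(d-r)}\) (real and imaginary parts). I restrict to \(r\le d/2\); otherwise I use a sub-block of dimension \(2r\) after reducing \(r\). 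To first order,
\[
  \norm{\rho_\theta-\rho_{\theta'}}_1\approx\frac{2\eta}{r}\,\norm{A_\theta-A_{\theta'}}_1 .
\]
I put a smooth compactly supported product prior on \(\theta\in[-1,1]^m\), with \(m=2r(d-r)\asymp dr\), and choose the scale \(\eta\) so that the typical trace distance between states is a large constant multiple of \(\eps\). An estimator \(\widehat\rho\) with error \(\eps\) and success probability \(2/3\) then yields an estimator \(\widehat\theta\) with
\[
  \E\,\norm{\widehat\theta-\theta}_2^2\le c\,m .
\]
To get this, I project \(\widehat\rho\) to the nearest family member and use the local isometry to pass from \(\norm{A}_1\) to \(\norm{A}_2\). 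The constant \(c\) is small only if \(\eps\) sits at the right scale relative to \(\eta\). Care is needed here: trace norm is compared to Frobenius norm via \(\norm{A}_1\le\sqrt r\norm{A}_2\), and that factor must be absorbed correctly. I would pick \(\eta\asymp\eps\sqrt{r/d}\) up to constants, so that typical \(\norm{A_\theta}_1\asymp\sqrt{r}\cdot\sqrt{d}\) gives trace distance of order \(\eps\).

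\emph{Van Trees and the chain rule.} The van Trees inequality gives
\[
  \E\,\norm{\widehat\theta-\theta}_2^2\ge\frac{m^2}{\E\,\Tr J(\theta)+\Tr J_\pi} ,
\]
where \(J\) is the Fisher information of the full classical transcript. For adaptive protocols, the chain rule bounds \(\Tr J\) by the sum over rounds of the worst-case Fisher trace of a single joint measurement on \(\le t\) samples. So it suffices to prove the key lemma: for every POVM on \((\C^d)^{\otimes k}\) with \(k\le t\),
\[
  \Tr J\le C\,\eta^2\,k\,\frac{d}{r}\cdot\max\Bigl\{1,\frac{\sqrt{k}}{r}\Bigr\} .
\]
Summing over rounds gives \(\Tr J\lesssim n\,\eta^2(d/r)\max\{1,\sqrt t/r\}\). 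Comparing with \(m^2\asymp d^2r^2\) and the target \(\asymp m\) gives \(n\gtrsim dr/\eta^2\cdot(r/d)/\max\{1,\sqrt t/r\}\). With \(\eta^2\asymp\eps^2 r/d\) this becomes \(n\gtrsim (d r/\eps^2)\cdot\min\{r/\sqrt{t},1\}\cdot(\ldots)\); I expect the constants to sort out into the claimed bound, though the exponents need rechecking once \(\eta\) is fixed.

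\emph{Main obstacle.} The hard step is the single-measurement Fisher bound. Classical Fisher information of a measurement is at most the SLD quantum Fisher information, but summing the quantum Fisher information over all directions gives \(k\,d\,r\)-type bounds with no \(\sqrt k\) saving. That is too weak, since different directions cannot be simultaneously optimal. I would therefore bound \(\sum_a F_a=\sum_x \sum_a \Tr(\partial_a\rho^{\otimes k}M_x)^2/\Tr(\rho^{\otimes k}M_x)\) directly.

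At \(\theta=0\), \(\partial_a\rho^{\otimes k}\) is a sum over copies of \(|j\rangle\langle i|/r+\text{h.c.}\) acting on one copy. Thus \(\sum_a(\cdot)^2\) is a quadratic form in the operator \(\sum_{\ell}\sum_{i,j}E^{(\ell)}_{ji}\otimes\overline{E^{(\ell)}_{ji}}\). For each outcome, by the uniform spectrum, we may write \(M_x\) restricted to the support as \(P_0^{\otimes k}M_xP_0^{\otimes k}\). The off-diagonal cross terms between different copies \(\ell\ne\ell'\) contribute the \(\sqrt k\) factor, controlled by Cauchy--Schwarz against \(\Tr(\rho^{\otimes k}M_x)\). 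This is where the structure of the \(r\)-dimensional support matters: the relevant operator on \(P_0^{\otimes k}\) has norm \(O(k+k^2/r^2)\) in an appropriate sense, in the spirit of a Schur--Weyl/partial-transpose computation. The square root of this scale, after Cauchy--Schwarz, gives \(\max\{1,\sqrt k/r\}\).

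I would first try to prove
\[
  \sum_a F_a\le\frac{\eta^2}{r^2}\,(d-r)\,\bigl\|\textstyle\sum_{\ell,\ell'}\cdots\bigr\|
\]
by rewriting with the swap/partial-transpose operator. I would then bound its norm on the symmetric-subspace-like support via representation theory of the unitary group \(U(r)\), which gives eigenvalues of order \(k(1+k/r^2)\)-ish. Finally I would take the geometric mean with the trivial bound to extract \(\sqrt k\).
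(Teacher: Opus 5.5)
The architecture (uniform-spectrum support rotations, van Trees, the adaptive chain rule) is the paper's. However, both quantitative steps that produce the rate fail as proposed.

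\textbf{The single-measurement lemma is misstated, and the proposed route to it cannot work.} In your \(\theta\)-coordinates the true bound is \(\Tr J\le C\eta^2dk\min\{1,\sqrt k/r\}\), i.e.\ per-sample information \((d/r)\min\{\sqrt k,r\}\). Your \(C\eta^2k\frac dr\max\{1,\sqrt k/r\}\) says joint measurements on \(k\le r^2\) samples give no gain at all. That would force \(n\gtrsim dr^2/\eps^2\) even at \(t=r^2\), contradicting \cref{thm:bounded-block-upper-bound}. Your closing arithmetic accordingly produces \(\min\{1,r/\sqrt t\}\) instead of \(\max\).

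The missing idea is how to handle the supremum over directions. By \cref{eq:fisher-row-duality}, after block Cauchy--Schwarz you must control \(\sup_{\vct h}\Tr(C_{z,i}R_{i,\vct h}R_{i,\vct h}^\dagger)\) outcome by outcome. A direction-averaged (twirled) estimate is not enough: replacing the supremum by the average over \(\vct h\) gives exactly per-sample \(O(d/r)\), which is your incorrect rate.

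What works is a single operator \(Q\), independent of \(\vct h\) and \(z\), with \(R_{i,\vct h}R_{i,\vct h}^\dagger\preceq Q\); POVM normalization \(\int C_{z,i}\,\dd\nu=\Pi_{1,i}\) then leaves only \(\Tr Q\). Every such \(Q\) has operator norm at least \(t\): test it on the vector with \(\vct h^{\otimes(t-1)}\) in every excitation block. So any estimate through \(\norm{Q}_{\mathrm{op}}\Tr(C_{z,i})\) gives at best per-sample information of order \(dt/r\). A geometric mean with the \(O(d)\) bound gives only \(d\sqrt{t/r}\).

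The paper's gain comes from small trace, not small norm. It dominates \(N_{\vct h}^{(t-1)}\) by \(k\Id+k\Omega_{t-1,k}^{1/k}\), using \(\binom{N_{\vct h}^{(t-1)}}{k}\preceq\Omega_{t-1,k}\) (a sum of symmetric-subspace projectors) with \(k=\lceil\sqrt{t-1}\rceil\). This operator has normalized trace \(O(\sqrt t+t/r)\) (\cref{lem:occupation-number-envelope}).

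Two further points on this step. You need the bound at every \(\theta\) in the prior's support, not just at \(\theta=0\); the paper conjugates by \(U_X\) and shows the induced tangent map is a contraction. And the SLD bound you dismiss is \(8(d-r)\) per sample in unscaled coordinates, not \(kdr\)-type; it is exactly what covers \(t>r^2\).

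\textbf{Your passage from trace-norm accuracy to squared \(\ell_2\) risk loses a factor of \(r\).} You use \(\norm{\cdot}_\F\le\norm{\cdot}_1\); the inequality \(\norm A_1\le\sqrt r\norm A_\F\) points the wrong way for this purpose. This forces \(\eta\gtrsim\eps\sqrt{r/d}\), which is your choice. The heuristic behind it is also off: typical \(\norm{A_\theta}_1\) is \(\asymp r\sqrt d\), not \(\sqrt{rd}\). With that \(\eta\), even the correct Fisher lemma gives only \(n\gtrsim(d/\eps^2)\max\{1,r/\sqrt t\}\).

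The paper avoids the loss by localizing both the parameter and the estimator in operator norm at scale \(a\asymp\eps\), so that \(\norm{T-X}_\F^2\le2a\norm{T-X}_1\). This needs a prior supported in an operator-norm ball with \(I(\pi)=O(d^2r/a^2)\), namely a smoothly truncated Gaussian (\cref{lem:rectangular-local-prior}). A product prior on a cube does not provide this: the cube contains matrices of operator norm \(\sqrt{r(d-r)}\), where the local inverse-Lipschitz comparison between states and parameters also need not hold.

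Finally, success probability \(2/3\) does not bound an expected loss. You need confidence amplification plus postprocessing into the operator-norm ball, so that the failure event contributes only \(O(\delta ar)\).

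Minor points: \(\exp(\ii\eta(A-A^\dagger))\) is not unitary (drop the \(\ii\)). The \(\Omega(dr/\eps^2)\) trace-norm lower bound is due to Scharnhorst--Spilecki--Wright; O'Donnell--Wright give the upper bound, and Haah et al.\ lose a logarithm.
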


Formally, the theorem applies to the following model of adaptive protocols. The protocol proceeds in rounds.  In each
round it chooses a joint positive operator-valued measure (POVM) on at most
\(t\) fresh samples as a function of private randomness and all earlier
classical outcomes.  After the measurement, it retains only classical
information. The number of samples measured may vary between rounds, and we
allow adaptive stopping, provided that the bound of~$t$ on the number of samples measured jointly holds in every round.  We allow any countably generated measurable outcome space,
including finite, countable, and Euclidean outcome spaces.
The complete model is given in
\Cref{sec:block-adaptive-model}. 
A nonadaptive protocol is the special case in which the sequence of measurements is fixed in advance. In particular, they are independent of the measurement outcomes observed during an execution of the protocol.

We present a nonadaptive protocol with the following guarantee.

\begin{theorem}[Nonadaptive tomography upper bound]
  \label{thm:bounded-block-upper-bound}
  There is a universal constant \(C>0\) such that the following holds.  Let
  \(d\ge2\), \(1\le r\le d\), and \(t\ge1\) be integers, and let
  \(0<\eps\le1\).  There is a nonadaptive protocol using joint measurements
  on at most \(t\) samples at a time which, for every state \(\rho\) on
  \(\C^d\) of rank at most \(r\), outputs a state \(\widehat\rho\) satisfying
  \[
    \Prb_\rho\left[
      \norm{\widehat\rho-\rho}_1\le\eps
    \right]\ge\frac23
  \]
  and uses at most
  \begin{equation}
    C\,\frac{dr}{\eps^2}
    \max\left\{1,\frac r{\sqrt t}\right\}
    \label{eq:main-bounded-block-upper-bound}
  \end{equation}
  samples.
\end{theorem}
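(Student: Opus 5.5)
The plan is to reduce to two regimes. If \(t\ge r^2\), then the target bound is \(C\,dr/\eps^2\). We take any fixed \(t'=\min\{t,\lceil r^2\rceil\}\)-sample measurement achieving the unrestricted collective rate on blocks of size order \(r^2\). If \(t<r^2\), the target is \(C\,dr^2/(\eps^2\sqrt t)\). In both cases we use \(n=mt\) samples, split into \(m\) disjoint blocks of \(t\). On each block we apply the same fixed joint POVM, which produces an unbiased matrix estimator \(X_i\) of \(\rho\). We then average, \(\bar X=\frac1m\sum_i X_i\), and output the projection \(\widehat\rho\) of \(\bar X\) onto the best rank-\(r\) positive approximation, renormalized. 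The construction is nonadaptive by design.

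For the block measurement we use the Gaussian joint measurement suggested in the abstract. On \(\rho^{\otimes t}\) we measure a random-Gaussian-projected version of the symmetric subspace: draw \(g\in\C^d\) with i.i.d.\ complex Gaussian entries and apply a POVM whose elements are proportional to \((gg^*)^{\otimes t}\) on \(\mathrm{Sym}^t\). We set \(X\) to be an affine function of \(gg^*\), namely \(X=a_t\,gg^*-b_t\Id\), with constants chosen so that \(\E X=\rho\). The first step is to verify this unbiasedness via Schur-type integrals over \(g\). The second step is an explicit second moment identity for \(X\).

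The analysis then has three steps. First, the standard reduction for rank-\(r\) truncation gives
\[
\norm{\widehat\rho-\rho}_1\lesssim \sqrt r\,\norm{P(\bar X-\rho)}_{\F}+\text{off-support term}.
\]
Here \(P\) denotes compressions involving the support of \(\rho\) and the operator norm of \(\bar X-\rho\) restricted to the complement. Second, we show that the variance of \(X\) in directions touching the support is \(O(r\cdot\max\{1,r/\sqrt t\}/t)\) per unit. This would give Frobenius error \(\sqrt{dr\max\{1,r/\sqrt t\}/(mt)}\), so \(mt\) of the claimed order makes it at most \(\eps\). Third, for the complement block we use the conditional Gaussian law. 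Conditioned on the component of \(g\) in \(\mathrm{supp}\,\rho\), the orthogonal component is still a standard Gaussian independent of it. This makes the off-support part of \(\bar X\) a rescaled Wishart-type matrix, and matrix concentration controls its operator norm by \(O(\sqrt{d/(mt)}\cdot\text{scale})\), which is dominated after thresholding. We then boost the success probability to \(2/3\) via Chebyshev or a median-of-means step, which costs only constants.

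The main obstacle is the second step: getting the rank-dependent interpolation \(\max\{1,r/\sqrt t\}\) out of the second moment identity. The variance of \(a_t gg^*\) restricted to the support must scale like \(r^2/\sqrt t\) rather than \(r^2\) or \(r^2/t\). The plan is to compute \(\E\,\Tr(X-\rho)^2\) exactly through the moments of the tilted Gaussian density \(\propto \ip{g}{\rho g}^t e^{-\norm g^2}\). Under this density, \(\norm{P_\rho g}^2\) concentrates around \(t+r\) with fluctuations of order \(\sqrt{t}\). We then track how these fluctuations, divided by \(t\), enter the estimator, and we bound the resulting expression uniformly over spectra of \(\rho\) with rank at most \(r\).
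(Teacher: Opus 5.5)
There is a genuine gap in your block measurement, and the rest of the argument depends on it.

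The operators \((gg^\dagger)^{\otimes t}\,\dd\gamma_{d,1}(g)\) integrate to \(\sum_{\pi\in\mathfrak S_t}U_\pi=t!\,\Pi_{\mathrm{sym}}\). After normalization they therefore form a POVM only on the symmetric subspace, not on \((\C^d)^{\otimes t}\). A mixed \(\rho^{\otimes t}\) can put most of its weight outside \(\mathrm{Sym}^t\). For \(\rho=\Pi/r\), the weight inside is \(\binom{r+t-1}{t}r^{-t}\), which at \(t=r^2\) vanishes superexponentially in \(r\). Your proposal does not say what is measured or estimated on the complement.

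A fixed fallback cannot restore unbiasedness in general. Already for \(t=2\), the Gaussian moment expansion over \(\mathfrak S_3\) gives
\[
  \E_\rho\bigl[gg^\dagger\mathbf 1_{\{\mathrm{sym}\}}\bigr]
  =\frac{1+\Tr(\rho^2)}{2}\,\Id_d+\rho+\rho^2 .
\]
Unbiasedness on pure states forces \(a_2=b_2=\tfrac12\), so the symmetric outcome contributes \(\tfrac12(\rho+\rho^2)\) to \(\E_\rho[X]\). The complement outcome would then have to contribute exactly \(\tfrac12(\rho-\rho^2)\). For \(\rho=\operatorname{diag}(p,1-p,0,\ldots,0)\), with \(r\ge2\) and \(d\ge3\), its conditional mean would have to be half the projector onto the unknown support. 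The tilted density \(\propto\ip{g}{\rho g}^t\) sharpens the spectrum, so conditional on the symmetric outcome the mean of \(gg^\dagger\) is nonlinear in \(\rho\). Your claimed exact unbiasedness and second moment identity, and the variance bound built on them, therefore fail for general rank-\(r\) states.

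The missing idea is the paper's layered construction; your measurement is only its first layer. Let \(\Pi_{\ell,t}\) project onto the span of all \(L^{\otimes t}\) with \(\dim L\le\ell\), so that \(\Pi_{1,t}=\Pi_{\mathrm{sym}}\), and set \(\Delta_{\ell,t}=\Pi_{\ell,t}-\Pi_{\ell-1,t}\).

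\begin{itemize}
\item On layer \(\ell\), the paper measures with an \(\ell\)-column Gaussian matrix \(G\in\C^{d\times\ell}\), using the density \(\Delta_{\ell,t}(\Gamma_{\ell,t}^+)^{1/2}(GG^\dagger)^{\otimes t}(\Gamma_{\ell,t}^+)^{1/2}\Delta_{\ell,t}\). Summed over layers, these densities integrate to the identity on all of \((\C^d)^{\otimes t}\).
\item The estimator \(Y=(GG^\dagger-J\Id_d)/t\) subtracts the random number \(J\) of columns rather than a constant.
\item The centered moment identities for \(\Gamma_{\ell,t}=\sum_\pi\ell^{c(\pi)}U_\pi\) make the layer contributions sum to exactly \(\rho\). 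At \(t=2\), the \(J=2\) layer supplies the missing \(\tfrac12(\rho-\rho^2)\).
\item The same identities give the second moment \(\frac{t-1}{t}\rho^{\otimes2}+\frac1t(\rho\otimes\Id_d+\Id_d\otimes\rho)F+\frac{\E_\rho[J]}{t^2}F\).
\item The interpolation \(\max\{1,r/\sqrt t\}\) comes from \(\E_\rho[J]=O(\sqrt t)\), proved by dominating \(\Id-\Pi_{q-1,t}\) with a sum of antisymmetrizers. It does not come from fluctuations of \(\norm{P_\rho g}^2\).
\end{itemize}

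Once this measurement is in place, your remaining steps match the paper's analysis. These are the three-block projection bound, the conditional Gaussian law off the support (applied to the concatenation of \(\sum_bJ_b\) columns), the Wishart operator-norm bound, and Markov's inequality.
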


Together, the two theorems characterize the sample complexity, up to universal
constants, for all sufficiently small \(\eps\).  Equivalently, the optimal
rate is
\[
    \begin{cases}
      \Theta\left( dr^2/(\eps^2\sqrt t\, ) \right), & 1\le t\le r^2, \qquad \text{and} \\
      \Theta\left( dr/\eps^2 \right), &t\ge r^2.
    \end{cases}
\]
In particular, for \(t=1\), arbitrary classical adaptivity does not improve the optimal
nonadaptive single-sample rate.  Increasing the number of samples that may be
measured jointly to \(t\) improves this rate by a factor of order \(\sqrt t\)
until the optimal complexity of~$O(dr/\eps^2)$ is reached.  Measuring order \(r^2\) samples
jointly is therefore necessary and sufficient to attain the optimal sample complexity for unrestricted algorithms.

To our knowledge, this is the first matching rank-dependent characterization
for arbitrary adaptive joint measurements on at most \(t\) samples.  At
\(t=1\), the lower bound resolves the rank-dependent adaptive question posed
by Chen, Huang, Li, Liu, and Sellke
\cite{ChenHuangLiLiuSellke2023} and addresses the question of rank dependence
in adaptive tomography raised by Lowe and Nayak \cite{LoweNayak2025}.
For intermediate \(r\) and \(t\), the two bounds resolve the rank-dependent
interpolation left open by Chen, Li, and Liu~\cite{ChenLiLiu2024}.  At \(r=d\), our upper bound recovers the known optimal tradeoff due to Pelecanos, Spilecki, and Wright~\cite{PelecanosSpileckiWright2025}, while our lower bound removes the earlier restriction relating~\(t\) and~\(\eps\) in Ref.~\cite{ChenLiLiu2024}.

The proof of the lower bound in \cref{thm:adaptive-block-lower-bound} controls the Fisher information trace of every joint
POVM on \(t\) samples of a state with fixed uniform spectrum and varying
support.  A conditional score chain rule extends the bound to arbitrary
adaptive transcripts.  The van Trees inequality, also known as the
Bayesian Cram\'er--Rao inequality, then converts it into a trace norm lower
bound by localizing the support parameter in operator norm.

The algorithm we present averages independent unbiased matrix estimates obtained
from a Gaussian joint measurement. An explicit second moment identity
controls the error on the support of \(\rho\) and between the support and
its orthogonal complement in Frobenius norm.  Outside the support, we
exploit the measurement's Gaussian structure to bound the error in operator
norm.  Projecting the average onto the set of density matrices of rank at
most \(r\) then gives the output of the algorithm.

Both proofs circumvent the use of representation theory; we provide an in-depth overview in \cref{sec:proof-overview}.

\subsection*{Related work}
\label{sec:related-work}

\Cref{tab:prior-bound-landscape} summarizes previous and concurrent
sample complexity bounds by rank \(r\) and the number \(t\) of samples that
may be measured jointly.  In our adaptive model, for states of rank at most
\(r\) and sufficiently small \(\eps\),
the results in
\Cref{thm:adaptive-block-lower-bound,thm:bounded-block-upper-bound} give the
optimal sample complexity up to universal constants throughout the
displayed regimes of \(r\) and \(t\).

\begin{table}[htbp]
  \centering
  \caption{Previous and concurrent bounds for tomography with constant success probability
  and sufficiently small trace norm error \(\eps\).
  U and L denote upper and lower bounds;
  a dash means no separate bound is listed here.  The rightmost column ``Arbitrary measurements'' (without a restriction on~$t$) pertains to measurements which
  may act jointly on all available samples.}
  \label{tab:prior-bound-landscape}
  \medskip
  \small
  \begin{tabular}{@{}>{\centering\arraybackslash}p{0.19\textwidth}
      >{\raggedright\arraybackslash}p{0.23\textwidth}
      >{\raggedright\arraybackslash}p{0.32\textwidth}
      >{\raggedright\arraybackslash}p{0.18\textwidth}@{}}
    \toprule
      & \multicolumn{3}{c}{Maximum number \(t\) of samples measured jointly} \\
    \cmidrule(lr){2-4}
    Rank \(r\) & \(t=1\) & \(1<t\le r^2\) & Arbitrary measurements \\
    \midrule
    \(1\le r \ll d\)
      & U: \(O(dr^2/\eps^2)\) \cite{KuengRauhutTerstiege2017,GutaKahnKuengTropp2020} \(^{\#}\)
        \par L: \(\Omega(dr^2/\eps^2)\) \cite{LoweNayak2025}\(^{\ast}\)
      & U: \textemdash
        \par L: \textemdash
      & U: \(O(dr/\eps^2)\) \cite{ODonnellWright2016}
        \par L: \(\Omega(dr/\eps^2)\) \cite{ScharnhorstSpileckiWright2025} \\
    \midrule
    \(r=d\)
    \par (no rank promise)
      & U: \(O(d^3/\eps^2)\) \cite{KuengRauhutTerstiege2017,GutaKahnKuengTropp2020} \(^{\#}\)
        \par L: \(\Omega(d^3/\eps^2)\)
        \cite{ChenHuangLiLiuSellke2023}\(^{\dagger}\),
        \cite{KeskinLuoMajidRadzihovsky2026}\textsuperscript{\S}
      & U: \(O(d^3/(\eps^2\sqrt t))\)
        \par\cite{PelecanosSpileckiWright2025,PelecanosSpileckiTangWright2025}
        \par L: \(\Omega(d^3/(\eps^2\sqrt t))\) \cite{ChenLiLiu2024}\(^{\ddagger}\),
        \cite{KeskinLuoMajidRadzihovsky2026}\textsuperscript{\S}
      & U: \(O(d^2/\eps^2)\) \cite{ODonnellWright2016}
        \par L: \(\Omega(d^2/\eps^2)\) \cite{ScharnhorstSpileckiWright2025} \\
    \bottomrule
  \end{tabular}
  \par\raggedright
  \(^{\#}\) See the introduction for the details of the algorithm in Ref.~\cite{KuengRauhutTerstiege2017}.
  \par 
  \(^{\ast}\) \cite{LoweNayak2025} assumes nonadaptive measurements, or adaptive but efficient measurements.
  \par
  \(^{\dagger}\) \cite{ChenHuangLiLiuSellke2023} assumes measurements with finitely many outcomes.
  \par
  \(^{\ddagger}\) \cite{ChenLiLiu2024} assumes \(t<\eps^{-0.1}\).
  \par
  \textsuperscript{\S} \cite{KeskinLuoMajidRadzihovsky2026} was developed independently and concurrently with this work.
\end{table}

We briefly elaborate on the related work below.

\paragraph{Low-rank tomography.}
Compressed sensing methods reconstruct low-rank states from estimates of a
small subset of Pauli expectation values
\cite{GrossLiuFlammiaBeckerEisert2010,FlammiaGrossLiuEisert2012}.
Other approaches include spectral thresholding
\cite{ButuceaGutaKypraios2015}, recovery from random rank-one measurements
\cite{KuengRauhutTerstiege2017}, and projected least squares
\cite{GutaKahnKuengTropp2020}.
For single-sample measurements, a tomography algorithm may be derived from the work of Kueng, Rauhut, and Terstiege~\cite{KuengRauhutTerstiege2017}; see~Refs.~\cite{HaahHarrowJiWuYu2017,LoweNayak2025}. The algorithm uses \(O(dr^2/\eps^2)\) samples and has constant success probability.
In the same setting, Gu{\c{t}}{\u{a}}, Kahn, Kueng, and Tropp~\cite{GutaKahnKuengTropp2020}
achieve the same upper bound via projected least squares.
Lowe and Nayak~\cite{LoweNayak2025} proved an \(\Omega(dr^2/\eps^2)\) lower bound for
nonadaptive single-sample measurements with finitely many outcomes, for
states of exact rank~$r$ with \(1\le r\le d/3\) and \(0<\eps<1/8\), and extended this lower bound to adaptive strategies with efficient measurements.
They asked how rank dependence could be incorporated into unconditional lower bounds.

For unrestricted joint measurements, O'Donnell and Wright~\cite{ODonnellWright2016} obtained
an \(O(dr/\eps^2)\) upper bound. In independent work, Haah, Harrow, Ji, Wu, and Yu~\cite{HaahHarrowJiWuYu2017} obtained an upper bound within a logarithmic factor of the same expression. Haah et al.\ also proved a lower bound matching this rate up to a logarithmic factor.

Scharnhorst, Spilecki, and Wright~\cite{ScharnhorstSpileckiWright2025} subsequently closed the logarithmic gap by proving the lower bound of \(\Omega(dr/\eps^2)\), for \(d>1\) and sufficiently small \(\eps\).

\paragraph{Adaptivity and bounded-sample joint measurements.}
Without a rank promise, Chen, Huang, Li, Liu, and Sellke~\cite{ChenHuangLiLiuSellke2023} proved that
adaptive single-sample measurements with finitely many outcomes require
\(\Omega(d^3/\eps^2)\) samples for tomography in trace norm, for sufficiently small \(\eps\)
and sufficiently large \(d\).
They explicitly left the rank-dependent lower bound open.

For joint measurements on at most \(t\) samples, Chen, Li, and Liu~
\cite{ChenLiLiu2024} gave an
upper bound \(\widetilde{O}(d^3/(\sqrt t\,\eps^2))\) for
\(t\le\min\{d^2,(\sqrt d/\eps)^{0.2}\}\) with constant success probability.
They also proved a lower bound of \(\Omega(d^3/(\sqrt t\eps^2))\)
against adaptive protocols when \(\eps\) is sufficiently small,
\(d\) is sufficiently large, and \(t\le\eps^{-0.1}\).
Their open questions include the complexity of rank-dependent tomography, even for \(t=1\).

Pelecanos, Spilecki, and Wright~\cite{PelecanosSpileckiWright2025} subsequently gave the upper bound
\[
  O\left(\max\left\{\frac{d^3}{\sqrt t\eps^2},
    \frac{d^2}{\eps^2}\right\}\right)
\]
using independent applications of the debiased Keyl algorithm.
Pelecanos, Spilecki, Tang, and Wright~\cite{PelecanosSpileckiTangWright2025} obtained the same bound through their reduction from mixed state tomography to pure state tomography, while also guaranteeing time-efficient measurement and other desirable features.
Both analyses use unbiased estimators and second moment bounds.

In independent and concurrent work, Keskin, Luo, Majid, and Radzihovsky~\cite{KeskinLuoMajidRadzihovsky2026} have reported the optimal lower bound for
adaptive tomography without a rank promise, for every \(t\ge1\) and
sufficiently small \(\eps\).  Their bound matches the \(r=d\) case of the
lower bound we derive in \cref{thm:adaptive-block-lower-bound}.

\paragraph{Statistical methods.}
Gill and Levit~ \cite{GillLevit1995} give a general treatment of the van Trees inequality and its
statistical applications.
Butucea, Gu{\c{t}}{\u{a}}, and Kypraios~\cite{ButuceaGutaKypraios2015} used support rotations, Fisher
information, and van Trees to prove asymptotic Frobenius risk lower bounds
for a fixed Pauli measurement design.
Gill and Massar~\cite{GillMassar2000} established Fisher information tradeoffs for
multiparameter quantum estimation, and Zhou and
Chen~\cite{ZhouChen2026} proved a rank-dependent Fisher information bound for support rotations.
In classical estimation under communication constraints, Barnes, Han, and
\"Ozg\"ur~\cite{BarnesHanOzgur2020} combined a conditional Fisher information chain rule with
van Trees for sequential protocols.

\subsection*{Organization}

The remainder of the paper is organized as follows.  We give a proof overview
in \Cref{sec:proof-overview}, collect notation and the adaptive measurement model in
\Cref{sec:preliminaries}, prove the lower and upper bounds in
\Cref{sec:lower-bound-proof,sec:upper-bound-proof}, and conclude with further
questions in \Cref{sec:discussion}.

\section{Proof overview}
\label{sec:proof-overview}

\subsection{Lower bound}
\label{sec:lower-bound-overview}

The proof is organized around the van Trees inequality, also known as the
Bayesian Cram\'er--Rao inequality.  We first describe the argument for rank
\(r\le d/2\); a depolarizing channel reduction at the end handles larger
ranks.

\subsubsection{The statistical core: the van Trees inequality}

Consider a \(p\)-dimensional parameter \(\vct{\theta}\), drawn from a
probability distribution with smooth density \(\pi\), and let
\(\cI_{\mathrm{tr}}(\vct{\theta})\) be the Fisher information matrix of the
complete measurement transcript.  For now, one may think of
\(p=\Theta(dr)\): in the hard family below, these parameters describe
spectrum-preserving rotations of an \(r\)-dimensional support subspace of
\(\C^d\).
Under suitable regularity and boundary conditions, the van Trees inequality
states that every estimator \(T\) satisfies
\[
  \E\norm{T-\vct{\theta}}_2^2
  \ge
  \frac{p^2}{
    \E_{\vct{\theta}}
      \Tr\bigl(\cI_{\mathrm{tr}}(\vct{\theta})\bigr)
    +I(\pi)},
\]
where \(I(\pi)\) is the Fisher information of the density \(\pi\).  Thus, if
the denominator is small, the expected squared estimation error on the
left side must be large.  The version used here is stated in
\Cref{lem:vector-van-trees} and proved in \Cref{app:vector-van-trees}.

For the hard family below, the parameter is a matrix
\(X\in\C^{(d-r)\times r}\) satisfying
\(\norm X_{\mathrm{op}}<a\).  We construct \(\pi\) by smoothly truncating a
Gaussian distribution so that \(X\) is supported on this operator norm ball,
and show that \(I(\pi)=O(d^2r/a^2)\).  If an estimator \(T\) also satisfies
\(\norm T_{\mathrm{op}}\le a\), then standard matrix norm inequalities give
\(\norm{T-X}_1\ge\norm{T-X}_\F^2/(2a)\), converting the squared Frobenius loss
in van Trees into trace norm loss.
The construction of \(\pi\) and this conversion to expected trace norm loss
are given in
\Cref{lem:rectangular-local-prior,lem:localized-nuclear-risk}.

On the other hand, confidence amplification and a simple postprocessing step
turn a successful tomographic estimator into an estimator of \(X\) with
operator norm at most \(a\) and small expected trace norm loss.  Consequently,
a sufficiently large right side in the van Trees inequality rules out
such a tomographic estimator.  The main task is therefore to bound the
transcript term in the denominator from above.  The required postprocessing is carried out in
\Cref{sec:bounded-support-estimator-postprocessing}.

\subsubsection{Fisher information along an adaptive decision tree}

An adaptive protocol chooses each measurement according to the outcomes
observed so far.  We represent this process as a decision tree: each node
\(h\) contains the preceding outcomes together with any private randomness
used by the protocol, and the outgoing edges represent the possible outcomes
of the next measurement.  Once a node \(h\) is reached at round \(i\), the
protocol has selected a fixed POVM acting jointly on \(t_i(h)\le t\) fresh
samples.  Therefore, a Fisher information bound that holds for every fixed
POVM applies to the measurement selected at every node.  A complete
transcript corresponds to a root-to-leaf path through the tree, and the
unknown state determines the distribution over these paths.

It remains to combine the bounds along such a path.  Let \(H_i\) be the
random history reached before round \(i\), and let \(\cI_i(h;X)\) be the
Fisher information matrix of the fixed POVM selected at history \(h\) when
the parameter is \(X\).  We write \(\E_X\) for expectation over the transcript
generated at parameter \(X\).  Even with full adaptivity, Fisher information
satisfies the chain rule
\[
  \Tr\bigl(\cI_{\mathrm{tr}}(X)\bigr)
  =
  \E_X\left[
    \sum_i \Tr\bigl(\cI_i(H_i;X)\bigr)
  \right].
\]
We prove this identity in \Cref{lem:block-fisher-chain}.
The expectation averages over the random path generated at parameter \(X\),
while the sum contains the Fisher information of the fixed POVM used at each
node on that path.  The main technical step, addressed in the next subsection,
is to bound the Fisher information of the joint measurement selected at an
arbitrary node.  Once this nodewise bound is available, applying it inside the
sum and using the pathwise sample bound \(\sum_i t_i(H_i)\le n\) controls the
Fisher information of the complete transcript.

\subsubsection{The hard family and the information at one node}

The hard states that the protocol must distinguish all have rank \(r\) and are
maximally mixed on their supports.  We keep their eigenvalues fixed and vary
only the support.
Fix an \(r\)-dimensional reference subspace \(S\subseteq\C^d\), and consider
linear maps \(X:S\to S^\perp\) with small operator norm.  Such a map specifies
the nearby \(r\)-dimensional subspace
\[
  \{\vct{u}+X\vct{u}:\vct{u}\in S\}.
\]
When \(X=0\), this subspace is \(S\); small \(\norm{X}_{\mathrm{op}}\) therefore
describes a small rotation of the support away from \(S\).  Let \(\Pi_X\) be
the orthogonal projector onto this subspace and set \(\rho_X\coloneqq\Pi_X/r\).
The real and imaginary parts of \(X\) give \(p\coloneqq 2r(d-r)\) local parameters.  At \(X=0\), a direction
\(H\in\C^{(d-r)\times r}\) changes the state by
\[
  \Ddir_H\rho_0
  =\frac1r
    \begin{pmatrix}0&H^\dagger\\H&0\end{pmatrix}.
\]
For \(X\) and \(Y\) in this small operator norm neighborhood, closeness of the
corresponding states implies closeness of their support parameters:
\[
  \norm{X-Y}_1\le2r\norm{\rho_X-\rho_Y}_1.
\]
Thus, accurate tomography of this family would give an accurate estimator of
\(X\).  The construction and the metric comparison are proved in
\Cref{sec:support-rotation}.

The central estimate is that every joint POVM \(\mathsf M\) on \(t\) samples
satisfies
\begin{equation}
  \Tr\bigl(\cI_{\mathsf M}^{(t)}(X)\bigr)
  =O\left(
    dt\min\left\{1,\frac{\sqrt t}{r}\right\}
  \right),
  \label{eq:overview-block-fisher}
\end{equation}
for every \(X\).  Here the Fisher matrix uses the \(2r(d-r)\) real support
coordinates.  At a node where the protocol jointly measures
\(t_i(h)\le t\) samples, we apply the same estimate with \(t_i(h)\) in place of
\(t\).  This estimate is proved in
\Cref{prop:one-block-fisher-bound}.

We begin the proof of \cref{eq:overview-block-fisher} with the simpler
single-sample case.  Write the POVM density \(M_z\), with respect to a reference
measure \(\nu\), in blocks relative to \(S\oplus S^\perp\):
\[
  M_z=
  \begin{pmatrix}A_z&B_z^\dagger\\B_z&C_z\end{pmatrix}\succeq0.
\]
Here \(A_z\) acts on \(S\), \(C_z\) acts on \(S^\perp\), and \(B_z\) maps
\(S\) to \(S^\perp\).
Let \(q_X(z)\coloneqq\Tr(M_z\rho_X)\) denote the likelihood density of the
measurement outcome at parameter \(X\).  At the reference point \(X=0\),
\(\rho_0=\operatorname{diag}(\Id_r/r,0)\), so
\begin{equation}
  q_0(z)=\frac{\Tr(A_z)}r.
  \label{eq:overview-single-copy-likelihood}
\end{equation}
For a support rotation direction \(H\in\C^{(d-r)\times r}\), the state
derivative displayed above gives
\[
  \Ddir_Hq_0(z)
  =\Tr\bigl(M_z\Ddir_H\rho_0\bigr)
  =\frac2r\operatorname{Re}\left(\Tr(B_z^\dagger H)\right).
\]
Thus \(A_z\) determines the outcome probability, while \(B_z\) determines
its first-order response to a support rotation.  Summing the squared
derivatives over an orthonormal basis of \(\C^{(d-r)\times r}\), regarded as a
real inner product space, gives
\begin{equation}
  \norm{\nabla q_0(z)}_2^2
  =\frac4{r^2}\norm{B_z}_\F^2.
  \label{eq:overview-single-copy-gradient}
\end{equation}
Substituting
\cref{eq:overview-single-copy-likelihood,eq:overview-single-copy-gradient}
into the Fisher information trace formula yields
\[
  \Tr\bigl(\cI_{\mathsf M}^{(1)}(0)\bigr)
  =\frac4r\int
    \frac{\norm{B_z}_\F^2}{\Tr(A_z)}
    \,\dd\nu(z).
\]
If \(\Tr(A_z)=0\), positivity also forces \(B_z=0\), and the quotient is
taken to be zero.

It remains to bound the quotient.  Positivity of \(M_z\) implies
\[
  \norm{B_z}_\F^2\le\Tr(A_z)\Tr(C_z).
\]
Finally, POVM normalization gives
\(\int C_z\,\dd\nu(z)=\Id_{d-r}\), and hence
\[
  \Tr\bigl(\cI_{\mathsf M}^{(1)}(0)\bigr)
  \le\frac4r\int\Tr(C_z)\,\dd\nu(z)
  \le\frac{4d}{r}.
\]
The complete proof of the single-sample bound appears in
\Cref{sec:single-copy-fisher-bound}.  Both this calculation and the calculation for joint
measurements below are carried out at \(X=0\).  At the end of this subsection,
we explain how a change of basis transfers the resulting bounds to arbitrary
\(X\).

We now consider a joint POVM on \(t>1\) samples, again writing \(M_z\) and
\(\nu\) for its operator density and reference measure.
Consider an infinitesimal perturbation that tilts a unit vector
\(\vct{h}\in S\) toward a fixed direction in \(S^\perp\).  Differentiating
the product state in this direction gives a sum of \(t\) terms, according to which tensor factor is
rotated.  Squaring this sum in the
Fisher information formula produces terms involving either the same position
or two distinct positions.  The latter are controlled by the components of
the other tensor factors in the direction \(\vct{h}\).  For each fixed position
of the rotated tensor factor, the relevant sum of projections onto
\(\vct{h}\) on the other \(t-1\) factors is
\[
  N_{\vct{h}}^{(t-1)}
  \coloneqq\sum_{\ell=1}^{t-1}
    \Id_S^{\otimes(\ell-1)}\otimes
    |\vct{h}\rangle\langle\vct{h}|\otimes
    \Id_S^{\otimes(t-1-\ell)}.
\]
The goal is to repeat the single-sample bound on
\(\norm{B_z}_\F^2/\Tr(A_z)\).  With \(t\) samples, the analogous numerator
also contains the sum over possible excitation positions.  Let
\(\vct f_1,\ldots,\vct f_{d-r}\) be an orthonormal basis of \(S^\perp\).
For \(i\in[d-r]\), let \(C_{z,i}\) be
the POVM block on the subspace spanned by tensors with exactly one factor
in direction \(\vct f_i\) and all remaining factors in \(S\).  It plays the
role of \(C_z\) in the single-sample calculation.  If a common operator
\(T_{t-1,r}\) dominates \(N_{\vct{h}}^{(t-1)}\) for every unit
\(\vct{h}\), then placing one copy of \(\Id+T_{t-1,r}\) at each of the
\(t\) possible excitation positions gives an operator \(Q_{t,r}^{(i)}\)
on this block.  The Cauchy--Schwarz argument for a positive semidefinite
block matrix cancels the likelihood denominator and bounds the remaining
quotient by
\(\Tr(C_{z,i}Q_{t,r}^{(i)})\); see
\cref{eq:fisher-row-envelope-bound}.  For each fixed \(i\), this bound
controls rotations from any unit direction \(\vct{h}\in S\) toward
\(\vct f_i\).

We choose \(T_{t-1,r}\), and hence \(Q_{t,r}^{(i)}\), independently of the
outcome \(z\).  POVM normalization then gives
\[
  \int\Tr(C_{z,i}Q_{t,r}^{(i)})\,\dd\nu(z)
  =\Tr(Q_{t,r}^{(i)}).
\]
Thus the trace of this common upper bound controls the Fisher information.
It therefore suffices to find a direction-independent bound on
\(N_{\vct{h}}^{(t-1)}\) with small normalized trace, meaning the trace
divided by \(r^{t-1}\).
The immediate bound
\(N_{\vct{h}}^{(t-1)}\preceq(t-1)\Id\) has normalized trace \(t-1\), which is
too large.  For \(1\le k\le t-1\), define
\[
  \binom{N_{\vct{h}}^{(t-1)}}k
  \coloneqq\frac1{k!}\prod_{j=0}^{k-1}
    \left(N_{\vct{h}}^{(t-1)}
      -j\Id_{S^{\otimes(t-1)}}\right).
\]
Because the summands defining \(N_{\vct{h}}^{(t-1)}\) are commuting projectors,
\[
  \binom{N_{\vct{h}}^{(t-1)}}k
  =\sum_{\substack{J\subseteq[t-1]\\|J|=k}}
    \prod_{\ell\in J}\left(
      \Id_S^{\otimes(\ell-1)}\otimes
      |\vct{h}\rangle\langle\vct{h}|\otimes
      \Id_S^{\otimes(t-1-\ell)}
    \right).
\]
For each \(J\), the product on the right projects the selected \(k\) factors
onto \(\vct{h}^{\otimes k}\).  This vector belongs to the symmetric subspace
of those factors, so the product is bounded by the corresponding
orthogonal projector \(\Pi_{\mathrm{sym},J}\).  Therefore,
\[
  \binom{N_{\vct{h}}^{(t-1)}}k
  \preceq
  \Omega_{t-1,k}
  \coloneqq\sum_{\substack{J\subseteq[t-1]\\|J|=k}}
    \Pi_{\mathrm{sym},J}.
\]
The operator \(\Omega_{t-1,k}\) is independent of \(\vct{h}\).  Since
\(N_{\vct{h}}^{(t-1)}\) is a sum of \(t-1\) commuting projectors, its
eigenvalues belong to \(\{0,1,\ldots,t-1\}\).  Using the scalar inequality
\[
  q\le k+k\binom qk^{1/k},
  \qquad q\in\{0,1,\ldots,t-1\},
\]
and the bound above, we obtain
\[
  T_{t-1,r}\coloneqq
  k\Id_{S^{\otimes(t-1)}}+k\Omega_{t-1,k}^{1/k},
  \qquad
  N_{\vct{h}}^{(t-1)}
  \preceq k\Id_{S^{\otimes(t-1)}}
    +k\binom{N_{\vct{h}}^{(t-1)}}k^{1/k}
  \preceq T_{t-1,r}.
\]
Since the symmetric subspace of \(S^{\otimes k}\) has dimension
\(\binom{r+k-1}{k}\), standard binomial estimates with
\(k\coloneqq\lceil\sqrt{t-1}\rceil\) give
\[
  \frac{\Tr(T_{t-1,r})}{r^{t-1}}
  =O\left(k+\frac{t-1}{k}+\frac{t-1}{r}\right)
  =O\left(\sqrt t+\frac tr\right).
\]

Applying this estimate at each excitation position bounds the trace of
\(Q_{t,r}^{(i)}\).  The block positivity bound and POVM normalization then give
\[
  \Tr\bigl(\cI_{\mathsf M}^{(t)}(0)\bigr)
  \le\frac4{r^t}\sum_{i=1}^{d-r}
    \int\Tr(C_{z,i}Q_{t,r}^{(i)})\,\dd\nu(z)
  =\frac4{r^t}\sum_{i=1}^{d-r}\Tr(Q_{t,r}^{(i)})
  =O\left(\frac{dt}{r}\left(\sqrt t+\frac tr\right)\right).
\]
For \(t\le r^2\), we have \(t/r\le\sqrt t\), so the resulting bound is
\(O(dt\sqrt t/r)\).  For
\(t>r^2\), the standard classical--quantum Fisher inequality and additivity
of SLD quantum Fisher information give
\(\Tr(\cI_{\mathsf M}^{(t)}(0))\le8dt\).

We have now bounded the Fisher information trace at \(X=0\) for every joint
POVM.  For an arbitrary \(X\), we reduce to this case by a unitary change of
basis and show that the corresponding map on perturbations does not
increase their Frobenius norm.  We construct a unitary \(U_X\) satisfying
\(U_X^\dagger\rho_XU_X=\rho_0\).  Under this change of basis, every
perturbation \(H\) at \(X\) becomes a perturbation \(\mathcal L_X(H)\) at
\(0\), where
\[
  U_X^\dagger(\Ddir_H\rho_X)U_X
  =\Ddir_{\mathcal L_X(H)}\rho_0,
  \qquad
  \norm{\mathcal L_X(H)}_\F\le\norm H_\F.
\]
If \(\widetilde{\mathsf M}\) is obtained by conjugating \(\mathsf M\) with
\(U_X^{\otimes t}\), then
\[
  \cI_{\mathsf M}^{(t)}(X)
  =\mathcal L_X^\dagger
    \cI_{\widetilde{\mathsf M}}^{(t)}(0)\mathcal L_X.
\]
Since \(\mathcal L_X\) is a contraction,
\[
  \Tr\bigl(\cI_{\mathsf M}^{(t)}(X)\bigr)
  \le\Tr\bigl(\cI_{\widetilde{\mathsf M}}^{(t)}(0)\bigr).
\]
Applying the bounds proved at \(X=0\) to \(\widetilde{\mathsf M}\) proves
\cref{eq:overview-block-fisher}.  The complete proof appears in
\Cref{sec:multi-copy-fisher-bound}.

\subsubsection{Putting the bounds together}

Since \(t_i(H_i)\le t\) in each round and
\(\sum_i t_i(H_i)\le n\) on every transcript, the nodewise estimate and the
adaptive Fisher chain rule give
\[
  \Tr\bigl(\cI_{\mathrm{tr}}(X)\bigr)
  =O\left(
    dn\min\left\{1,\frac{\sqrt t}{r}\right\}
  \right).
\]
This transcript bound is stated formally in
\Cref{cor:adaptive-information-budget}.

The support rotation parameter has real dimension \(p=2r(d-r)\), and the
smoothly truncated Gaussian density described above satisfies
\(I(\pi)=O(d^2r/a^2)\).  If \(n\) is below a
sufficiently small universal
constant multiple of
\[
  \frac{dr}{a^2}
  \max\left\{1,\frac r{\sqrt t}\right\},
\]
then both terms in the van Trees denominator are \(O(d^2r/a^2)\).
The van Trees inequality gives expected squared
Frobenius loss \(\Omega(a^2r)\), and operator norm localization converts this
to expected trace norm loss \(\Omega(ar)\).

By contrast, the accuracy guarantee for an actual tomography algorithm
allows failure with constant probability: it requires trace norm error at
most \(\eps\) with probability at least \(2/3\).  For arbitrary matrix outputs, the error on
failure can be arbitrarily large, so this guarantee alone does not control
the expected loss.  Let \(\delta\in(0,1)\) be a target failure probability,
to be chosen as a small constant.  Confidence amplification
(\Cref{lem:metric-median}) reduces the failure probability to at most
\(\delta\), with trace norm error at most \(3\eps\), using
\(O(1+\log(1/\delta))\) independent runs on fresh samples.

We then postprocess the amplified output into an estimator \(\widehat X\)
whose operator norm is at most \(a\) on every run.  This preserves
\(O(r\eps)\) trace norm accuracy on the successful runs, as proved in
\Cref{lem:bounded-support-estimator-postprocessing}.  On the remaining
failure event, both \(X\) and \(\widehat X\) have operator norm at most
\(a\), so \(\norm{\widehat X-X}_1\le2ar\).  Splitting the expectation over
these two events gives expected trace norm loss
\(O(r\eps)+O(\delta ar)\).  Taking \(a\) to be a sufficiently large constant
multiple of \(\eps\), and then taking \(\delta\) to be a sufficiently small
constant, makes this upper bound incompatible with the lower bound
\(\Omega(ar)\) above.  For this fixed \(\delta\), the repetitions increase
the sample count only by a constant factor.  This gives the claimed lower
bound
\[
  n=\Omega\left(
    \frac{dr}{\eps^2}
    \max\left\{1,\frac r{\sqrt t}\right\}
  \right)
\]
when \(r\le d/2\); see \Cref{prop:rectangular-regime-lower-bound}.

When \(r>d/2\), the support rotation family has \(2r(d-r)\) real parameters,
which can be too few when \(r\) is close to \(d\).  We first fix an
\(r\)-dimensional subspace \(V\subseteq\C^d\) and embed into \(V\) the
hard family for ambient dimension \(r\) and rank
\(k\coloneqq\lfloor r/2\rfloor\).  Every embedded input state \(\sigma\)
then has support contained in the same fixed \(V\).
Let \(\Pi_V\) be the orthogonal projector onto \(V\).  Mixing each embedded
input equally with the maximally mixed state \(\Pi_V/r\) gives
\(\rho_\sigma\coloneqq(\sigma+\Pi_V/r)/2\), whose support is exactly \(V\)
and whose rank is exactly \(r\).  Any algorithm that works for all
rank-\(r\) states must therefore work for this lifted family.  The trace
norm distance between any two lifted states is exactly half the distance
between their corresponding inputs.

The channel acts independently on each sample, so an adaptive protocol
for the lifted family can be simulated on the original family using the
same number of samples in each round.  Its estimate can be converted back
with at most twice the error.  Thus the lower bound for dimension \(r\)
and rank \(k\) transfers to the rank-\(r\) problem in dimension \(d\).
Since \(k=\Theta(r)\) and \(r=\Theta(d)\), this gives the claimed rate.
The reduction is proved in \Cref{lem:depolarizing-rank-lift}.

\subsection{Upper bound}
\label{sec:upper-bound-overview}

The upper bound protocol repeatedly applies a Gaussian joint measurement,
averages its Hermitian matrix estimates, and projects the average onto the
set of density matrices of rank at most \(r\).  We first define the measurement and
state the complete estimation procedure.  We then reduce its
analysis to three blocks of the averaged error, explain the measurement
properties that control these blocks, and show why the construction has those
properties.

\subsubsection{The joint measurement and estimation procedure}

We first define \(\mathsf M_t\), a joint measurement on \(t\) samples.  For
\(1\le\ell\le\min\{d,t\}\), let \(\gamma_{d,\ell}\) be the
standard complex Gaussian law on \(\C^{d\times\ell}\), and define
\[
  \Gamma_{\ell,t}
  \coloneqq
  \int (GG^\dagger)^{\otimes t}\,\dd\gamma_{d,\ell}(G).
\]
Let \(\Pi_{\ell,t}\) be the orthogonal projector onto the support of
\(\Gamma_{\ell,t}\), put \(\Pi_{0,t}=0\), and set
\(\Delta_{\ell,t}\coloneqq\Pi_{\ell,t}-\Pi_{\ell-1,t}\).  For the outcome
\(J=\ell\), define
\[
  \mathsf M_t(\ell,\dd G)
  \coloneqq
  \Delta_{\ell,t}(\Gamma_{\ell,t}^+)^{1/2}
  (GG^\dagger)^{\otimes t}
  (\Gamma_{\ell,t}^+)^{1/2}\Delta_{\ell,t}
  \,\dd\gamma_{d,\ell}(G),
\]
where \(\Gamma_{\ell,t}^+\) denotes the Moore--Penrose pseudoinverse of
\(\Gamma_{\ell,t}\).  The Gaussian measure \(\gamma_{d,\ell}\) is a
reference measure.  The Born rule determines the outcome density with
respect to this measure.
The support projectors \(\Pi_{\ell,t}\) are nested, their ranges span the
entire tensor product space, and integrating \(\mathsf M_t(\ell,\dd G)\)
over \(G\) gives \(\Delta_{\ell,t}\); hence these densities form a POVM.
These claims are proved in
\Cref{lem:gaussian-moment-support,lem:gaussian-block-povm-normalization}.

The protocol uses \(\mathsf M_s\) to jointly measure \(s\) samples at a time,
where
\[
  s\coloneqq\min\{t,r^2\}.
\]
For a fixed total number of samples, increasing the number of samples measured
jointly improves the asymptotic error bound only until this number reaches
\(r^2\), which explains this choice.  For a sufficiently large universal
constant \(C_0\), set
\[
  B\coloneqq\left\lceil
    C_0\frac{dr}{s\eps^2}\left(1+\frac r{\sqrt s}\right)
  \right\rceil,
  \qquad
  N\coloneqq Bs.
\]
Here \(B\) is the number of measurement rounds and \(N\) is the total
number of samples.
Put \(\ell_{\max}\coloneqq\min\{d,s\}\).  The measurement \(\mathsf M_s\) has
outcome \((J,G)\), where \(1\le J\le\ell_{\max}\) and
\(G\in\C^{d\times J}\).  From this outcome, we form the Hermitian matrix
\[
  Y\coloneqq\frac{GG^\dagger-J\Id_d}{s}.
\]
We apply the same measurement independently \(B\) times, using \(s\) fresh
samples each time.  If \(Y_1,\ldots,Y_B\) are the resulting matrices, set
\[
  \overline Y\coloneqq\frac1B\sum_{b=1}^B Y_b.
\]
The output is a density matrix nearest in Frobenius norm to \(\overline Y\),
subject to having rank at most \(r\):
\[
  \widehat\rho\in
  \underset{\sigma\in\cD_r(\C^d)}{\arg\min}\,
    \norm{\overline Y-\sigma}_\F.
\]
This is the complete protocol.  The same joint measurement is used in every
repetition, so the protocol is nonadaptive.

\subsubsection{Reducing the error to three blocks}

Fix a state \(\rho\) of rank at most \(r\), choose an \(r\)-dimensional
subspace containing its support, let \(\Pi\) be the orthogonal projector
onto this subspace, and put
\(\Pi^\perp\coloneqq\Id_d-\Pi\) and
\(E\coloneqq\overline Y-\rho\).  The projection bound in
\Cref{lem:rank-constrained-frobenius-projection} gives
\[
  \norm{\widehat\rho-\rho}_1^2
  =O\left(r\left(
    \norm{\Pi E\Pi}_\F^2
    +\norm{\Pi^\perp E\Pi}_\F^2
    +r\norm{\Pi^\perp E\Pi^\perp}_{\mathrm{op}}^2
  \right)\right).
\]
Thus it suffices to control \(\Pi E\Pi\) and \(\Pi^\perp E\Pi\) in Frobenius
norm and \(\Pi^\perp E\Pi^\perp\) in operator norm.  The projector \(\Pi\) is
used only in this analysis and is not known to the protocol.

\subsubsection{Properties needed from one joint measurement}

The blocks involving \(\operatorname{Im}(\Pi)\) are controlled by
\(\operatorname{Var}_\rho\bigl(\Tr(HY)\bigr)\), where \(H\) ranges over suitable
Hermitian test matrices.  Suppose \(\rank(\rho)\le r\), and let
\(F\) be the swap operator between the two factors of \(\C^d\otimes\C^d\).
The measurement defined above gives an exactly unbiased estimator with an explicit
second moment identity, and the
expected number of columns of \(G\) is small:
\begin{equation}
  \E_\rho[Y]=\rho,
  \qquad
  \E_\rho[Y\otimes Y]
  =\frac{s-1}{s}\rho^{\otimes2}
    +\frac1s(\rho\otimes\Id_d+\Id_d\otimes\rho)F
    +\frac{\E_\rho[J]}{s^2}F,
  \qquad
  \E_\rho[J]=O\bigl(\sqrt s\bigr).
  \label{eq:overview-gaussian-block-moment}
\end{equation}
For every Hermitian matrix \(H\), multiplying the second moment identity by
\(H\otimes H\), taking the trace, and using unbiasedness gives
\[
  \operatorname{Var}_\rho\bigl(\Tr(HY)\bigr)
  =-\frac1s\bigl(\Tr(H\rho)\bigr)^2
    +\frac2s\Tr(\rho H^2)
    +\frac{\E_\rho[J]}{s^2}\norm H_\F^2.
\]
Dropping the nonpositive first term and using independence for the averaged
error \(E=\overline Y-\rho\) give
\begin{equation}
  \operatorname{Var}_\rho\bigl(\Tr(HY)\bigr)
  \le\frac2s\Tr(\rho H^2)+\frac{\E_\rho[J]}{s^2}\norm H_\F^2,
  \qquad
  \E_\rho\bigl[\Tr(HE)^2\bigr]
  \le\frac2N\Tr(\rho H^2)+\frac{\E_\rho[J]}{Ns}\norm H_\F^2.
  \label{eq:overview-averaged-scalar-bound}
\end{equation}
The second inequality in \cref{eq:overview-averaged-scalar-bound} controls
the Frobenius norms of \(\Pi E\Pi\) and
\(\Pi^\perp E\Pi\).  For a fixed total number of samples \(N\), its first
term on the right is independent of \(s\).  Since
\(\E_\rho[J]=O(\sqrt s)\), the second term is
\(O(\norm H_\F^2/(N\sqrt s))\).  This factor of \(s^{-1/2}\) is the gain from
measuring \(s\) samples jointly.  The moment identities and the variance bound
are proved in
\Cref{sec:gaussian-block-moments}; the estimate on \(\E_\rho[J]\) is proved
in \Cref{lem:gaussian-support-index-mean}.

The \(\Pi^\perp E\Pi^\perp\) block requires a different property: the
projection bound asks for control in operator norm, which
the above variance estimate does not provide.  The additional property is
also part of \Cref{prop:gaussian-block-estimator}:
for every \(\ell\) with \(\Prb_\rho[J=\ell]>0\), conditional on
\(J=\ell\), the columns of \(\Pi^\perp G\) remain independent standard
complex Gaussian vectors in \(\operatorname{Im}(\Pi^\perp)\).  If \(r=d\),
then \(\Pi^\perp=0\) and there is no complementary block.
Otherwise, for independent outcomes
\((J_1,G_1),\ldots,(J_B,G_B)\), put \(K\coloneqq\sum_bJ_b\) and concatenate
the coordinate matrices of \(\Pi^\perp G_b\), in an orthonormal basis of
\(\operatorname{Im}(\Pi^\perp)\), into
\(Z\in\C^{(d-r)\times K}\).  Conditional on \(J_1,\ldots,J_B\), the matrix
\(Z\) is standard complex Gaussian.  In these coordinates,
\[
  \Pi^\perp E\Pi^\perp
  =\frac1N(ZZ^\dagger-K\Id_{d-r}),
\]
so controlling this block reduces to bounding the deviation
\(ZZ^\dagger-K\Id_{d-r}\) in operator norm.  The required bound is
\Cref{lem:gaussian-covariance-operator-norm}.  The resulting bounds in Frobenius
norm and operator norm for these blocks are proved in
\Cref{lem:gaussian-estimator-support-blocks}.

\subsubsection{Why the joint measurement has these properties}

We first verify that \(\mathsf M_s\) is a POVM.  Write
\(\mathcal V_{\ell,s}\) for the span of all \(L^{\otimes s}\) with
\(L\subseteq\C^d\) and \(\dim L\le\ell\), with
\(\mathcal V_{0,s}=\{0\}\).  Positivity, continuity, and the full support of
the reference Gaussian measure show that \(\Gamma_{\ell,s}\) has support
\(\mathcal V_{\ell,s}\).  These subspaces are nested and span the tensor
product space, so the differences \(\Delta_{\ell,s}\) are projectors that
sum to the identity.  The pseudoinverse factors ensure that integrating
\(\mathsf M_s(\ell,\dd G)\) over \(G\) gives \(\Delta_{\ell,s}\),
establishing the validity of the measurement;
see \Cref{lem:gaussian-moment-support,lem:gaussian-block-povm-normalization}.

To bound \(\E_\rho[J]\), we use the distribution of \(J\) obtained from
this normalization.  The Born rule gives
\[
  \Prb_\rho[J=\ell]
  =\Tr(\Delta_{\ell,s}\rho^{\otimes s}),
  \qquad 1\le\ell\le\ell_{\max},
\]
and hence, for \(1\le q\le\ell_{\max}\),
\[
  \Prb_\rho[J\ge q]
  =\Tr\bigl((\Id_d^{\otimes s}-\Pi_{q-1,s})\rho^{\otimes s}\bigr).
\]
For each \(q\)-element subset \(T\subseteq[s]\), let \(A_T\) be the
orthogonal projector that antisymmetrizes the tensor factors in \(T\).
We bound the operator in this tail probability by a sum of these projectors:
\[
  \mathcal V_{q-1,s}
  =\bigcap_{\substack{T\subseteq[s]\\|T|=q}}\ker(A_T),
  \qquad
  \Id_d^{\otimes s}-\Pi_{q-1,s}
  \preceq\sum_{\substack{T\subseteq[s]\\|T|=q}}A_T.
\]
Both statements are proved in
\Cref{lem:gaussian-support-antisymmetrizers,lem:gaussian-support-index-mean}.

Let \(A_q\) be the antisymmetrizer on \((\C^d)^{\otimes q}\), and let
\(\lambda_1,\ldots,\lambda_d\) be the eigenvalues of \(\rho\).  Each of the
\(\binom sq\) projectors has the same expectation
\(\Tr(A_q\rho^{\otimes q})\) in the product state \(\rho^{\otimes s}\), so
\[
  \Prb_\rho[J\ge q]
  \le\binom sq\Tr(A_q\rho^{\otimes q})
  =\binom sq
    \sum_{1\le i_1<\cdots<i_q\le d}
      \lambda_{i_1}\cdots\lambda_{i_q}
  \le\frac{\binom sq}{q!}
    \left(\sum_{i=1}^d\lambda_i\right)^q
  =\frac{\binom sq}{q!}.
\]
The bound
\[
  \frac{\binom sq}{q!}
  \le\left(\frac{\ee^2s}{q^2}\right)^q
\]
shows that the tail decreases geometrically once \(q\) is a sufficiently
large constant multiple of \(\sqrt s\).  Summing the tail probabilities
gives \(\E_\rho[J]=O(\sqrt s)\).
The full argument is in \Cref{lem:gaussian-support-index-mean}.

We next compute the first and second moments of \(Y\).  The main tool is a
permutation expansion of Gaussian tensor moments.  Let \(\mathfrak S_s\)
be the permutations of \([s]\), let \(U_\pi\) permute the tensor factors
according to \(\pi\), and let \(c(\pi)\) count its cycles, including fixed
points.  The complex form of Isserlis's theorem \cite{Isserlis1918} gives
\[
  \Gamma_{\ell,s}
  =\sum_{\pi\in\mathfrak S_s}\ell^{c(\pi)}U_\pi.
\]
This identity and the resulting commutation properties are proved in
\Cref{lem:gaussian-moment-identity}.

For the contribution from \(J=\ell\), the Born rule leads to Gaussian
integrals containing \((GG^\dagger)^{\otimes s}\) from the POVM and one or
two factors \((GG^\dagger-\ell\Id_d)/s\) from the estimator.
For the first moment, label the measured tensor factors by \(1,\ldots,s\)
and place the estimator matrix on an auxiliary output tensor factor \(\C^d_a\).
For \(i\in[s]\), let \(F_{ai}\) be the swap operator
between this auxiliary factor and input factor \(i\).
Writing \(\Id_a\) for the identity on this auxiliary factor and setting
aside the scalar \(1/s\), the required Gaussian identity is
\[
  \int
    (GG^\dagger-\ell\Id_d)_a\otimes(GG^\dagger)^{\otimes s}
    \,\dd\gamma_{d,\ell}(G)
  =\sum_{i=1}^sF_{ai}(\Id_a\otimes\Gamma_{\ell,s}).
\]
To prove this equality, apply the permutation expansion to the Gaussian
integral of \((GG^\dagger)^{\otimes(s+1)}\), with factors labelled
\(a,1,\ldots,s\).  Subtracting \(\ell\Id_d\) in factor \(a\) cancels
exactly the permutations that fix \(a\).  Every remaining permutation is
obtained by inserting \(a\) into a cycle of a permutation of \([s]\),
immediately before one of the \(s\) input labels.  Summing over these labels
gives the right side.  For the second moment, apply the same argument
to \((GG^\dagger)^{\otimes(s+2)}\), with auxiliary factors \(a\) and \(b\).
Centering both factors cancels the permutations that fix either one.  The
remaining permutations insert \(a\) and \(b\) before distinct input labels,
consecutively before the same input label in either order, or together in a
two-cycle.

We now use these Gaussian identities in the Born rule calculation.  In
each layer, the pseudoinverse square roots cancel
\(\Gamma_{\ell,s}\), leaving \(\Delta_{\ell,s}\rho^{\otimes s}\).
This uses the support and commutation properties in
\Cref{lem:gaussian-moment-support,lem:gaussian-moment-identity}.
Summing over the layers gives
\[
  \sum_{\ell=1}^{\ell_{\max}}
    \Delta_{\ell,s}\rho^{\otimes s}=\rho^{\otimes s}.
\]
The first moment calculation therefore gives \(\E_\rho[Y]=\rho\).
With two output factors, the three types of permutations above contribute,
respectively,
\[
  \frac{s-1}{s}\rho^{\otimes2},
  \qquad
  \frac1s(\rho\otimes\Id_d+\Id_d\otimes\rho)F,
  \qquad
  \frac{\E_\rho[J]}{s^2}F.
\]
Their sum is the second moment in
\cref{eq:overview-gaussian-block-moment}.  The full calculation is given in
\Cref{lem:centered-gaussian-moments,lem:fixed-index-moments} and in the proof
of \Cref{prop:gaussian-block-estimator}.

Finally, we verify the conditional Gaussian law used to control
\(\Pi^\perp E\Pi^\perp\).  Fix \(J=\ell\) with positive probability and
recall that \(\Pi\rho=\rho\).  The support and commutation properties above
imply that the Born rule density of \(G\), relative to
\(\gamma_{d,\ell}\), depends only on \(\Pi G\).  Under this reference
measure, \(\Pi G\) and \(\Pi^\perp G\) are independent.  The outcome
density therefore changes only the distribution of \(\Pi G\).
Conditional on \(J=\ell\), the matrix \(\Pi^\perp G\) remains independent
of \(\Pi G\), with independent standard complex Gaussian columns in
\(\operatorname{Im}(\Pi^\perp)\).  This proves the property needed for the
complementary block; see the proof of \Cref{prop:gaussian-block-estimator}.

\subsubsection{Putting the bounds together}

The variance bound in \cref{eq:overview-averaged-scalar-bound}, together
with \(\E_\rho[J]=O(\sqrt s)\), gives
\[
  \E_\rho\left[
    \norm{\Pi E\Pi}_\F^2
    +2\norm{\Pi^\perp E\Pi}_\F^2
  \right]
  =O\left(
    \frac dN+\frac{dr}{N\sqrt s}
  \right).
\]
Applying the operator norm covariance bound to the conditional distribution
of the concatenated matrix \(Z\) gives
\[
  \E_\rho\norm{\Pi^\perp E\Pi^\perp}_{\mathrm{op}}^2
  =O\left(
    \frac d{N\sqrt s}+\frac{d^2}{N^2}
  \right).
\]
When \(r=d\), \(\E_\rho\norm{\Pi^\perp E\Pi^\perp}_{\mathrm{op}}^2=0\).
These estimates are proved in
\Cref{lem:gaussian-estimator-support-blocks}.

Substituting them into the projection bound gives
\[
  \E_\rho\norm{\widehat\rho-\rho}_1^2
  =O\left(
    \frac{dr}{N}+\frac{dr^2}{N\sqrt s}+\frac{d^2r^2}{N^2}
  \right).
\]
This bound also explains the choice \(s=\min\{t,r^2\}\).  For fixed \(N\),
the sum of the first two terms reaches order \(dr/N\) at \(s=r^2\);
increasing \(s\) further changes this sum by at most a constant factor.
Using larger joint measurements would not improve the resulting asymptotic
rate.
With the stated choice of \(B\) and sufficiently large \(C_0\), Markov's
inequality gives trace norm error at most \(\eps\) with probability at least
\(2/3\), using the claimed number of samples.
The same joint POVM is used in every round, so the protocol is nonadaptive.
The complete argument is in \Cref{sec:upper-bound-aggregation}.

\section{Preliminaries}
\label{sec:preliminaries}

We collect notation for matrices, measurements, tensor products, and Gaussian
matrices, and then recall the classical and quantum Fisher information facts
used in the lower bound.  The adaptive Fisher information chain rule and the
van Trees inequality are introduced later, where they first enter the proof.

\subsection{States, matrix spaces, and measurements}
\label{sec:state-measurement-preliminaries}

Throughout, all Hilbert spaces are finite dimensional and complex.  For a
positive integer $n$, write $[n]\coloneqq\{1,\ldots,n\}$.  We write
$\ii\coloneqq\sqrt{-1}$ for the imaginary unit and
$\ee$ for Euler's number.  For
$\vct{u},\vct{v}\in\cH$, we write $\ip{\vct{u}}{\vct{v}}$ for their inner
product and use the convention
$\ip{\vct{u}}{\vct{v}}=\vct{u}^\dagger\vct{v}$.  We use bold lowercase
symbols for vectors and ordinary lowercase symbols for their scalar
coordinates.
For a linear map $A:\cH_1\to\cH_2$, we write $A^\dagger$ for its adjoint.
Its kernel, image, and rank are
\[
  \ker(A)\coloneqq\{\vct{u}\in\cH_1:A\vct{u}=0\},
  \qquad
  \operatorname{Im}(A)
  \coloneqq\{A\vct{u}:\vct{u}\in\cH_1\},
  \qquad
  \rank(A)\coloneqq\dim\operatorname{Im}(A).
\]
When \(\cH_1=\cH_2\), we write \(\Tr(A)\) for its trace.
For a subspace $\mathcal U\subseteq\cH$, let $\mathcal U^\perp$ denote its
orthogonal complement and let $\Pi_\mathcal U$ denote the orthogonal projector onto
$\mathcal U$.  Thus
\[
  \operatorname{Im}(\Pi_\mathcal U)=\mathcal U,
  \qquad
  \Pi_\mathcal U^\dagger=\Pi_\mathcal U,
  \qquad
  \Pi_\mathcal U^2=\Pi_\mathcal U.
\]
We generally use the letter $\Pi$ for projectors; subscripts specify the
subspace or parameter when needed.
The notation $A\succeq0$ means that $A$ is positive semidefinite.  For
Hermitian operators $A$ and $B$, we write $A\preceq B$ if
$B-A\succeq0$.  A density matrix, or a quantum state, on a Hilbert space
$\cH$ is an operator
$\rho\succeq0$ with $\Tr(\rho)=1$.  We write $\cD(\cH)$ for the set of
density matrices on $\cH$, and $\Id_\cH$ for the identity operator on
$\cH$.  In particular, $\Id_d$ denotes the identity on $\C^d$.
For a positive integer $r\le\dim\cH$, write
\[
  \cD_r(\cH)
  \coloneqq\{\rho\in\cD(\cH):\rank(\rho)\le r\}.
\]
For a positive semidefinite operator $A$, its support is the subspace
\(\operatorname{supp}(A)\coloneqq\operatorname{Im}(A)=\ker(A)^\perp\).
For a projector $\Pi$ on $\cH$, write
\(\Pi^\perp\coloneqq\Id_\cH-\Pi\) for the projector onto
\(\operatorname{Im}(\Pi)^\perp\).

For a matrix $H$, its Frobenius norm is
\[
  \norm H_\F\coloneqq\sqrt{\Tr(H^\dagger H)}.
\]
Its trace norm, nuclear norm, or Schatten-$1$ norm, is
\[
  \norm H_1\coloneqq\Tr\left(\sqrt{H^\dagger H}\right).
\]
We also write
\(\abs H\coloneqq\sqrt{H^\dagger H}\) for its absolute value.
For a vector $\vct{u}$, write
$\norm{\vct{u}}\coloneqq\sqrt{\ip{\vct{u}}{\vct{u}}}$.  The operator norm
of a linear map $A$ is
\[
  \norm A_{\mathrm{op}}
  \coloneqq\sup_{\norm{\vct{u}}=1}\norm{A\vct{u}}.
\]
For a complex number $z$, $\operatorname{Re}(z)$ denotes its real part.
For positive integers \(m,r\), we regard the complex matrix space
$\C^{m\times r}$ as a real Euclidean space
with inner product
\begin{equation}
  \ip{H}{K}_{\R}
  \coloneqq\operatorname{Re}\left(\Tr(H^\dagger K)\right).
  \label{eq:real-matrix-inner-product}
\end{equation}
For $1\le i\le m$ and $1\le j\le r$, let
$E_{ij}\in\C^{m\times r}$ denote the matrix whose $(i,j)$ entry is one and
whose other entries are zero.  Then
\begin{equation}
  \{E_{ij},\,\ii E_{ij}:1\le i\le m,\ 1\le j\le r\}
  \label{eq:real-matrix-basis}
\end{equation}
is an orthonormal basis of $\C^{m\times r}$ regarded as a real inner product
space, with the inner product defined in
\cref{eq:real-matrix-inner-product}.  The basis matrices $E_{ij}$ and
$\ii E_{ij}$ correspond to the real and imaginary coordinates of the
$(i,j)$ entry, respectively.

For a differentiable real-valued function \(f\) on this matrix space, write
\(X_{ij}=x_{ij}+\ii y_{ij}\), with \(x_{ij},y_{ij}\in\R\).  Its gradient
consists of the partial derivatives with respect to these \(2mr\) real
coordinates.  We also write this gradient as a matrix whose \((i,j)\) entry
is \(\partial f/\partial x_{ij}+\ii\,\partial f/\partial y_{ij}\).
The Euclidean norm \(\norm{\nabla f(X)}_2\) equals the Frobenius norm of
this matrix.

Let $(\mathsf Z,\mathcal Z)$ be a measurable outcome space, where
$\mathsf Z$ is the set of possible outcomes and $\mathcal Z$ is its
sigma-algebra of measurable events.
A positive operator-valued measure
(POVM) on $(\mathsf Z,\mathcal Z)$ is a countably additive map $\mathsf M$
from $\mathcal Z$ to positive semidefinite operators on $\cH$, normalized by
\[
  \mathsf M(\mathsf Z)=\Id_\cH.
\]
When a state $\rho$ is measured by $\mathsf M$, the probability of observing
an outcome in a measurable event $\mathsf E\in\mathcal Z$ is
\begin{equation}
  \Prb_\rho^{\mathsf M}(\mathsf E)
  \coloneqq\Tr\bigl(\rho\,\mathsf M(\mathsf E)\bigr).
  \label{eq:born-rule-povm}
\end{equation}

\subsection{Tensor permutations and Gaussian matrices}
\label{sec:tensor-gaussian-preliminaries}

For an operator $L$ on $\cH$ and $\ell\in[n]$, we write
\[
  L^{[\ell]}
  \coloneqq
  \Id_\cH^{\otimes(\ell-1)}\otimes L\otimes
  \Id_\cH^{\otimes(n-\ell)}
\]
for the operator on $\cH^{\otimes n}$ that acts as $L$ on the $\ell$-th
tensor factor and as the identity on every other factor.  The ambient tensor
power will always be clear from context.

Let $\mathfrak S_n$ be the symmetric group on $[n]$.  For
$\pi\in\mathfrak S_n$, let $U_\pi$ denote the operator on
$\cH^{\otimes n}$ defined by
\[
  U_\pi(\vct{v}_1\otimes\cdots\otimes\vct{v}_n)
  \coloneqq
  \vct{v}_{\pi^{-1}(1)}\otimes\cdots\otimes
  \vct{v}_{\pi^{-1}(n)}.
\]
This convention gives $U_\pi U_\tau=U_{\pi\tau}$.  The swap operator on two tensor
factors is denoted by $F$; thus
$F(\vct{u}\otimes\vct{v})=\vct{v}\otimes\vct{u}$ and
\begin{equation}
  \Tr\bigl(F(A\otimes B)\bigr)=\Tr(AB)
  \label{eq:swap-trace-identity}
\end{equation}
for operators $A,B$ on the same space.

For an operator $K$ on $n$ tensor factors and $T\subseteq[n]$, $\Tr_T(K)$
denotes the partial trace over the factors in $T$, defined by linearity and
\[
  \Tr_T(A_1\otimes\cdots\otimes A_n)
  \coloneqq
  \left(\prod_{i\in T}\Tr(A_i)\right)
  \bigotimes_{i\in[n]\setminus T}A_i,
\]
where $A_i$ acts on the $i$-th factor and the remaining factors retain their
original order.

For a positive semidefinite operator $A$, we write $A^+$ for its
Moore--Penrose pseudoinverse.  Thus $(A^+)^{1/2}$ acts as $A^{-1/2}$ on the support
of $A$ and as zero on its kernel.

A standard complex Gaussian scalar is $z=x+\ii y$, where $x$ and $y$ are
independent real Gaussian random variables with mean zero and variance
$1/2$.  A standard complex Gaussian matrix has independent entries with this
distribution.  We write $\gamma_{d,\ell}$ for the probability measure of a
standard complex Gaussian matrix in $\C^{d\times\ell}$.

\subsection{Classical Fisher information}
\label{sec:classical-fisher-information}

We first recall the classical definition.  Let $\Theta\subseteq\R^p$ be an
open parameter set, write
$\vct{\theta}\coloneqq(\theta_1,\ldots,\theta_p)$, and let
$\{\Prb_{\vct{\theta}}:\vct{\theta}\in\Theta\}$ be a family of probability
measures on $(\mathsf Z,\mathcal Z)$ with likelihood densities
$q_{\vct{\theta}}$ relative to a parameter-independent measure $\nu$:
\begin{equation}
  \Prb_{\vct{\theta}}(\mathsf E)
  =\int_{\mathsf E}q_{\vct{\theta}}(z)\,\dd\nu(z)
  \label{eq:dominated-model}
\end{equation}
for every $\mathsf E\in\mathcal Z$.  Assume that
$\vct{\theta}\mapsto q_{\vct{\theta}}(z)$ is differentiable for
$\nu$-almost every $z$.  The measure-theoretic conventions and regularity
conditions used for continuous outcomes are collected in
\Cref{app:statistical-formalism}.  When \(\nu\) is counting measure on a
discrete outcome space, the integrals below become sums and
\(q_{\vct{\theta}}\) is the probability mass function.  The likelihood gradient is
\[
  \nabla q_{\vct{\theta}}(z)
  \coloneqq\bigl(\partial_1q_{\vct{\theta}}(z),\ldots,
          \partial_pq_{\vct{\theta}}(z)\bigr)^{\mathsf T}.
\]
For a direction $\vct{v}\in\R^p$, we use the notation
\[
  \Ddir_{\vct{v}}q_{\vct{\theta}}(z)
  \coloneqq
  \left.\frac{\dd}{\dd s}
    q_{\vct{\theta}+s\vct{v}}(z)\right|_{s=0}
  =\vct{v}^{\mathsf T}\nabla q_{\vct{\theta}}(z).
\]
The Fisher information matrix is defined as
\begin{equation}
  \cI(\vct{\theta})
  \coloneqq\int_{\mathsf Z}
    \frac{
      \nabla q_{\vct{\theta}}(z)
      \nabla q_{\vct{\theta}}(z)^{\mathsf T}
    }{q_{\vct{\theta}}(z)}\,\dd\nu(z).
  \label{eq:fisher-matrix}
\end{equation}
The integrand in \cref{eq:fisher-matrix} is defined to be zero when
$q_{\vct{\theta}}(z)=0$.

On the set where $q_{\vct{\theta}}(z)>0$, the score vector is
\[
  \vct{s}_{\vct{\theta}}(z)
  \coloneqq\nabla\log q_{\vct{\theta}}(z)
  =\frac{\nabla q_{\vct{\theta}}(z)}{q_{\vct{\theta}}(z)}.
\]
Consequently,
\[
  \cI(\vct{\theta})
  =\int_{\mathsf Z}
    \vct{s}_{\vct{\theta}}(z)
    \vct{s}_{\vct{\theta}}(z)^{\mathsf T}
    q_{\vct{\theta}}(z)\,\dd\nu(z),
\]
with the score set to zero on the zero-likelihood set.  Equivalently, the
Fisher information in a direction $\vct{v}$ is
\[
  \vct{v}^{\mathsf T}\cI(\vct{\theta})\vct{v}
  =\int_{\mathsf Z}
    \frac{
      (\Ddir_{\vct{v}}q_{\vct{\theta}}(z))^2
    }{q_{\vct{\theta}}(z)}\,\dd\nu(z).
\]

The quantity used in our lower bound is the Fisher information trace
$\Tr\bigl(\cI(\vct{\theta})\bigr)$.  It is given by
\begin{equation}
  \Tr\bigl(\cI(\vct{\theta})\bigr)
  =\int_{\mathsf Z}
    \frac{
      \norm{\nabla q_{\vct{\theta}}(z)}_2^2
    }{q_{\vct{\theta}}(z)}\,\dd\nu(z)
  =\sum_{j=1}^p\int_{\mathsf Z}
    \frac{
      (\partial_jq_{\vct{\theta}}(z))^2
    }{q_{\vct{\theta}}(z)}\,\dd\nu(z).
  \label{eq:fisher-trace}
\end{equation}
Here $\norm{\cdot}_2$ is the Euclidean norm on $\R^p$.

\subsection{Quantum Fisher information}
\label{sec:quantum-fisher-information}

We first apply the classical definition to the outcome distribution of a
quantum measurement.  To differentiate likelihoods at a fixed outcome, we
write all outcome laws relative to one measure that does not depend on the
state parameter.  In finite dimensions, every POVM provides such a measure
automatically.  Let $d\coloneqq\dim\cH$, let $\mathsf M$ be a POVM on
$(\mathsf Z,\mathcal Z)$, and define
\begin{equation}
  \nu(\mathsf E)\coloneqq\frac1d\Tr\bigl(\mathsf M(\mathsf E)\bigr).
  \label{eq:povm-dominating-measure}
\end{equation}
Finite dimensionality guarantees a positive semidefinite operator density
$M_z$ such that
\begin{equation}
  \mathsf M(\mathsf E)=\int_{\mathsf E} M_z\,\dd\nu(z),
  \qquad
  \int_{\mathsf Z}M_z\,\dd\nu(z)=\Id_\cH.
  \label{eq:povm-density}
\end{equation}
This dominated POVM formulation is standard in quantum statistics; see, for
example, \cite{BarndorffNielsenGill2000}.  Its short proof is included in
\Cref{app:single-povm-domination}.

Let $\vct{\theta}\mapsto\rho_{\vct{\theta}}$ be a differentiable family of
states, and let
$\Prb_{\vct{\theta}}\coloneqq
\Prb_{\rho_{\vct{\theta}}}^{\mathsf M}$ be its outcome law.  Combining
\cref{eq:born-rule-povm,eq:povm-density} gives
\[
  \Prb_{\vct{\theta}}(\mathsf E)
  =\int_{\mathsf E}\Tr(M_z\rho_{\vct{\theta}})\,\dd\nu(z).
\]
Therefore the likelihood density is
\begin{equation}
  q_{\vct{\theta}}(z)
  \coloneqq\Tr(M_z\rho_{\vct{\theta}}).
  \label{eq:povm-likelihood}
\end{equation}
For $\vct{v}\in\R^p$, write
\[
  \Ddir_{\vct{v}}\rho_{\vct{\theta}}
  \coloneqq
  \left.\frac{\dd}{\dd s}
    \rho_{\vct{\theta}+s\vct{v}}\right|_{s=0}.
\]
Since the right side of \cref{eq:povm-likelihood} is linear in the
state, its directional derivative is
\begin{equation}
  \Ddir_{\vct{v}}q_{\vct{\theta}}(z)
  =\Tr(M_z\Ddir_{\vct{v}}\rho_{\vct{\theta}}).
  \label{eq:povm-likelihood-derivative}
\end{equation}
We denote the resulting classical Fisher information matrix by
$\cI_{\mathsf M}(\vct{\theta})$.  Although the formulas use the particular
dominating measure $\nu$ and density $M_z$, this Fisher matrix depends only on
the outcome distributions generated by $\mathsf M$ and
$\rho_{\vct{\theta}}$.
The subscript $\mathsf M$ denotes the measurement that induces this classical
Fisher information matrix.

The quantum Fisher information matrix is instead associated directly with
the family $\vct{\theta}\mapsto\rho_{\vct{\theta}}$ and does not depend on a
choice of measurement.  For $j\in[p]$, let $\vct{e}_j$ denote the $j$-th standard basis
vector of $\R^p$, and write
$\partial_j\rho_{\vct{\theta}}
\coloneqq\Ddir_{\vct{e}_j}\rho_{\vct{\theta}}$.  A
\emph{symmetric logarithmic derivative} for the $j$-th parameter is a
Hermitian operator $L_j$ satisfying
\begin{equation}
  \partial_j\rho_{\vct{\theta}}
  =\frac12\bigl(
    L_j\rho_{\vct{\theta}}+\rho_{\vct{\theta}}L_j
  \bigr).
  \label{eq:symmetric-logarithmic-derivative}
\end{equation}
In finite dimensions, such an operator exists for a differentiable family of
states.  When $\rho_{\vct{\theta}}$ is not full rank, $L_j$ need not be
unique, but the matrix defined below is independent of the choice.  The
\emph{quantum Fisher information matrix} is
\begin{equation}
  \bigl(\cI_{\mathrm Q}(\vct{\theta})\bigr)_{jk}
  \coloneqq\frac12\Tr\bigl(
    \rho_{\vct{\theta}}(L_jL_k+L_kL_j)
  \bigr).
  \label{eq:quantum-fisher-matrix}
\end{equation}
This is a real symmetric positive semidefinite matrix.
The subscript $\mathrm Q$ distinguishes this measurement-independent quantum
quantity from the measurement-induced classical Fisher matrix
$\cI_{\mathsf M}$.

The quantum Fisher information matrix bounds from above the classical
Fisher information obtainable from any parameter-independent POVM:
\begin{equation}
  \cI_{\mathsf M}(\vct{\theta})
  \preceq\cI_{\mathrm Q}(\vct{\theta}).
  \label{eq:classical-quantum-fisher-bound}
\end{equation}
See \cite{BraunsteinCaves1994,GillMassar2000} for the corresponding
classical--quantum Fisher information comparison.

We will also use additivity under tensor products.  Fix a positive integer
$t$, and denote the quantum Fisher information matrix of the product family
$\rho_{\vct{\theta}}^{\otimes t}$ by
$\cI_{\mathrm Q}^{(t)}(\vct{\theta})$, where the superscript $(t)$ denotes
the number of samples.  A symmetric logarithmic derivative for the $j$-th
parameter of this family is
\[
  L_j^{(t)}
  \coloneqq\sum_{\ell=1}^t L_j^{[\ell]}.
\]
Indeed, differentiating the product state and applying
\cref{eq:symmetric-logarithmic-derivative} in each tensor factor gives
\[
  \partial_j\rho_{\vct{\theta}}^{\otimes t}
  =\frac12\left(
    L_j^{(t)}\rho_{\vct{\theta}}^{\otimes t}
    +\rho_{\vct{\theta}}^{\otimes t}L_j^{(t)}
  \right).
\]
Taking the trace in \cref{eq:symmetric-logarithmic-derivative} and using
$\Tr(\rho_{\vct{\theta}})=1$ gives
\[
  \Tr(\rho_{\vct{\theta}}L_j)
  =\Tr(\partial_j\rho_{\vct{\theta}})
  =\partial_j\Tr(\rho_{\vct{\theta}})
  =0.
\]
Consequently, for distinct $\ell,\ell'\in[t]$,
\[
  \Tr\bigl(
    \rho_{\vct{\theta}}^{\otimes t}L_j^{[\ell]}L_k^{[\ell']}
  \bigr)
  =\Tr(\rho_{\vct{\theta}}L_j)
    \Tr(\rho_{\vct{\theta}}L_k)
  =0.
\]
When we apply \cref{eq:quantum-fisher-matrix} to the product family using
$L_j^{(t)}$ and $L_k^{(t)}$, the cross terms between distinct tensor factors
vanish, while the $t$ terms with $\ell=\ell'$ each give the corresponding
entry of the single-sample quantum Fisher information matrix.  Consequently,
\begin{equation}
  \cI_{\mathrm Q}^{(t)}(\vct{\theta})
  =t\cI_{\mathrm Q}(\vct{\theta}).
  \label{eq:quantum-fisher-additivity}
\end{equation}
For a joint POVM $\mathsf M$ on $t$ samples, denote the induced classical
Fisher information matrix by
$\cI_{\mathsf M}^{(t)}(\vct{\theta})$; the same superscript again denotes
the number of samples measured jointly.  Then
\begin{equation}
  \Tr\bigl(\cI_{\mathsf M}^{(t)}(\vct{\theta})\bigr)
  \le t\Tr\bigl(\cI_{\mathrm Q}(\vct{\theta})\bigr).
  \label{eq:joint-measurement-quantum-fisher-bound}
\end{equation}

\subsection{Adaptive measurement protocols}
\label{sec:block-adaptive-model}

Fix a positive integer \(t\).  An adaptive protocol using joint measurements
on at most \(t\) samples receives independent samples of an unknown state.
In round \(i\), it chooses
an integer \(t_i\in\{1,\ldots,t\}\) and a joint POVM on \(t_i\) samples as
measurable functions of a private random seed and all earlier classical
outcomes.  It applies this POVM to \(t_i\) fresh samples and retains only the
classical outcome.
The samples measured in that round are destroyed, and no quantum memory
persists between rounds.  The measurable outcome space
\((\mathsf Z_i,\mathcal Z_i)\) in each round is assumed to be countably
generated: there is a countable collection of measurable events that
generates \(\mathcal Z_i\).  This includes finite and countable outcome
spaces, as well as Euclidean spaces with their Borel sigma-algebras.  A
protocol may halt early; its sample complexity is the
maximum, over every seed and every transcript, of the total number of samples
used.  Equivalently, after halting we may pad the protocol with zero-sample,
one-outcome rounds.
The final estimate is a measurable function of the private seed and the
observed outcomes.

The error in
\Cref{thm:adaptive-block-lower-bound,thm:bounded-block-upper-bound} is measured in the
Schatten \(1\) norm.  If trace distance is defined with the conventional
factor \(1/2\), only the universal constants change.

The restriction to classical memory between rounds is part of the theorem:
with persistent quantum memory, successive rounds could combine into a
larger coherent measurement and the number of fresh samples measured in each
round would no longer describe the experiment analyzed here.

\section{From Fisher information bounds to the adaptive lower bound}
\label{sec:lower-bound-proof}

This section proves \Cref{thm:adaptive-block-lower-bound}.  We first construct
a local parameterization of a hard family of states.  We next bound the Fisher
information obtainable by jointly measuring \(t\) samples at a time, beginning
with the simpler single-sample case.  We then
accumulate these bounds from each round along an adaptive transcript, convert the
resulting Fisher information bound into a lower bound on expected trace norm
loss using the van Trees inequality, and finish with a depolarizing channel
reduction for the large-rank regime.

\subsection{The hard family and its local coordinates}
\label{sec:support-rotation}

The hard states all have rank \(r\) and are maximally mixed on their supports.
We keep the eigenvalues fixed and vary only the support.  We parameterize
these support rotations by a matrix \(X\), with \(X=0\) corresponding to the
reference support.  The probability density introduced later is supported on
matrices \(X\in\C^{(d-r)\times r}\) with small operator norm, so only nearby
supports enter the lower bound argument.

Let $d\ge2$ and $1\le r<d$ be integers, and set $m\coloneqq d-r$.  Fix an orthogonal
decomposition
\[
  \C^d=S\oplus S^\perp,
  \qquad \dim S=r,
  \qquad \dim S^\perp=m.
\]
Choose orthonormal bases of $S$ and $S^\perp$, and use them to write vectors
in these subspaces as elements of $\C^r$ and $\C^m$, respectively.  Let
$\Pi_0$ be the orthogonal projector onto $S$.  With respect to
the resulting orthonormal basis of $\C^d$, our reference state has the block
representation
\begin{equation}
  \rho_0\coloneqq\frac1r\Pi_0
  =\frac1r\begin{pmatrix}
      \Id_r&0\\
      0&0_m
    \end{pmatrix}.
  \label{eq:reference-projector-state}
\end{equation}
We use this family directly when \(r\le d/2\); the case \(r>d/2\) is reduced
to this regime in \Cref{sec:rank-lifting}.

For $X\in\C^{m\times r}$, define a linear map
$V_X:\C^r\to\C^d$ by
\begin{equation}
  V_X\coloneqq
  \begin{pmatrix}\Id_r\\X\end{pmatrix}
  (\Id_r+X^\dagger X)^{-1/2}.
  \label{eq:graph-isometry}
\end{equation}
The normalization on the right makes $V_X$ an isometry.
Consequently,
\begin{equation}
  \Pi_X\coloneqq V_XV_X^\dagger,
  \qquad
  \rho_X\coloneqq\frac1r\Pi_X
  \label{eq:graph-family}
\end{equation}
are respectively an orthogonal projector and a density matrix of rank $r$.
The image of $\Pi_X$ is the $r$-dimensional subspace
\[
  \operatorname{Im}(\Pi_X)
  =\{(\vct{u},X\vct{u}):\vct{u}\in\C^r\}.
\]
At \(X=0\), this subspace is \(S\).  Matrices \(X\) with small operator norm
parameterize the \(r\)-dimensional subspaces near \(S\), obtained by tilting
vectors in \(S\) toward \(S^\perp\).  The columns of \(V_X\) form an
orthonormal basis of the corresponding support.

We next compute the tangent to this family at $X=0$.  Substituting
\cref{eq:graph-isometry} into the definition
$\Pi_X=V_XV_X^\dagger$ in \cref{eq:graph-family} gives
\[
  \Pi_X=
  \begin{pmatrix}\Id_r\\X\end{pmatrix}
  (\Id_r+X^\dagger X)^{-1}
  \begin{pmatrix}\Id_r&X^\dagger\end{pmatrix}.
\]
For a direction $H\in\C^{m\times r}$, substitute $X=sH$.  The derivative at
$s=0$ of $(\Id_r+s^2H^\dagger H)^{-1}$ is zero, and hence
\[
  \Ddir_H\Pi_0
  \coloneqq\left.\frac{\dd}{\dd s}\Pi_{sH}\right|_{s=0}
  =
  \begin{pmatrix}0\\H\end{pmatrix}
  \begin{pmatrix}\Id_r&0\end{pmatrix}
  +
  \begin{pmatrix}\Id_r\\0\end{pmatrix}
  \begin{pmatrix}0&H^\dagger\end{pmatrix}
  =
  \begin{pmatrix}0&H^\dagger\\H&0\end{pmatrix}.
\]
Since $\rho_X=\Pi_X/r$, the corresponding state derivative is
\begin{equation}
  \Ddir_H\rho_0
  \coloneqq\left.\frac{\dd}{\dd s}\rho_{sH}\right|_{s=0}
  =\frac1r\Ddir_H\Pi_0
  =\frac1r\begin{pmatrix}0&H^\dagger\\H&0\end{pmatrix}.
  \label{eq:state-tangent-at-zero}
\end{equation}
The real and imaginary parts of the $mr$ entries of $X$ therefore give
$2mr$ real tangent coordinates at the reference state.

The block representation also gives the metric comparison used when we later
recover the parameter from a tomographic estimate.  The lower left block of
\(\Pi_X\) is \(X(\Id_r+X^\dagger X)^{-1}\).  For \(X\) near zero, this block
changes at essentially the same rate as \(X\).  The following lemma makes
this precise: if \(\Pi_X\) and \(\Pi_Y\) are close in trace norm, then so are
\(X\) and \(Y\).

\begin{lemma}[Trace norm inverse Lipschitz bound]
  \label{lem:graph-metric}
  If $\norm X_{\mathrm{op}},\norm Y_{\mathrm{op}}\le1/4$, then
  \begin{equation}
    \norm{X-Y}_1
    \le2\norm{\Pi_X-\Pi_Y}_1
    =2r\norm{\rho_X-\rho_Y}_1.
    \label{eq:graph-inverse-lipschitz}
  \end{equation}
\end{lemma}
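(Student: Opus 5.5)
The equality \(\norm{\Pi_X-\Pi_Y}_1=r\norm{\rho_X-\rho_Y}_1\) is immediate from \(\rho_X=\Pi_X/r\), so the content is the inequality \(\norm{X-Y}_1\le2\norm{\Pi_X-\Pi_Y}_1\). The plan is to compare lower left blocks. Write \(f(X)\coloneqq X(\Id_r+X^\dagger X)^{-1}\), the lower left block of \(\Pi_X\) computed just before the lemma. Compressing a matrix to an off-diagonal block does not increase trace norm, because \(P^\perp(\cdot)P\) with two projectors is a composition with contractions. Hence
\[
  \norm{f(X)-f(Y)}_1\le\norm{\Pi_X-\Pi_Y}_1.
\]
It therefore suffices to prove the reverse Lipschitz bound \(\norm{X-Y}_1\le2\norm{f(X)-f(Y)}_1\) on the operator norm ball of radius \(1/4\).

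For this bound I will write
\[
  f(X)-f(Y)=(X-Y)-\bigl(g(X)-g(Y)\bigr),
  \qquad
  g(X)\coloneqq X-f(X)=X X^\dagger X(\Id_r+X^\dagger X)^{-1}.
\]
The last expression uses \(X-X(\Id+X^\dagger X)^{-1}=X\,X^\dagger X(\Id+X^\dagger X)^{-1}\). The goal is to show
\[
  \norm{g(X)-g(Y)}_1\le\tfrac12\norm{X-Y}_1.
\]
The triangle inequality then gives \(\norm{f(X)-f(Y)}_1\ge\tfrac12\norm{X-Y}_1\), which is exactly what is needed.

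To bound \(g(X)-g(Y)\), I expand it as a telescoping sum in which exactly one factor at a time changes from \(Y\) to \(X\). I then apply Hölder's inequality \(\norm{ABC}_1\le\norm A_{\mathrm{op}}\norm B_1\norm C_{\mathrm{op}}\), using \(\norm{X}_{\mathrm{op}},\norm{Y}_{\mathrm{op}}\le1/4\) and \(\norm{(\Id+X^\dagger X)^{-1}}_{\mathrm{op}}\le1\). The resolvent difference is handled by
\[
  (\Id+X^\dagger X)^{-1}-(\Id+Y^\dagger Y)^{-1}
  =(\Id+X^\dagger X)^{-1}\bigl(Y^\dagger Y-X^\dagger X\bigr)(\Id+Y^\dagger Y)^{-1},
\]
together with \(Y^\dagger Y-X^\dagger X=Y^\dagger(Y-X)+(Y-X)^\dagger X\). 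Note that \(\norm{(Y-X)^\dagger}_1=\norm{X-Y}_1\).

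The counting goes as follows. The cubic part \(XX^\dagger X-YY^\dagger Y\) contributes at most \(3\cdot(1/4)^2\norm{X-Y}_1\). The resolvent part contributes at most \((1/4)^3\cdot2\cdot(1/4)\norm{X-Y}_1\). The total is \((3/16+1/128)\norm{X-Y}_1<\tfrac12\norm{X-Y}_1\), which leaves comfortable room.

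The main obstacle is only this bookkeeping: every difference must be placed in the trace norm slot and every other factor in operator norm, including the adjoint differences inside \(X^\dagger X-Y^\dagger Y\). An alternative route is to integrate the derivative of \(g\) along the segment from \(Y\) to \(X\); the operator norm ball is convex, so the segment stays inside it. That derivative is bounded by the same factor count. I would use the telescoping form because it avoids any differentiability issues.
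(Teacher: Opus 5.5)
Your proposal is correct and follows essentially the paper's route: compare the lower left blocks \(X(\Id_r+X^\dagger X)^{-1}\), use that block compression does not increase trace norm, and control the perturbation with the resolvent identity and H\"older's inequality. The only difference is the split. With \(R_X\coloneqq(\Id_r+X^\dagger X)^{-1}\), the paper writes \(XR_X-YR_Y=(X-Y)R_X+Y(R_X-R_Y)\) and lower bounds the main term via \(R_X\succeq\frac{16}{17}\Id_r\), whereas you keep \(X-Y\) itself as the main term and show the cubic remainder satisfies \(\norm{g(X)-g(Y)}_1\le\frac{25}{128}\norm{X-Y}_1\), which works equally well.
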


\begin{proof}
  Put $R_X\coloneqq(\Id_r+X^\dagger X)^{-1}$.  The lower left block of
  $\Pi_X$ is $XR_X$, and
  \[
    XR_X-YR_Y=(X-Y)R_X+Y(R_X-R_Y).
  \]
  Since $\norm X_{\mathrm{op}}\le1/4$,
  \[
    \Id_r\preceq\Id_r+X^\dagger X\preceq\frac{17}{16}\Id_r,
    \qquad
    \frac{16}{17}\Id_r\preceq R_X\preceq\Id_r.
  \]
  Therefore,
  \[
    \norm{(X-Y)R_X}_1\ge\frac{16}{17}\norm{X-Y}_1.
  \]
  Moreover,
  \[
    R_X-R_Y=R_X(R_Y^{-1}-R_X^{-1})R_Y
      =R_X(Y^\dagger Y-X^\dagger X)R_Y,
  \]
  and $\norm{R_X}_{\mathrm{op}},\norm{R_Y}_{\mathrm{op}}\le1$.  Using the
  standard trace norm inequality
  $\norm{ABC}_1\le\norm A_{\mathrm{op}}\norm B_1\norm C_{\mathrm{op}}$,
  we obtain
  \[
    \norm{Y(R_X-R_Y)}_1
    \le\norm Y_{\mathrm{op}}
      \norm{Y^\dagger Y-X^\dagger X}_1.
  \]
  Since
  \[
    Y^\dagger Y-X^\dagger X
    =Y^\dagger(Y-X)+(Y-X)^\dagger X,
  \]
  applying the matrix norm inequality again gives
  \begin{align*}
    \norm{Y(R_X-R_Y)}_1
    &\le\norm Y_{\mathrm{op}}
      \bigl(\norm X_{\mathrm{op}}+\norm Y_{\mathrm{op}}\bigr)
      \norm{X-Y}_1
    \le\frac18\norm{X-Y}_1.
  \end{align*}
  The reverse triangle inequality now gives
  \begin{align*}
    \norm{XR_X-YR_Y}_1
    &\ge\norm{(X-Y)R_X}_1-\norm{Y(R_X-R_Y)}_1\\
    &\ge\left(\frac{16}{17}-\frac18\right)\norm{X-Y}_1
    \ge\frac12\norm{X-Y}_1.
  \end{align*}
  Taking a block compression cannot increase the trace norm, which
  proves the first inequality in \cref{eq:graph-inverse-lipschitz}; the
  equality follows from $\rho_X=\Pi_X/r$.
\end{proof}

\subsection{Fisher information trace for joint measurements}
\label{sec:fisher-information-bounds}

Recall that \(m=d-r\).  The real and imaginary parts of
\(X\in\C^{m\times r}\) give the \(2mr\) real coordinates introduced in
\Cref{sec:support-rotation}.  For a joint POVM \(\mathsf M\) on \(t\)
samples, let \(\cI_{\mathsf M}^{(t)}(X)\) denote the classical
Fisher information matrix of its outcome distribution in these coordinates.
The following proposition bounds its trace for every measurement and every
parameter \(X\).  This is the main technical input to the adaptive lower
bound.

\begin{proposition}[Fisher information trace for a joint measurement on \(t\) samples]
  \label{prop:one-block-fisher-bound}
  There is a universal constant \(C>0\) such that, for every positive
  integer \(t\), every \(X\in\C^{m\times r}\), and every joint POVM
  \(\mathsf M\) on \(t\) samples,
  \begin{equation}
    \Tr\bigl(\cI_{\mathsf M}^{(t)}(X)\bigr)
    \le Cdt\min\left\{1,\frac{\sqrt t}{r}\right\}.
    \label{eq:one-block-fisher-bound}
  \end{equation}
\end{proposition}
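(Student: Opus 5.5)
We follow the route sketched in \Cref{sec:lower-bound-overview}: reduce to \(X=0\) by a change of basis, treat \(t\ge r^2\) with the quantum Fisher bound, and treat \(t<r^2\) by a block Cauchy--Schwarz argument with a direction-independent operator envelope.

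\textbf{Reduction to \(X=0\).} Write \(V_X\) as in \cref{eq:graph-isometry} and extend it to a unitary \(U_X\) on \(\C^d\) whose first \(r\) columns are \(V_X\). Then \(U_X^\dagger\rho_XU_X=\rho_0\). Differentiating \(\rho_{X+sH}\) and conjugating by \(U_X\) gives a Hermitian operator. We aim to show that it has the block form \(\Ddir_{\mathcal L_X(H)}\rho_0\) for a real-linear map \(\mathcal L_X\).

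\begin{itemize}
\item The diagonal blocks vanish. The \(S\)--\(S\) block is the derivative of a family with fixed spectrum \(1/r\) on its support, and the \(S^\perp\)--\(S^\perp\) block vanishes because \(\Pi\,\dot\Pi\,\Pi=0\) and \(\Pi^\perp\dot\Pi\Pi^\perp=0\) for projector families.
\item The off-diagonal block is \(\mathcal L_X(H)=W_X^\dagger H(\Id_r+X^\dagger X)^{-1/2}\). Here \(W_X\) is the isometry onto \(\operatorname{Im}(\Pi_X)^\perp\) given by \(\begin{pmatrix}-X^\dagger\\ \Id_m\end{pmatrix}(\Id_m+XX^\dagger)^{-1/2}\).
\end{itemize}

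This is a short computation. Both factors have operator norm at most \(1\), so \(\norm{\mathcal L_X(H)}_\F\le\norm H_\F\). Conjugating \(\mathsf M\) by \(U_X^{\otimes t}\) gives \(\cI^{(t)}_{\mathsf M}(X)=\mathcal L_X^\dagger\cI^{(t)}_{\widetilde{\mathsf M}}(0)\mathcal L_X\). Since \(\mathcal L_X\) is a contraction, the trace is at most \(\Tr\cI^{(t)}_{\widetilde{\mathsf M}}(0)\). It therefore suffices to bound every POVM at \(X=0\).

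\textbf{The case \(t\ge r^2\).} The SLD of the family at \(0\) in direction \(H\) can be taken to be \(L_H=2r\,\Ddir_H\rho_0\). Indeed, \(\tfrac12(L\rho_0+\rho_0L)\) reproduces the off-diagonal blocks. Then \(\Tr(\rho_0L_H^2)=4r^2\cdot\tfrac1r\cdot\tfrac1{r^2}\norm H_\F^2\cdot r/r\). After bookkeeping this gives \(\Tr\cI_{\mathrm Q}(0)=4\cdot 2mr/r\le 8d\). Combining with \cref{eq:joint-measurement-quantum-fisher-bound} yields \(\Tr\cI^{(t)}_{\mathsf M}(0)\le 8dt\), which is the claim when \(\sqrt t/r\ge1\).

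\textbf{The case \(t<r^2\) (main step).} Write the density \(M_z\) on \((\C^d)^{\otimes t}\) and restrict it to two subspaces: \(S^{\otimes t}\), which gives the block \(A_z\), and, for each \(i\in[m]\), the span \(\mathcal E_i\) of tensors with exactly one factor equal to \(\vct f_i\) and the rest in \(S\), which gives the block \(C_{z,i}\). We have \(q_0(z)=\Tr(A_z)/r^t\). The derivative in direction \(\vct f_i\vct h^\dagger\) (real and imaginary parts) is \(r^{-t}\cdot 2\operatorname{Re}\Tr(B_{z,i}^\dagger K_{\vct h})\), where \(K_{\vct h}=\sum_{\ell}(\vct f_i\vct h^\dagger)^{[\ell]}\) maps \(S^{\otimes t}\to\mathcal E_i\).

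Block positivity gives \(\abs{\Tr(B^\dagger K)}^2\le\Tr(A)\,\Tr(C K K^\dagger)\). Summing over an orthonormal basis \(\vct h\in S\) and both real directions, and averaging over \(\vct h\), the relevant quantity is \(\Tr(C_{z,i}\sum_{\vct h}K_{\vct h}K_{\vct h}^\dagger)\). Expanding \(K_{\vct h}K_{\vct h}^\dagger\), the same-position terms give \(\Id\) on \(\mathcal E_i\). The cross terms place \(\vct h\vct h^\dagger\) on the other factors and are dominated by \(N^{(t-1)}_{\vct h}\) at each excitation position.

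Here lies the obstacle: we need a bound that can be summed without paying \(t\) per direction. We will use the envelope \(T_{t-1,r}=k\Id+k\Omega_{t-1,k}^{1/k}\) with \(k=\lceil\sqrt{t-1}\rceil\), as in the overview. The justification has three parts:
\begin{itemize}
\item the scalar inequality \(q\le k+k\binom qk^{1/k}\);
\item operator monotonicity of \(x\mapsto x^{1/k}\);
\item the domination \(\binom{N_{\vct h}}k\preceq\Omega_{t-1,k}\) via symmetric projectors.
\end{itemize}
Care is needed because \(K_{\vct h}K_{\vct h}^\dagger\) is not literally diagonal in the excitation position. We will use \((\sum_\ell a_\ell)(\sum_\ell a_\ell)^\dagger\preceq\) a sum of positionwise terms via Cauchy--Schwarz, carried out in the \(\vct h\)-summed form. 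This defines \(Q^{(i)}_{t,r}\), independent of \(z\).

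POVM normalization then gives \(\int\Tr(C_{z,i}Q^{(i)})\,\dd\nu=\Tr Q^{(i)}\). The trace estimate \(\Tr(\Omega^{1/k})\le\binom{t-1}{k}^{1/k}\cdot\)(rank-weighted bound via \(\binom{r+k-1}{k}\)) then yields \(\Tr T/r^{t-1}=O(\sqrt t+t/r)\). Multiplying by \(t\) positions, \(m\) indices \(i\), and \(4/r\) gives \(O(dt\sqrt t/r)\) for \(t\le r^2\). Combining the two cases proves \cref{eq:one-block-fisher-bound}.
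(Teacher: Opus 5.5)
\textbf{There is a genuine gap in the main step for $t<r^2$.} Your reduction to $X=0$ (reading $\mathcal L_X(H)$ as $W_X^\dagger\begin{pmatrix}0\\H\end{pmatrix}(\Id_r+X^\dagger X)^{-1/2}$) and your $t\ge r^2$ argument via $L_H=2r\,\Ddir_H\rho_0$ match the paper. Using operator monotonicity of $x\mapsto x^{1/k}$ instead of the paper's commutation argument is also fine.

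The problem is where you apply the block Cauchy--Schwarz inequality $\abs{\Tr(B_{z,i}^\dagger K_{\vct{h}})}^2\le\Tr(A_z)\Tr(C_{z,i}K_{\vct{h}}K_{\vct{h}}^\dagger)$ separately for each $\vct{h}$ in an orthonormal basis of $S$ and then sum. The quantity you must then control is $\Tr(C_{z,i}\sum_{\vct{h}}K_{\vct{h}}K_{\vct{h}}^\dagger)$. After POVM normalization this integrates to exactly $\sum_{j=1}^r\norm{R_{i,j}}_\F^2=tr^t$, so the best this route gives is $\Tr(\cI^{(t)}_{\mathsf M}(0))\le\frac4{r^t}\cdot m\cdot tr^t=4mt$. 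That is the trivial bound, with no $\sqrt t/r$ gain; already at $t=1$ it gives $4m$ instead of $4m/r$.

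No envelope can repair this. In the basis-summed form $\sum_{\vct{h}}N^{(t-1)}_{\vct{h}}=(t-1)\Id$ exactly. Replacing each $K_{\vct{h}}K_{\vct{h}}^\dagger$ by a uniform envelope $Q$ yields $rQ$, which is worse. Averaging over Haar-random $\vct{h}$ is the same as summing, up to the factor $r$. Your final accounting (``$t$ positions, $m$ indices, $4/r$'') silently drops the factor $r$ coming from the sum over directions.

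\textbf{The missing idea is the row duality} \cref{eq:fisher-row-duality}. For fixed $z$ and $i$,
$\sum_j\abs{\Tr(B_{z,i}^\dagger R_{i,j})}^2=\sup_{\norm{\vct{h}}=1}\abs{\Tr(B_{z,i}^\dagger R_{i,\vct{h}})}^2$.
So Cauchy--Schwarz is applied only once per outcome, at a maximizing $\vct{h}$, producing a single copy of $R_{i,\vct{h}}R_{i,\vct{h}}^\dagger$. Because the maximizer depends on $z$, you need $Q^{(i)}\succeq R_{i,\vct{h}}R_{i,\vct{h}}^\dagger$ for all unit $\vct{h}$ simultaneously. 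That is the actual reason the $\vct{h}$-independent envelope $\Omega_{n,k}$ is needed, and it is why one $\Tr Q^{(i)}=O(tr^{t-1}(\sqrt t+t/r))$ per $i$ suffices.

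\textbf{A second, smaller gap.} The positionwise domination $R_{i,\vct{h}}R_{i,\vct{h}}^\dagger\preceq\bigoplus_\ell(\Id+N^{(t-1)}_{\vct{h},\widehat\ell})$ does not follow from plain Cauchy--Schwarz. That only gives $(\sum_\ell a_\ell)(\sum_\ell a_\ell)^\dagger\preceq t\sum_\ell a_\ell a_\ell^\dagger$, which loses a factor $t$. The paper's argument has two steps. First, the cross term between positions $\ell\ne q$ depends only on $\Pi^{(\widehat\ell)}_{\vct{h},q}\vct{\xi}_\ell$ and $\Pi^{(\widehat q)}_{\vct{h},\ell}\vct{\xi}_q$, because the commuting projectors $\Pi_{\vct{h}}^{[\ell]},\Pi_{\vct{h}}^{[q]}$ fix the respective insertion ranges. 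Second, it splits each cross term with $2xy\le x^2+y^2$.
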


We first prove the bound at the reference point \(X=0\), beginning with the
single-sample case and then treating joint measurements on \(t>1\) samples.  We
finally transfer the resulting bound to every parameter \(X\).

\subsubsection{Single-sample measurements}
\label{sec:single-copy-fisher-bound}

Let $\mathsf M$ be an arbitrary POVM on $\C^d$ with measurable outcome
space $(\mathsf Z,\mathcal Z)$.  Let $\nu$ and $M_z$ be its parameter-independent
dominating measure and operator density from
\cref{eq:povm-dominating-measure,eq:povm-density}.  Relative to
$S\oplus S^\perp$, write
\begin{equation}
  M_z=
  \begin{pmatrix}
    A_z&B_z^\dagger\\
    B_z&C_z
  \end{pmatrix}
  \succeq0,
  \label{eq:povm-blocks}
\end{equation}
where $A_z:S\to S$, $B_z:S\to S^\perp$, and
$C_z:S^\perp\to S^\perp$. Taking the three blocks of
$\int M_z\,\dd\nu(z)=\Id_d$ gives
\begin{equation}
  \int_{\mathsf Z}A_z\,\dd\nu(z)=\Id_r,
  \qquad
  \int_{\mathsf Z}B_z\,\dd\nu(z)=0,
  \qquad
  \int_{\mathsf Z}C_z\,\dd\nu(z)=\Id_m.
  \label{eq:povm-block-normalization}
\end{equation}

For each $X$, denote the likelihood density of the outcome of $\mathsf M$ on
$\rho_X$ by
\[
  q_X(z)\coloneqq\Tr(M_z\rho_X).
\]
In other words, the probability of an outcome in a measurable event
$B\in\mathcal Z$ is $\int_Bq_X(z)\,\dd\nu(z)$.  At the reference point
$X=0$, \cref{eq:reference-projector-state,eq:povm-blocks} give
\begin{equation}
  q_0(z)=\frac{\Tr(A_z)}{r}.
  \label{eq:reference-likelihood}
\end{equation}
For $H\in\C^{m\times r}$,
\cref{eq:povm-likelihood-derivative,eq:state-tangent-at-zero} give
\begin{equation}
  \Ddir_Hq_0(z)
  =\Tr\bigl(M_z\Ddir_H\rho_0\bigr)
  =\frac1r\left(\Tr(B_z^\dagger H)+\Tr(B_zH^\dagger)\right)
  =\frac2r\operatorname{Re}\left(\Tr(B_z^\dagger H)\right).
  \label{eq:one-copy-likelihood-derivative}
\end{equation}
We now calculate the numerator $\norm{\nabla q_0(z)}_2^2$ in the
Fisher information trace formula \cref{eq:fisher-trace}.  Here
$\nabla q_0(z)\in\R^{2mr}$ denotes the gradient of $X\mapsto q_X(z)$ at
$X=0$ in the coordinates associated with the basis in
\cref{eq:real-matrix-basis}.  For $1\le i\le m$ and $1\le j\le r$, we have
\(\Tr(B_z^\dagger E_{ij})=\overline{(B_z)_{ij}}\).
Substituting \cref{eq:one-copy-likelihood-derivative} in each coordinate gives
\begin{align}
  \norm{\nabla q_0(z)}_2^2
  &=\sum_{i=1}^m\sum_{j=1}^r
  \left((\Ddir_{E_{ij}}q_0(z))^2
    +(\Ddir_{\ii E_{ij}}q_0(z))^2\right) \notag\\
  &=\frac4{r^2}\sum_{i=1}^m\sum_{j=1}^r
    \left(
      \operatorname{Re}\left(\overline{(B_z)_{ij}}\right)^2
      +\operatorname{Re}\left(\ii\overline{(B_z)_{ij}}\right)^2
    \right) \notag\\
  &=\frac4{r^2}\sum_{i=1}^m\sum_{j=1}^r
      \abs{(B_z)_{ij}}^2
   =\frac4{r^2}\norm{B_z}_\F^2.
  \label{eq:summed-one-copy-derivatives}
\end{align}

Let $\cI_{\mathsf M}^{(1)}(0)$ denote the measurement Fisher information
matrix of the family $X\mapsto q_X$ at $X=0$, in the real coordinates
$\{E_{ij},\ii E_{ij}\}_{i,j}$.  The superscript $(1)$ indicates that
$\mathsf M$ is applied to one sample of the state.  Substituting
\cref{eq:reference-likelihood,eq:summed-one-copy-derivatives} into the
Fisher information trace formula \cref{eq:fisher-trace} yields the exact
identity
\begin{equation}
  \Tr\bigl(\cI_{\mathsf M}^{(1)}(0)\bigr)
  =\frac4r\int_{\mathsf Z}
    \frac{\norm{B_z}_\F^2}{\Tr(A_z)}\,\dd\nu(z).
  \label{eq:one-copy-fisher-block-expression}
\end{equation}
If $\Tr(A_z)=0$, then $A_z=0$, and positivity of $M_z$
forces $B_z=0$.  Hence the likelihood and its derivatives vanish, and the
quotient above is defined to be zero.

We next bound this quotient using the positivity of $M_z$.  Applying the
Cauchy--Schwarz inequality for the positive semidefinite form induced by
$M_z$ to the $j$-th coordinate vector of $S$ and the $i$-th coordinate
vector of $S^\perp$ gives
\[
  \abs{(B_z)_{ij}}^2\le(A_z)_{jj}(C_z)_{ii},
  \qquad i\in[m],\ j\in[r].
\]
Summing over these coordinates yields
\[
  \norm{B_z}_\F^2
  =\sum_{i=1}^m\sum_{j=1}^r\abs{(B_z)_{ij}}^2
  \le\sum_{i=1}^m\sum_{j=1}^r(A_z)_{jj}(C_z)_{ii}
  =\Tr(A_z)\Tr(C_z).
\]
When $\Tr(A_z)>0$, it gives
\[
  \frac{\norm{B_z}_\F^2}{\Tr(A_z)}\le\Tr(C_z).
\]
Substituting this inequality into
\cref{eq:one-copy-fisher-block-expression} and using
\cref{eq:povm-block-normalization}, we conclude that
\begin{align}
  \Tr\bigl(\cI_{\mathsf M}^{(1)}(0)\bigr)
  &\le\frac4r\int_{\mathsf Z}\Tr(C_z)\,\dd\nu(z) \notag\\
  &=\frac4r\Tr\left(\int_{\mathsf Z}C_z\,\dd\nu(z)\right) \notag\\
  &=\frac4r\Tr(\Id_m)
   \le\frac{4d}{r}.
  \label{eq:one-copy-fisher-bound}
\end{align}

\subsubsection{Joint measurements on multiple samples}
\label{sec:multi-copy-fisher-bound}

For \(t>1\), differentiating \(\rho_X^{\otimes t}\) produces a sum of
\(t\) terms, one for each tensor factor.  The Fisher information formula
squares this sum, so we must control the cross terms between different
positions.  We reduce these cross terms to an operator that counts how many
of the remaining \(t-1\) factors lie in the direction of a unit vector
\(\vct{h}\in S\), and then construct an upper bound independent of
\(\vct{h}\) with small trace.  This gives
the \(O(dt\sqrt t/r)\) bound when \(t\le r^2\).  When \(t>r^2\), the additive
quantum Fisher information bound gives the \(O(dt)\) estimate.

Fix an integer $t\ge2$.  Let $\vct{e}_1,\ldots,\vct{e}_r$ and
$\vct{f}_1,\ldots,\vct{f}_m$ denote the orthonormal basis
vectors of $S$ and $S^\perp$, respectively.  We use the same notation
$E_{ij}$ for the operator $|\vct{f}_i\rangle\langle\vct{e}_j|$ on $\C^d$,
which maps $S$ to $S^\perp$ and vanishes on $S^\perp$.
Recall from \cref{sec:support-rotation} that $\Pi_0$ is the projector onto
$S$.  The zero-excitation subspace and the reference state on $t$ samples are
\[
  \cH_0\coloneqq S^{\otimes t},
  \qquad
  \rho_0^{\otimes t}=\frac1{r^t}\Pi_0^{\otimes t}.
\]
For each $1\le i\le m$, define the corresponding one-excitation subspace
\[
  \cH_{1,i}
  \coloneqq
  \bigoplus_{\ell=1}^t
  S^{\otimes(\ell-1)}\otimes
  \operatorname{span}\{\vct{f}_i\}\otimes
  S^{\otimes(t-\ell)},
\]
and let $\Pi_{1,i}$ be the orthogonal projector onto $\cH_{1,i}$.  The summand
indexed by $\ell$ consists of the tensors whose $\ell$-th factor lies in
$\operatorname{span}\{\vct{f}_i\}$ and whose other factors lie in $S$.

For $1\le i\le m$ and $1\le j\le r$, define the collective raising operator
\[
  R_{i,j}
  \coloneqq
  \sum_{\ell=1}^t
  \Pi_0^{\otimes(\ell-1)}\otimes E_{ij}\otimes
  \Pi_0^{\otimes(t-\ell)}.
\]
This operator maps $\cH_0$ into $\cH_{1,i}$ and vanishes on
$\cH_0^\perp$.  Its adjoint $R_{i,j}^\dagger$ vanishes on
$\cH_{1,i}^\perp$.
For a direction $H\in\C^{m\times r}$, the product rule gives
\[
  \left.\frac{\dd}{\dd\lambda}
    \rho_{\lambda H}^{\otimes t}\right|_{\lambda=0}
  =\sum_{\ell=1}^t
  \rho_0^{\otimes(\ell-1)}\otimes
  \Ddir_H\rho_0\otimes
  \rho_0^{\otimes(t-\ell)}.
\]
Using $\rho_0=\Pi_0/r$ and \cref{eq:state-tangent-at-zero}, we obtain
\begin{align}
  \left.\frac{\dd}{\dd\lambda}
    \rho_{\lambda E_{ij}}^{\otimes t}\right|_{\lambda=0}
  &=\frac1{r^t}\bigl(R_{i,j}+R_{i,j}^\dagger\bigr), \notag\\
  \left.\frac{\dd}{\dd\lambda}
    \rho_{\lambda\ii E_{ij}}^{\otimes t}\right|_{\lambda=0}
  &=\frac{\ii}{r^t}\bigl(R_{i,j}-R_{i,j}^\dagger\bigr).
  \label{eq:t-copy-coordinate-tangents}
\end{align}
Thus the tangent directions above connect the zero-excitation subspace
only to the one-excitation subspaces.

Let $\mathsf M$ be an arbitrary POVM on $(\C^d)^{\otimes t}$ with outcome
space $(\mathsf Z,\mathcal Z)$.  Let $\nu$ and $M_z$ be its
parameter-independent dominating measure and operator density from
\cref{eq:povm-dominating-measure,eq:povm-density}.  For
$1\le i\le m$, define the blocks
\begin{equation}
  A_z\coloneqq\Pi_0^{\otimes t}M_z\Pi_0^{\otimes t},
  \qquad
  B_{z,i}\coloneqq\Pi_{1,i}M_z\Pi_0^{\otimes t},
  \qquad
  C_{z,i}\coloneqq\Pi_{1,i}M_z\Pi_{1,i}.
  \label{eq:t-copy-povm-blocks}
\end{equation}
Thus $A_z$ acts on $\cH_0$, $B_{z,i}$ maps $\cH_0$ to $\cH_{1,i}$, and
$C_{z,i}$ acts on $\cH_{1,i}$.  Multiplying the identity
$\int_{\mathsf Z}M_z\,\dd\nu(z)=\Id_d^{\otimes t}$ on the left and right
by the corresponding projectors gives
\begin{equation}
  \int_{\mathsf Z}A_z\,\dd\nu(z)=\Pi_0^{\otimes t},
  \qquad
  \int_{\mathsf Z}B_{z,i}\,\dd\nu(z)=0,
  \qquad
  \int_{\mathsf Z}C_{z,i}\,\dd\nu(z)=\Pi_{1,i}.
  \label{eq:t-copy-povm-block-normalization}
\end{equation}

For each $X$, define the likelihood density
\[
  q_X^{(t)}(z)
  \coloneqq\Tr\bigl(M_z\rho_X^{\otimes t}\bigr).
\]
At the reference point, $\rho_0^{\otimes t}=\Pi_0^{\otimes t}/r^t$, so
\begin{equation}
  q_0^{(t)}(z)=\frac{\Tr(A_z)}{r^t}.
  \label{eq:t-copy-reference-likelihood}
\end{equation}
Substituting
\cref{eq:povm-likelihood-derivative,eq:t-copy-coordinate-tangents} in each
coordinate gives
\begin{align}
  \norm{\nabla q_0^{(t)}(z)}_2^2
  &=\sum_{i=1}^m\sum_{j=1}^r
    \left(
      (\Ddir_{E_{ij}}q_0^{(t)}(z))^2
      +(\Ddir_{\ii E_{ij}}q_0^{(t)}(z))^2
    \right) \notag\\
  &=\frac4{r^{2t}}\sum_{i=1}^m\sum_{j=1}^r
    \left(
      \operatorname{Re}\left(
        \Tr(B_{z,i}^\dagger R_{i,j})
      \right)^2
      +\operatorname{Re}\left(
        \ii\Tr(B_{z,i}^\dagger R_{i,j})
      \right)^2
    \right) \notag\\
  &=\frac4{r^{2t}}\sum_{i=1}^m\sum_{j=1}^r
    \abs{\Tr(B_{z,i}^\dagger R_{i,j})}^2.
  \label{eq:summed-t-copy-derivatives}
\end{align}

Let $\cI_{\mathsf M}^{(t)}(0)$ denote the measurement Fisher information
matrix induced by $\mathsf M$ at $X=0$ in these coordinates.  Substituting
\cref{eq:t-copy-reference-likelihood,eq:summed-t-copy-derivatives} into
\cref{eq:fisher-trace} yields the exact identity
\begin{equation}
  \Tr\bigl(\cI_{\mathsf M}^{(t)}(0)\bigr)
  =\frac4{r^t}\int_{\mathsf Z}
    \frac{
      \displaystyle\sum_{i=1}^m\sum_{j=1}^r
      \abs{\Tr(B_{z,i}^\dagger R_{i,j})}^2
    }{\Tr(A_z)}
    \,\dd\nu(z).
  \label{eq:t-copy-fisher-block-expression}
\end{equation}
If $\Tr(A_z)=0$, then $A_z=0$, and positivity of the restriction of $M_z$ to
$\cH_0\oplus\cH_{1,i}$ forces $B_{z,i}=0$ for every $i$.  Hence the
likelihood and its derivatives vanish, and the quotient above is defined to
be zero.

For a unit vector
$\vct{h}=\sum_{j=1}^rh_j\vct{e}_j\in S$, define
\[
  R_{i,\vct{h}}\coloneqq\sum_{j=1}^r\overline{h_j}R_{i,j}.
\]
For fixed $z$ and $i$, the sum over the \(r\) coordinate directions can be
written as the largest value over a unit direction \(\vct{h}\in S\):
\begin{equation}
  \sum_{j=1}^r
    \abs{\Tr(B_{z,i}^\dagger R_{i,j})}^2
  =\sup_{\substack{\vct{h}\in S\\\norm{\vct{h}}=1}}
    \abs{\sum_{j=1}^r\overline{h_j}
      \Tr(B_{z,i}^\dagger R_{i,j})}^2
  =\sup_{\substack{\vct{h}\in S\\\norm{\vct{h}}=1}}
    \abs{\Tr(B_{z,i}^\dagger R_{i,\vct{h}})}^2.
  \label{eq:fisher-row-duality}
\end{equation}
The first equality follows from Cauchy--Schwarz and its equality condition.
The second uses the definition of $R_{i,\vct{h}}$.

To bound this supremum, we factor $M_z$ through its positive semidefinite
square root and apply the Frobenius Cauchy--Schwarz inequality.  For
$1\le i\le m$ and $\nu$-almost every $z$, define
\[
  Y_{z,0}\coloneqq\sqrt{M_z}\,\Pi_0^{\otimes t},
  \qquad
  Y_{z,1,i}\coloneqq\sqrt{M_z}\,\Pi_{1,i}.
\]
The block definitions in \cref{eq:t-copy-povm-blocks} give
\[
  A_z=Y_{z,0}^\dagger Y_{z,0},
  \qquad
  B_{z,i}=Y_{z,1,i}^\dagger Y_{z,0},
  \qquad
  C_{z,i}=Y_{z,1,i}^\dagger Y_{z,1,i}.
\]
For every unit vector $\vct{h}\in S$, Cauchy--Schwarz and cyclicity of the
trace give
\begin{align}
  \abs{\Tr(B_{z,i}^\dagger R_{i,\vct{h}})}^2
  &=\abs{\Tr(Y_{z,0}^\dagger Y_{z,1,i}R_{i,\vct{h}})}^2 \notag\\
  &\le\norm{Y_{z,0}}_\F^2
       \norm{Y_{z,1,i}R_{i,\vct{h}}}_\F^2 \notag\\
  &=\Tr(A_z)\Tr\bigl(
      C_{z,i}R_{i,\vct{h}}R_{i,\vct{h}}^\dagger
    \bigr).
  \label{eq:fisher-row-positive-block-bound}
\end{align}
Since $C_{z,i}\succeq0$, any positive semidefinite operator $Q$ on
$\cH_{1,i}$ satisfying
$R_{i,\vct{h}}R_{i,\vct{h}}^\dagger\preceq Q$ for every unit
$\vct{h}$ bounds the right side by
$\Tr(A_z)\Tr(C_{z,i}Q)$.  It therefore suffices to construct such a
direction-independent upper bound.
To construct this upper bound, we decompose the one-excitation space
according to the position of the excitation.  This reduces the problem to an
operator that counts how many of the remaining tensor factors lie in the
direction $\vct{h}$.

\paragraph{Reduction to occupation number operators.}

For $1\le\ell\le t$, use a hat to indicate the omitted tensor factor and let
\[
  \cK_{\widehat\ell}
  \coloneqq\bigotimes_{q\in[t]\setminus\{\ell\}}S_q,
\]
where $S_q$ denotes the copy of $S$ in the $q$-th tensor factor and the
factors retain their original order.  The map that inserts $\vct{f}_i$ in
position $\ell$ is an isometry from $\cK_{\widehat\ell}$ onto the $\ell$-th
summand of $\cH_{1,i}$, so
\[
  \cH_{1,i}
  \cong\bigoplus_{\ell=1}^t\cK_{\widehat\ell}.
\]
In the rest of the proof, we use these insertion isometries to write
operators on $\cH_{1,i}$ as block operators on this direct sum.

For each unit vector $\vct{h}\in S$, let
$\Pi_{\vct{h}}\coloneqq|\vct{h}\rangle\langle\vct{h}|$ be the orthogonal projector
onto $\operatorname{span}\{\vct{h}\}$.  For $\ell\in[t]$ and
$q\in[t]\setminus\{\ell\}$, define the operator on
$\cK_{\widehat\ell}$
\[
  \Pi_{\vct{h},q}^{(\widehat\ell)}
  \coloneqq
  \bigotimes_{p\in[t]\setminus\{\ell\}}
  \begin{cases}
    \Pi_{\vct{h}},&p=q,\\
    \Id_{S_p},&p\ne q.
  \end{cases}
\]
For each $\ell\in[t]$, define the occupation number operator on
$\cK_{\widehat\ell}$, the tensor product of the remaining $t-1$ factors, by
\begin{equation}
  N_{\vct{h},\widehat\ell}^{(t-1)}
  \coloneqq\sum_{q\in[t]\setminus\{\ell\}}
    \Pi_{\vct{h},q}^{(\widehat\ell)}.
  \label{eq:omitted-factor-occupation-number}
\end{equation}
Extend $\vct{h}$ to an orthonormal basis
$\vct{u}_1=\vct{h},\vct{u}_2,\ldots,\vct{u}_r$ of $S$.  The vectors
\[
  \vct{v}_{\vct{j}}
  \coloneqq\bigotimes_{q\in[t]\setminus\{\ell\}}\vct{u}_{j_q},
  \qquad j_q\in[r],
\]
form an orthonormal basis of $\cK_{\widehat\ell}$.  Each projector
$\Pi_{\vct{h},q}^{(\widehat\ell)}$ leaves $\vct{v}_{\vct{j}}$ unchanged
when $j_q=1$ and sends it to zero otherwise.  Consequently,
\[
  N_{\vct{h},\widehat\ell}^{(t-1)}\vct{v}_{\vct{j}}
  =\bigl|\{q\in[t]\setminus\{\ell\}:j_q=1\}\bigr|
    \,\vct{v}_{\vct{j}}.
\]
Thus $N_{\vct{h},\widehat\ell}^{(t-1)}$ is diagonal in this basis.  Its
eigenvalue on $\vct{v}_{\vct{j}}$ counts the tensor factors
$\vct{u}_{j_q}$ equal to $\vct{h}$.

\begin{lemma}[Reduction to occupation number operators]
  \label{lem:raising-map-occupation-reduction}
  For every $1\le i\le m$ and every unit vector $\vct{h}\in S$,
  \begin{equation}
    R_{i,\vct{h}}R_{i,\vct{h}}^\dagger
    \preceq
    \bigoplus_{\ell=1}^t
      \left(
        \Id_{\cK_{\widehat\ell}}
        +N_{\vct{h},\widehat\ell}^{(t-1)}
      \right).
    \label{eq:raising-map-occupation-reduction}
  \end{equation}
\end{lemma}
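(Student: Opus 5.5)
The plan is to write \(R_{i,\vct{h}}R_{i,\vct{h}}^\dagger\) as \(WW^\dagger\) for an explicit block column \(W\). We then compare it with \(W^\dagger W\), which we expect to equal the full occupation number operator on \(t\) factors. An occupation-number grading should turn this comparison into an operator inequality.

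First, since \(\sum_j\overline{h_j}E_{ij}=|\vct f_i\rangle\langle\vct h|\), we have
\[
  R_{i,\vct h}=\sum_{\ell=1}^t\Pi_0^{\otimes(\ell-1)}\otimes|\vct f_i\rangle\langle\vct h|\otimes\Pi_0^{\otimes(t-\ell)}.
\]
Under the insertion isometries \(\cH_{1,i}\cong\bigoplus_\ell\cK_{\widehat\ell}\), its restriction to \(\cH_0=S^{\otimes t}\) is the block column \(W=(W_1,\ldots,W_t)^{\mathsf T}\). Here \(W_\ell:S^{\otimes t}\to\cK_{\widehat\ell}\) contracts the \(\ell\)-th factor with \(\langle\vct h|\) and acts as the identity on the others. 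Since \(R_{i,\vct h}\) vanishes on \(\cH_0^\perp\), we get \(R_{i,\vct h}R_{i,\vct h}^\dagger=WW^\dagger\) as an operator on \(\bigoplus_\ell\cK_{\widehat\ell}\). Dually,
\[
  W^\dagger W=\sum_\ell W_\ell^\dagger W_\ell=\sum_{\ell=1}^t\Pi_{\vct h}^{[\ell]}=:N^{(t)}_{\vct h},
\]
which is the occupation number operator on all \(t\) factors of \(S^{\otimes t}\).

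Next, we grade everything by occupation number, using the basis \(\vct v_{\vct j}\) built from \(\vct u_1=\vct h,\vct u_2,\ldots,\vct u_r\). Let \(\mathcal A_k\subseteq S^{\otimes t}\) be the span of product basis vectors with exactly \(k\) factors equal to \(\vct h\). Let \(\mathcal B_k\subseteq\bigoplus_\ell\cK_{\widehat\ell}\) be the direct sum of the eigenspaces of \(N^{(t-1)}_{\vct h,\widehat\ell}\) with eigenvalue \(k\). On basis vectors, \(W_\ell\) either deletes an \(\vct h\) in position \(\ell\) or gives zero. Hence \(W\) maps \(\mathcal A_{k+1}\) into \(\mathcal B_k\) and kills \(\mathcal A_0\). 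Likewise \(W^\dagger\) inserts \(\vct h\) and maps \(\mathcal B_k\) into \(\mathcal A_{k+1}\). Consequently \(WW^\dagger\) is block diagonal with respect to the orthogonal decomposition \(\bigoplus_k\mathcal B_k\). The right side of \cref{eq:raising-map-occupation-reduction} acts on \(\mathcal B_k\) as the scalar \(1+k\). It therefore suffices to show that \(WW^\dagger|_{\mathcal B_k}\preceq(1+k)\Id\) for each \(k\).

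For this last step, write \(W_k\coloneqq W|_{\mathcal A_{k+1}}:\mathcal A_{k+1}\to\mathcal B_k\). Then \(WW^\dagger|_{\mathcal B_k}=W_kW_k^\dagger\) and \(W_k^\dagger W_k=N^{(t)}_{\vct h}|_{\mathcal A_{k+1}}=(k+1)\Id\). The operators \(W_kW_k^\dagger\) and \(W_k^\dagger W_k\) have the same nonzero eigenvalues. So every eigenvalue of \(W_kW_k^\dagger\) lies in \(\{0,k+1\}\), which gives \(W_kW_k^\dagger\preceq(k+1)\Id_{\mathcal B_k}\). Summing over \(k\) proves the lemma.

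The main point to verify carefully is the bookkeeping in the grading: that \(W\) and \(W^\dagger\) really intertwine the sectors \(\mathcal A_{k+1}\) and \(\mathcal B_k\) under the insertion isometries, and that the diagonal blocks \(W_\ell W_\ell^\dagger=\Id\) together with the cross terms \(W_\ell W_{\ell'}^\dagger\) are all accounted for. Once this is checked, the spectral comparison of \(W_kW_k^\dagger\) with \(W_k^\dagger W_k\) is routine.
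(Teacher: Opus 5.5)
Your proof is correct, and it takes a different route from the paper. The paper works directly with the quadratic form. It writes $\norm{R_{i,\vct h}^\dagger\vct\xi}^2$ as $\sum_\ell\norm{\vct\xi_\ell}^2$ plus cross terms $2\operatorname{Re}\ip{J_{\ell,\vct h}\vct\xi_\ell}{J_{q,\vct h}\vct\xi_q}$. It then uses the commutation of $\Pi_{\vct h}^{[\ell]}$ and $\Pi_{\vct h}^{[q]}$ to insert $\Pi^{(\widehat\ell)}_{\vct h,q}$ and $\Pi^{(\widehat q)}_{\vct h,\ell}$, bounds each cross term by Cauchy--Schwarz and $2xy\le x^2+y^2$, and regroups by excitation position. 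You instead use three facts: $W^\dagger W=N^{(t)}_{\vct h}$; $W$ and $W^\dagger$ shift the occupation grading by one, so $WW^\dagger$ is block diagonal in $\bigoplus_k\mathcal B_k$; and $W_k^\dagger W_k=(k+1)\Id$ on $\mathcal A_{k+1}$. Hence $W_k/\sqrt{k+1}$ is an isometry, and $WW^\dagger|_{\mathcal B_k}$ is exactly $k+1$ times the projector onto $W(\mathcal A_{k+1})$. The paper's argument is shorter to verify: it needs no basis adapted to $\vct h$, no sector bookkeeping, and nothing beyond Cauchy--Schwarz. Your argument is more structural and gives more information. It shows that the two sides of the inequality agree on $\operatorname{Im}(R_{i,\vct h})$, so the bound is sharp there. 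It also explains the ``$+1$'' as the copy of $\vct h$ that $W$ removes. The checks you flag at the end are fine. In fact you never need the individual blocks $W_\ell W_{\ell'}^\dagger$, because the spectral comparison treats $WW^\dagger$ as a whole.
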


\begin{proof}
  Fix $i$ and $\vct{h}$ as in the statement.  Using the preceding direct sum
  representation, write
  $\vct{\xi}=(\vct{\xi}_1,\ldots,\vct{\xi}_t)\in\cH_{1,i}$, where
  $\vct{\xi}_\ell\in\cK_{\widehat\ell}$.  Thus $\vct{\xi}_\ell$ represents
  the component obtained by inserting $\vct{f}_i$ in position $\ell$.
  Let $J_{\ell,\vct{h}}:\cK_{\widehat\ell}\to\cH_0$ be the isometry that
  instead inserts $\vct{h}$ in position $\ell$.
  The $q$-th term of $R_{i,\vct{h}}^\dagger$ applies
  $|\vct{h}\rangle\langle\vct{f}_i|$ to tensor factor $q$.  On the vector
  obtained from $\vct{\xi}_\ell$ by inserting $\vct{f}_i$ in position
  $\ell$, every tensor factor
  other than $\ell$ lies in $S$.  Thus, if $q\ne\ell$, this term vanishes
  because $\vct{f}_i\perp S$.  The term with $q=\ell$ instead replaces
  $\vct{f}_i$ by $\vct{h}$.  Therefore,
  \[
    R_{i,\vct{h}}^\dagger\vct{\xi}
    =\sum_{\ell=1}^tJ_{\ell,\vct{h}}\vct{\xi}_\ell.
  \]

  On $\cH_0$, the operator
  $J_{\ell,\vct{h}}J_{\ell,\vct{h}}^\dagger=\Pi_{\vct{h}}^{[\ell]}$
  is the orthogonal projector onto the range of $J_{\ell,\vct{h}}$.
  For distinct $\ell,q\in[t]$, the projectors
  $\Pi_{\vct{h}}^{[\ell]}$ and $\Pi_{\vct{h}}^{[q]}$ commute, so
  \[
    \ip{J_{\ell,\vct{h}}\vct{\xi}_\ell}
      {J_{q,\vct{h}}\vct{\xi}_q}
    =\ip{\Pi_{\vct{h}}^{[q]}J_{\ell,\vct{h}}\vct{\xi}_\ell}
      {\Pi_{\vct{h}}^{[\ell]}J_{q,\vct{h}}\vct{\xi}_q}
    =\ip{J_{\ell,\vct{h}}
      \Pi_{\vct{h},q}^{(\widehat\ell)}\vct{\xi}_\ell}
      {J_{q,\vct{h}}
      \Pi_{\vct{h},\ell}^{(\widehat q)}\vct{\xi}_q}.
  \]
  Since the insertion maps are isometries, Cauchy--Schwarz and
  $2xy\le x^2+y^2$ give
  \[
    2\operatorname{Re}\ip{J_{\ell,\vct{h}}\vct{\xi}_\ell}
      {J_{q,\vct{h}}\vct{\xi}_q}
    \le
      2\norm{\Pi_{\vct{h},q}^{(\widehat\ell)}\vct{\xi}_\ell}
       \norm{\Pi_{\vct{h},\ell}^{(\widehat q)}\vct{\xi}_q}
    \le
      \ip{\vct{\xi}_\ell}
        {\Pi_{\vct{h},q}^{(\widehat\ell)}\vct{\xi}_\ell}
      +\ip{\vct{\xi}_q}
        {\Pi_{\vct{h},\ell}^{(\widehat q)}\vct{\xi}_q}.
  \]

  We now bound the quadratic form of
  $R_{i,\vct{h}}R_{i,\vct{h}}^\dagger$ on $\vct{\xi}$:
  \begin{alignat*}{2}
    \ip{\vct{\xi}}{
      R_{i,\vct{h}}R_{i,\vct{h}}^\dagger\vct{\xi}}
    &=\norm{R_{i,\vct{h}}^\dagger\vct{\xi}}^2
      &\quad&=\sum_{\ell=1}^t\norm{\vct{\xi}_\ell}^2
      +2\sum_{\ell<q}
        \operatorname{Re}\ip{J_{\ell,\vct{h}}\vct{\xi}_\ell}
          {J_{q,\vct{h}}\vct{\xi}_q}\\
    &&&\le\sum_{\ell=1}^t\left(
      \norm{\vct{\xi}_\ell}^2
      +\sum_{q\ne\ell}
        \ip{\vct{\xi}_\ell}{
          \Pi_{\vct{h},q}^{(\widehat\ell)}\vct{\xi}_\ell}
      \right)\\
    &&&=\sum_{\ell=1}^t
      \ip{\vct{\xi}_\ell}{
        \bigl(\Id_{\cK_{\widehat\ell}}
          +N_{\vct{h},\widehat\ell}^{(t-1)}\bigr)
          \vct{\xi}_\ell}.
  \end{alignat*}
  Here the inequality follows by summing the cross term bound over all
  pairs $\ell<q$ and regrouping the terms by excitation position.  For each
  fixed $\ell$, every $q\ne\ell$ appears once.
  Since this holds for every $\vct{\xi}\in\cH_{1,i}$, it proves
  \cref{eq:raising-map-occupation-reduction}.
\end{proof}

\paragraph{A uniform bound on occupation number operators.}

Each operator in \cref{eq:omitted-factor-occupation-number} has the same form
on a tensor product of $t-1$ copies of $S$.  We state the required
direction-independent bound for a general number $n$ of factors.  For every
positive integer $n$ and every unit vector $\vct{h}\in S$, define
\[
  N_{\vct{h}}^{(n)}
  \coloneqq\sum_{\ell=1}^n\Pi_{\vct{h}}^{[\ell]}
  =\sum_{\ell=1}^n\Id_S^{\otimes(\ell-1)}\otimes
    \Pi_{\vct{h}}\otimes\Id_S^{\otimes(n-\ell)}.
\]
We now construct a positive semidefinite operator \(T_{n,r}\), independent of
\(\vct{h}\), such that \(N_{\vct{h}}^{(n)}\preceq T_{n,r}\) for every unit
\(\vct{h}\in S\), while its normalized trace remains small.

\begin{lemma}[Uniform bound on occupation number operators]
  \label{lem:occupation-number-envelope}
  There is a universal constant \(C>0\) such that, for all positive integers $n,r$
  and every \(r\)-dimensional complex Hilbert space \(S\), there exists a
  positive semidefinite operator
  $T_{n,r}$ on $S^{\otimes n}$ such that
  \begin{equation}
    N_{\vct{h}}^{(n)}\preceq T_{n,r}
    \qquad
    \text{for every unit vector $\vct{h}\in S$},
    \label{eq:occupation-number-domination}
  \end{equation}
  and
  \begin{equation}
    \frac{\Tr(T_{n,r})}{r^n}
    \le C\left(\sqrt n+\frac nr\right).
    \label{eq:occupation-envelope-trace}
  \end{equation}
\end{lemma}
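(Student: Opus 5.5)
The construction follows the recipe sketched in the proof overview. Fix an integer \(k\) with \(1\le k\le n\), to be chosen as \(k\coloneqq\lceil\sqrt n\rceil\), and consider the operator-valued binomial coefficient
\[
  \binom{N_{\vct{h}}^{(n)}}{k}
  \coloneqq\frac1{k!}\prod_{j=0}^{k-1}\left(N_{\vct{h}}^{(n)}-j\Id_{S^{\otimes n}}\right).
\]
The summands \(\Pi_{\vct{h}}^{[\ell]}\) are commuting projectors. Expanding the product in the joint eigenbasis \(\vct{v}_{\vct{j}}\) used above, the operator acts on \(\vct{v}_{\vct{j}}\) as \(\binom{q}{k}\), where \(q\) is the number of factors equal to \(\vct{h}\). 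This gives the identity
\[
  \binom{N_{\vct{h}}^{(n)}}{k}=\sum_{|J|=k}\prod_{\ell\in J}\Pi_{\vct{h}}^{[\ell]}.
\]
Each product \(\prod_{\ell\in J}\Pi_{\vct{h}}^{[\ell]}\) equals \(\Pi_{\vct{h}^{\otimes k}}\) on the factors in \(J\), tensored with the identity on the rest. Since \(\vct{h}^{\otimes k}\) lies in the symmetric subspace of \(S^{\otimes J}\), this is dominated by \(\Pi_{\mathrm{sym},J}\otimes\Id\). Summing over \(J\) gives
\[
  \binom{N_{\vct{h}}^{(n)}}{k}\preceq\Omega_{n,k}\coloneqq\sum_{|J|=k}\Pi_{\mathrm{sym},J},
\]
and \(\Omega_{n,k}\) is independent of \(\vct{h}\).

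Next we pass from the binomial coefficient back to \(N_{\vct{h}}^{(n)}\). For integers \(q\ge0\) the scalar inequality \(q\le k+k\binom{q}{k}^{1/k}\) holds: it is trivial for \(q\le k\), and for \(q>k\) we have \(\binom qk\ge(q/k)^k\). Applying it in the common eigenbasis gives \(N_{\vct{h}}^{(n)}\preceq k\Id+k\binom{N_{\vct{h}}^{(n)}}{k}^{1/k}\). Operator monotonicity of \(x\mapsto x^{1/k}\) (Löwner--Heinz, since \(1/k\in(0,1]\)) then yields
\[
  N_{\vct{h}}^{(n)}\preceq T_{n,r}\coloneqq k\Id+k\,\Omega_{n,k}^{1/k}.
\]
This is the main subtle step: the operator \(\Omega_{n,k}\) does not commute with \(N_{\vct{h}}^{(n)}\), so monotonicity of the \(1/k\)-th root is exactly what is needed. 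Positivity of \(T_{n,r}\) is clear.

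The trace bound is the main computational obstacle. The first term contributes \(k\) to the normalized trace. For the second term, we need a bound on \(\Tr(\Omega^{1/k})\) without knowing the spectrum of \(\Omega\). Concavity gives Jensen's inequality for the normalized trace, \(\tau(\Omega^{1/k})\le\tau(\Omega)^{1/k}\), where \(\tau\coloneqq\Tr/r^n\). We also have
\[
  \tau(\Omega_{n,k})=\binom nk\binom{r+k-1}{k}r^{-k}.
\]
Using \(\binom nk\le n^k/k!\), we get
\[
  \tau(\Omega_{n,k})^{1/k}\le\frac{n}{(k!)^{1/k}}\left(\frac{\binom{r+k-1}{k}}{r^k}\right)^{1/k}.
\]
Also \(\binom{r+k-1}{k}\le(r+k)^k/k!\) and \((k!)^{1/k}\ge k/\ee\). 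Hence
\[
  \tau(\Omega)^{1/k}\le\frac{\ee^2n}{k^2}\cdot\frac{r+k}{r}=\ee^2\left(\frac n{k^2}+\frac n{kr}\right).
\]
Multiplying by \(k\) and adding the identity term gives
\[
  \frac{\Tr T_{n,r}}{r^n}\le k+\ee^2\left(\frac nk+\frac nr\right).
\]
With \(k=\lceil\sqrt n\rceil\le2\sqrt n\), this is at most \(C(\sqrt n+n/r)\), which is \cref{eq:occupation-envelope-trace}.
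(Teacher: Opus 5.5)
Your proposal is correct and follows the paper's proof essentially step for step: the same operator $\Omega_{n,k}=\sum_{|J|=k}\Pi_{\mathrm{sym},J}$, the same scalar inequality $x\le k+k\binom{x}{k}^{1/k}$, the same Jensen bound on the normalized trace, and the same choice $k=\lceil\sqrt n\rceil$. The only divergence is how you pass the $k$-th root through $\binom{N_{\vct h}^{(n)}}{k}\preceq\Omega_{n,k}$. Your appeal to L\"owner--Heinz is valid, but your remark that $\Omega_{n,k}$ does not commute with $N_{\vct h}^{(n)}$ is false: $N_{\vct h}^{(n)}$ is invariant under conjugation by every $U_\pi$, and $\Omega_{n,k}$ is a linear combination of such permutations, so the two do commute. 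This commutation is exactly how the paper justifies taking roots in a common eigenbasis.
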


\begin{proof}
  We obtain an upper bound independent of $\vct{h}$ by bounding a suitable
  polynomial in $N_{\vct{h}}^{(n)}$ above by a sum of projectors onto symmetric
  subspaces.  Fix an integer $1\le k\le n$.  For a subset $J\subseteq[n]$
  with $|J|=k$, let
  $\mathfrak S_J$ be the set of permutations of $J$.  For each
  $\pi\in\mathfrak S_J$, let $U_\pi$ permute the corresponding tensor factors
  while fixing the others.  The orthogonal projector onto the subspace
  invariant under these permutations is
  \[
    \Pi_{\mathrm{sym},J}
    \coloneqq\frac1{k!}\sum_{\pi\in\mathfrak S_J}U_\pi.
  \]
  Set
  \[
    \Omega_{n,k}
    \coloneqq
    \sum_{\substack{J\subseteq[n]\\|J|=k}}
      \Pi_{\mathrm{sym},J}.
  \]
  The operator $\Omega_{n,k}$ is positive semidefinite and does not depend on
  $\vct{h}$.

  Since the projectors $\Pi_{\vct{h}}^{[\ell]}$ commute,
  \[
    \binom{N_{\vct{h}}^{(n)}}k
    \coloneqq\frac1{k!}
      \prod_{s=0}^{k-1}
      \bigl(N_{\vct{h}}^{(n)}-s\Id_S^{\otimes n}\bigr)
    =\sum_{\substack{J\subseteq[n]\\|J|=k}}
      \prod_{\ell\in J}\Pi_{\vct{h}}^{[\ell]}.
  \]
  In particular, this binomial coefficient operator is positive semidefinite.
  For a fixed $J$, the product on the right projects onto the tensors whose
  factors in $J$ all lie in $\operatorname{span}\{\vct{h}\}$.  Such tensors
  are invariant under every permutation of those factors, so
  \[
    \prod_{\ell\in J}\Pi_{\vct{h}}^{[\ell]}
    \preceq\Pi_{\mathrm{sym},J}.
  \]
  Summing this inequality over $J$ gives
  \[
    \binom{N_{\vct{h}}^{(n)}}k\preceq\Omega_{n,k}.
  \]

  We next recover a bound on $N_{\vct{h}}^{(n)}$ from the bound on
  $\binom{N_{\vct{h}}^{(n)}}{k}$.
  Fix $\vct{h}$ and extend it to an orthonormal basis of $S$.  Each vector
  $\vct{b}_1\otimes\cdots\otimes\vct{b}_n$ in the associated tensor product
  basis is an eigenvector of $N_{\vct{h}}^{(n)}$, with eigenvalue equal to the
  number of indices $\ell\in[n]$ for which $\vct{b}_\ell=\vct{h}$.  Thus every
  eigenvalue belongs to $\{0,1,\ldots,n\}$.  For every such eigenvalue $x$,
  \[
    x\le k+k\binom{x}{k}^{1/k}.
  \]
  When $x<k$, the first term on the right already bounds $x$; when $x\ge k$,
  the inequality follows from $\binom{x}{k}\ge(x/k)^k$.  Applying it to
  every eigenvalue of $N_{\vct{h}}^{(n)}$ gives
  \[
    N_{\vct{h}}^{(n)}
    \preceq
    k\Id_S^{\otimes n}
      +k\left(\binom{N_{\vct{h}}^{(n)}}{k}\right)^{1/k}.
  \]
  For every $\pi\in\mathfrak S_n$, conjugation permutes the summands:
  \[
    U_\pi N_{\vct{h}}^{(n)}U_\pi^\dagger
    =\sum_{\ell=1}^n\Pi_{\vct{h}}^{[\pi(\ell)]}
    =N_{\vct{h}}^{(n)}.
  \]
  Thus $N_{\vct{h}}^{(n)}$ commutes with every tensor permutation $U_\pi$
  and hence with $\Omega_{n,k}$.  Taking positive
  semidefinite $k$-th roots in a common eigenbasis, the binomial coefficient
  operator inequality above gives
  \begin{equation}
    N_{\vct{h}}^{(n)}
    \preceq k\Id_S^{\otimes n}+k\Omega_{n,k}^{1/k}.
    \label{eq:occupation-envelope-fixed-k}
  \end{equation}
  Here $\Omega_{n,k}^{1/k}$ denotes the positive semidefinite $k$-th root of
  $\Omega_{n,k}$.

  It remains to choose $k$ so that the operator on the right has small trace.
  Since the symmetric subspace of $S^{\otimes k}$ has dimension
  $\binom{r+k-1}{k}$, each $\Pi_{\mathrm{sym},J}$ has trace
  $\binom{r+k-1}{k}r^{n-k}$.  Therefore
  \[
    \frac{\Tr(\Omega_{n,k})}{r^n}
    =\binom nk\frac{\binom{r+k-1}{k}}{r^k}.
  \]
  Concavity of $x\mapsto x^{1/k}$, Jensen's inequality, and the standard
  binomial estimate give, for a universal constant \(C'>0\),
  \[
    \frac{\Tr(\Omega_{n,k}^{1/k})}{r^n}
    \le\left(
      \frac{\Tr(\Omega_{n,k})}{r^n}
    \right)^{1/k}
    \le\frac{\ee n}{k}\frac{\ee(r+k-1)}{kr}
    \le C'\left(\frac{n}{k^2}+\frac{n}{kr}\right).
  \]
  Here the second inequality uses
  $\binom uk\le(\ee u/k)^k$ for integers $u\ge k\ge1$.
  Choose $k\coloneqq\lceil\sqrt n\rceil$ and define
  \[
    T_{n,r}
    \coloneqq k\Id_S^{\otimes n}+k\Omega_{n,k}^{1/k}.
  \]
  The inequality in \cref{eq:occupation-number-domination} follows from
  \cref{eq:occupation-envelope-fixed-k}, while
  \[
    \frac{\Tr(T_{n,r})}{r^n}
    \le k+C'\left(\frac nk+\frac nr\right)
    \le C\left(\sqrt n+\frac nr\right).
  \]
  This proves \cref{eq:occupation-envelope-trace}.
\end{proof}

\paragraph{A uniform bound on the one-excitation subspace.}

We apply the occupation number bound to each summand
$\cK_{\widehat\ell}$ in the direct sum representation of $\cH_{1,i}$.
This produces a direction-independent upper
bound on $R_{i,\vct{h}}R_{i,\vct{h}}^\dagger$ for use in
\cref{eq:fisher-row-positive-block-bound}.

\begin{lemma}[Uniform bound on the one-excitation subspace]
  \label{lem:one-excitation-envelope}
  There is a universal constant \(C>0\) such that, for every
  $1\le i\le m$, there is a positive semidefinite operator
  $Q_{t,r}^{(i)}$ on $\cH_{1,i}$ such that
  \begin{equation}
    R_{i,\vct{h}}R_{i,\vct{h}}^\dagger\preceq Q_{t,r}^{(i)}
    \qquad
    \text{for every unit vector $\vct{h}\in S$},
    \label{eq:one-excitation-domination}
  \end{equation}
  and
  \begin{equation}
    \Tr(Q_{t,r}^{(i)})
    \le Ct r^{t-1}
      \left(\sqrt t+\frac tr\right).
    \label{eq:one-excitation-envelope-trace}
  \end{equation}
\end{lemma}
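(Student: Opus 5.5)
We obtain \(Q_{t,r}^{(i)}\) by combining \Cref{lem:raising-map-occupation-reduction} with \Cref{lem:occupation-number-envelope}, applied blockwise on the decomposition \(\cH_{1,i}\cong\bigoplus_{\ell=1}^t\cK_{\widehat\ell}\). Each \(\cK_{\widehat\ell}\) is a tensor product of \(n\coloneqq t-1\) copies of \(S\), with the factors kept in their original order. Under this identification, the occupation operator \(N_{\vct{h},\widehat\ell}^{(t-1)}\) is exactly \(N_{\vct{h}}^{(t-1)}\) in the notation of \Cref{lem:occupation-number-envelope}: it is the sum over the \(t-1\) remaining positions of the projector \(\Pi_{\vct h}\) in that position.

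Let \(T_{t-1,r}\) be the operator from \Cref{lem:occupation-number-envelope} on \(S^{\otimes(t-1)}\), and transport it to each \(\cK_{\widehat\ell}\) by the same identification. Since \(t\ge2\), we have \(n\ge1\), so the lemma applies. Define
\[
  Q_{t,r}^{(i)}\coloneqq\bigoplus_{\ell=1}^t\bigl(\Id_{\cK_{\widehat\ell}}+T_{t-1,r}\bigr),
\]
viewed as an operator on \(\cH_{1,i}\) through the insertion isometries. This operator is positive semidefinite and independent of \(\vct h\).

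For domination, note that a direct sum of operator inequalities holding on each summand is again an operator inequality. Hence \(N_{\vct h,\widehat\ell}^{(t-1)}\preceq T_{t-1,r}\) for each \(\ell\) gives
\[
  \bigoplus_\ell\bigl(\Id+N_{\vct h,\widehat\ell}^{(t-1)}\bigr)\preceq Q_{t,r}^{(i)}.
\]
Chaining this with \cref{eq:raising-map-occupation-reduction} yields \cref{eq:one-excitation-domination} for every unit \(\vct h\).

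For the trace bound,
\[
  \Tr(Q_{t,r}^{(i)})=t\bigl(r^{t-1}+\Tr(T_{t-1,r})\bigr)\le t\,r^{t-1}\Bigl(1+C'\bigl(\sqrt{t-1}+\tfrac{t-1}{r}\bigr)\Bigr),
\]
using \cref{eq:occupation-envelope-trace}. Since \(1\le\sqrt t\), the bracket is at most \((1+C')(\sqrt t+t/r)\), which gives \cref{eq:one-excitation-envelope-trace} with \(C=1+C'\).

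There is no real obstacle here. The only point needing care is bookkeeping: the identification of \(\cK_{\widehat\ell}\) with \(S^{\otimes(t-1)}\) must carry \(N_{\vct h,\widehat\ell}^{(t-1)}\) to \(N_{\vct h}^{(t-1)}\). This holds because \(N_{\vct{h}}^{(n)}\) is a sum of the single-factor projector \(\Pi_{\vct h}\) over all positions, so it is invariant under any relabelling of the factors. A single \(T_{t-1,r}\) therefore serves every \(\ell\) and every \(i\).
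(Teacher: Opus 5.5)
Your proposal is correct and follows essentially the same argument as the paper. Like the paper, you define $Q_{t,r}^{(i)}$ as the direct sum over $\ell$ of $\Id_{\cK_{\widehat\ell}}$ plus the transported $T_{t-1,r}$. You then chain the blockwise domination with \Cref{lem:raising-map-occupation-reduction} and bound the trace by $t\bigl(r^{t-1}+\Tr(T_{t-1,r})\bigr)$ using \Cref{lem:occupation-number-envelope}. Your explicit absorption of the constant via $1\le\sqrt t$ just spells out the paper's final inequality.
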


\begin{proof}
  Fix $1\le i\le m$.  For each $\ell\in[t]$, let
  $T_{t-1,r}^{(\widehat\ell)}$ be the operator on
  $\cK_{\widehat\ell}$ obtained by placing the tensor factors of
  $T_{t-1,r}$ from \Cref{lem:occupation-number-envelope} in the positions
  $[t]\setminus\{\ell\}$, in their original order.  Define
  \[
    Q_{t,r}^{(i)}
    \coloneqq\bigoplus_{\ell=1}^t
      \left(
        \Id_{\cK_{\widehat\ell}}
        +T_{t-1,r}^{(\widehat\ell)}
      \right).
  \]
  This operator does not depend on $\vct{h}$.  Under the same placement of
  tensor factors, $N_{\vct{h}}^{(t-1)}$ becomes
  $N_{\vct{h},\widehat\ell}^{(t-1)}$.  Therefore,
  \Cref{lem:occupation-number-envelope} gives
  \[
    N_{\vct{h},\widehat\ell}^{(t-1)}
    \preceq T_{t-1,r}^{(\widehat\ell)}
    \qquad
    \text{for every $\ell\in[t]$ and every unit $\vct{h}\in S$}.
  \]
  Combining these inequalities with
  \Cref{lem:raising-map-occupation-reduction} proves
  \cref{eq:one-excitation-domination}.  Finally, by
  \Cref{lem:occupation-number-envelope},
  \[
    \Tr(Q_{t,r}^{(i)})
    =t\left(r^{t-1}+\Tr(T_{t-1,r})\right)
    \le Ct r^{t-1}\left(\sqrt t+\frac tr\right),
  \]
  which proves \cref{eq:one-excitation-envelope-trace}.
\end{proof}

\paragraph{Completion of the Fisher bound for joint measurements.}

For each $1\le i\le m$, we now use the direction-independent operator
$Q_{t,r}^{(i)}$ to bound the numerator in
\cref{eq:t-copy-fisher-block-expression}.  Since
$C_{z,i}\succeq0$, combining
\cref{eq:fisher-row-positive-block-bound,eq:one-excitation-domination}
and taking the supremum over $\vct{h}$ as in \cref{eq:fisher-row-duality}
gives
\begin{equation}
  \sum_{j=1}^r
    \abs{\Tr(B_{z,i}^\dagger R_{i,j})}^2
  \le\Tr(A_z)\Tr\bigl(C_{z,i}Q_{t,r}^{(i)}\bigr).
  \label{eq:fisher-row-envelope-bound}
\end{equation}
Substituting \cref{eq:fisher-row-envelope-bound} into
\cref{eq:t-copy-fisher-block-expression} and using the POVM normalization
in \cref{eq:t-copy-povm-block-normalization} and the trace bound
\cref{eq:one-excitation-envelope-trace}, we obtain
\begin{align}
  \Tr\bigl(\cI_{\mathsf M}^{(t)}(0)\bigr)
  &\le\frac4{r^t}\sum_{i=1}^m
    \int_{\mathsf Z}
      \Tr\bigl(C_{z,i}Q_{t,r}^{(i)}\bigr)\,\dd\nu(z) \notag\\
  &=\frac4{r^t}\sum_{i=1}^m
    \Tr\left(
      \left(\int_{\mathsf Z}C_{z,i}\,\dd\nu(z)\right)
      Q_{t,r}^{(i)}
    \right) \notag\\
  &=\frac4{r^t}\sum_{i=1}^m\Tr(Q_{t,r}^{(i)}) \notag\\
  &\le \frac4{r^t}\sum_{i=1}^m
    Ct r^{t-1}\left(\sqrt t+\frac tr\right) \notag\\
  &\le4C\frac{dt}{r}
    \left(\sqrt t+\frac tr\right).
  \label{eq:t-copy-fisher-envelope-bound}
\end{align}

For $t>r^2$, we instead use the general quantum Fisher information bound from
the preliminaries, which gives the required linear bound in this regime.
Recall that a symmetric logarithmic derivative for a tangent direction $H$
is a Hermitian operator $L_H$ satisfying
$\Ddir_H\rho_0=(\rho_0L_H+L_H\rho_0)/2$.  At $X=0$, take the symmetric
logarithmic derivatives for the real and imaginary tangent directions
$E_{ij}$ and $\ii E_{ij}$ to be
\[
  L_{i,j}^{\mathrm R}
  \coloneqq2(E_{ij}+E_{ij}^\dagger),
  \qquad
  L_{i,j}^{\mathrm I}
  \coloneqq2\ii(E_{ij}-E_{ij}^\dagger).
\]
Indeed, substituting $\rho_0=\Pi_0/r$ gives
\[
  \frac12\bigl(\rho_0L_{i,j}^{\mathrm R}
    +L_{i,j}^{\mathrm R}\rho_0\bigr)
  =\Ddir_{E_{ij}}\rho_0,
  \qquad
  \frac12\bigl(\rho_0L_{i,j}^{\mathrm I}
    +L_{i,j}^{\mathrm I}\rho_0\bigr)
  =\Ddir_{\ii E_{ij}}\rho_0,
\]
as required by \cref{eq:symmetric-logarithmic-derivative}.  Also,
\[
  (L_{i,j}^{\mathrm R})^2
  =(L_{i,j}^{\mathrm I})^2
  =4(E_{ij}^\dagger E_{ij}+E_{ij}E_{ij}^\dagger).
\]
Since $\rho_0=\Pi_0/r$ is supported on $S$, only
$E_{ij}^\dagger E_{ij}=|\vct{e}_j\rangle\langle\vct{e}_j|$ contributes to
the trace.  Thus each of the $2mr$ coordinates contributes
\[
  \Tr\bigl(\rho_0(L_{i,j}^{\mathrm R})^2\bigr)
  =\Tr\bigl(\rho_0(L_{i,j}^{\mathrm I})^2\bigr)
  =\frac4r
\]
to the trace of the single-sample quantum Fisher information matrix.  Hence
\[
  \Tr\bigl(\cI_{\mathrm Q}(0)\bigr)=8m\le8d.
\]
The classical--quantum Fisher inequality and tensor product additivity in
\cref{eq:joint-measurement-quantum-fisher-bound} therefore give
\begin{equation}
  \Tr\bigl(\cI_{\mathsf M}^{(t)}(0)\bigr)\le8dt.
  \label{eq:t-copy-quantum-fisher-bound}
\end{equation}

Combining \cref{eq:t-copy-fisher-envelope-bound} for $t\le r^2$ with
\cref{eq:t-copy-quantum-fisher-bound} for $t>r^2$, there is a universal
constant \(C'>0\) such that
\begin{equation}
  \Tr\bigl(\cI_{\mathsf M}^{(t)}(0)\bigr)
  \le C'dt\min\left\{1,\frac{\sqrt t}{r}\right\},
  \label{eq:t-copy-fisher-bound}
\end{equation}
for every joint POVM $\mathsf M$ on $t$ samples.

\paragraph{Transfer to an arbitrary parameter.}

The preceding bounds apply only at \(X=0\).  To prove the bound in
\Cref{prop:one-block-fisher-bound} at an arbitrary $X$ for every \(t\ge1\),
we construct a unitary $U_X$ such that
\[
  U_X^\dagger\rho_XU_X=\rho_0.
\]
We show that conjugation sends a perturbation $H$ at $X$ to a perturbation
$\mathcal L_X(H)$ at $0$, where
\[
  \norm{\mathcal L_X(H)}_\F\le\norm H_\F.
\]
We then use these facts to bound the Fisher information trace at $X$ by that
of the conjugated measurement at $0$.

Recall the isometry $V_X$ from \cref{eq:graph-isometry}, whose range is the
support of $\rho_X$.  To construct $U_X$, define
\[
  W_X
  \coloneqq
  \begin{pmatrix}
    -X^\dagger\\
    \Id_m
  \end{pmatrix}
  (\Id_m+XX^\dagger)^{-1/2}.
\]
Direct multiplication gives
\[
  W_X^\dagger W_X=\Id_m,
  \qquad
  V_X^\dagger W_X=0.
\]
Thus
\[
  U_X\coloneqq\begin{pmatrix}V_X&W_X\end{pmatrix}
\]
is unitary.  Its first $r$ columns span the support of $\rho_X$, and therefore
$U_X^\dagger\rho_XU_X=\rho_0$.

To compare the likelihood derivatives at $X$ and $0$, for each
$H\in\C^{m\times r}$, we seek a direction $\mathcal L_X(H)$ satisfying
\[
  U_X^\dagger(\Ddir_H\rho_X)U_X
  =\Ddir_{\mathcal L_X(H)}\rho_0.
\]
Since $\rho_X=\Pi_X/r$, \Cref{eq:state-tangent-at-zero} shows that the desired
identity is equivalent to
\[
  U_X^\dagger(\Ddir_H\Pi_X)U_X
  =\begin{pmatrix}
    0&\mathcal L_X(H)^\dagger\\
    \mathcal L_X(H)&0
  \end{pmatrix}.
\]
This dictates the construction of $\mathcal L_X(H)$: we show that the
diagonal blocks vanish and take the lower left block to be $\mathcal L_X(H)$.
Since $\Pi_{X+sH}$ is a projector for every $s$, differentiating
$\Pi_{X+sH}^2=\Pi_{X+sH}$ at $s=0$ gives
\[
  (\Ddir_H\Pi_X)\Pi_X+\Pi_X(\Ddir_H\Pi_X)=\Ddir_H\Pi_X.
\]
Multiplying this identity on the left and right by $\Pi_X$, and
separately by $\Id_d-\Pi_X$, gives
\[
  2\Pi_X(\Ddir_H\Pi_X)\Pi_X=\Pi_X(\Ddir_H\Pi_X)\Pi_X,
  \qquad
  0=(\Id_d-\Pi_X)(\Ddir_H\Pi_X)(\Id_d-\Pi_X).
\]
The columns of $V_X$ and $W_X$ span the ranges of $\Pi_X$ and $\Id_d-\Pi_X$,
respectively, so the diagonal blocks vanish:
\[
  U_X^\dagger(\Ddir_H\Pi_X)U_X
  =\begin{pmatrix}
    0&V_X^\dagger(\Ddir_H\Pi_X)W_X\\
    W_X^\dagger(\Ddir_H\Pi_X)V_X&0
  \end{pmatrix}.
\]
It remains to compute the lower left block.  The definition of $W_X$ gives
\[
  W_X^\dagger\begin{pmatrix}\Id_r\\X\end{pmatrix}
  =(\Id_m+XX^\dagger)^{-1/2}
    \begin{pmatrix}-X&\Id_m\end{pmatrix}
    \begin{pmatrix}\Id_r\\X\end{pmatrix}
  =0.
\]
Using this identity together with $V_X^\dagger V_X=\Id_r$ and
$W_X^\dagger V_X=0$, differentiating $\Pi_X=V_XV_X^\dagger$ and the explicit
formula for $V_X$ gives
\begin{align*}
  W_X^\dagger(\Ddir_H\Pi_X)V_X
  &=W_X^\dagger\left((\Ddir_HV_X)V_X^\dagger
    +V_X(\Ddir_HV_X)^\dagger\right)V_X \notag\\
  &=W_X^\dagger(\Ddir_HV_X) \notag\\
  &=W_X^\dagger\left[
    \begin{pmatrix}0\\H\end{pmatrix}
    (\Id_r+X^\dagger X)^{-1/2}
    +\begin{pmatrix}\Id_r\\X\end{pmatrix}
    \Ddir_H\!\left((\Id_r+X^\dagger X)^{-1/2}\right)
  \right] \notag\\
  &=W_X^\dagger\begin{pmatrix}0\\H\end{pmatrix}
    (\Id_r+X^\dagger X)^{-1/2} \notag\\
  &=(\Id_m+XX^\dagger)^{-1/2}
    H(\Id_r+X^\dagger X)^{-1/2}.
\end{align*}
Define the real-linear map
\[
  \mathcal L_X(H)
  \coloneqq
  (\Id_m+XX^\dagger)^{-1/2}
  H(\Id_r+X^\dagger X)^{-1/2}.
\]
Both matrices multiplying $H$ have operator norm at most one, so
\[
  \norm{\mathcal L_X(H)}_\F\le\norm H_\F.
\]
Combining these identities gives
\[
  U_X^\dagger(\Ddir_H\rho_X)U_X
  =\frac1r
  \begin{pmatrix}
    0&\mathcal L_X(H)^\dagger\\
    \mathcal L_X(H)&0
  \end{pmatrix}
  =\Ddir_{\mathcal L_X(H)}\rho_0.
\]

Let $\widetilde{\mathsf M}$ be the conjugated POVM with density
\[
  \widetilde M_z
  \coloneqq
  (U_X^{\otimes t})^\dagger M_zU_X^{\otimes t},
\]
and let
\[
  \widetilde q_K^{(t)}(z)
  \coloneqq
  \Tr\left(\widetilde M_z\rho_K^{\otimes t}\right)
\]
be its likelihood for \(K\in\C^{m\times r}\), with \(U_X\) held fixed as \(K\)
varies.  The preceding
state identities give, for every $H$,
\[
  q_X^{(t)}(z)=\widetilde q_0^{(t)}(z),
  \qquad
  \Ddir_Hq_X^{(t)}(z)
  =\Ddir_{\mathcal L_X(H)}\widetilde q_0^{(t)}(z).
\]
The derivative identity says that a direction $H$ at $X$ changes the
likelihood in the same way as the direction $\mathcal L_X(H)$ at $0$.
By the definitions of the gradient and the adjoint, this gives
\(\nabla q_X^{(t)}(z)=\mathcal L_X^\dagger
  \nabla\widetilde q_0^{(t)}(z)\)
in real coordinates, where $\mathcal L_X^\dagger$ is the adjoint with respect
to the real Frobenius inner product.  Since
\(q_X^{(t)}(z)=\widetilde q_0^{(t)}(z)\), the definition of Fisher information
gives
\[
  \cI_{\mathsf M}^{(t)}(X)
  =\int
    \frac{\mathcal L_X^\dagger
      \nabla\widetilde q_0^{(t)}(z)
      \nabla\widetilde q_0^{(t)}(z)^{\mathsf T}\mathcal L_X}
      {\widetilde q_0^{(t)}(z)}\,\dd\nu(z)
  =\mathcal L_X^\dagger
    \cI_{\widetilde{\mathsf M}}^{(t)}(0)\mathcal L_X.
\]
Since $\mathcal L_X$ is a contraction,
$\mathcal L_X\mathcal L_X^\dagger\preceq\Id_{2mr}$, and hence
\[
  \Tr\bigl(\cI_{\mathsf M}^{(t)}(X)\bigr)
  =\Tr\bigl(\cI_{\widetilde{\mathsf M}}^{(t)}(0)
      \mathcal L_X\mathcal L_X^\dagger\bigr)
  \le\Tr\bigl(\cI_{\widetilde{\mathsf M}}^{(t)}(0)\bigr).
\]
\begin{proof}[Proof of \Cref{prop:one-block-fisher-bound}]
  At \(X=0\), the case \(t=1\) is
  \cref{eq:one-copy-fisher-bound}, and the case \(t\ge2\) is
  \cref{eq:t-copy-fisher-bound}.  For an arbitrary \(X\), the trace inequality
  above bounds \(\Tr(\cI_{\mathsf M}^{(t)}(X))\) by the corresponding quantity
  for the conjugated measurement at \(0\).  Applying the \(X=0\) bounds proves
  \cref{eq:one-block-fisher-bound}.
\end{proof}

\subsection{Adaptive accumulation across measurement rounds}
\label{sec:adaptive-information}

\Cref{prop:one-block-fisher-bound} controls the joint measurement selected in
one round.  To bound the full transcript of an adaptive protocol, represent
the protocol as a decision tree: a complete transcript is a root-to-leaf
path, and at the node corresponding to history \(h\) before round \(i\), the
protocol has selected a POVM acting jointly on \(t_i(h)\le t\) fresh samples.
We use the standard conditional score chain rule for Fisher information
along this path and then sum the resulting round-by-round bounds using the
pathwise sample bound.

Fix a protocol with pathwise sample bound \(N\).  Every nontrivial round uses at
least one sample, so there are at most \(N\) such rounds on any execution.  We
give every transcript a fixed length by appending dummy rounds after the
protocol halts until it has \(N\) rounds.  A dummy round uses no samples
and contributes no Fisher information, so this does not change the protocol.
Let \(U\) be its private seed, let \(Z_i\) be the outcome in round \(i\), and
write \(Z_{<i}\coloneqq(Z_1,\ldots,Z_{i-1})\).  At
history \(h=(U,Z_{<i})\), the protocol selects an integer
\(t_i(h)\in\{0,1,\ldots,t\}\) and a joint POVM on \(t_i(h)\) samples.  Let
\(p_{i,X}(\cdot\mid h)\) denote its conditional outcome density relative to
a parameter-independent reference measure \(\nu_i(\cdot\mid h)\) when the
input state is \(\rho_X\).

For finite outcomes, the transcript likelihood is obtained by multiplying
the conditional probabilities.  The same factorization holds for the
continuous outcome spaces in our model:
\Cref{app:statistical-formalism} constructs jointly measurable
conditional densities with respect to a parameter-independent transcript
measure and verifies the likelihood regularity used below.  Thus
\begin{equation}
  q_{N,X}(u,z_1,\ldots,z_N)
  \coloneqq\prod_{i=1}^N p_{i,X}(z_i\mid u,z_{<i}).
  \label{eq:full-transcript-likelihood}
\end{equation}
Under this transcript law, \(U,Z_1,\ldots,Z_N\) are the random seed and
outcomes.  We write \(\E_X\) for expectation when the unknown state is
\(\rho_X\).

Let \(\cI_i(h;X)\) be the Fisher information matrix contributed by the
round \(i\) measurement selected at history \(h\), evaluated at \(X\), and let
\(\cI_{\mathrm{tr}}^{(N)}(X)\) be the Fisher information matrix of the full
transcript.  Both are expressed in the same \(2mr\) real support coordinates,
so their matrices can be added.
Here the subscript \(\mathrm{tr}\) stands for transcript, while the
superscript \((N)\) denotes the number of padded rounds.

\begin{lemma}[Adaptive Fisher chain rule]
  \label{lem:block-fisher-chain}
  For every \(X\),
  \begin{equation}
    \cI_{\mathrm{tr}}^{(N)}(X)
    =
    \sum_{i=1}^N
      \E_X\!\left[\cI_i(U,Z_{<i};X)\right].
    \label{eq:block-fisher-chain}
  \end{equation}
\end{lemma}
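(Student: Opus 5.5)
We will prove \cref{eq:block-fisher-chain} by the standard conditional score decomposition, working throughout with the factorized transcript likelihood \cref{eq:full-transcript-likelihood}. The private seed \(U\) has a parameter-independent law, so it contributes no score.

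\textbf{Step 1 (score decomposition).} On the set where \(q_{N,X}>0\), taking logarithms and differentiating in the \(2mr\) real coordinates gives
\[
  \vct s_{\mathrm{tr}}(u,z_{1:N})
  =\sum_{i=1}^N\vct s_i(z_i\mid u,z_{<i}),
  \qquad
  \vct s_i\coloneqq\nabla_X\log p_{i,X}(z_i\mid u,z_{<i}).
\]
Here \(\vct s_i\) is set to zero where \(p_{i,X}=0\). The regularity facts from \Cref{app:statistical-formalism} justify differentiating each factor at a fixed outcome. In particular, they give that \(p_{i,X}(\cdot\mid h)=\Tr(M^{(i)}_{h,z}\rho_X^{\otimes t_i(h)})\) is a polynomial-type smooth function of \(X\) for each history \(h\). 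Dummy rounds have \(t_i(h)=0\) and a single outcome, so their score is zero.

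\textbf{Step 2 (martingale structure).} Write \(\cI_{\mathrm{tr}}^{(N)}(X)=\E_X[\vct s_{\mathrm{tr}}\vct s_{\mathrm{tr}}^{\mathsf T}]\) and expand the square of the sum from Step 1. For \(i<j\), condition on \((U,Z_{<j})\). The factor \(\vct s_i\) is then fixed, and
\[
  \E_X\bigl[\vct s_j\mid U,Z_{<j}\bigr]
  =\int\nabla_X p_{j,X}(z\mid h)\,\dd\nu_j(z\mid h)
  =\nabla_X\int p_{j,X}(z\mid h)\,\dd\nu_j(z\mid h)=0 .
\]
The second equality interchanges derivative and integral. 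This is legitimate because, for a POVM on finitely many fresh samples, \(\nabla_X p_{j,X}(z\mid h)=\Tr(M_z D\rho_X^{\otimes t_j})\) is dominated by a constant times \(\Tr(M_z)\), which is \(\nu_j\)-integrable. Hence all cross terms vanish. The diagonal terms are handled by the tower property: \(\E_X[\vct s_i\vct s_i^{\mathsf T}\mid U,Z_{<i}]=\cI_i(U,Z_{<i};X)\). Summing over \(i\) gives \cref{eq:block-fisher-chain}.

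\textbf{Main obstacle.} The delicate step is measure-theoretic rather than algebraic. Two things must be justified. First, the conditional densities must be jointly measurable in \((h,z)\), so that \(\cI_i(h;X)\) is a measurable function of the history. Second, the transcript score must be square integrable, so that the expansion and the conditioning are legitimate. For the second point, we will first check that each \(\cI_i(h;X)\) is bounded uniformly in \(h\), which holds by \Cref{prop:one-block-fisher-bound}. Then each \(\vct s_i\) lies in \(L^2\), and the Cauchy--Schwarz inequality handles the cross terms. With the number of rounds fixed at \(N\), the sum is finite, and no limiting argument is needed. The measurability issue is delegated to the construction in \Cref{app:statistical-formalism}.
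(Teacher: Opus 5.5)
Your proposal is correct and follows the paper's proof essentially step for step. You split the transcript score into roundwise conditional scores. You kill the cross terms because each roundwise score has conditional mean zero given $(U,Z_{<j})$, with differentiation under the integral justified by the same uniform domination of likelihood derivatives that \Cref{lem:block-likelihood-regularity} provides. You identify the diagonal terms with $\E_X[\cI_i(U,Z_{<i};X)]$ by the tower property. Your extra step, checking square integrability of each roundwise score via the uniform nodewise bound of \Cref{prop:one-block-fisher-bound}, makes explicit an integrability point that the paper's expansion of the squared sum uses only implicitly.
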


\begin{proof}
  Let \(H_i\coloneqq(U,Z_{<i})\) be the history before round \(i\).  With
  \(\nabla_X\) denoting the gradient in the \(2mr\) real coordinates of
  \(X\), define the round \(i\) score by
  \[
    S_i(X)
    \coloneqq
    \nabla_X\log p_{i,X}(Z_i\mid H_i).
  \]
  As in \Cref{sec:classical-fisher-information}, the score is set to zero
  when the conditional likelihood vanishes.
  The likelihood factorization and the regularity established in
  \Cref{lem:block-likelihood-regularity} give
  \[
    \nabla_X\log q_{N,X}
    =
    \sum_{i=1}^N S_i(X),
  \]
  almost surely under the transcript law.
  By definition, the Fisher matrix of the transcript is therefore
  \begin{align*}
    \cI_{\mathrm{tr}}^{(N)}(X)
    &=\E_X\!\left[
      \left(\sum_{i=1}^NS_i(X)\right)
      \left(\sum_{j=1}^NS_j(X)\right)^{\mathsf T}
    \right]\\
    &=\sum_{i=1}^N\E_X[S_i(X)S_i(X)^{\mathsf T}]
      +\sum_{i<j}\E_X\!\left[
        S_i(X)S_j(X)^{\mathsf T}+S_j(X)S_i(X)^{\mathsf T}
      \right].
  \end{align*}
  We show that the second sum vanishes.  Each roundwise score has conditional
  mean zero given the preceding history.  Indeed,
  \begin{align*}
    \E_X[S_i(X)\mid H_i]
    &=
    \int_{\mathsf Z_i}p_{i,X}(z\mid H_i)
      \nabla_X\log p_{i,X}(z\mid H_i)\,\nu_i(\dd z\mid H_i)\\
    &=
    \int_{\mathsf Z_i}\nabla_Xp_{i,X}(z\mid H_i)\,\nu_i(\dd z\mid H_i)\\
    &=
    \nabla_X\int_{\mathsf Z_i}p_{i,X}(z\mid H_i)\,\nu_i(\dd z\mid H_i)
    =
    \nabla_X1
    =0.
  \end{align*}
  The reference measure is parameter independent at each fixed history, and
  the interchange of differentiation and integration is justified by
  \Cref{lem:block-likelihood-regularity}.  Therefore, if \(i<j\), then
  \(H_j\) contains the outcome of round \(i\), and hence \(S_i(X)\) is
  determined by \(H_j\).  It follows that
  \[
    \E_X[S_i(X)S_j(X)^{\mathsf T}]
    =
    \E_X\!\left[
      S_i(X)\E_X[S_j(X)^{\mathsf T}\mid H_j]
    \right]
    =0.
  \]
  The transpose cross term vanishes in the same way.  For the diagonal terms,
  the definition of conditional Fisher information and the law of total
  expectation give
  \[
    \E_X[S_i(X)S_i(X)^{\mathsf T}]
    =
    \E_X\!\left[
      \E_X[S_i(X)S_i(X)^{\mathsf T}\mid H_i]
    \right]
    =
    \E_X\!\left[\cI_i(H_i;X)\right].
  \]
  Substituting these identities into the expansion above proves
  \cref{eq:block-fisher-chain}.  The seed contributes no term because its law
  is parameter independent.
\end{proof}

\begin{corollary}[Adaptive Fisher information bound]
  \label{cor:adaptive-information-budget}
  Consider the family \(\{\rho_X\}\) of rank-\(r\) states defined in
  \Cref{sec:support-rotation}.  Let an adaptive protocol jointly measure at
  most \(t\) fresh samples in each round and use at most \(N\) samples along
  every root-to-leaf path.  Then the Fisher information matrix of its
  transcript satisfies, for every \(X\),
  \begin{equation}
    \Tr\bigl(\cI_{\mathrm{tr}}^{(N)}(X)\bigr)
    \le CdN\min\left\{1,\frac{\sqrt t}{r}\right\}.
    \label{eq:adaptive-information-budget}
  \end{equation}
\end{corollary}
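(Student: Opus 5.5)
The plan is to combine the chain rule of \Cref{lem:block-fisher-chain} with the nodewise estimate of \Cref{prop:one-block-fisher-bound}, then sum the per-round bounds using the pathwise sample budget. Taking traces in \cref{eq:block-fisher-chain} and using linearity of trace and expectation gives
\[
  \Tr\bigl(\cI_{\mathrm{tr}}^{(N)}(X)\bigr)
  =\E_X\!\left[\sum_{i=1}^N\Tr\bigl(\cI_i(H_i;X)\bigr)\right],
\]
where \(H_i=(U,Z_{<i})\). At each fixed history \(h\), the round-\(i\) measurement is a fixed POVM on \(t_i(h)\) samples, and the state is \(\rho_X^{\otimes t_i(h)}\). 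So \(\cI_i(h;X)\) is exactly the matrix \(\cI_{\mathsf M}^{(t_i(h))}(X)\) of \Cref{prop:one-block-fisher-bound}. When \(t_i(h)=0\) the round is a dummy round with a parameter-independent outcome law, and its Fisher matrix is zero.

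For \(t_i(h)\ge1\), \Cref{prop:one-block-fisher-bound} gives
\[
  \Tr\bigl(\cI_i(h;X)\bigr)\le C d\,t_i(h)\min\left\{1,\frac{\sqrt{t_i(h)}}{r}\right\}.
\]
The function \(u\mapsto\min\{1,\sqrt u/r\}\) is nondecreasing. Together with \(t_i(h)\le t\), this yields the bound \(C d\,t_i(h)\min\{1,\sqrt t/r\}\), which also holds trivially when \(t_i(h)=0\). Summing over \(i\) along any transcript and using \(\sum_i t_i(H_i)\le N\) pointwise, which is the pathwise sample bound preserved by zero-sample padding, gives the bound \(CdN\min\{1,\sqrt t/r\}\) for the random sum. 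Taking \(\E_X\) then proves \cref{eq:adaptive-information-budget}.

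The only delicate point is the identification of \(\cI_i(h;X)\) with the single-POVM Fisher matrix. This requires the conditional densities \(p_{i,X}(\cdot\mid h)\) to be the Born-rule densities of the selected POVM, taken with respect to a parameter-independent reference measure. It also requires the nodewise Fisher matrix not to depend on the choice of dominating measure. Both facts are supplied by the measurable construction in \Cref{app:statistical-formalism} and by the remark that \(\cI_{\mathsf M}\) depends only on the outcome laws. I expect the argument to need nothing beyond citing these facts, plus measurability of \(h\mapsto\Tr(\cI_i(h;X))\) so that the expectation is defined; the integrand is nonnegative, so the expectation is well defined even if infinite a priori.
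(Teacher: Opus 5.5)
Your proposal is correct and is essentially the paper's own proof. You take traces in \Cref{lem:block-fisher-chain} and apply \Cref{prop:one-block-fisher-bound} at each node, using monotonicity in \(t_i(h)\le t\) and giving zero-sample rounds no information. You then bound \(\sum_i t_i(H_i)\le N\) pathwise before taking \(\E_X\).
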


\begin{proof}
  If \(t_i(h)=0\), the conditional Fisher information is zero.  Otherwise,
  \Cref{prop:one-block-fisher-bound} gives
  \[
    \Tr\bigl(\cI_i(h;X)\bigr)
    \le Cd t_i(h)
      \min\left\{1,\frac{\sqrt{t_i(h)}}{r}\right\}
    \le Cd t_i(h)\min\left\{1,\frac{\sqrt t}{r}\right\}.
  \]
  Taking traces in \cref{eq:block-fisher-chain} and using the pathwise sample
  bound yields
  \[
    \Tr\bigl(\cI_{\mathrm{tr}}^{(N)}(X)\bigr)
    \le
    Cd\min\left\{1,\frac{\sqrt t}{r}\right\}
    \E_X\!\left[\sum_{i=1}^Nt_i(U,Z_{<i})\right]
    \le CdN\min\left\{1,\frac{\sqrt t}{r}\right\}.
  \]
\end{proof}

\subsection{From a Fisher information bound to expected trace norm loss}
\label{sec:bayesian-reduction}

The adaptive Fisher information bound controls how much the transcript can
reveal about the support parameter \(X\).  We now convert it into a lower bound
on expected trace norm loss.  We apply the van Trees inequality using a smooth
probability density \(\pi\) supported on parameters satisfying
\(\norm X_{\mathrm{op}}<a\).  Its denominator contains the Fisher information
contributed by the measurements and \(I(\pi)\).  Restricting the estimator \(T\)
to satisfy \(\norm T_{\mathrm{op}}\le a\) then converts the resulting
Frobenius error bound into a trace norm bound.  The postprocessing in
\Cref{sec:bounded-support-estimator-postprocessing} will produce exactly such
an estimator from the output of a tomography protocol.

\subsubsection{The van Trees inequality}
\label{sec:vector-van-trees}

The van Trees inequality formalizes a simple principle: accurate estimation
requires sufficient information.  The denominator in
\cref{eq:vector-van-trees} below contains the Fisher information contributed
by the measurement transcript and \(I(\pi)\), the Fisher information of the
probability density \(\pi\).  If their sum is small, then the expected squared
estimation error on the left side must be large.

Let \(\Theta\subseteq\R^p\) be open, and draw \(\vct{\theta}\) from a smooth,
compactly supported probability density \(\pi\) on \(\Theta\).  Conditional on
\(\vct{\theta}\), let the observation have likelihood \(q_{\vct{\theta}}\) in a
regular dominated model.  Let \(\cI(\vct{\theta})\) denote the Fisher
information matrix of this likelihood family.  The precise regularity conditions are stated in
\Cref{app:vector-van-trees}; \Cref{lem:block-likelihood-regularity} verifies
them for the adaptive transcript model used here.

We write \(C_c^1(\Theta;\R)\) for the real-valued functions on \(\Theta\)
that are continuously differentiable and vanish outside a compact subset of
\(\Theta\).  The superscript \(1\) means that the first partial derivatives
exist and are continuous, the subscript \(c\) means compact support, and
\((\Theta;\R)\) gives the domain and codomain.
To obtain a nonnegative probability density that vanishes near the boundary
and whose Fisher information is easy to control, we begin with a function
\(\psi\in C_c^1(\Theta;\R)\), normalize it in \(L_2\), and define
\[
  \pi(\vct{\theta})\coloneqq\psi(\vct{\theta})^2,
  \qquad
  \norm\psi_2^2
  \coloneqq
  \int_\Theta\abs{\psi(\vct{\theta})}^2\,\dd\vct{\theta}
  =1.
\]
Then \(\pi\) integrates to one and vanishes near the boundary of \(\Theta\).
We define its Fisher information by
\begin{equation}
  I(\pi)
  \coloneqq
  4\int_\Theta\norm{\nabla\psi(\vct{\theta})}_2^2
  \,\dd\vct{\theta}.
  \label{eq:prior-fisher-energy}
\end{equation}
This agrees with the usual Fisher information of the density \(\pi\).
Indeed, wherever \(\pi(\vct{\theta})>0\),
\[
  \norm{\nabla\log\pi(\vct{\theta})}_2^2
    \pi(\vct{\theta})
  =\frac{\norm{\nabla\pi(\vct{\theta})}_2^2}
    {\pi(\vct{\theta})}
  =\frac{\norm{2\psi(\vct{\theta})
    \nabla\psi(\vct{\theta})}_2^2}{\psi(\vct{\theta})^2}
  =4\norm{\nabla\psi(\vct{\theta})}_2^2.
\]
When \(\pi(\vct{\theta})=0\), the logarithm is undefined, whereas the
expression in \cref{eq:prior-fisher-energy} remains well defined.

\begin{lemma}[Van Trees inequality]
  \label{lem:vector-van-trees}
  Assume the preceding likelihood regularity, and let
  \(\pi(\vct{\theta})\coloneqq\psi(\vct{\theta})^2\), where
  \(\psi\in C_c^1(\Theta;\R)\) and
  \(\norm\psi_2=1\).  Then every bounded measurable estimator \(T\) taking values
  in \(\R^p\) satisfies
  \begin{equation}
    \E\norm{T-\vct{\theta}}_2^2
    \ge
    \frac{p^2}{
      \E_{\vct{\theta}}
        \Tr\bigl(\cI(\vct{\theta})\bigr)+I(\pi)}.
    \label{eq:vector-van-trees}
  \end{equation}
  The expectation first draws \(\vct{\theta}\sim\pi\) and then draws the
  observation from \(q_{\vct{\theta}}\).
\end{lemma}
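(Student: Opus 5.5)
We follow the standard Gill--Levit route: integrate by parts against the prior to produce a covariance identity, then apply Cauchy--Schwarz. Let \(Z\) denote the observation with dominating measure \(\nu\), and let \(T=T(Z)\) be bounded. We work with the joint density \(f(\vct{\theta},z)\coloneqq\pi(\vct{\theta})q_{\vct{\theta}}(z)\) on \(\Theta\times\mathsf Z\). For each coordinate \(j\in[p]\) we form the joint score \(\partial_j\log f=\partial_j\log\pi+\partial_j\log q_{\vct{\theta}}\), taking it to be zero where \(f=0\). We will show two facts.
\begin{enumerate}
\item The integral \(\iint (T_j-\theta_j)\,\partial_j f\,\dd\vct{\theta}\,\dd\nu\) equals \(1\).
\item The second moment of the score vector \((\partial_j\log f)_j\) has trace \(\E\Tr\cI(\vct{\theta})+I(\pi)\).
\end{enumerate}

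For the first fact, fix \(z\) and integrate by parts in \(\theta_j\). Since \(\psi\in C^1_c(\Theta)\), \(\pi\) vanishes near \(\partial\Theta\), so the boundary terms disappear. Moreover \(T_j(z)\) does not depend on \(\vct{\theta}\), and \(\partial_j(T_j-\theta_j)=-1\). This gives
\[
  \int_\Theta (T_j-\theta_j)\,\partial_j\bigl(\pi q_{\vct{\theta}}(z)\bigr)\,\dd\vct{\theta}=\int_\Theta\pi q_{\vct{\theta}}(z)\,\dd\vct{\theta}.
\]
We then integrate over \(z\) using Fubini, which is justified because \(T\) is bounded and \(\pi\) has compact support. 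Summing over \(j\) yields \(\sum_j\E[(T_j-\theta_j)\partial_j\log f]=p\).

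Cauchy--Schwarz in \(L^2(f)\), applied to the vectors \(T-\vct{\theta}\) and \(\nabla\log f\), then gives
\[
  p^2\le\E\norm{T-\vct\theta}_2^2\cdot\E\norm{\nabla\log f}_2^2 .
\]

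For the second fact, expand \(\E\norm{\nabla\log f}_2^2\) into three pieces.
\begin{itemize}
\item The likelihood part gives \(\E_{\vct\theta}\Tr\cI(\vct\theta)\).
\item The prior part, \(\int\norm{\nabla\pi}^2/\pi\), equals \(I(\pi)\). Here we use the \(\psi\)-form computed before the lemma, which also handles the set where \(\pi=0\): on that set \(\nabla\pi=2\psi\nabla\psi=0\).
\item The cross term is \(2\int\nabla\pi\cdot\bigl(\int\nabla q_{\vct\theta}\,\dd\nu\bigr)\,\dd\vct\theta\). It vanishes because \(\int\nabla_{\vct\theta} q_{\vct\theta}\,\dd\nu=\nabla 1=0\), where exchanging derivative and integral is one of the assumed regularity conditions.
\end{itemize}
Rearranging the Cauchy--Schwarz inequality gives \cref{eq:vector-van-trees}. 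If the denominator is infinite the bound is trivial, and it cannot be zero because \(I(\pi)>0\) for a compactly supported \(\psi\).

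The main obstacle is measure-theoretic: justifying the interchanges on a general dominated model. Three points need care.
\begin{itemize}
\item We need joint measurability of \((\vct\theta,z)\mapsto q_{\vct\theta}(z)\) and its partial derivatives.
\item For the integration by parts, we need absolute continuity of \(\theta_j\mapsto q_{\vct\theta}(z)\) for \(\nu\)-a.e.\ \(z\).
\item For the Cauchy--Schwarz step, we need integrability of \((T_j-\theta_j)\partial_j f\). This follows from \(\norm{T}\) bounded, \(\pi\) compactly supported, and \(\E\Tr\cI<\infty\). When \(\E\Tr\cI=\infty\) the bound is trivial, so we may assume finiteness and deduce the integrability from Cauchy--Schwarz.
\end{itemize}
I would state these as the hypotheses in \Cref{app:vector-van-trees} and verify them separately for the transcript model in \Cref{lem:block-likelihood-regularity}, rather than inside this proof.
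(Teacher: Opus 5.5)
Your proposal is correct and follows essentially the same argument as the paper. Both set \(f=\pi q_{\vct\theta}\), integrate by parts in each \(\theta_j\) to get \(\iint\ip{T-\vct\theta}{\nabla f}=p\), apply Cauchy--Schwarz in \(L^2(f)\), and expand \(\iint\norm{\nabla f}_2^2/f\) into a prior term, a likelihood term, and a cross term that vanishes. The paper records the prior term only as at most \(I(\pi)\) (integrating over \(\{\psi\neq0\}\)), which suffices. It justifies Fubini through finiteness of that integral and Cauchy--Schwarz, as in your closing remark, not through boundedness of \(T\) and compact support alone.
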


The proof is given in \Cref{app:vector-van-trees}.

\subsubsection{A smooth probability density for the support parameter}
\label{sec:rectangular-local-prior}

Recall that \(m=d-r\), and put \(p\coloneqq 2mr\).  We will apply van Trees
on \(\C^{m\times r}\) with a smooth density whose
Fisher information is bounded by a universal constant times \(pd/a^2\).  We
choose the density to vanish unless \(\norm X_{\mathrm{op}}<a\); together
with the same constraint on the estimator, this will give the trace norm
conversion in the next subsection.

\begin{lemma}[Smooth probability density for rectangular parameters]
  \label{lem:rectangular-local-prior}
  There is a universal constant \(C>0\) with the following property.  For
  every \(d\ge2\), \(1\le r<d\), and \(a>0\), put
  \(m\coloneqq d-r\) and \(p\coloneqq 2mr\).  There is a nonnegative,
  infinitely differentiable
  function \(\psi\) on the real Euclidean space \(\C^{m\times r}\),
  normalized by \(\int\psi(X)^2\,\dd X=1\), that vanishes outside a compact
  subset of \(\{X:\norm X_{\mathrm{op}}<a\}\).  The probability density
  \(\pi(X)\coloneqq\psi(X)^2\) satisfies
  \begin{equation}
    I(\pi)\le C\frac{pd}{a^2}.
    \label{eq:rectangular-prior-energy}
  \end{equation}
\end{lemma}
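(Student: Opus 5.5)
We construct \(\psi\) as a Gaussian times a smooth radial cutoff in the operator norm. Fix a smooth function \(\chi:\R\to[0,1]\) with \(\chi(u)=1\) for \(u\le1/2\), \(\chi(u)=0\) for \(u\ge3/4\), and \(\abs{\chi'}\le 8\). Since \(X\mapsto\norm X_{\mathrm{op}}\) is only Lipschitz, we smooth it by replacing it with \(\phi(X)\coloneqq\Tr\bigl((X^\dagger X)^{q}\bigr)^{1/(2q)}\) for an integer \(q\ge1\). This Schatten-\(2q\) norm is smooth away from \(0\), and it is \(1\)-Lipschitz in Frobenius norm: its gradient is a dual-norm unit matrix, whose Frobenius norm is at most \(1\) when \(2q\ge2\). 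It also satisfies \(\norm X_{\mathrm{op}}\le\phi(X)\). Put
\[
  \psi_0(X)\coloneqq\exp\!\left(-\frac{\norm X_\F^2}{4\sigma^2}\right)\chi\!\left(\frac{\phi(X)}{a}\right),
  \qquad
  \psi\coloneqq\psi_0/\norm{\psi_0}_2,
\]
with \(\sigma\coloneqq c_0a/\sqrt d\) for a small universal constant \(c_0\). The cutoff is constant near \(X=0\), so \(\psi\) is infinitely differentiable. It vanishes unless \(\phi(X)\le3a/4\), a compact subset of \(\{\norm X_{\mathrm{op}}<a\}\).

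For the energy estimate, use \((x+y)^2\le2x^2+2y^2\):
\[
  \norm{\nabla\psi_0}_2^2\le 2\,\frac{\norm X_\F^2}{4\sigma^4}\,g(X)^2\chi^2+2\,\frac{64}{a^2}\,g(X)^2\mathbf 1\{\phi(X)\ge a/2\},
\]
where \(g\) is the Gaussian factor. We compare with the Gaussian density \(\gamma\propto g^2\), under which \(X\) has i.i.d.\ real coordinates of variance \(\sigma^2\) in dimension \(p\).

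The first term integrates to at most \(\E_\gamma\norm X_\F^2/(2\sigma^4)\) times \(\int g^2\). Since \(\E_\gamma\norm X_\F^2=p\sigma^2\), this gives \(p/(2\sigma^2)=O(pd/a^2)\).

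The second term is \((128/a^2)\,\Prb_\gamma[\phi(X)\ge a/2]\int g^2\), which is at most \(128/a^2\) times \(\int g^2\). This is already \(O(pd/a^2)\), so no tail estimate is needed for it.

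The remaining point, and the main obstacle, is the normalization: we need
\[
  \norm{\psi_0}_2^2\ge\tfrac12\int g^2 .
\]
This holds once \(\Prb_\gamma[\phi(X)\ge a/2]\le1/2\). Choose \(q\approx\log(m+r)\), so that \(\phi\le \ee\norm{\cdot}_{\mathrm{op}}\) on rank-\(\le\min\{m,r\}\) matrices. Under \(\gamma\), \(X\) is \(\sigma\sqrt2\) times a standard complex Gaussian \(m\times r\) matrix. The standard operator norm bound (for example, Gordon's inequality or \(\E\norm G_{\mathrm{op}}\le\sqrt m+\sqrt r\)) gives
\[
  \E\norm X_{\mathrm{op}}\le\sqrt2\,\sigma(\sqrt m+\sqrt r)\le2\sigma\sqrt d=2c_0a .
\]
Markov's inequality then gives
\[
  \Prb[\phi\ge a/2]\le 4\ee c_0<1/2
\]
for \(c_0\) small.

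Alternatively, one can avoid \(q\) entirely. Take a spectral smoothing \(\Tr f(X^\dagger X)\), or mollify \(\norm{\cdot}_{\mathrm{op}}\) by convolution at scale \(a/100\); this preserves the \(1\)-Lipschitz property and changes the norm by at most \(a/100\).

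Combining the estimates gives
\[
  I(\pi)=4\int\norm{\nabla\psi}_2^2\le 8\left(\frac p{2\sigma^2}+\frac{128}{a^2}\right)=O\!\left(\frac{pd}{a^2}\right).
\]
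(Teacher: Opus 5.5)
Your proof is correct and follows essentially the same route as the paper. Both arguments take a Gaussian with per-coordinate variance $\sigma^2\asymp a^2/d$, multiply it by an $O(1/a)$-Lipschitz cutoff, bound the energy by the product rule as $p/(2\sigma^2)+O(1/a^2)$, and secure the normalization from a Gaussian operator-norm estimate. The only difference is how you smooth the operator norm: you use the Schatten-$2q$ norm with $q\approx\log d$, which costs a harmless factor $\ee$ in the Markov step, whereas the paper convolves $h(\norm{X}_{\mathrm{op}}/a)$ with a bump of Frobenius radius $a/16$. That convolution is essentially the alternative you mention at the end.
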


\begin{proof}
  We begin with a Gaussian density whose variance is chosen so that most of
  its probability mass lies in the region \(\norm X_{\mathrm{op}}\le a/2\).
  We then multiply its square root by a
  smooth cutoff that equals one on this region and vanishes before
  \(\norm X_{\mathrm{op}}\) reaches \(a\).  The cutoff will have gradient
  norm \(O(1/a)\), which keeps its contribution to \(I(\pi)\) under control.

  Regard \(\C^{m\times r}\) as the real Euclidean space from
  \cref{eq:real-matrix-inner-product}.  Let \(G\in\C^{m\times r}\) have
  entries \(G_{ij}=\xi_{ij}+\ii\eta_{ij}\), for \(1\le i\le m\) and
  \(1\le j\le r\), where all the real random variables
  \(\xi_{ij}\) and \(\eta_{ij}\) are independent and distributed as
  \(\mathcal N(0,\sigma^2)\).  Thus, with respect to the \(p=2mr\) real coordinates,
  the density of \(G\) is
  \[
    g_\sigma(X)
    =\frac{1}{(2\pi\sigma^2)^{p/2}}
      \exp\left(-\frac{\norm X_\F^2}{2\sigma^2}\right).
  \]
  We first show that a constant fraction of the probability mass of $G$ lies
  where \(\norm G_{\mathrm{op}}\le a/2\).  Write \(G=A+\ii B\), where \(A\) and
  \(B\) are independent real Gaussian matrices whose entries have variance
  \(\sigma^2\).
  Applying the standard operator norm bound for real Gaussian matrices (see, e.g., Vershynin~\cite[Theorem~4.6.1]{Vershynin26-high-dim-probability}), followed by a
  union bound, gives a universal constant \(C>0\) such that
  \[
    \Prb\!\left[
      \norm G_{\mathrm{op}}>C\sigma(\sqrt m+\sqrt r)
    \right]
    \le\frac14.
  \]
  Since \(\sqrt m+\sqrt r\le\sqrt{2d}\), choosing
  \(\sigma=c a/\sqrt d\) for a sufficiently small universal constant
  \(c>0\) gives
  \[
    \Prb\!\left[\norm G_{\mathrm{op}}\le a/2\right]\ge\frac34.
  \]

  We next construct the smooth cutoff.  Let \(h:[0,\infty)\to[0,1]\) equal
  one on \([0,9/16]\), vanish on \([11/16,\infty)\), and be linear between
  these intervals.  Let \(\varphi\) be a nonnegative infinitely
  differentiable function on \(\C^{m\times r}\), supported where
  \(\norm H_\F\le a/16\), and normalized so that
  \(\int\varphi(H)\,\dd H=1\).  Average the cutoff over these small
  perturbations by setting
  \[
    \chi(X)
    \coloneqq
    \int h\left(\frac{\norm{X-H}_{\mathrm{op}}}{a}\right)
      \varphi(H)\,\dd H
    =\int h\left(\frac{\norm Y_{\mathrm{op}}}{a}\right)
      \varphi(X-Y)\,\dd Y.
  \]
  The first expression shows that \(\chi(X)\) is a weighted average of the
  cutoff at matrices close to \(X\).  The second shows that \(\chi\) is
  infinitely differentiable: differentiating with respect to \(X\)
  differentiates \(\varphi(X-Y)\), not the operator norm.  By construction,
  \(0\le\chi\le1\).

  If \(\norm X_{\mathrm{op}}\le a/2\) and \(\varphi(H)\ne0\), then
  \[
    \norm{X-H}_{\mathrm{op}}
    \le\norm X_{\mathrm{op}}+\norm H_\F
    \le\frac{9a}{16},
  \]
  so \(\chi(X)=1\).  Similarly, if
  \(\norm X_{\mathrm{op}}\ge3a/4\), then
  \(\norm{X-H}_{\mathrm{op}}\ge11a/16\) whenever \(\varphi(H)\ne0\), so
  \(\chi(X)=0\).  Finally, the operator norm is \(1\)-Lipschitz with respect
  to the Frobenius norm, and \(h\) is \(8\)-Lipschitz.  Averaging its
  translates therefore gives
  \[
    \abs{\chi(X)-\chi(Y)}
    \le\frac8a\norm{X-Y}_\F,
    \qquad
    \norm{\nabla\chi(X)}_2\le\frac8a.
  \]

  Define
  \[
    \widetilde\psi(X)\coloneqq\sqrt{g_\sigma(X)}\,\chi(X),
    \qquad
    Z\coloneqq\norm{\widetilde\psi}_2^2.
  \]
  Because \(\chi=1\) when \(\norm X_{\mathrm{op}}\le a/2\),
  \[
    Z
    =\int g_\sigma(X)\chi(X)^2\,\dd X
    \ge \Prb\!\left[\norm G_{\mathrm{op}}\le a/2\right]
    \ge\frac34.
  \]
  The Gaussian factor satisfies
  \[
    \nabla\sqrt{g_\sigma(X)}
    =-\frac{X}{2\sigma^2}\sqrt{g_\sigma(X)},
    \qquad
    \int\norm{\nabla\sqrt{g_\sigma}}_2^2
    =\frac{\E\norm G_\F^2}{4\sigma^4}
    =\frac{p}{4\sigma^2}.
  \]
  The product rule now gives
  \[
    \nabla\widetilde\psi
    =\chi\nabla\sqrt{g_\sigma}+\sqrt{g_\sigma}\,\nabla\chi,
  \]
  and the inequality
  \(\norm{\vct u+\vct v}_2^2
    \le2\norm{\vct u}_2^2+2\norm{\vct v}_2^2\) gives
  \begin{align*}
    \int\norm{\nabla\widetilde\psi}_2^2
    &\le
      2\int\chi^2\norm{\nabla\sqrt{g_\sigma}}_2^2
      +2\int g_\sigma\norm{\nabla\chi}_2^2\\
    &\le
      2\int\norm{\nabla\sqrt{g_\sigma}}_2^2
      +\frac{C}{a^2}\int g_\sigma\\
    &=\frac{p}{2\sigma^2}+\frac{C}{a^2}\\
    &=\frac{pd}{2c^2a^2}+\frac{C}{a^2}
    \le C\frac{pd}{a^2}.
  \end{align*}
  The second inequality uses \(0\le\chi\le1\) and
  \(\norm{\nabla\chi}_2\le8/a\).  The two equalities use the Gaussian
  integral above, \(\int g_\sigma=1\), and \(\sigma=ca/\sqrt d\);
  the last inequality uses \(pd\ge1\) and that \(c\) is a universal constant.
  Normalize by setting \(\psi\coloneqq\widetilde\psi/\sqrt Z\).  The function
  \(\psi\) is nonnegative and infinitely differentiable, has norm one in
  \(L_2\), and vanishes whenever \(\norm X_{\mathrm{op}}\ge3a/4\).  Its
  support is therefore a compact subset of
  \(\{X:\norm X_{\mathrm{op}}<a\}\).  Since \(Z\ge3/4\),
  \[
    4\int\norm{\nabla\psi}_2^2
    =\frac4Z\int\norm{\nabla\widetilde\psi}_2^2
    \le C\frac{pd}{a^2}.
  \]
  The density \(\pi\coloneqq\psi^2\) proves the lemma.
\end{proof}

\subsubsection{From Frobenius loss to trace norm loss}
\label{sec:localized-nuclear-risk}

The van Trees inequality gives a lower bound on squared Frobenius error, whereas the
theorem concerns trace norm loss.  If the parameter and the estimator both
have operator norm at most \(a\), then their difference has operator norm at
most \(2a\), which converts one loss into the other.

\begin{lemma}[Trace norm consequence of van Trees]
  \label{lem:localized-nuclear-risk}
  Let \(X\sim\pi\) take values in \(\C^{m\times r}\), viewed as a real
  Euclidean space of dimension \(p\coloneqq 2mr\), and suppose the density and
  likelihood satisfy the assumptions of \Cref{lem:vector-van-trees}.  Assume
  that \(\pi\) vanishes unless \(\norm X_{\mathrm{op}}<a\).  Then every
  measurable estimator \(T\in\C^{m\times r}\) satisfying
  \(\norm T_{\mathrm{op}}\le a\) almost surely obeys
  \begin{equation}
    \E\norm{T-X}_1
    \ge
    \frac{p^2}{
      2a\left(
        \E_{X\sim\pi}\Tr\bigl(\cI(X)\bigr)+I(\pi)
      \right)}.
    \label{eq:localized-nuclear-risk}
  \end{equation}
  Here the expectation is over both \(X\sim\pi\) and the corresponding
  observation.
\end{lemma}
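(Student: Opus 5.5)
The plan is to apply \Cref{lem:vector-van-trees} to the real coordinates of \(X\) and then turn the squared Frobenius error into trace norm error using the operator norm localization. The estimator \(T\) is bounded, since \(\norm T_\F\le\sqrt r\,\norm T_{\mathrm{op}}\le\sqrt r\,a\). It is also measurable and takes values in \(\C^{m\times r}\cong\R^p\). Under the real inner product \cref{eq:real-matrix-inner-product}, the Euclidean norm \(\norm{\cdot}_2\) on \(\R^p\) coincides with the Frobenius norm. The van Trees inequality therefore gives
\[
  \E\norm{T-X}_\F^2
  \ge\frac{p^2}{\E_{X\sim\pi}\Tr\bigl(\cI(X)\bigr)+I(\pi)}.
\]

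Next I would establish a pointwise comparison between the two norms. Set \(D\coloneqq T-X\). Almost surely \(\norm X_{\mathrm{op}}<a\), because \(\pi\) vanishes outside that set, and \(\norm T_{\mathrm{op}}\le a\) by hypothesis. The triangle inequality then gives \(\norm D_{\mathrm{op}}\le2a\). Let \(\sigma_k\) denote the singular values of \(D\). Then
\[
  \norm D_\F^2=\sum_k\sigma_k^2\le\Bigl(\max_k\sigma_k\Bigr)\sum_k\sigma_k
  =\norm D_{\mathrm{op}}\norm D_1\le2a\norm D_1,
\]
which can also be read as Hölder's inequality \(\Tr(D^\dagger D)\le\norm{D^\dagger}_{\mathrm{op}}\norm D_1\). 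Hence \(\norm{T-X}_1\ge\norm{T-X}_\F^2/(2a)\) almost surely.

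Taking expectations in this pointwise bound and combining it with the van Trees bound yields \cref{eq:localized-nuclear-risk}.

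I do not expect a real obstacle. The one point to check is that the hypotheses of \Cref{lem:vector-van-trees} actually hold. Boundedness of \(T\) comes from the operator norm constraint. The regularity of the density and likelihood is assumed in the statement, and \Cref{lem:rectangular-local-prior} supplies it for the density. Finally, the event \(\norm X_{\mathrm{op}}\ge a\) has probability zero, so it does not affect the expectation.
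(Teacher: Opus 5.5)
Your proposal is correct and takes essentially the same route as the paper: apply \Cref{lem:vector-van-trees}, with the Euclidean norm identified with the Frobenius norm, then use the pointwise singular-value bound $\norm{T-X}_\F^2\le\norm{T-X}_{\mathrm{op}}\norm{T-X}_1\le 2a\norm{T-X}_1$ and take expectations. Your explicit check that $T$ is bounded, via $\norm{T}_\F\le\sqrt r\,a$, is a detail the paper leaves implicit.
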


\begin{proof}
  Under the real inner product from \cref{eq:real-matrix-inner-product}, the
  Euclidean norm in \Cref{lem:vector-van-trees} is the Frobenius norm.  For
  every realization of \(X\) and the observation, the singular values of
  \(T-X\) give
  \[
    \norm{T-X}_\F^2
    \le\norm{T-X}_{\mathrm{op}}\norm{T-X}_1
    \le2a\norm{T-X}_1.
  \]
  Taking expectations and applying \Cref{lem:vector-van-trees} proves the
  claim.
\end{proof}

\subsection{Completion of the lower bound}
\label{sec:lower-bound-completion}

It remains to connect the expected loss bound from
\Cref{lem:localized-nuclear-risk} to the high-probability guarantee in the
theorem.  We first reduce the failure probability by repeating the protocol.
We then postprocess the tomographic output into a support parameter satisfying
the operator norm constraint used to convert Frobenius loss into trace norm
loss.
Combining the resulting upper and lower bounds on its expected loss proves the
theorem when the rank is at most half the dimension.  A depolarizing channel
reduction handles the remaining ranks.

\subsubsection{From high-probability tomography to a support estimator}
\label{sec:bounded-support-estimator-postprocessing}

The tomography guarantee allows a failure event of probability \(1/3\), on
which the output may be arbitrarily inaccurate.  Before passing to expected
loss, we reduce this probability to a small constant \(\delta\).  We use
the standard confidence amplification rule that selects a
candidate with the smallest majority radius; see Hsu and Sabato
\cite[Sec.~3.2, Proposition~8 and Algorithm~2]{HsuSabato2016}.
We include the short argument for completeness.

\begin{lemma}[Confidence amplification]
  \label{lem:metric-median}
  Let \(\eta\ge0\), and suppose an estimator in a metric space is within
  \(\eta\) of its target with probability at least \(2/3\).  For every
  \(0<\delta<1\), there is an odd integer \(K=K(\delta)\) such that \(K\)
  independent repetitions can be postprocessed, without further observations,
  into an estimator within \(3\eta\) of the target with probability at least
  \(1-\delta\).
\end{lemma}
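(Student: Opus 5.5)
We run the estimator independently \(K\) times on fresh samples, obtaining candidates \(\widehat x_1,\ldots,\widehat x_K\). We then select one of them by a purely geometric rule: the candidate with the smallest majority radius. For each \(j\in[K]\), let \(\rho_j\) be the smallest radius \(\rho\ge0\) such that the closed ball of radius \(\rho\) around \(\widehat x_j\) contains more than \(K/2\) of the candidates, counting \(\widehat x_j\) itself. This minimum exists because the set of candidates is finite, so \(\rho_j\) is one of the pairwise distances. We output \(\widehat x_{j^\ast}\) with \(j^\ast\) minimizing \(\rho_j\), breaking ties by the smallest index. The rule is a measurable function of the candidates and uses no further observations.

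\textbf{Step 1: concentration.} Let \(x\) be the target, and call run \(j\) good if \(d(\widehat x_j,x)\le\eta\). The good indicators are independent Bernoulli variables with mean at least \(2/3\). By Hoeffding's inequality, the probability that at most \(K/2\) runs are good is at most \(\exp(-2K(2/3-1/2)^2)=\exp(-K/18)\). We choose \(K\) odd with \(K\ge18\log(1/\delta)\), so that with probability at least \(1-\delta\) a strict majority of runs is good.

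\textbf{Step 2: deterministic guarantee on this event.}
\begin{itemize}
\item Take any good \(j\). By the triangle inequality, every good candidate lies within \(2\eta\) of \(\widehat x_j\), and the good candidates form a strict majority. Hence \(\rho_j\le2\eta\), and so \(\rho_{j^\ast}\le2\eta\).
\item The ball of radius \(\rho_{j^\ast}\) around \(\widehat x_{j^\ast}\) contains more than \(K/2\) candidates. The good candidates also number more than \(K/2\). Two subsets of \([K]\) each of size more than \(K/2\) must intersect, so some good candidate \(\widehat x_g\) satisfies \(d(\widehat x_{j^\ast},\widehat x_g)\le2\eta\).
\item By the triangle inequality, \(d(\widehat x_{j^\ast},x)\le2\eta+\eta=3\eta\).
\end{itemize}

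There is no real obstacle in this argument. The only points needing care are that majority radii are attained minima over a finite set of distances, and that "majority" is taken strictly so the pigeonhole step applies; choosing \(K\) odd makes the strict-majority count unambiguous.
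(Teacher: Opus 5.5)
Your proof is correct and follows the paper's argument essentially verbatim. For odd \(K\), your majority radius \(\rho_j\) is exactly the paper's \((K+1)/2\)-th smallest distance \(R_j\), and the Hoeffding choice of \(K\), the bound \(\rho_j\le2\eta\) for good candidates, and the intersecting-majorities triangle inequality step are the same as in the paper.
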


\begin{proof}
  Choose an odd \(K\) so that more than half of the repetitions are
  successful with probability at least \(1-\delta\).  Hoeffding's inequality
  shows that \(K=O(1+\log(1/\delta))\) suffices.  For outputs
  \(T_1,\ldots,T_K\), let \(R_j\) be the \((K+1)/2\)-th smallest value among
  the distances from \(T_j\) to the \(K\) outputs.  Select an output with
  minimum \(R_j\), breaking ties by index.

  Suppose that more than half of the outputs are within \(\eta\) of the
  target.  Every such output is within \(2\eta\) of every other successful
  output, and hence has \(R_j\le2\eta\).  The selected output therefore has
  more than half of the outputs within distance \(2\eta\).  This set
  intersects the successful majority, so the selected output is within
  \(3\eta\) of the target.
\end{proof}

For \(1\le r\le d/2\) and \(a>0\), write
\[
  \mathcal X_a
  \coloneqq
  \left\{X\in\C^{m\times r}:\norm X_{\mathrm{op}}\le a\right\}.
\]
Recall that \(X\mapsto\rho_X\) denotes the rank-\(r\) family from
\Cref{sec:support-rotation}.  The next lemma converts an arbitrary matrix
that accurately estimates \(\rho_X\) into an estimate of \(X\) belonging
to \(\mathcal X_a\).

\begin{lemma}[Postprocessing to a bounded support estimator]
  \label{lem:bounded-support-estimator-postprocessing}
  Let \(d\ge2\) and \(1\le r\le d/2\) be integers, let
  \(0<a\le1/4\) and \(\eta>0\), and let \(0<\delta<1\).  Suppose a measurable
  \(d\times d\) matrix-valued estimator
  \(\overline\rho\) satisfies
  \[
    \Prb_{\rho_X}\!\left[
      \norm{\overline\rho-\rho_X}_1\le3\eta
    \right]
    \ge1-\delta
  \]
  for every \(X\in\mathcal X_a\).  Then there is a measurable function of
  \(\overline\rho\), denoted \(\widehat X\), that takes values in
  \(\mathcal X_a\) and satisfies, for every \(X\in\mathcal X_a\),
  \begin{equation}
    \E_X\norm{\widehat X-X}_1
    \le14r\eta+2\delta ar.
    \label{eq:bounded-support-estimator-risk}
  \end{equation}
\end{lemma}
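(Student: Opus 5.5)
The estimator \(\widehat X\) will be a measurable nearest-point projection of \(\overline\rho\) onto the compact family \(\{\rho_Y:Y\in\mathcal X_a\}\).

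\emph{Construction.} Because \(Y\mapsto\rho_Y\) is continuous on the compact set \(\mathcal X_a\), the function \(Y\mapsto\norm{\overline\rho-\rho_Y}_1\) attains its minimum. A measurable selection of a minimizer exists, for instance by the Kuratowski--Ryll-Nardzewski theorem, or more elementarily by taking the lexicographically first minimizer over a countable dense subset. Concretely, we can minimize up to an additive slack \(\eta\) over a countable dense subset of \(\mathcal X_a\), enumerated in a fixed order, and output the first index achieving value at most \(\inf+\eta\). This is clearly measurable and avoids any appeal to a selection theorem. Call the output \(\widehat X\in\mathcal X_a\).

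\emph{Good event.} Fix \(X\in\mathcal X_a\) and suppose \(\norm{\overline\rho-\rho_X}_1\le3\eta\). Since \(X\) itself is a feasible candidate, the near-minimizer satisfies
\[
  \norm{\overline\rho-\rho_{\widehat X}}_1\le3\eta+\eta=4\eta.
\]
The triangle inequality then gives
\[
  \norm{\rho_{\widehat X}-\rho_X}_1\le7\eta.
\]
Both \(X\) and \(\widehat X\) have operator norm at most \(a\le1/4\), so \Cref{lem:graph-metric} applies and yields
\[
  \norm{\widehat X-X}_1\le2r\cdot7\eta=14r\eta.
\]
The constant \(14\) matches the target and explains the choice of slack \(\eta\). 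With an exact minimizer one would get \(12r\eta\) instead.

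\emph{Bad event and conclusion.} The failure event has probability at most \(\delta\). On it, both matrices lie in \(\mathcal X_a\), and \(\widehat X-X\) has rank at most \(r\) because it is \(m\times r\). Hence
\[
  \norm{\widehat X-X}_1\le r\norm{\widehat X-X}_{\mathrm{op}}\le2ar.
\]
Splitting the expectation over the two events gives
\[
  \E_X\norm{\widehat X-X}_1\le14r\eta+2\delta ar,
\]
as claimed. The only delicate point is measurability of the selection. This is why I would use the countable-dense-set, first-index rule: each candidate's objective is measurable in \(\overline\rho\), and the infimum over a countable set is measurable. Continuity in \(Y\) ensures that this infimum equals the true minimum, so the slack bound applies to the feasible point \(X\).
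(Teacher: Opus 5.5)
Your proof is correct and follows essentially the paper's route. The only difference is where the \(\eta\) slack comes from: the paper minimizes \(\norm{\overline\rho-\rho_Y}_1\) exactly over a finite set of centers \(Y_1,\ldots,Y_M\) whose \(\eta\)-neighbourhoods in \(\rho\)-trace distance cover \(\mathcal X_a\), whereas you take an \(\eta\)-near-minimizer over a countable dense set; after that, the \(4\eta\to7\eta\to14r\eta\) chain via \Cref{lem:graph-metric} and the \(2ar\) failure bound are identical. (Your passing suggestion of a ``first minimizer over a countable dense subset'' would not work as stated, since the infimum need not be attained there, but the slack rule you actually use avoids this.)
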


\begin{proof}
  The set \(\mathcal X_a\) is compact, and the map
  \(X\mapsto\rho_X\) is continuous.  For each \(Y\in\mathcal X_a\), the set
  \[
    \left\{
      X\in\mathcal X_a:
      \norm{\rho_X-\rho_Y}_1<\eta
    \right\}
  \]
  is open in \(\mathcal X_a\), and these sets cover \(\mathcal X_a\) as
  \(Y\) varies.  By compactness,
  finitely many of them, centered at parameters
  \(Y_1,\ldots,Y_M\in\mathcal X_a\), already cover \(\mathcal X_a\).  Put
  \(\mathcal C\coloneqq\{Y_1,\ldots,Y_M\}\).  Then, for every
  \(X\in\mathcal X_a\), some \(Y\in\mathcal C\) satisfies
  \begin{equation}
    \norm{\rho_Y-\rho_X}_1<\eta.
    \label{eq:finite-support-family-cover}
  \end{equation}
  Given \(\overline\rho\), choose \(\widehat X\in\mathcal C\) minimizing
  \(\norm{\overline\rho-\rho_{\widehat X}}_1\), with ties broken according
  to a fixed ordering of \(\mathcal C\).  This is a measurable
  postprocessing and guarantees \(\norm{\widehat X}_{\mathrm{op}}\le a\).

  Fix the true parameter \(X\), and choose \(Y\) as in
  \cref{eq:finite-support-family-cover}.  On the event
  \(\norm{\overline\rho-\rho_X}_1\le3\eta\), the minimizing property of
  \(\widehat X\) gives
  \[
    \norm{\overline\rho-\rho_{\widehat X}}_1
    \le\norm{\overline\rho-\rho_Y}_1
    \le4\eta.
  \]
  Hence
  \[
    \norm{\rho_{\widehat X}-\rho_X}_1
    \le
    \norm{\rho_{\widehat X}-\overline\rho}_1
    +\norm{\overline\rho-\rho_X}_1
    \le7\eta,
  \]
  and the inverse bound in \cref{eq:graph-inverse-lipschitz} gives
  \[
    \norm{\widehat X-X}_1\le14r\eta.
  \]
  On the complementary event, both parameters belong to \(\mathcal X_a\).
  Since \(\widehat X-X\) has at most \(r\) nonzero singular values,
  \[
    \norm{\widehat X-X}_1
    \le r\norm{\widehat X-X}_{\mathrm{op}}
    \le2ar.
  \]
  Averaging over the success and failure events gives
  \[
    \E_X\norm{\widehat X-X}_1
    \le14r\eta
      +2ar\Prb_{\rho_X}\!\left[
        \norm{\overline\rho-\rho_X}_1>3\eta
      \right]
    \le14r\eta+2\delta ar,
  \]
  which proves \cref{eq:bounded-support-estimator-risk}.
\end{proof}

\subsubsection{The lower bound when the rank is at most half the dimension}
\label{sec:active-family-lower-bound}

We now apply van Trees to the estimator produced above.  Successful
tomography gives it expected trace norm loss much smaller than \(ar\), once
\(a\) is chosen as a sufficiently large constant multiple of the target
accuracy and \(\delta\) is sufficiently small.  If the protocol uses too few
samples, however, the Fisher information bound forces expected loss of order
\(ar\).  Comparing these conclusions gives the lower bound on sample complexity.

\begin{proposition}[Lower bound when the rank is at most half the dimension]
  \label{prop:rectangular-regime-lower-bound}
  There are universal constants \(c_0,\eta_0>0\) with the following
  property.  Let \(d\ge2\), \(1\le r\le d/2\), and \(t\ge1\) be integers,
  and let \(0<\eta\le\eta_0\).  Consider an adaptive protocol that jointly
  measures at most \(t\) fresh samples in each round and whose output
  \(\widehat\rho\) may be an arbitrary \(d\times d\) matrix.  Suppose that,
  for every state
  \(\rho\) on \(\C^d\) of rank exactly \(r\),
  \[
    \Prb_\rho\!\left[
      \norm{\widehat\rho-\rho}_1\le\eta
    \right]
    \ge\frac23.
  \]
  If the protocol uses at most \(n\) samples on every execution, then
  \begin{equation}
    n\ge
    c_0\frac{dr}{\eta^2}
    \max\left\{1,\frac{r}{\sqrt t}\right\}.
    \label{eq:rectangular-regime-lower-bound}
  \end{equation}
\end{proposition}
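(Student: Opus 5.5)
We combine the pieces already assembled, with the parameter \(a\coloneqq A\eta\) for a large universal constant \(A\) and the failure probability \(\delta\) a small universal constant fixed below; \(\eta_0\) is chosen so that \(a\le1/4\). Every \(\rho_X\) with \(X\in\mathcal X_a\) has rank exactly \(r\), so the hypothesis applies to this family. First apply \Cref{lem:metric-median} with \(K=K(\delta)\) independent runs of the protocol. The result is a single adaptive protocol that still measures at most \(t\) fresh samples per round: run the copies sequentially, and the combined transcript remains classical between rounds. It uses at most \(Kn\) samples on every execution, and it outputs \(\overline\rho\) with \(\norm{\overline\rho-\rho_X}_1\le3\eta\) with probability at least \(1-\delta\). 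Then \Cref{lem:bounded-support-estimator-postprocessing} yields a measurable \(\widehat X\in\mathcal X_a\) with
\[
  \E_X\norm{\widehat X-X}_1\le14r\eta+2\delta ar
  =r\eta\,(14+2\delta A)
\]
for every \(X\in\mathcal X_a\). Averaging over \(X\sim\pi\) preserves this bound.

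For the lower side, take \(\pi=\psi^2\) from \Cref{lem:rectangular-local-prior}, which is supported in \(\{\norm X_{\mathrm{op}}<a\}\) and satisfies \(I(\pi)\le Cpd/a^2\) with \(p=2mr\). Apply \Cref{lem:localized-nuclear-risk} to \(T=\widehat X\), a bounded estimator with \(\norm T_{\mathrm{op}}\le a\). The regularity hypotheses for the padded transcript model are those verified in the appendix (\Cref{lem:block-likelihood-regularity}). By \Cref{cor:adaptive-information-budget} with sample budget \(N=Kn\), this gives
\[
  \E\norm{\widehat X-X}_1\ge
  \frac{p^2}{2a\bigl(CdKn\min\{1,\sqrt t/r\}+Cpd/a^2\bigr)}.
\]
Since \(r\le d/2\), we have \(m\ge d/2\), so \(p\ge dr\) and \(p\le2dr\).

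Now suppose \(n< c_0\,\frac{dr}{\eta^2}\max\{1,r/\sqrt t\}\). Note that \(\min\{1,\sqrt t/r\}\max\{1,r/\sqrt t\}=1\). Hence the transcript term is at most \(CKc_0\,d^2r/\eta^2=CKc_0A^2\,d^2r/a^2\). If \(c_0\le1/(KA^2)\), both terms of the denominator are then \(O(d^2r/a^2)\). The lower bound becomes \(\E\norm{\widehat X-X}_1\ge c_1\,ar=c_1A\,r\eta\) for a universal \(c_1>0\), with \(c_1\) independent of \(A\) and \(\delta\).

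Comparing the two sides gives \(c_1A\le14+2\delta A\). Choose \(\delta=c_1/4\), which fixes \(K\). Then choose \(A>28/c_1\), which makes this inequality false. Finally set \(c_0=1/(KA^2)\) and \(\eta_0=1/(4A)\), so that \(a\le1/4\). This contradiction gives \cref{eq:rectangular-regime-lower-bound}.

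The main point needing care is the order of constants: \(c_1\) must not depend on \(A\) or \(\delta\), then \(\delta\) is fixed, then \(K\), then \(A\), and only then \(c_0\). A second check is that the amplified procedure really fits the model with pathwise budget \(Kn\) and per-round bound \(t\), so that \Cref{cor:adaptive-information-budget} applies. Everything else is direct substitution.
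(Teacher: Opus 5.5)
Your proposal is correct and is essentially the paper's proof. Amplification via \Cref{lem:metric-median} and the finite-cover postprocessing give expected loss at most \((14/A+2\delta)ar\), while van Trees with the truncated Gaussian prior and \Cref{cor:adaptive-information-budget} at \(N=Kn\) give at least \(c_1ar\) under the contrary assumption, with constants fixed in the same order as the paper's \(L\), \(\delta\), \(K\), \(c_0\), \(\eta_0\).
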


\begin{proof}
  Let \(L>1\) and \(0<\delta<1\) be universal constants to be chosen below,
  and set \(a\coloneqq L\eta\).  We will choose \(\eta_0\) so that
  \(a\le1/4\).
  Repeat the protocol independently \(K=K(\delta)\) times and apply
  \Cref{lem:metric-median} in trace norm.  The amplified protocol uses at
  most \(N\coloneqq Kn\) samples on every execution, still jointly measures at
  most \(t\) samples in each round, and outputs a matrix \(\overline\rho\)
  satisfying
  \[
    \Prb_\rho\!\left[
      \norm{\overline\rho-\rho}_1\le3\eta
    \right]
    \ge1-\delta
  \]
  for every state \(\rho\) of rank exactly \(r\).  Applying
  \Cref{lem:bounded-support-estimator-postprocessing} to the family
  \(\{\rho_X:X\in\mathcal X_a\}\) produces a measurable estimator
  \(\widehat X\in\mathcal X_a\) such that, for every
  \(X\in\mathcal X_a\),
  \begin{equation}
    \E_X\norm{\widehat X-X}_1
    \le
    \left(\frac{14}{L}+2\delta\right)ar.
    \label{eq:operational-parameter-risk}
  \end{equation}

  We next derive the incompatible lower bound.  Put
  \(p\coloneqq 2mr\), the real dimension of the parameter \(X\).  Apply
  \Cref{lem:rectangular-local-prior} and draw \(X\) from the
  resulting density \(\pi\).  This density is supported where
  \(\norm X_{\mathrm{op}}<a\) and satisfies
  \begin{equation}
    I(\pi)\le C\frac{pd}{a^2}.
    \label{eq:completion-density-information}
  \end{equation}
  From now on, an unqualified expectation first draws \(X\sim\pi\) and then
  draws the complete amplified transcript.  Since
  \cref{eq:operational-parameter-risk} holds for every \(X\) in the support
  of \(\pi\), it gives the same upper bound under this joint expectation.

  We apply the van Trees inequality to \(\widehat X\), viewed as a measurable
  function of the complete transcript of the amplified protocol.  By construction,
  \(\norm{\widehat X}_{\mathrm{op}}\le a\).  By
  \Cref{cor:adaptive-information-budget},
  \begin{equation}
    \E_{X\sim\pi}
      \Tr\bigl(\cI_{\mathrm{tr}}^{(N)}(X)\bigr)
    \le
    CdN\min\left\{1,\frac{\sqrt t}{r}\right\}.
    \label{eq:completion-measurement-information}
  \end{equation}
  Therefore,
  \Cref{lem:localized-nuclear-risk} and
  \cref{eq:completion-density-information,eq:completion-measurement-information}
  give
  \begin{equation}
    \E\norm{\widehat X-X}_1
    \ge
    \frac{p^2}{
      2a\left(
        CdN\min\left\{1,\frac{\sqrt t}{r}\right\}
        +Cpd/a^2
      \right)}.
    \label{eq:completion-van-trees-risk}
  \end{equation}
  Suppose, toward a contradiction, that
  \begin{equation}
    N\le
    \frac{dr}{a^2}
    \max\left\{1,\frac{r}{\sqrt t}\right\}.
    \label{eq:insufficient-amplified-copies}
  \end{equation}
  The two factors involving \(t\) cancel:
  \[
    N\min\left\{1,\frac{\sqrt t}{r}\right\}
    \le
    \frac{dr}{a^2}
    \max\left\{1,\frac{r}{\sqrt t}\right\}
    \min\left\{1,\frac{\sqrt t}{r}\right\}
    =\frac{dr}{a^2}.
  \]
  Since \(r\le d/2\), we have \(m\ge d/2\) and hence
  \(p=2mr\ge dr\).  Both terms in the denominator of
  \cref{eq:completion-van-trees-risk} are consequently at most a universal
  constant times \(pd/a^2\).  It follows that, for a universal constant
  \(c_1>0\),
  \begin{equation}
    \E\norm{\widehat X-X}_1
    \ge c_1\frac{pa}{d}
    =2c_1ar\frac{m}{d}
    \ge c_1ar.
    \label{eq:completion-risk-lower-simplified}
  \end{equation}

  Choose \(L\) large enough and then \(\delta\) small enough that
  \(14/L+2\delta<c_1\).  The upper bound
  \cref{eq:operational-parameter-risk} then contradicts
  \cref{eq:completion-risk-lower-simplified}.  Thus
  \cref{eq:insufficient-amplified-copies} is false.  Since
  \(N=Kn\) and \(a=L\eta\),
  \[
    n>
    \frac{1}{KL^2}\frac{dr}{\eta^2}
    \max\left\{1,\frac{r}{\sqrt t}\right\}.
  \]
  Once \(L\) and \(\delta\) are fixed, so is \(K\).  Absorbing this fixed
  factor into \(c_0\), and choosing \(\eta_0\le1/(4L)\), proves the
  proposition.
\end{proof}

\subsubsection{Rank lifting by depolarization}
\label{sec:rank-lifting}

It remains to reduce ranks larger than half the ambient dimension to the
regime of the preceding proposition.  The following channel embeds the input
state and mixes it with the maximally mixed state on the same \(r\)-dimensional
subspace.  The added component makes every output have rank exactly \(r\), but
it cancels when two outputs are subtracted, so every trace norm distance is
scaled by exactly one half.  Applying the adjoint channel to each POVM then
simulates any adaptive measurement protocol for the lifted states using
samples of the original states.

\begin{lemma}[Depolarizing rank lift]
  \label{lem:depolarizing-rank-lift}
  Let \(1\le r\le d\) and \(t\ge1\) be integers, let
  \(W:\C^r\to\C^d\) be an isometry, and put \(\tau\coloneqq\Id_r/r\).
  Define the channel
  \begin{equation}
    \Phi(A)
    \coloneqq
    \frac12WAW^\dagger
    +\frac12\Tr(A)W\tau W^\dagger.
    \label{eq:depolarizing-rank-lift}
  \end{equation}
  For every state \(\sigma\) on \(\C^r\), the state \(\Phi(\sigma)\) has
  rank exactly \(r\).  For every pair of states \(\sigma,\sigma'\),
  \begin{equation}
    \norm{\Phi(\sigma)-\Phi(\sigma')}_1
    =\frac12\norm{\sigma-\sigma'}_1.
    \label{eq:rank-lift-metric-scaling}
  \end{equation}
  Moreover, an adaptive protocol for the lifted states that jointly measures
  at most \(t\) fresh samples in each round induces a protocol for the original
  states with identical transcript laws, the same round sizes, and the same
  sample count on every execution.

  If the lifted protocol outputs a \(d\times d\) matrix \(B\), the induced
  protocol may output
  \[
    \mathcal R(B)\coloneqq2W^\dagger BW-\tau.
  \]
  For every state \(\sigma\), this postprocessing satisfies
  \begin{equation}
    \norm{\mathcal R(B)-\sigma}_1
    \le2\norm{B-\Phi(\sigma)}_1.
    \label{eq:rank-lift-output-error}
  \end{equation}
\end{lemma}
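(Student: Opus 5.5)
We verify the four claims in turn, each by direct computation from \cref{eq:depolarizing-rank-lift}. First, \(\Phi\) is completely positive and trace preserving. It is a convex combination of the isometric conjugation \(A\mapsto WAW^\dagger\) and the replacement map \(A\mapsto\Tr(A)W\tau W^\dagger\), and both are CPTP. For a state \(\sigma\), write \(\Phi(\sigma)=\tfrac12W\sigma W^\dagger+\tfrac1{2r}WW^\dagger\). Here \(WW^\dagger=\Pi_V\) is the projector onto \(V\coloneqq\operatorname{Im}(W)\). Since \(W\sigma W^\dagger\succeq0\) is supported in \(V\), we get \(\tfrac1{2r}\Pi_V\preceq\Phi(\sigma)\preceq\Pi_V\cdot(\text{const})\). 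Hence the support of \(\Phi(\sigma)\) is exactly \(V\), and its rank is exactly \(r\).

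For \cref{eq:rank-lift-metric-scaling}, the replacement terms cancel because \(\Tr\sigma=\Tr\sigma'\). This leaves \(\Phi(\sigma)-\Phi(\sigma')=\tfrac12W(\sigma-\sigma')W^\dagger\). Conjugation by an isometry preserves singular values, so the trace norm equals \(\tfrac12\norm{\sigma-\sigma'}_1\). For \cref{eq:rank-lift-output-error}, we check that \(\mathcal R(\Phi(\sigma))=\sigma\). Indeed, \(W^\dagger\Phi(\sigma)W=\tfrac12\sigma+\tfrac12\tau\) because \(W^\dagger W=\Id_r\). The map \(\mathcal R\) is affine, so \(\mathcal R(B)-\sigma=2W^\dagger(B-\Phi(\sigma))W\). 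A compression by the contraction \(W\) does not increase the trace norm, since \(\norm{W^\dagger KW}_1\le\norm{W^\dagger}_{\mathrm{op}}\norm K_1\norm W_{\mathrm{op}}\). This gives the factor \(2\).

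The simulation claim is the only step needing care with the adaptive model of \Cref{sec:block-adaptive-model}; even so, it is straightforward. Let \(\Phi^\dagger\) be the Heisenberg-picture adjoint. It is completely positive and unital, so \((\Phi^\dagger)^{\otimes t_i}\) is also completely positive and unital. Suppose that at history \(h\) the lifted protocol chooses a POVM \(\mathsf M\) on \(t_i(h)\) samples. The induced protocol, given the same seed and the same history, chooses \(\mathsf M'(\mathsf E)\coloneqq(\Phi^\dagger)^{\otimes t_i(h)}(\mathsf M(\mathsf E))\). This is again a POVM on the same outcome space: it is positive, countably additive by linearity and continuity in finite dimensions, and normalized by unitality. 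It acts on \(t_i(h)\) fresh samples of \(\sigma\), and it satisfies
\[
\Tr\bigl(\sigma^{\otimes t_i}\mathsf M'(\mathsf E)\bigr)=\Tr\bigl(\Phi(\sigma)^{\otimes t_i}\mathsf M(\mathsf E)\bigr).
\]
Hence each conditional outcome law given the history coincides with that of the lifted protocol. Measurability in \(h\) is inherited because the assignment \(\mathsf M\mapsto\mathsf M'\) is a fixed linear map.

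Induction on rounds, using the product structure of the transcript law \cref{eq:full-transcript-likelihood}, then shows that the full transcript laws agree. The round sizes and the stopping rule are functions of the history, so they are identical, and so is the sample count on every execution. Applying \(\mathcal R\) to the final output is a measurable postprocessing, which completes the argument.
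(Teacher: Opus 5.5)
Your proposal is correct and follows the paper's proof essentially step by step. It uses the same cancellation of the replacement terms for the metric scaling, the same identity $\mathcal R(B)-\sigma=2W^\dagger(B-\Phi(\sigma))W$ with the compression bound, and the same roundwise simulation via $(\Phi^{\otimes t_i(h)})^\dagger=(\Phi^\dagger)^{\otimes t_i(h)}$. The differences are cosmetic: you certify rank $r$ through $\frac1{2r}\Pi_V\preceq\Phi(\sigma)$ rather than positive definiteness of $(\sigma+\tau)/2$, and you obtain the output identity from $\mathcal R(\Phi(\sigma))=\sigma$ plus affineness.
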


\begin{proof}
  The map \(\Phi\) is completely positive and trace preserving, and hence is
  a quantum channel.  For a state \(\sigma\),
  \[
    \Phi(\sigma)
    =W\left(\frac{\sigma+\tau}{2}\right)W^\dagger.
  \]
  The operator \((\sigma+\tau)/2\) is positive definite on \(\C^r\), so
  \(W(\sigma+\tau)W^\dagger/2\) has rank \(r\).  Also,
  \[
    \Phi(\sigma)-\Phi(\sigma')
    =\frac12W(\sigma-\sigma')W^\dagger.
  \]
  Isometric embedding preserves the nonzero singular values, which proves
  \cref{eq:rank-lift-metric-scaling}.

  The output postprocessing obeys the exact identity
  \[
    \mathcal R(B)-\sigma
    =2W^\dagger\bigl(B-\Phi(\sigma)\bigr)W.
  \]
  Since \(\norm{W^\dagger A W}_1\le\norm A_1\), this proves
  \cref{eq:rank-lift-output-error}.  The map \(\mathcal R\) is continuous and
  hence is a measurable postprocessing.

  It remains to simulate the adaptive measurements.  Suppose that, after a
  history \(h\) in round \(i\), the lifted protocol chooses a POVM
  \(\mathsf M(\cdot\mid h)\) on \(t_i(h)\le t\) samples.  For every
  measurable outcome event \(E\), define
  \[
    \mathsf M'(E\mid h)
    \coloneqq
    \bigl(\Phi^{\otimes t_i(h)}\bigr)^\dagger
      \bigl(\mathsf M(E\mid h)\bigr).
  \]
  For each \(k\in\{1,\ldots,t\}\), on the measurable set of histories where
  \(t_i(h)=k\), this construction applies the fixed linear map
  \((\Phi^{\otimes k})^\dagger\) to the original measurable POVM rule.
  Hence the transformed rule is also measurable.
  The adjoint of \(\Phi^{\otimes t_i(h)}\) is completely positive and
  unital, and hence \(\mathsf M'(\cdot\mid h)\) is a POVM.  By the definition
  of the adjoint,
  \[
    \Tr\left(\mathsf M'(E\mid h)\sigma^{\otimes t_i(h)}\right)
    =
    \Tr\left(
      \mathsf M(E\mid h)\Phi(\sigma)^{\otimes t_i(h)}
    \right).
  \]
  Thus the conditional outcome law agrees at every history.  Induction over
  the rounds gives identical full transcript laws.  The simulated protocol
  uses the same \(t_i(h)\) at every history, so neither the allowed joint
  measurement size nor the total sample count on any execution changes.
\end{proof}

\begin{proof}[Proof of \Cref{thm:adaptive-block-lower-bound}]
  Let \(c_0,\eta_0\) be the constants from
  \Cref{prop:rectangular-regime-lower-bound}, and set
  \(\eps_0\coloneqq\eta_0/2\).
  First suppose \(r\le d/2\).  Applying
  \Cref{prop:rectangular-regime-lower-bound} to the states of rank \(r\),
  with accuracy \(\eta=\eps\), gives
  \[
    n\ge
    c_0\frac{dr}{\eps^2}
    \max\left\{1,\frac{r}{\sqrt t}\right\}.
  \]

  Now suppose \(r>d/2\), so \(r\ge2\), and set
  \(k\coloneqq\lfloor r/2\rfloor\).  Fix an isometry
  \(W:\C^r\to\C^d\), independently of the input state, and apply
  \Cref{lem:depolarizing-rank-lift}.  For every
  state \(\sigma\) on \(\C^r\) of rank exactly \(k\), the state
  \(W\sigma W^\dagger\) has support contained in the same fixed
  \(r\)-dimensional subspace \(\operatorname{Im}(W)\), and
  \(\Phi(\sigma)\) has support exactly \(\operatorname{Im}(W)\), hence rank
  exactly \(r\).  We may therefore simulate the assumed protocol on
  \(\Phi(\sigma)\) and postprocess its output \(\widehat\rho\) as
  \(\widehat\sigma\coloneqq\mathcal R(\widehat\rho)\).  By
  \cref{eq:rank-lift-output-error},
  \[
    \Prb_\sigma\!\left[
      \norm{\widehat\sigma-\sigma}_1\le2\eps
    \right]
    \ge
    \Prb_{\Phi(\sigma)}\!\left[
      \norm{\widehat\rho-\Phi(\sigma)}_1\le\eps
    \right]
    \ge\frac23.
  \]
  The induced protocol jointly measures at most \(t\) samples in each round and
  uses at most \(n\) samples on every execution.  Its output need not be a
  state, which is allowed in
  \Cref{prop:rectangular-regime-lower-bound}.  Applying that proposition in
  ambient dimension \(r\), at rank \(k\) and accuracy \(2\eps\), gives
  \begin{equation}
    n\ge
    \frac{c_0}{4}\frac{rk}{\eps^2}
    \max\left\{1,\frac{k}{\sqrt t}\right\}.
    \label{eq:large-rank-reduced-bound}
  \end{equation}
  For \(r\ge2\), we have \(k\ge r/3\).  Since \(r>d/2\),
  \[
    rk\ge\frac{dr}{6},
    \qquad
    \max\left\{1,\frac{k}{\sqrt t}\right\}
    \ge\frac13
      \max\left\{1,\frac{r}{\sqrt t}\right\}.
  \]
  Substituting these inequalities into
  \cref{eq:large-rank-reduced-bound} yields
  \[
    n\ge
    \frac{c_0}{72}\frac{dr}{\eps^2}
    \max\left\{1,\frac{r}{\sqrt t}\right\}.
  \]

  Taking \(c\coloneqq c_0/72\) proves
  \cref{eq:main-adaptive-block-lower-bound} in both rank regimes.
\end{proof}

\section{Rank-sensitive tomography}
\label{sec:upper-bound-proof}

This section proves \Cref{thm:bounded-block-upper-bound} by constructing and
analyzing a nonadaptive tomography protocol based on a Gaussian joint
measurement.
We begin by defining \(\mathsf M_t\), a joint measurement on \(t\) samples.
In \Cref{sec:gaussian-block-estimator-target}, we give the complete
nonadaptive protocol and state the properties of one measurement needed for
its analysis.  We prove these properties in
\Cref{sec:gaussian-joint-measurement-properties} and analyze the resulting
estimator in \Cref{sec:upper-bound-aggregation}.

For each positive integer \(t\), put
\(\ell_{\max}\coloneqq\min\{d,t\}\).  For
\(1\le\ell\le\ell_{\max}\), let \(\gamma_{d,\ell}\) be the standard complex
Gaussian law on \(\C^{d\times\ell}\), and define
\begin{equation}
  \Gamma_{\ell,t}
  \coloneqq\int_{\C^{d\times\ell}}
    (GG^\dagger)^{\otimes t}\,\dd\gamma_{d,\ell}(G).
  \label{eq:gaussian-moment-operator}
\end{equation}
Let \(\Pi_{\ell,t}\) be the orthogonal projector onto the support of
\(\Gamma_{\ell,t}\), set \(\Pi_{0,t}\coloneqq0\), and define
\begin{equation}
  \Delta_{\ell,t}\coloneqq\Pi_{\ell,t}-\Pi_{\ell-1,t},
  \qquad 1\le\ell\le\ell_{\max}.
  \label{eq:gaussian-support-differences}
\end{equation}
We prove in \Cref{lem:gaussian-moment-support} that the support projectors
\(\Pi_{\ell,t}\) are nested, so every \(\Delta_{\ell,t}\) is a projector.  The outcome space
consists of pairs \((J,G)\), where \(1\le J\le\ell_{\max}\) and
\(G\in\C^{d\times J}\).  For \(J=\ell\), define the operator-valued density
\begin{equation}
  \mathsf M_t(\ell,\dd G)
  \coloneqq
  \Delta_{\ell,t}(\Gamma_{\ell,t}^+)^{1/2}
  (GG^\dagger)^{\otimes t}
  (\Gamma_{\ell,t}^+)^{1/2}\Delta_{\ell,t}
  \,\dd\gamma_{d,\ell}(G),
  \label{eq:gaussian-block-povm}
\end{equation}
where \(\Gamma_{\ell,t}^+\) denotes the Moore--Penrose pseudoinverse of
\(\Gamma_{\ell,t}\).
The Gaussian measure in this definition is a reference measure.  The Born
rule determines the outcome density with respect to this measure.

The decomposition and the pseudoinverse factors are chosen so that
integrating \(\mathsf M_t(\ell,\dd G)\) over \(G\) gives
\(\Delta_{\ell,t}\).  These operators sum to the identity, so the densities
\(\mathsf M_t(\ell,\dd G)\), for \(\ell\in[\ell_{\max}]\) and
\(G\in\C^{d\times\ell}\), form a POVM.  We prove this and establish the
properties of the POVM in
\Cref{sec:gaussian-joint-measurement-properties}.

\subsection{The protocol}
\label{sec:gaussian-block-estimator-target}

Recall that the rank of the unknown state is at most \(r\).
For a fixed total number of samples, increasing the number \(t\) of samples
measured jointly improves the resulting asymptotic rate only until \(t\)
reaches \(r^2\).  We therefore set
\begin{equation}
  s\coloneqq\min\{t,r^2\},
  \qquad
  B\coloneqq\left\lceil
    C_0\frac{dr}{s\eps^2}
      \left(1+\frac r{\sqrt s}\right)
  \right\rceil,
  \qquad
  N\coloneqq Bs,
  \label{eq:upper-bound-protocol-parameters}
\end{equation}
where \(C_0\) is a sufficiently large universal constant.  We apply
\(\mathsf M_s\) independently \(B\) times, using \(s\) fresh samples each
time.  Suppose the \(b\)-th outcome is the pair
\((J_b,G_b)\); here \(1\le J_b\le\min\{d,s\}\) and
\(G_b\in\C^{d\times J_b}\).  Form
\begin{equation}
  Y_b\coloneqq\frac{G_bG_b^\dagger-J_b\Id_d}{s},
  \qquad
  \overline Y\coloneqq\frac1B\sum_{b=1}^B Y_b.
  \label{eq:upper-bound-protocol-average}
\end{equation}
The matrices \(Y_b\) are Hermitian, but need not be positive semidefinite or
have trace one.
Output a density matrix nearest in Frobenius norm to \(\overline Y\),
subject to having rank at most \(r\):
\begin{equation}
  \widehat\rho\in
  \underset{\sigma\in\cD_r(\C^d)}{\arg\min}\,
    \norm{\overline Y-\sigma}_\F.
  \label{eq:upper-bound-protocol-output}
\end{equation}
Such a minimizer exists because \(\cD_r(\C^d)\) is compact, and a measurable
choice is provided by \Cref{lem:measurable-low-rank-projection}.  This is the
complete protocol.  It is nonadaptive; it uses the same joint
measurement in every repetition.

We prove the correctness of the protocol in
\Cref{sec:upper-bound-aggregation} using the following guarantee for one
application of the joint measurement.

\begin{proposition}[Properties of one joint measurement]
  \label{prop:gaussian-block-estimator}
  For every positive integer \(t\), the densities in
  \cref{eq:gaussian-block-povm} form a POVM \(\mathsf M_t\) on
  \((\C^d)^{\otimes t}\) whose outcome is a pair \((J,G)\), where
  \(1\le J\le\min\{d,t\}\) and \(G\in\C^{d\times J}\).  Associate with this
  outcome the Hermitian matrix
  \begin{equation}
    Y\coloneqq\frac{GG^\dagger-J\Id_d}{t}.
    \label{eq:gaussian-block-estimator}
  \end{equation}
  For every state \(\rho\in\cD(\C^d)\), this matrix is exactly unbiased:
  \begin{equation}
    \E_\rho[Y]=\rho.
    \label{eq:gaussian-block-unbiased}
  \end{equation}
  Its exact uncentered tensor second moment is
  \begin{equation}
    \E_\rho[Y\otimes Y]
    =\frac{t-1}{t}\rho^{\otimes2}
      +\frac1t(\rho\otimes\Id_d+\Id_d\otimes\rho)F
      +\frac{\E_\rho[J]}{t^2}F,
    \label{eq:gaussian-block-second-moment}
  \end{equation}
  where \(F\) swaps the two factors of \(\C^d\otimes\C^d\).
  If \(\rho\) has rank at most \(r\), then
  \(\E_\rho[J]=O(\min\{r,\sqrt t\})\).
  Consequently, for every Hermitian matrix \(H\),
  \begin{equation}
    \operatorname{Var}_\rho\bigl(\Tr(HY)\bigr)
    \le\frac2t\Tr(\rho H^2)
      +\frac{\E_\rho[J]}{t^2}\norm H_\F^2.
    \label{eq:gaussian-block-scalar-variance}
  \end{equation}
  Finally, let \(\Pi\) be any orthogonal projector satisfying \(\Pi\rho=\rho\).  For
  every \(\ell\) with \(\Prb_\rho[J=\ell]>0\), conditional on \(J=\ell\),
  the columns of \(\Pi^\perp G\), viewed in \(\operatorname{Im}(\Pi^\perp)\),
  are independent standard complex Gaussian vectors and are independent of
  \(\Pi G\).
\end{proposition}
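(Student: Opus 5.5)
The proof follows the order of the proof overview. First I would establish the structure of $\Gamma_{\ell,t}$, by applying the complex Isserlis (Wick) theorem to the entries of $(GG^\dagger)^{\otimes t}$ with $G\sim\gamma_{d,\ell}$. This gives $\Gamma_{\ell,t}=\sum_{\pi\in\mathfrak S_t}\ell^{c(\pi)}U_\pi$. In particular, $\Gamma_{\ell,t}$ commutes with every $U_\pi$ and with every $V^{\otimes t}$, $V$ unitary. The same commutation then holds for $\Pi_{\ell,t}$, $\Delta_{\ell,t}$, and $(\Gamma_{\ell,t}^+)^{1/2}$.

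Next I would identify the supports. Positivity and continuity of $G\mapsto (GG^\dagger)^{\otimes t}$, together with the full support of $\gamma_{d,\ell}$, show that $\operatorname{supp}\Gamma_{\ell,t}$ is the span $\mathcal V_{\ell,t}$ of the ranges of all $(GG^\dagger)^{\otimes t}$. This is the span of $L^{\otimes t}$ over subspaces $L$ of dimension at most $\ell$. These spans are nested, and for $\ell=\min\{d,t\}$ they exhaust the whole tensor space: for $\ell\ge d$ trivially, and for $\ell=t$ because every product vector lies in $L^{\otimes t}$ with $\dim L\le t$. Hence the $\Delta_{\ell,t}$ are orthogonal projectors summing to the identity. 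Integrating the density in \cref{eq:gaussian-block-povm} over $G$ gives $\Delta_{\ell,t}\Pi_{\ell,t}\Delta_{\ell,t}=\Delta_{\ell,t}$, which proves that $\mathsf M_t$ is a POVM.

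For the moments I would prove the centered Gaussian identity
\[
  \int (GG^\dagger-\ell\Id)_a\otimes(GG^\dagger)^{\otimes t}\,\dd\gamma_{d,\ell}=\sum_{i}F_{ai}(\Id_a\otimes\Gamma_{\ell,t}),
\]
together with its two-factor analogue. The two-factor version has three classes of terms: insertion before distinct labels, consecutive insertion before the same label, and the $(ab)$ two-cycle times $\ell$ times $\Gamma_{\ell,t}$. Each is a bookkeeping exercise in the permutation expansion of $\Gamma_{\ell,t+1}$ and $\Gamma_{\ell,t+2}$.

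Inserting these identities into the Born rule, $\E[Y\mathbf 1_{J=\ell}]=\frac1t\Tr_{[t]}\bigl[\int (\cdot)_a\otimes(\Gamma^+)^{1/2}\Delta\rho^{\otimes t}\Delta(\Gamma^+)^{1/2}(GG^\dagger)^{\otimes t}\bigr]$, I would use commutation to collapse $(\Gamma^+)^{1/2}\Gamma(\Gamma^+)^{1/2}$ to $\Pi_{\ell,t}$, which $\Delta_{\ell,t}$ absorbs. One subtlety must be checked here: $F_{ai}$ does not commute with $\Gamma$ on the input factors alone. I would handle it by writing $F_{ai}(\Id_a\otimes\Gamma)$ and noting that the partial trace of $F_{ai}(\Id_a\otimes K)$ over the inputs equals a partial trace of $K$ placed in factor $a$. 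The outcome is $\frac1t\sum_i\Tr_{[t]\setminus i}(\Delta_{\ell,t}\rho^{\otimes t})$, which by symmetry sums over $\ell$ to $\rho$. The second moment is handled the same way. The distinct-label terms give $\frac{t-1}{t}\rho^{\otimes2}$ and the same-label terms give $\frac1t(\rho\otimes\Id+\Id\otimes\rho)F$. The two-cycle term contributes $\frac{\ell}{t^2}F\,\Prb[J=\ell]$, which sums to $\E[J]F/t^2$. The variance bound \cref{eq:gaussian-block-scalar-variance} then follows by contracting with $H\otimes H$ via \cref{eq:swap-trace-identity} and dropping $-(\Tr H\rho)^2/t$. \textbf{I expect this moment bookkeeping, in particular the noncommutation subtlety just described, to be the main obstacle.}

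For $\E_\rho[J]$ I would use $\Prb[J\ge q]=\Tr((\Id-\Pi_{q-1,t})\rho^{\otimes t})$. The key claim is $\mathcal V_{q-1,t}=\bigcap_{|T|=q}\ker A_T$. One inclusion is immediate, since antisymmetrizing $q$ vectors from a $(q-1)$-dimensional space gives zero. The reverse inclusion I would get by duality through Schur--Weyl, showing the orthocomplement is spanned by the ranges of the $A_T$. This yields $\Id-\Pi_{q-1,t}\preceq\sum_T A_T$, and hence $\Prb[J\ge q]\le\binom tq e_q(\lambda)$. If $\rank\rho\le r$ this vanishes for $q>r$, giving $\E[J]\le r$; the elementary symmetric function bound $e_q(\lambda)\le1/q!$ gives a geometric tail beyond $C\sqrt t$. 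Finally, for the conditional law, take $\Pi$ with $\Pi\rho=\rho$. The density of $G$ given $J=\ell$ is $\Tr(\ldots(GG^\dagger)^{\otimes t}\ldots\rho^{\otimes t})$. Since $\rho^{\otimes t}=\Pi^{\otimes t}\rho^{\otimes t}\Pi^{\otimes t}$ and $\Pi^{\otimes t}$ commutes with $\Delta$ and $\Gamma^+$ (these are $U(d)$-invariant and lie in the span of the permutations), this becomes $\Tr(\cdots(\Pi GG^\dagger\Pi)^{\otimes t})$. It therefore depends only on $\Pi G$, and Gaussian independence of $\Pi G$ and $\Pi^\perp G$ finishes the proof.
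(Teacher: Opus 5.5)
Your plan follows the paper's proof closely in almost every step. This includes the Isserlis expansion $\Gamma_{\ell,t}=\sum_\pi\ell^{c(\pi)}U_\pi$, the support characterization via full support of $\gamma_{d,\ell}$, and the insertion bookkeeping for the centered moments. It also includes the partial-trace cyclicity that turns $\Gamma_{\ell,t}\Theta_{\ell,\rho}$ into $\Delta_{\ell,t}\rho^{\otimes t}$, the antisymmetrizer tail bound with $e_q(\lambda)\le1/q!$, and the observation that the conditional density depends only on $\Pi G$.

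Your one real departure is proving $\mathcal V_{q-1,t}^\perp\subseteq\sum_{|T|=q}\operatorname{Im}(A_T)$ by Schur--Weyl duality instead of the paper's Nullstellensatz plus primality of the determinantal ideal. That route is sound. Both subspaces are invariant under every $A^{\otimes t}$ and every $U_\pi$, so each is a sum of the multiplicity-free irreducible components $W_\lambda\otimes S_\lambda$. The subspace $\mathcal V_{q-1,t}$ is exactly the sum over $\lambda$ with at most $q-1$ rows. By the branching rule, $\operatorname{Im}(A_{[q]})$ meets exactly the components with at least $q$ rows. This gives a conceptual proof, at the price of the representation theory the paper deliberately avoids.

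The gap is the step asserting that the kernel characterization yields $\Id-\Pi_{q-1,t}\preceq\sum_T A_T$. Knowing that $\mathcal A_{t,q}\coloneqq\sum_{|T|=q}A_T$ has kernel $\mathcal V_{q-1,t}$ (equivalently, that the ranges of the $A_T$ span $\mathcal V_{q-1,t}^\perp$) only makes $\mathcal A_{t,q}$ positive definite on $\mathcal V_{q-1,t}^\perp$. A sum of projectors can span a space while having arbitrarily small nonzero eigenvalues; think of two rank-one projectors onto nearly parallel lines in $\C^2$. If $c$ is the smallest nonzero eigenvalue, you only get $\Prb_\rho[J\ge q]\le c^{-1}\binom tq e_q(\lambda)$. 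Without a lower bound on $c$, the estimate $\E_\rho[J]=O(\sqrt t)$ does not follow. The conclusion $J\le r$ survives, since it only needs $\Tr(\mathcal A_{t,q}\rho^{\otimes t})=0$.

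The missing idea is a unit spectral gap, which the paper proves as follows. Conjugation by $U_\pi$ permutes the $A_T$, so $\mathcal A_{t,q}$ commutes with all permutations and hence with every $A_S$. Therefore $\mathcal A_{t,q}^2=\sum_SA_S\mathcal A_{t,q}A_S=\sum_{S,T}A_SA_TA_S\succeq\sum_SA_S=\mathcal A_{t,q}$, because each $A_SA_TA_S=(A_TA_S)^\dagger(A_TA_S)\succeq0$ and the $T=S$ term is $A_S$. So every nonzero eigenvalue is at least $1$.

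In your Schur--Weyl picture, $\mathcal A_{t,q}$ is central and acts by a scalar on each component. However, the character formula for that scalar does not make the bound $\ge1$ evident; integrality alone only gives $\ge1/q!$. You still need this argument or an equivalent one.
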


\subsection{Properties of the joint measurement}
\label{sec:gaussian-joint-measurement-properties}

We now prove \Cref{prop:gaussian-block-estimator}.  We first characterize the
support of \(\Gamma_{\ell,t}\), which is used to define \(\mathsf M_t\), then
bound the mean of \(J\), derive the Gaussian moment identities, and prove the
remaining claims.

\subsubsection{Support of the Gaussian moment operators}
\label{sec:gaussian-moment-supports}

To verify that the densities in \cref{eq:gaussian-block-povm} form a POVM,
we first show that the projectors \(\Pi_{\ell,t}\) are nested and that their
ranges span the entire tensor product space.  It then follows that the differences
\(\Delta_{\ell,t}\) are projectors that sum to the identity.
For \(1\le\ell\le d\), define
\begin{equation}
  \mathcal V_{\ell,t}
  \coloneqq
  \operatorname{span}\left(
    \bigcup_{\substack{L\subseteq\C^d\\\dim L\le\ell}}
      L^{\otimes t}
  \right),
  \qquad \mathcal V_{0,t}\coloneqq\{0\}.
  \label{eq:gaussian-support-subspace}
\end{equation}

\begin{lemma}[Support of the Gaussian moment operators]
  \label{lem:gaussian-moment-support}
  For every \(1\le\ell\le\ell_{\max}\), the support of
  \(\Gamma_{\ell,t}\) is \(\mathcal V_{\ell,t}\).  The orthogonal projector
  \(\Pi_{\ell,t}\) onto \(\mathcal V_{\ell,t}\) commutes with every tensor
  permutation \(U_\pi\), \(\pi\in\mathfrak S_t\), and with
  \(A^{\otimes t}\) for every operator \(A\) on \(\C^d\).  Moreover,
  \begin{equation}
    0=\Pi_{0,t}\preceq\Pi_{1,t}\preceq\cdots
      \preceq\Pi_{\ell_{\max},t}=\Id_d^{\otimes t}.
    \label{eq:nested-gaussian-supports}
  \end{equation}
\end{lemma}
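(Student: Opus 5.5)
The plan is to identify the support of \(\Gamma_{\ell,t}\) through its quadratic form. For a vector \(\vct{w}\in(\C^d)^{\otimes t}\),
\[
  \ip{\vct w}{\Gamma_{\ell,t}\vct w}
  =\int\ip{\vct w}{(GG^\dagger)^{\otimes t}\vct w}\,\dd\gamma_{d,\ell}(G).
\]
The integrand is nonnegative, because \((GG^\dagger)^{\otimes t}\succeq0\), and it is a continuous polynomial in the entries of \(G\). The measure \(\gamma_{d,\ell}\) has full support, so the integral vanishes exactly when the integrand vanishes for every \(G\).

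Next I would rewrite the integrand condition. Write \((GG^\dagger)^{\otimes t}=(G^{\otimes t})(G^{\otimes t})^\dagger\). The integrand vanishes exactly when \((G^\dagger)^{\otimes t}\vct w=0\), that is, when \(\vct w\perp G^{\otimes t}\vct x\) for all \(\vct x\in(\C^\ell)^{\otimes t}\). As \(G\) ranges over \(\C^{d\times\ell}\), the ranges \(\operatorname{Im}(G^{\otimes t})=(\operatorname{Im}G)^{\otimes t}\) range over \(L^{\otimes t}\) for every subspace \(L\) with \(\dim L\le\ell\). Here I use that \(\ell\le d\), so every such \(L\) is the column space of some \(G\). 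Hence \(\ker\Gamma_{\ell,t}=\mathcal V_{\ell,t}^\perp\), and therefore \(\operatorname{supp}\Gamma_{\ell,t}=\mathcal V_{\ell,t}\).

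The invariance claims come from showing that \(\mathcal V_{\ell,t}\) is invariant. Each \(U_\pi\) maps \(L^{\otimes t}\) onto itself, so \(\mathcal V_{\ell,t}\) is \(U_\pi\)-invariant. Since \(U_\pi\) is unitary, the orthogonal complement is invariant too, and so \(\Pi_{\ell,t}\) commutes with \(U_\pi\).

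For \(A^{\otimes t}\), note that \(A^{\otimes t}L^{\otimes t}=(AL)^{\otimes t}\) and \(\dim AL\le\dim L\). Thus \(\mathcal V_{\ell,t}\) is invariant under \(A^{\otimes t}\) for every \(A\). The same holds for \((A^\dagger)^{\otimes t}=(A^{\otimes t})^\dagger\). A subspace invariant under both an operator and its adjoint reduces that operator, so its projector commutes with it. This step is the one that needs care, since \(A\) need not be normal.

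Nestedness is immediate, because \(\mathcal V_{\ell-1,t}\subseteq\mathcal V_{\ell,t}\) gives \(\Pi_{\ell-1,t}\preceq\Pi_{\ell,t}\), and \(\Pi_{0,t}=0\) by definition.

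The main obstacle is \(\Pi_{\ell_{\max},t}=\Id\). There are two cases.

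If \(\ell_{\max}=d\), then taking \(L=\C^d\) gives the whole space directly.

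If \(\ell_{\max}=t<d\), I must show that the span of all \(L^{\otimes t}\) with \(\dim L\le t\) is the whole space. Every product vector \(\vct v_1\otimes\cdots\otimes\vct v_t\) lies in \(L^{\otimes t}\) for \(L=\operatorname{span}\{\vct v_1,\ldots,\vct v_t\}\), which has dimension at most \(t\). Product vectors span \((\C^d)^{\otimes t}\), so the whole space is covered. This also shows that the \(\Delta_{\ell,t}\) are projectors summing to the identity.
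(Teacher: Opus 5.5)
Your proposal is correct and follows essentially the same route as the paper: full support of the Gaussian measure plus nonnegativity and continuity of the integrand, invariance of \(\mathcal V_{\ell,t}\) and of its complement under \(U_\pi\) and \(A^{\otimes t}\), and the product-vector argument for \(\Pi_{\ell_{\max},t}=\Id_d^{\otimes t}\). The only difference is cosmetic. You use the factorization \((GG^\dagger)^{\otimes t}=G^{\otimes t}(G^{\otimes t})^\dagger\) to obtain \(\ker\Gamma_{\ell,t}=\mathcal V_{\ell,t}^\perp\) in one step, whereas the paper proves the two inclusions separately.
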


\begin{proof}
  For every \(G\in\C^{d\times\ell}\),
  \(\dim(\operatorname{Im}(G))\le\ell\), so
  \[
    \operatorname{Im}\bigl((GG^\dagger)^{\otimes t}\bigr)
    \subseteq\operatorname{Im}(G)^{\otimes t}
    \subseteq\mathcal V_{\ell,t},
  \]
  and hence
  \begin{equation}
    \operatorname{supp}(\Gamma_{\ell,t})
    \subseteq\mathcal V_{\ell,t}.
    \label{eq:gaussian-support-first-inclusion}
  \end{equation}
  Conversely, suppose \(\vct{v}\in\ker(\Gamma_{\ell,t})\).  Then
  \[
    0
    =\ip{\vct{v}}{\Gamma_{\ell,t}\vct{v}}
    =\int
      \ip{\vct{v}}{(GG^\dagger)^{\otimes t}\vct{v}}
      \,\dd\gamma_{d,\ell}(G).
  \]
  The integrand is nonnegative and continuous.  Since the Gaussian measure
  has full support,
  \[
    \ip{\vct{v}}{(GG^\dagger)^{\otimes t}\vct{v}}=0
    \qquad\text{for every \(G\in\C^{d\times\ell}\)}.
  \]
  Given a subspace \(L\subseteq\C^d\) with \(\dim L\le\ell\), choose \(G\)
  with image \(L\) such that \(GG^\dagger\) is positive definite on \(L\).
  Then \((GG^\dagger)^{\otimes t}\) is positive definite on
  \(L^{\otimes t}\), so the preceding equality implies
  \(\vct{v}\perp L^{\otimes t}\).  This holds for every such \(L\), and
  therefore \(\vct{v}\in\mathcal V_{\ell,t}^\perp\).  Together with
  \cref{eq:gaussian-support-first-inclusion}, this proves
  \(\operatorname{supp}(\Gamma_{\ell,t})=\mathcal V_{\ell,t}\).

  Every tensor permutation \(U_\pi\), \(\pi\in\mathfrak S_t\), preserves each subspace
  \(L^{\otimes t}\), and hence \(\mathcal V_{\ell,t}\).  Also,
  \[
    A^{\otimes t}L^{\otimes t}\subseteq(AL)^{\otimes t}
      \subseteq\mathcal V_{\ell,t}.
  \]
  If \(\vct{v}\in\mathcal V_{\ell,t}\) and
  \(\vct{w}\in\mathcal V_{\ell,t}^\perp\), the same inclusion with
  \(A^\dagger\) in place of \(A\) gives
  \[
    \ip{\vct{v}}{A^{\otimes t}\vct{w}}
    =\ip{(A^\dagger)^{\otimes t}\vct{v}}{\vct{w}}=0.
  \]
  Thus both \(\mathcal V_{\ell,t}\) and its orthogonal complement are
  invariant under \(A^{\otimes t}\), so \(\Pi_{\ell,t}\) commutes with
  \(A^{\otimes t}\).  The tensor permutations \(U_\pi\) are unitary and
  preserve \(\mathcal V_{\ell,t}\), so the projector \(\Pi_{\ell,t}\)
  commutes with every \(U_\pi\) as well.

  Finally, the subspaces \(\mathcal V_{\ell,t}\) are nested and
  \(\mathcal V_{\ell_{\max},t}=(\C^d)^{\otimes t}\).  The latter is immediate
  when \(d\le t\); when \(t<d\), every tensor product basis vector lies in
  \(L^{\otimes t}\) for the span \(L\) of its \(t\) tensor factors.  This
  proves \cref{eq:nested-gaussian-supports}.
\end{proof}

By \cref{eq:nested-gaussian-supports}, the operators
\(\Delta_{\ell,t}\) in \cref{eq:gaussian-support-differences} are pairwise
orthogonal projectors and sum to \(\Id_d^{\otimes t}\).  We now use this
decomposition to establish the validity of the measurement.

\begin{lemma}[Validity of the joint measurement]
  \label{lem:gaussian-block-povm-normalization}
  The densities in \cref{eq:gaussian-block-povm} form a POVM on
  \((\C^d)^{\otimes t}\).
\end{lemma}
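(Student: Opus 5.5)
The plan is to verify three facts: each density in \cref{eq:gaussian-block-povm} is positive semidefinite, it depends measurably on \(G\), and the total integral over the outcome space is \(\Id_d^{\otimes t}\). Positivity is immediate. For fixed \(\ell\) and \(G\), the operator \((GG^\dagger)^{\otimes t}\succeq0\) is conjugated by the Hermitian operator \(K_\ell\coloneqq(\Gamma_{\ell,t}^+)^{1/2}\Delta_{\ell,t}\), which gives \(K_\ell^\dagger(GG^\dagger)^{\otimes t}K_\ell\succeq0\). Here \(K_\ell\) is Hermitian because the two factors commute, as shown next. Each entry of this density is a polynomial in the entries of \(G\) and \(\overline G\), hence continuous. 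So \(\mathsf M_t(\ell,\mathsf E)\coloneqq\int_{\mathsf E}K_\ell(GG^\dagger)^{\otimes t}K_\ell\,\dd\gamma_{d,\ell}(G)\) is a countably additive positive operator-valued set function on the Borel sets of \(\C^{d\times\ell}\), by dominated convergence, since Gaussian moments are finite. Taking the disjoint union over \(\ell\in[\ell_{\max}]\) defines \(\mathsf M_t\) on the outcome space.

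For the normalization, I first record that \(\Delta_{\ell,t}\) commutes with \(\Gamma_{\ell,t}\) and hence with \((\Gamma_{\ell,t}^+)^{1/2}\). Recall that \(\Pi_{\ell-1,t}\) commutes with every \(A^{\otimes t}\) by \Cref{lem:gaussian-moment-support}. Applying this with \(A=GG^\dagger\) and integrating gives that \(\Pi_{\ell-1,t}\) commutes with \(\Gamma_{\ell,t}\). The projector \(\Pi_{\ell,t}\) is the support projector of \(\Gamma_{\ell,t}\), so it commutes with \(\Gamma_{\ell,t}\) as well. Hence \(\Delta_{\ell,t}=\Pi_{\ell,t}-\Pi_{\ell-1,t}\) commutes with \(\Gamma_{\ell,t}\), and therefore with every spectral function of it, in particular \((\Gamma_{\ell,t}^+)^{1/2}\). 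Integrating the density over \(G\) and pulling the constant operators out of the integral then gives
\[
  \int\mathsf M_t(\ell,\dd G)
  =\Delta_{\ell,t}(\Gamma_{\ell,t}^+)^{1/2}\Gamma_{\ell,t}(\Gamma_{\ell,t}^+)^{1/2}\Delta_{\ell,t}
  =\Delta_{\ell,t}\Pi_{\ell,t}\Delta_{\ell,t}.
\]
The last step uses the identity \((\Gamma^+)^{1/2}\Gamma(\Gamma^+)^{1/2}=\Pi_{\operatorname{supp}\Gamma}\) and the support identification \(\operatorname{supp}(\Gamma_{\ell,t})=\mathcal V_{\ell,t}\) from \Cref{lem:gaussian-moment-support}.

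By the nesting \cref{eq:nested-gaussian-supports}, \(\Delta_{\ell,t}\) is a subprojector of \(\Pi_{\ell,t}\), so \(\Delta_{\ell,t}\Pi_{\ell,t}\Delta_{\ell,t}=\Delta_{\ell,t}\). Summing over \(\ell\) telescopes to \(\Pi_{\ell_{\max},t}-\Pi_{0,t}=\Id_d^{\otimes t}\). This establishes \(\mathsf M_t(\text{whole space})=\Id_d^{\otimes t}\) and completes the proof.

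The only step needing care is the commutation of \(\Delta_{\ell,t}\) with the pseudoinverse root. The one point to watch is that \(\Pi_{\ell-1,t}\) is not the support projector of \(\Gamma_{\ell,t}\), so its commutation with \(\Gamma_{\ell,t}\) must come from commutation with \(A^{\otimes t}\) rather than from spectral reasoning. Everything else is routine.
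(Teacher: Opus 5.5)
Your proof is correct and follows the paper's argument: integrate the density to get $\Delta_{\ell,t}(\Gamma_{\ell,t}^+)^{1/2}\Gamma_{\ell,t}(\Gamma_{\ell,t}^+)^{1/2}\Delta_{\ell,t}=\Delta_{\ell,t}\Pi_{\ell,t}\Delta_{\ell,t}=\Delta_{\ell,t}$, then sum the telescoping differences to get the identity. The commutation of $\Delta_{\ell,t}$ with $(\Gamma_{\ell,t}^+)^{1/2}$, which you flag as the delicate step, is not actually needed here: the density already has the form $K^\dagger(GG^\dagger)^{\otimes t}K$ with $K=(\Gamma_{\ell,t}^+)^{1/2}\Delta_{\ell,t}$, and the middle three factors collapse to $\Pi_{\ell,t}$ without moving $\Delta_{\ell,t}$ (the paper uses that commutation only later, in the moment computations).
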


\begin{proof}
  The operator-valued density is positive semidefinite, and for every
  \(1\le\ell\le\ell_{\max}\),
  \begin{align*}
    \int \mathsf M_t(\ell,\dd G)
    &=
    \Delta_{\ell,t}(\Gamma_{\ell,t}^+)^{1/2}
      \Gamma_{\ell,t}
      (\Gamma_{\ell,t}^+)^{1/2}\Delta_{\ell,t}
    =\Delta_{\ell,t}\Pi_{\ell,t}\Delta_{\ell,t}
     =\Delta_{\ell,t},\\
    \sum_{\ell=1}^{\ell_{\max}}\int \mathsf M_t(\ell,\dd G)
    &=\sum_{\ell=1}^{\ell_{\max}}\Delta_{\ell,t}
     =\Id_d^{\otimes t}.
  \end{align*}
\end{proof}

Integrating over \(G\) also gives the distribution of \(J\):
\begin{equation}
  \Prb_\rho[J=\ell]
  =
  \Tr\left(
    \rho^{\otimes t}\int\mathsf M_t(\ell,\dd G)
  \right)
  =\Tr(\Delta_{\ell,t}\rho^{\otimes t}).
  \label{eq:support-index-distribution}
\end{equation}

\subsubsection{Bounding the mean of \texorpdfstring{\(J\)}{J}}
\label{sec:gaussian-support-index}

The expected value of \(J\) helps bound the second moment of the estimator
in \cref{eq:gaussian-block-second-moment}, so we bound \(\E_\rho[J]\) next.
We use antisymmetrizers to bound the tail of the distribution of
\(J\) and show that \(J\) cannot exceed the rank of the state.
For \(1\le q\le\ell_{\max}\), \cref{eq:support-index-distribution} gives
\[
  \Prb_\rho[J\ge q]
  =\Tr\bigl((\Id_d^{\otimes t}-\Pi_{q-1,t})\rho^{\otimes t}\bigr).
\]
We bound \(\Id_d^{\otimes t}-\Pi_{q-1,t}\) by a sum of antisymmetrizers.
For \(S\subseteq[t]\) with \(\abs S=q\), let \(\mathfrak S_S\) be the group
of permutations of the positions in \(S\), acting trivially outside \(S\).
Define
\[
  A_S\coloneqq\frac1{q!}
    \sum_{\pi\in\mathfrak S_S}\operatorname{sgn}(\pi)U_\pi.
\]
This is the orthogonal projector onto tensors that are antisymmetric in the positions
in \(S\).

\begin{lemma}[Characterization by antisymmetrizers]
  \label{lem:gaussian-support-antisymmetrizers}
  For \(1\le q\le\min\{d,t\}\),
  \begin{equation}
    \mathcal V_{q-1,t}
    =\bigcap_{\substack{S\subseteq[t]\\\abs S=q}}\ker(A_S).
    \label{eq:gaussian-support-antisymmetrizer-kernels}
  \end{equation}
\end{lemma}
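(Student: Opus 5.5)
We prove the two inclusions in \cref{eq:gaussian-support-antisymmetrizer-kernels} separately, writing \(\mathcal W_q\) for the right side.

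\emph{First inclusion, \(\mathcal V_{q-1,t}\subseteq\mathcal W_q\).}
Since each \(\ker(A_S)\) is a subspace, it suffices to check that \(A_S\) annihilates \(L^{\otimes t}\) whenever \(\dim L\le q-1\). The range of \(A_S\) restricted to \(L^{\otimes t}\) lies in \(\wedge^q L\otimes L^{\otimes(t-q)}\), with the antisymmetric factor placed in the positions of \(S\). Because \(\dim L<q\), we have \(\wedge^qL=\{0\}\), so \(A_S\) vanishes on \(L^{\otimes t}\).

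\emph{Reverse inclusion, via orthogonal complements.}
We show \(\mathcal W_q^\perp\supseteq\mathcal V_{q-1,t}^\perp\), equivalently \(\mathcal V_{q-1,t}^\perp\subseteq\sum_S\operatorname{Im}(A_S)\).

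The first step is to identify \(\mathcal V_{q-1,t}\) more concretely. As in \Cref{lem:gaussian-moment-support}, \(\mathcal V_{q-1,t}\) is invariant under \(U_\pi\) for every \(\pi\in\mathfrak S_t\) and under \(A^{\otimes t}\) for every operator \(A\). It is therefore a \(\mathrm{GL}_d\times\mathfrak S_t\)-subrepresentation of \((\C^d)^{\otimes t}\), hence a direct sum of Schur--Weyl isotypic components \(\mathbb S_\lambda(\C^d)\otimes\mathrm{Sp}_\lambda\). It contains \(L^{\otimes t}\) for \(L=\C^{q-1}\subseteq\C^d\), and \(L^{\otimes t}\) contains every highest weight vector with \(\ell(\lambda)\le q-1\). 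Conversely, \(\mathcal V_{q-1,t}\) is spanned by vectors whose polynomial-representation content involves at most \(q-1\) rows. I would thus aim for
\[
\mathcal V_{q-1,t}=\bigoplus_{\ell(\lambda)\le q-1}\mathbb S_\lambda(\C^d)\otimes\mathrm{Sp}_\lambda .
\]

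The second step is to show that each component with \(\ell(\lambda)\ge q\) meets \(\operatorname{Im}(A_S)\) nontrivially. Take \(S=[q]\) and a highest weight vector built from a Young tableau whose first column has length at least \(q\), antisymmetrized over positions containing the first \(q\) entries of that column. Irreducibility of the component then forces the whole isotypic component into \(\sum_S\operatorname{Im}(A_S)\), which is a subrepresentation since permuting \(S\) permutes the \(A_S\).

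\emph{Representation-free alternative.}
If I want to avoid representation theory, as the paper prefers, I would argue directly. Suppose \(\vct v\perp\operatorname{Im}(A_S)\) for all \(S\), that is, \(\vct v\in\mathcal W_q\). I want to show \(\vct v\in\mathcal V_{q-1,t}\). Dually, for \(\vct w\perp\mathcal V_{q-1,t}\), the pairing \(\langle\vct w,\vct x_1\otimes\cdots\otimes\vct x_t\rangle\) vanishes whenever the \(\vct x_i\) span a space of dimension less than \(q\). One then shows this multilinear form lies in the span of forms antisymmetric in some \(q\) slots. This is a polarization or multilinear-algebra fact: the ideal of multilinear forms vanishing on rank-\((q-1)\) tuples is generated by \(q\times q\) minors, which is classical determinantal-ideal theory.

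\textbf{Main obstacle.} The main obstacle is this reverse inclusion: showing that vanishing on all tuples of rank less than \(q\) forces membership in the span of \(q\)-fold antisymmetric pieces. I would first try the Schur--Weyl route above, since it makes both sides explicit.
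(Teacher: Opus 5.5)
Your proof is correct, but your primary route differs from the paper's; your ``representation-free alternative'' is essentially what the paper does. The paper encodes \(\vct v\in\mathcal V_{q-1,t}^\perp\) as the multilinear polynomial \(F_{\vct v}(X)=\ip{\vct v}{\vct x_1\otimes\cdots\otimes\vct x_t}\), which vanishes on matrices of rank at most \(q-1\). Hilbert's Nullstellensatz together with primality of the determinantal ideal (Bruns--Vetter) then places \(F_{\vct v}\) in the ideal generated by the \(q\times q\) minors \(D_{T,S}\). Keeping only the part of multidegree \((1,\ldots,1)\) gives \(F_{\vct v}=\sum_{S,T}D_{T,S}H_{T,S}\) with \(H_{T,S}\) independent of the columns in \(S\). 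Each term is therefore alternating in \(S\) and corresponds to a tensor in \(\operatorname{Im}(A_S)\); this multigraded step is the one your sketch of that route leaves implicit.

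Your Schur--Weyl route instead uses that \(\mathcal V_{q-1,t}\) and \(\sum_S\operatorname{Im}(A_S)\) are both \(\mathrm{GL}_d\times\mathfrak S_t\)-subrepresentations of a multiplicity-free decomposition into mutually orthogonal irreducible components. Membership can therefore be decided one component at a time. Two refinements make it airtight. First, you need only the half \(\mathcal V_{q-1,t}\supseteq\bigoplus_{\ell(\lambda)\le q-1}\mathbb S_\lambda(\C^d)\otimes\mathrm{Sp}_\lambda\) of your claimed identification, which your highest-weight-vector argument does prove. Orthogonality of the components then gives \(\mathcal V_{q-1,t}^\perp\subseteq\bigoplus_{\ell(\lambda)\ge q}\), so the reverse containment you justified only vaguely is never used. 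Second, make the witness for \(\ell(\lambda)\ge q\) explicit as the vector \(v_T\) obtained by antisymmetrizing every column of \(T\). Antisymmetrizing only \(q\) entries of one column need not give a highest weight vector, nor a vector in a single isotypic component. The vector \(v_T\) is nonzero and lies in \(\operatorname{Im}(A_{C_1})\) for the first column \(C_1\), and \(A_SA_{C_1}=A_{C_1}\) for \(S\subseteq C_1\) gives \(v_T\in\operatorname{Im}(A_S)\).

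As for what each approach buys: your route gives an explicit structural description of \(\mathcal V_{q-1,t}\), and hence of \(\Pi_{q-1,t}\), as a sum of isotypic components. Its price is Schur--Weyl duality and highest weight theory, which the paper deliberately avoids. The paper's route stays within commutative algebra but rests on the nontrivial primality, in particular radicality, of determinantal ideals, which is exactly what your phrase ``generated by \(q\times q\) minors'' invokes.
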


\begin{proof}
  For every subspace \(L\subseteq\C^d\) with \(\dim L\le q-1\), any \(q\)
  vectors in \(L\) are linearly dependent, so their antisymmetrization is
  zero.  Therefore,
  \[
    A_S L^{\otimes t}=\{0\}
    \qquad\text{for every \(S\subseteq[t]\) with \(\abs S=q\)}.
  \]
  Hence
  \(\mathcal V_{q-1,t}\subseteq\bigcap_{\abs S=q}\ker(A_S)\).
  Since each \(A_S\) is a projector,
  \[
    \left(\bigcap_{\substack{S\subseteq[t]\\\abs S=q}}\ker(A_S)\right)^\perp
    =\sum_{\substack{S\subseteq[t]\\\abs S=q}}\ker(A_S)^\perp
    =\sum_{\substack{S\subseteq[t]\\\abs S=q}}\operatorname{Im}(A_S).
  \]
  Thus, taking orthogonal complements, it remains to show
  \begin{equation}
    \mathcal V_{q-1,t}^\perp
    \subseteq\sum_{\substack{S\subseteq[t]\\\abs S=q}}
      \operatorname{Im}(A_S).
    \label{eq:gaussian-support-antisymmetrizer-ranges}
  \end{equation}
  Let \(\vct{v}\in\mathcal V_{q-1,t}^\perp\).  Let
  \(X=[\vct{x}_1\ \cdots\ \vct{x}_t]=(x_{ij})\) be a \(d\times t\)
  matrix of indeterminates, and work in the polynomial ring
  \(R\coloneqq\C[x_{ij}:i\in[d],\ j\in[t]]\).
  Associate with \(\vct{v}\) the polynomial
  \[
    F_{\vct{v}}(X)
    \coloneqq
    \ip{\vct{v}}{\vct{x}_1\otimes\cdots\otimes\vct{x}_t}.
  \]
  This polynomial is homogeneous of degree one in each column of \(X\).
  For any choice of the columns in \(\C^d\), we have
  \[
    \rank(X)\le q-1
    \quad\Longrightarrow\quad
    \vct{x}_1\otimes\cdots\otimes\vct{x}_t
      \in\mathcal V_{q-1,t}
    \quad\Longrightarrow\quad
    F_{\vct{v}}(X)=0.
  \]

  For \(T\subseteq[d]\) and \(S\subseteq[t]\) with
  \(\abs T=\abs S=q\), let \(D_{T,S}(X)\) be the determinant of the
  submatrix with rows \(T\) and columns \(S\), both in increasing order.
  Let \(I_{q-1}\subset R\) be the ideal generated by these determinants:
  its elements are sums of the \(D_{T,S}\) multiplied by arbitrary
  polynomials in \(R\).  Its common zero set is
  \begin{align*}
    V(I_{q-1})
    &\coloneqq\{M\in\C^{d\times t}:p(M)=0
      \text{ for every }p\in I_{q-1}\}\\
    &=\{M\in\C^{d\times t}:\rank(M)\le q-1\},
  \end{align*}
  since a matrix has rank at most \(q-1\) if and only if all its
  \(q\times q\) minors vanish.  This set of matrices is the
  \emph{determinantal variety}, while \(I_{q-1}\) is the corresponding
  \emph{determinantal ideal} of polynomials.
  The preceding vanishing statement says that \(F_{\vct{v}}\) vanishes
  on \(V(I_{q-1})\).  Since \(\C\) is algebraically closed, Hilbert's
  Nullstellensatz \cite[Ch.~VII, Theorem~14]{ZariskiSamuel1960} gives
  \[
    F_{\vct{v}}^{\,j}\in I_{q-1}
    \qquad\text{for some integer \(j\ge1\)}.
  \]
  The ideal \(I_{q-1}\) is prime
  \cite[Theorem~2.10 and Remark~2.12]{BrunsVetter1988}: it is a proper
  ideal, and whenever a product of two polynomials belongs to it, at least
  one of the factors belongs to it.  Applying this property repeatedly to
  \(F_{\vct{v}}^{\,j}\) shows that \(F_{\vct{v}}\in I_{q-1}\).

  We can therefore write \(F_{\vct{v}}\) as a sum of the minors
  \(D_{T,S}\) times polynomial coefficients.  Decompose each coefficient
  into parts homogeneous in each column, and retain only the part of the
  sum with degree one in every column.  The minor \(D_{T,S}\) already has
  degree one in each column in \(S\) and degree zero in every other column.
  Thus only the coefficient part with degree zero in the columns in \(S\)
  and degree one in each column outside \(S\) contributes.  Denoting this
  part by \(H_{T,S}\), we obtain
  \begin{equation}
    F_{\vct{v}}
    =\sum_{\substack{S\subseteq[t],\ \abs S=q\\
                     T\subseteq[d],\ \abs T=q}}
      D_{T,S}H_{T,S}.
    \label{eq:determinantal-multigraded-decomposition}
  \end{equation}
  In particular, \(H_{T,S}\) is independent of the columns in \(S\).
  Permuting these columns multiplies \(D_{T,S}\) by the sign of the
  permutation and leaves \(H_{T,S}\) unchanged.  Hence every product
  \(D_{T,S}H_{T,S}\) is alternating in the columns indexed by \(S\).

  Every polynomial homogeneous of degree one in each column has a unique
  representation as \(F_{\vct{w}}\): its coefficients are the complex
  conjugates of the coordinates of \(\vct{w}\) in the standard tensor
  product basis.  Let \(\vct{w}_{T,S}\) be the tensor satisfying
  \[
    \ip{\vct{w}_{T,S}}{\vct{x}_1\otimes\cdots\otimes\vct{x}_t}
    =D_{T,S}(X)H_{T,S}(X).
  \]
  For \(\pi\in\mathfrak S_S\), unitarity of \(U_\pi\) and alternation give
  \begin{align*}
    \ip{U_\pi\vct{w}_{T,S}}{\vct{x}_1\otimes\cdots\otimes\vct{x}_t}
    &=\ip{\vct{w}_{T,S}}
      {\vct{x}_{\pi(1)}\otimes\cdots\otimes\vct{x}_{\pi(t)}}\\
    &=\operatorname{sgn}(\pi)
      \ip{\vct{w}_{T,S}}{\vct{x}_1\otimes\cdots\otimes\vct{x}_t}.
  \end{align*}
  Since simple tensors span \((\C^d)^{\otimes t}\) and the sign is real,
  \(U_\pi\vct{w}_{T,S}=\operatorname{sgn}(\pi)\vct{w}_{T,S}\) for every
  \(\pi\in\mathfrak S_S\).  Averaging with the signs in \(A_S\) gives
  \(A_S\vct{w}_{T,S}=\vct{w}_{T,S}\), so
  \(\vct{w}_{T,S}\in\operatorname{Im}(A_S)\).  Thus
  \cref{eq:determinantal-multigraded-decomposition} and uniqueness of the
  associated tensor imply
  \(\vct{v}=\sum_{S,T}\vct{w}_{T,S}\in
    \sum_{\abs S=q}\operatorname{Im}(A_S)\), proving
  \cref{eq:gaussian-support-antisymmetrizer-ranges} and the lemma.
\end{proof}

\begin{lemma}[Mean of \(J\)]
  \label{lem:gaussian-support-index-mean}
  For every state \(\rho\in\cD(\C^d)\) and every integer \(q\ge1\),
  \begin{equation}
    \Prb_\rho[J\ge q]
    \le\min\left\{1,\frac{\binom tq}{q!}\right\},
    \label{eq:gaussian-support-index-tail}
  \end{equation}
  where the probability is zero for \(q>\min\{d,t\}\).  Consequently,
  \begin{equation}
    \E_\rho[J]=O(\sqrt t).
    \label{eq:gaussian-support-index-mean}
  \end{equation}
  If \(\rank(\rho)\le r\), then every value of \(J\) with positive
  probability is at most \(r\).
\end{lemma}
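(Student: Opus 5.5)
The plan is to use the tail formula \(\Prb_\rho[J\ge q]=\Tr((\Id_d^{\otimes t}-\Pi_{q-1,t})\rho^{\otimes t})\) together with \Cref{lem:gaussian-support-antisymmetrizers}. By that lemma, the range of \(\Id_d^{\otimes t}-\Pi_{q-1,t}\) is \(\sum_{|S|=q}\operatorname{Im}(A_S)\). The operator \(\sum_{|S|=q}A_S\) is positive semidefinite with support equal to this sum of ranges, but domination by a projector needs more than equal supports. Instead I will argue directly. Let \(P\coloneqq\Id-\Pi_{q-1,t}\) and \(\Omega\coloneqq\sum_SA_S\), and take \(\vct v=P\vct v\). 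Since \(\vct v\in\sum_S\operatorname{Im}(A_S)\), a naive bound fails, so I will exploit a commuting structure instead.

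Write \(\rho=\sum_i\lambda_i|\vct u_i\rangle\langle\vct u_i|\). The projector \(\Pi_{q-1,t}\) commutes with \(\rho^{\otimes t}\) by \Cref{lem:gaussian-moment-support}, so in the eigenbasis \(\vct u_{i_1}\otimes\cdots\otimes\vct u_{i_t}\) both operators respect the weight decomposition by multiset type. On a product basis vector whose indices take at most \(q-1\) distinct values, the vector lies in \(L^{\otimes t}\) for \(L\) the span of those eigenvectors, hence in \(\mathcal V_{q-1,t}\). Therefore \(P\) vanishes on the span of such vectors. Because \(P\) commutes with \(\rho^{\otimes t}\), \(\Tr(P\rho^{\otimes t})\) is at most the \(\rho^{\otimes t}\)-weight of basis vectors having at least \(q\) distinct indices. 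More sharply, \(P\) is supported in \(\sum_S\operatorname{Im}(A_S)\). I therefore expect the key estimate to be \(\Tr(P\rho^{\otimes t})\le\sum_S\Tr(A_S\rho^{\otimes t})\).

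To get that estimate, I will use that within each eigenspace of \(\rho^{\otimes t}\) (fixed weight), \(P\) restricted there has range inside the span of the \(A_S\)-images. I will try to show the subadditivity \(\Tr(P\sigma)\le\sum_S\Tr(A_S\sigma)\) for \(\sigma=\rho^{\otimes t}\), via a dimension count weighted by eigenvalues: \(\dim(\text{range of }P\text{ in weight space})\le\sum_S\dim(A_S\text{ range in weight space})\), using that \(A_S\) also commutes with \(\rho^{\otimes t}\). This is the main obstacle, and the weight-space dimension count is what I will try first.

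Then each term gives \(\Tr(A_S\rho^{\otimes t})=\Tr(A_q\rho^{\otimes q})=e_q(\lambda)\le(\sum_i\lambda_i)^q/q!=1/q!\), yielding \(\binom tq/q!\le(\ee^2t/q^2)^q\). Summing the tail over \(q\) gives \(\E_\rho[J]\le q_0+\sum_{q>q_0}2^{-q}\) with \(q_0=\lceil 2\ee\sqrt t\rceil\), which is \(O(\sqrt t)\). For the rank claim, if \(\rank\rho\le r\) then \(e_q(\lambda)=0\) for \(q>r\). Alternatively, \(\rho^{\otimes t}\) is supported on \((\operatorname{supp}\rho)^{\otimes t}\subseteq\mathcal V_{r,t}\), so \(\Tr(\Delta_{\ell,t}\rho^{\otimes t})=0\) for \(\ell>r\). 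This direct argument is cleaner and is the one I will use.
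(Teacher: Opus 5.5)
Your route is correct but differs from the paper's at the key step. The paper proves a state-independent operator inequality. It sets $\mathcal A_{t,q}\coloneqq\sum_{|S|=q}A_S$ and notes that $\mathcal A_{t,q}$ commutes with every tensor permutation, hence with each $A_S$. Therefore $\mathcal A_{t,q}^2=\sum_{S,T}A_SA_TA_S\succeq\mathcal A_{t,q}$, so every nonzero eigenvalue is at least one. Together with $\ker(\mathcal A_{t,q})=\mathcal V_{q-1,t}$, this gives $\mathcal A_{t,q}\succeq\Id_d^{\otimes t}-\Pi_{q-1,t}$.

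You prove only the trace inequality $\Tr(P\sigma)\le\sum_S\Tr(A_S\sigma)$ for $\sigma=\rho^{\otimes t}$. You use that $P$ commutes with $\sigma$ (by the support lemma) and that each $A_S$ commutes with $\sigma$ (being a combination of permutations). As written, your dimension count is only announced ("what I will try first"); to make it a proof, supply the line you leave implicit. Let $E_\mu$ be the spectral projector of $\sigma$ onto the eigenspace $W_\mu$. Since $E_\mu$ commutes with $P$ and every $A_S$, and $\operatorname{Im}(P)=\sum_S\operatorname{Im}(A_S)$,
\[
  \operatorname{Im}(P)\cap W_\mu=E_\mu\operatorname{Im}(P)=\sum_S E_\mu\operatorname{Im}(A_S)=\sum_S\bigl(\operatorname{Im}(A_S)\cap W_\mu\bigr).
\]
Subadditivity of dimension over sums of subspaces, weighted by $\mu\ge0$ and summed over $\mu$, then gives the claim. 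Your detours through the multiset-type decomposition and through basis vectors with at least $q$ distinct indices are not needed.

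Your argument is a general weighted rank-subadditivity principle. It avoids the $\mathcal A^2\succeq\mathcal A$ trick, which the paper needs because, as you correctly observed, a sum of noncommuting projectors need not dominate the projector onto the sum of their ranges. In exchange, your route only controls traces against operators commuting with all the projectors involved, whereas the paper's operator inequality holds against every positive operator. For the i.i.d. input here both suffice. The remaining steps coincide with the paper's: $\Tr(A_S\rho^{\otimes t})=e_q(\lambda)\le1/q!$, the tail sum giving $O(\sqrt t)$, and the rank claim via $\operatorname{supp}(\rho^{\otimes t})\subseteq\mathcal V_{r,t}$.
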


\begin{proof}
  Fix \(1\le q\le\min\{d,t\}\).  By the preceding lemma, the intersection
  of the kernels of the order-\(q\) antisymmetrizers is
  \(\mathcal V_{q-1,t}\).  Define
  \[
    \mathcal A_{t,q}
    \coloneqq\sum_{\substack{S\subseteq[t]\\\abs S=q}}A_S.
  \]
  Conjugating \(\mathcal A_{t,q}\) by a tensor permutation \(U_\pi\)
  permutes the summands, so \(\mathcal A_{t,q}\) commutes with every
  \(U_\pi\), \(\pi\in\mathfrak S_t\).
  Since each \(A_S\) is a linear combination of such permutations,
  \(\mathcal A_{t,q}\) commutes with every \(A_S\).  Therefore,
  \begin{equation}
    \mathcal A_{t,q}^2
    =\sum_SA_S\mathcal A_{t,q}
    =\sum_SA_S\mathcal A_{t,q}A_S
    =\sum_{S,T}A_SA_TA_S
    \succeq\sum_SA_S
    =\mathcal A_{t,q}.
    \label{eq:antisymmetrizer-unit-gap}
  \end{equation}
  Here \(A_SA_TA_S=(A_TA_S)^\dagger(A_TA_S)\succeq0\), and the term with
  \(T=S\) equals \(A_S\).  Moreover,
  \[
    \ker(\mathcal A_{t,q})
    =\bigcap_{\abs S=q}\ker(A_S)
    =\mathcal V_{q-1,t}.
  \]
  Since \(\mathcal A_{t,q}\succeq0\),
  \cref{eq:antisymmetrizer-unit-gap} implies that each of its nonzero
  eigenvalues is at least one.  Hence
  \begin{equation}
    \mathcal A_{t,q}\succeq\Id_d^{\otimes t}-\Pi_{q-1,t}.
    \label{eq:antisymmetrizer-support-bound}
  \end{equation}

  Since
  \[
    \sum_{\ell=q}^{\ell_{\max}}\Delta_{\ell,t}
    =\Id_d^{\otimes t}-\Pi_{q-1,t},
  \]
  the tail of \(J\) is controlled by
  \cref{eq:antisymmetrizer-support-bound}.  Define the antisymmetrizer on
  \((\C^d)^{\otimes q}\) by
  \[
    A_q\coloneqq\frac1{q!}
      \sum_{\pi\in\mathfrak S_q}\operatorname{sgn}(\pi)U_\pi.
  \]
  For every \(S\subseteq[t]\) of size \(q\),
  \[
    \Tr(A_S\rho^{\otimes t})=\Tr(A_q\rho^{\otimes q}).
  \]
  Therefore, \cref{eq:support-index-distribution,eq:antisymmetrizer-support-bound}
  give
  \begin{align}
    \Prb_\rho[J\ge q]
    &=\sum_{\ell=q}^{\ell_{\max}}
      \Tr(\Delta_{\ell,t}\rho^{\otimes t}) \notag\\
    &=\Tr\bigl((\Id_d^{\otimes t}-\Pi_{q-1,t})\rho^{\otimes t}\bigr) \notag\\
    &\le\Tr(\mathcal A_{t,q}\rho^{\otimes t})
     =\binom tq\Tr(A_q\rho^{\otimes q}).
    \label{eq:support-index-tail-antisymmetrizer}
  \end{align}
  Let \(\lambda_1,\ldots,\lambda_d\) be the eigenvalues of \(\rho\).  Since
  \(A_q\) projects onto the totally antisymmetric subspace, diagonalizing
  \(\rho\) gives an eigenbasis for the restriction of \(\rho^{\otimes q}\)
  to this subspace, indexed by \(1\le i_1<\cdots<i_q\le d\), with corresponding eigenvalue
  \(\lambda_{i_1}\cdots\lambda_{i_q}\).  Hence
  \[
    \Tr(A_q\rho^{\otimes q})
    =\sum_{1\le i_1<\cdots<i_q\le d}
      \lambda_{i_1}\cdots\lambda_{i_q}
    \le\frac1{q!}\left(\sum_{i=1}^d\lambda_i\right)^q
    =\frac1{q!}.
  \]
  The inequality holds because, in the expansion of
  \((\sum_i\lambda_i)^q\), every product with \(q\) distinct indices
  appears \(q!\) times, and all remaining terms are nonnegative.
  Together with
  \cref{eq:support-index-tail-antisymmetrizer}, this proves
  \cref{eq:gaussian-support-index-tail}.

  The standard estimates
  \(\binom tq\le(\ee t/q)^q\) and \(q!\ge(q/\ee)^q\) give
  \[
    \frac{\binom tq}{q!}
    \le\left(\frac{\ee^2t}{q^2}\right)^q.
  \]
  For \(q\ge2\ee\sqrt t\), the right side is at most \(4^{-q}\).
  The tail sum identity
  \(\E_\rho[J]=\sum_{q\ge1}\Prb_\rho[J\ge q]\) therefore gives
  \[
    \E_\rho[J]
    \le \left\lceil2\ee\sqrt t\right\rceil
      +\sum_{q\ge\lceil2\ee\sqrt t\rceil}4^{-q}
    =O(\sqrt t),
  \]
  proving \cref{eq:gaussian-support-index-mean}.

  Finally, let \(L\) be the support of a state \(\rho\) of rank at most \(r\).
  If \(r<\ell_{\max}\), then
  \[
    \operatorname{supp}(\rho^{\otimes t})
      \subseteq L^{\otimes t}
      \subseteq\mathcal V_{r,t},
    \qquad\text{and}\qquad
    \Delta_{\ell,t}\Pi_{r,t}=0\quad\text{when \(\ell>r\)}.
  \]
  So \cref{eq:support-index-distribution} gives
  \(\Prb_\rho[J=\ell]=0\) for \(\ell>r\).  If \(r\ge\ell_{\max}\), this
  follows directly from \(J\le\ell_{\max}\le r\).
\end{proof}

\subsubsection{The outcome law and its moments}
\label{sec:gaussian-block-moments}

It remains to prove the claims about the moments and conditional distribution in
\Cref{prop:gaussian-block-estimator}.  We first derive the Gaussian moment
identities used to evaluate the first and second moments.  We then apply the
Born rule to obtain the contribution from each event \(J=\ell\), sum these
contributions over \(\ell\), and finally prove the claimed conditional law
of \(\Pi^\perp G\).

\paragraph{The Gaussian moment identity.}

For \(\pi\in\mathfrak S_t\), let \(c(\pi)\) be the number of cycles of
\(\pi\), including fixed points.

\begin{lemma}[Gaussian moment identity]
  \label{lem:gaussian-moment-identity}
  For every \(1\le\ell\le\ell_{\max}\),
  \begin{equation}
    \Gamma_{\ell,t}
    =\sum_{\pi\in\mathfrak S_t}
      \ell^{c(\pi)}U_\pi.
    \label{eq:gaussian-moment-identity}
  \end{equation}
  The operators \((\Gamma_{\ell,t}:1\le\ell\le\ell_{\max})\) commute pairwise.  Each
  \(\Gamma_{\ell,t}\) also commutes with every tensor permutation
  \(U_\pi\), \(\pi\in\mathfrak S_t\),
  and with \(A^{\otimes t}\) for every operator \(A\) on \(\C^d\).
\end{lemma}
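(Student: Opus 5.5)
The identity is a complex Wick (Isserlis) computation carried out entrywise; the commutation claims then follow from the permutation form of the sum. First I will write \((GG^\dagger)^{\otimes t}\) in the standard tensor basis. Its matrix entry indexed by \((i_1,\ldots,i_t)\) and \((j_1,\ldots,j_t)\) is
\[
  \prod_{p=1}^t\sum_{k_p=1}^\ell G_{i_pk_p}\overline{G_{j_pk_p}}.
\]
For independent standard complex Gaussian entries, the complex Isserlis theorem gives
\[
  \E\Bigl[\prod_p G_{i_pk_p}\overline{G_{j_pk_p}}\Bigr]
  =\sum_{\pi\in\mathfrak S_t}\prod_p\E\bigl[G_{i_pk_p}\overline{G_{j_{\pi(p)}k_{\pi(p)}}}\bigr]
  =\sum_{\pi}\prod_p\delta_{i_p,j_{\pi(p)}}\delta_{k_p,k_{\pi(p)}}.
\]
This uses \(\E[z^2]=0\) and \(\E\abs z^2=1\), so only pairings of unbarred with barred factors survive, and these are indexed by permutations. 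Summing over the column indices \(k_1,\ldots,k_t\in[\ell]\), the constraint \(k_p=k_{\pi(p)}\) forces \(k\) to be constant on the cycles of \(\pi\). This produces the factor \(\ell^{c(\pi)}\). What remains, \(\prod_p\delta_{i_p,j_{\pi(p)}}\), is the matrix entry of a tensor permutation operator, namely \(U_\pi\) or \(U_{\pi^{-1}}\) depending on the convention.

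I must check which of the two appears. Under the paper's convention, \(U_\pi\) sends \(\vct v_1\otimes\cdots\otimes\vct v_t\) to \(\vct v_{\pi^{-1}(1)}\otimes\cdots\). Its entries are \(\langle i|U_\pi|j\rangle=\prod_p\delta_{i_p,j_{\pi^{-1}(p)}}\). The index bookkeeping above may therefore yield \(U_{\pi^{-1}}\) rather than \(U_\pi\). This does not matter: \(c(\pi)=c(\pi^{-1})\), so reindexing the sum by \(\pi\mapsto\pi^{-1}\) gives \cref{eq:gaussian-moment-identity} in either case. Integrability is clear because all Gaussian moments are finite.

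For the commutation claims, I will first show that \(\Gamma_{\ell,t}\) commutes with every \(U_\sigma\), using the identity. Conjugation gives \(U_\sigma U_\pi U_\sigma^{-1}=U_{\sigma\pi\sigma^{-1}}\), and the cycle count is a class function. Hence \(U_\sigma\Gamma_{\ell,t}U_\sigma^\dagger=\Gamma_{\ell,t}\). Alternatively, this follows directly from the integral, since \(U_\sigma\) commutes with every \(B^{\otimes t}\).

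Next, every \(U_\pi\) commutes with \(A^{\otimes t}\) because permuting tensor factors commutes with applying the same operator to each factor. By the identity, \(\Gamma_{\ell,t}\) is a linear combination of the \(U_\pi\), so it also commutes with \(A^{\otimes t}\). Finally, the group algebra element \(\sum_\pi\ell^{c(\pi)}\pi\) is a linear combination of class sums, so it lies in the center of \(\C[\mathfrak S_t]\). Applying the representation \(\pi\mapsto U_\pi\), which is a homomorphism since \(U_\pi U_\tau=U_{\pi\tau}\), shows that the operators \(\Gamma_{\ell,t}\) and \(\Gamma_{\ell',t}\) commute for all \(\ell,\ell'\).

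The main obstacle is only the careful statement of complex Isserlis and the index bookkeeping for \(\pi\) versus \(\pi^{-1}\), which the class-function invariance resolves. I would cite \cite{Isserlis1918} for the Wick formula. The complex version follows by writing \(z=x+\ii y\) and applying real Isserlis: pairings of two unbarred or two barred variables vanish because \(\E[z^2]=0\).
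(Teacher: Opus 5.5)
Your proposal is correct and follows the paper's proof essentially step for step: you use the entrywise complex Isserlis expansion, obtain the factor $\ell^{c(\pi)}$ from column indices that are constant on cycles, reindex $\pi\mapsto\pi^{-1}$ using $c(\pi)=c(\pi^{-1})$, and derive the commutation claims from conjugation invariance of cycle type and from $U_\pi A^{\otimes t}=A^{\otimes t}U_\pi$. Your central-element argument for pairwise commutation repackages the paper's observation that $\Gamma_{\ell,t}$ commutes with every $U_\tau$ while $\Gamma_{\ell',t}$ is a linear combination of the $U_\tau$.
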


\begin{proof}
  We compute an arbitrary matrix entry of \(\Gamma_{\ell,t}\).  Let
  \(\vct{e}_1,\ldots,\vct{e}_d\) be the standard basis of \(\C^d\).  For
  \(\mathbf i=(i_1,\ldots,i_t)\) and
  \(\mathbf j=(j_1,\ldots,j_t)\) in \([d]^t\), put
  \[
    \vct{e}_{\mathbf i}
    \coloneqq\vct{e}_{i_1}\otimes\cdots\otimes\vct{e}_{i_t},
    \qquad
    \vct{e}_{\mathbf j}
    \coloneqq\vct{e}_{j_1}\otimes\cdots\otimes\vct{e}_{j_t}.
  \]
  The entries of \(G\) satisfy
  \[
    \E[G_{i\alpha}G_{j\beta}]=0,
    \qquad
    \E[G_{i\alpha}\overline{G_{j\beta}}]
    =\delta_{ij}\delta_{\alpha\beta}.
  \]
  These identities hold for all \(i,j\in[d]\) and
  \(\alpha,\beta\in[\ell]\).
  Expanding the matrix entry and applying the complex form of Isserlis's
  Gaussian pairing formula \cite{Isserlis1918} gives the following
  calculation.  The formula applies to the
  jointly Gaussian entries even when some indices coincide.  Pairings
  between two unconjugated entries or two conjugated entries vanish, so
  every surviving pairing matches the \(t\) unconjugated factors with the
  \(t\) conjugated factors.  The matching is given by a permutation in
  \(\mathfrak S_t\).
  \begin{align}
    \ip{\vct{e}_{\mathbf i}}{
      \Gamma_{\ell,t}\vct{e}_{\mathbf j}}
    &=\E\left[\prod_{u=1}^t(GG^\dagger)_{i_u j_u}\right] \notag\\
    &=\sum_{\alpha_1,\ldots,\alpha_t\in[\ell]}
      \E\left[
        \prod_{u=1}^t
          G_{i_u\alpha_u}\overline{G_{j_u\alpha_u}}
      \right] \notag\\
    &=\sum_{\alpha_1,\ldots,\alpha_t\in[\ell]}
      \sum_{\pi\in\mathfrak S_t}
      \prod_{u=1}^t
        \E\left[
          G_{i_u\alpha_u}
          \overline{G_{j_{\pi(u)}\alpha_{\pi(u)}}}
        \right]
      \qquad\text{(Isserlis's theorem)} \notag\\
    &=\sum_{\alpha_1,\ldots,\alpha_t\in[\ell]}
      \sum_{\pi\in\mathfrak S_t}
      \prod_{u=1}^t
        \delta_{i_u,j_{\pi(u)}}
        \delta_{\alpha_u,\alpha_{\pi(u)}} \notag\\
    &=\sum_{\pi\in\mathfrak S_t}
      \ell^{c(\pi)}
      \prod_{u=1}^t\delta_{i_u,j_{\pi(u)}},
    \label{eq:entrywise-gaussian-moment}
  \end{align}
  since the constraints \(\alpha_u=\alpha_{\pi(u)}\) make the column indices
  constant on each cycle of \(\pi\), giving \(\ell^{c(\pi)}\) choices.
  Our convention for tensor permutations gives
  \[
    \ip{\vct{e}_{\mathbf i}}{U_\pi\vct{e}_{\mathbf j}}
    =\prod_{u=1}^t\delta_{i_u,j_{\pi^{-1}(u)}}.
  \]
  Since inversion preserves the number of cycles and permutes
  \(\mathfrak S_t\),
  \[
    \ip{\vct{e}_{\mathbf i}}{
      \left(\sum_{\pi\in\mathfrak S_t}
        \ell^{c(\pi)}U_\pi\right)\vct{e}_{\mathbf j}}
    =\sum_{\pi\in\mathfrak S_t}
      \ell^{c(\pi)}
      \prod_{u=1}^t\delta_{i_u,j_{\pi^{-1}(u)}}
    =\sum_{\pi\in\mathfrak S_t}
      \ell^{c(\pi)}
      \prod_{u=1}^t\delta_{i_u,j_{\pi(u)}}.
  \]
  Comparing this with
  \cref{eq:entrywise-gaussian-moment} for every
  \(\mathbf i,\mathbf j\in[d]^t\) proves
  \cref{eq:gaussian-moment-identity}.

  We next use this identity to prove the commutation claims.  For every
  \(\tau\in\mathfrak S_t\), conjugation preserves cycle type and gives
  \[
    U_\tau\Gamma_{\ell,t}U_\tau^\dagger
    =\sum_{\pi\in\mathfrak S_t}
      \ell^{c(\pi)}U_{\tau\pi\tau^{-1}}
    =\Gamma_{\ell,t}.
  \]
  Thus \(\Gamma_{\ell,t}\) commutes with every \(U_\tau\).  Because every
  \(\Gamma_{\ell',t}\) is a linear combination of the \(U_\tau\), the
  operators \(\Gamma_{\ell,t}\) commute pairwise for different values of
  \(\ell\).  Finally,
  \[
    U_\pi A^{\otimes t}=A^{\otimes t}U_\pi
    \quad\Longrightarrow\quad
    \Gamma_{\ell,t}A^{\otimes t}
    =A^{\otimes t}\Gamma_{\ell,t}.
  \]
\end{proof}

\paragraph{Centered Gaussian moments.}

For a fixed value \(J=\ell\), the contribution to the first moment of the
estimator involves Gaussian integrals of
\[
  \frac1t(GG^\dagger-\ell\Id_d)\otimes(GG^\dagger)^{\otimes t}.
\]
To separate the estimator matrix from the operators acting on the \(t\)
measured samples, we temporarily work on the labelled tensor product space
\[
  \C^d_a\otimes\C^d_1\otimes\cdots\otimes\C^d_t,
\]
where the subscripts label the tensor factors.  The factors
\(1,\ldots,t\) correspond to the measured samples, while \(a\) is
an auxiliary factor carrying the matrix output of the estimator.  Thus, if
\(B\) is an output matrix and \(A,C\) act on the \(t\) input factors,
then
\[
  B\Tr(AC)
  =\Tr_{[t]}\bigl((\Id_a\otimes A)(B_a\otimes C)\bigr).
\]
Here \(B_a\) means that \(B\) acts on factor \(a\), and \(\Tr_{[t]}\) traces
out factors \(1,\ldots,t\).
After factoring out \(1/t\), taking \(B=GG^\dagger-\ell\Id_d\) and
\(C=(GG^\dagger)^{\otimes t}\) gives the first Gaussian integral below.
For the second moment, we use two auxiliary ``output'' tensor factors
\(a\) and \(b\),
one for each occurrence of the estimator matrix in its tensor second moment.
Write \(F_{uv}\) for the operator that swaps tensor factors \(u\) and
\(v\) and acts as the identity on the remaining factors, and
\(\Id_{ab}\) for the identity on the two output factors.

\begin{lemma}[Centered Gaussian moments]
  \label{lem:centered-gaussian-moments}
  Fix \(1\le\ell\le\ell_{\max}\).  Then
  \begin{equation}
    \int
      (GG^\dagger-\ell\Id_d)_a\otimes(GG^\dagger)^{\otimes t}
      \,\dd\gamma_{d,\ell}(G)
    =\sum_{i=1}^t F_{ai}(\Id_a\otimes\Gamma_{\ell,t}).
    \label{eq:one-output-centered-gaussian-moment}
  \end{equation}
  With two output factors,
  \begin{align}
    &\int
      (GG^\dagger-\ell\Id_d)_a\otimes(GG^\dagger-\ell\Id_d)_b
        \otimes(GG^\dagger)^{\otimes t}
      \,\dd\gamma_{d,\ell}(G) \notag\\
    &\quad=
      \sum_{\substack{i,j\in[t]\\i\ne j}}
        F_{ai}F_{bj}(\Id_{ab}\otimes\Gamma_{\ell,t})
      +\sum_{i=1}^t
        \bigl(F_{ab}F_{bi}+F_{ab}F_{ai}\bigr)
        (\Id_{ab}\otimes\Gamma_{\ell,t})
      +\ell F_{ab}(\Id_{ab}\otimes\Gamma_{\ell,t}).
    \label{eq:two-output-centered-gaussian-moment}
  \end{align}
  When \(t=1\), the first sum in
  \cref{eq:two-output-centered-gaussian-moment} is empty and equals zero.
\end{lemma}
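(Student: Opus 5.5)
The plan is to derive both identities from the permutation expansion of \Cref{lem:gaussian-moment-identity}, applied on a larger tensor power with the output factors included as extra labels. The centering by \(\ell\Id_d\) is then a combinatorial cancellation: it removes exactly the permutations that fix an output label.

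For \cref{eq:one-output-centered-gaussian-moment}, apply \Cref{lem:gaussian-moment-identity} with \(t+1\) factors labelled \(a,1,\ldots,t\). This gives
\[
  \int (GG^\dagger)^{\otimes(t+1)}\,\dd\gamma_{d,\ell}(G)
  =\sum_{\sigma}\ell^{c(\sigma)}U_\sigma ,
\]
where \(\sigma\) ranges over permutations of \(\{a,1,\ldots,t\}\). The subtracted term \(\ell\Id_a\otimes\Gamma_{\ell,t}\) also expands by the lemma. It equals \(\sum_{\pi\in\mathfrak S_t}\ell^{c(\pi)+1}\Id_a\otimes U_\pi\), which is precisely the sum over those \(\sigma\) fixing \(a\), since the fixed point \(a\) adds one cycle.

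What remains is the sum over \(\sigma\) with \(\sigma(a)\ne a\). Each such \(\sigma\) arises uniquely from a pair \((\pi,i)\), with \(\pi\in\mathfrak S_t\) and \(i\in[t]\), by inserting \(a\) into the cycle of \(\pi\) immediately before \(i\). This insertion does not change the cycle count, so \(c(\sigma)=c(\pi)\). In operator form I expect \(U_\sigma=F_{ai}(\Id_a\otimes U_\pi)\), or the analogous product with \(F_{ai}\) on the right. One has to check the paper's convention \(U_\pi U_\tau=U_{\pi\tau}\) to see which composition (transposition \((a\,i)\) times \(\pi\)) realizes the insertion. Summing over \(\pi\) then gives \(\sum_i F_{ai}(\Id_a\otimes\Gamma_{\ell,t})\).

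For \cref{eq:two-output-centered-gaussian-moment}, use the same approach with \(t+2\) labels \(a,b,1,\ldots,t\). Expanding \((GG^\dagger-\ell\Id)_a\otimes(GG^\dagger-\ell\Id)_b\) by inclusion--exclusion gives four terms. Each is a sum over permutations with prescribed fixed points among \(a,b\), weighted by \(\ell^{c}\), because each forced fixed point contributes exactly the factor \(\ell\) that is subtracted. So the centered integral is the sum over permutations \(\sigma\) fixing neither \(a\) nor \(b\). Deleting \(a\) and \(b\) from their cycles classifies these \(\sigma\) into three types:
\begin{itemize}
  \item[(i)] \(a\) and \(b\) are inserted before distinct labels \(i\ne j\), giving \(F_{ai}F_{bj}\);
  \item[(ii)] \(a\) and \(b\) are inserted consecutively before the same label \(i\), in either order, giving \(F_{ab}F_{bi}\) and \(F_{ab}F_{ai}\);
  \item[(iii)] \(\{a,b\}\) forms its own \(2\)-cycle, adding one cycle to \(\pi\), which gives the factor \(\ell\) and the operator \(F_{ab}\).
\end{itemize}
The main obstacle is bookkeeping. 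I will verify that this classification is a bijection, with the cycle count preserved in types (i) and (ii) and increased by one in type (iii). I will also verify that the operator products written match \(U_\sigma\) under the stated convention. For the latter, I would check the action on a simple tensor \(\vct v_a\otimes\vct v_b\otimes\vct v_1\otimes\cdots\) directly, confirming each type by computing where \(\sigma\) sends \(a\) and \(b\). When \(t=1\), type (i) is empty, consistent with the statement.
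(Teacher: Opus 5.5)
Your proposal is correct and follows the paper's proof essentially step for step: expand the larger Gaussian moment via \Cref{lem:gaussian-moment-identity}, note that centering cancels exactly the permutations fixing an output label, and recover the surviving permutations by inserting the output labels into the cycles of some \(\pi\in\mathfrak S_t\), using the same three-way classification for two outputs (distinct positions, consecutive insertion in either order, isolated two-cycle). On the convention you flagged, the paper sets \(i=\sigma(a)\) and writes \(\sigma=(a\ i)\widetilde\pi\), which by \(U_\pi U_\tau=U_{\pi\tau}\) gives \(U_\sigma=F_{ai}(\Id_a\otimes U_\pi)\). The right-multiplied form would also yield the one-output identity after summing, since \(\sum_iF_{ai}\) commutes with \(\Id_a\otimes\Gamma_{\ell,t}\).
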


\begin{proof}
  For \(\pi\in\mathfrak S_t\), let \(\widetilde\pi\) be its extension that
  fixes \(a\), and write \(\mathfrak S_{\{a\}\cup[t]}\) for the permutations
  of the labels \(a,1,\ldots,t\).  Applying the Gaussian moment identity in
  \cref{eq:gaussian-moment-identity} with the output position included gives
  \begin{align*}
    &\int
      (GG^\dagger-\ell\Id_d)_a\otimes(GG^\dagger)^{\otimes t}
      \,\dd\gamma_{d,\ell}(G)\\
    &\quad=
      \int (GG^\dagger)_a\otimes(GG^\dagger)^{\otimes t}
        \,\dd\gamma_{d,\ell}(G)
      -\ell\Id_a\otimes
        \int(GG^\dagger)^{\otimes t}\,\dd\gamma_{d,\ell}(G)\\
    &\quad=
      \sum_{\sigma\in\mathfrak S_{\{a\}\cup[t]}}
        \ell^{c(\sigma)}U_\sigma
      -\sum_{\pi\in\mathfrak S_t}
        \ell^{c(\pi)+1}(\Id_a\otimes U_\pi)\\
    &\quad=
      \sum_{\substack{\sigma\in\mathfrak S_{\{a\}\cup[t]}\\
                       \sigma(a)\ne a}}
        \ell^{c(\sigma)}U_\sigma\\
    &\quad=
      \sum_{i=1}^t\sum_{\pi\in\mathfrak S_t}
        \ell^{c(\pi)}F_{ai}(\Id_a\otimes U_\pi)\\
    &\quad=
      \sum_{i=1}^tF_{ai}(\Id_a\otimes\Gamma_{\ell,t}),
  \end{align*}
  where the third equality cancels exactly the permutations that fix
  \(a\): each is the extension \(\widetilde\pi\) of a unique
  \(\pi\in\mathfrak S_t\), with
  \(c(\widetilde\pi)=c(\pi)+1\) and
  \(U_{\widetilde\pi}=\Id_a\otimes U_\pi\).
  For the fourth equality, if \(\sigma(a)\ne a\), set
  \(i=\sigma(a)\).  Then \((a\ i)\sigma\) fixes \(a\), giving the
  unique representation \(\sigma=(a\ i)\widetilde\pi\).
  This inserts \(a\) immediately before \(i\) in its cycle of \(\pi\),
  so \(c(\sigma)=c(\pi)\) and
  \(U_\sigma=F_{ai}(\Id_a\otimes U_\pi)\).
  This proves \cref{eq:one-output-centered-gaussian-moment}.

  With two outputs, expanding both centered factors and applying the same
  Gaussian moment identity gives
  \begin{align*}
    &\int
      (GG^\dagger-\ell\Id_d)_a\otimes
      (GG^\dagger-\ell\Id_d)_b\otimes(GG^\dagger)^{\otimes t}
      \,\dd\gamma_{d,\ell}(G)\\
    &\quad=
      \int (GG^\dagger)_a\otimes(GG^\dagger)_b
        \otimes(GG^\dagger)^{\otimes t}\,\dd\gamma_{d,\ell}(G)\\
    &\qquad
      -\ell\int \Id_a\otimes(GG^\dagger)_b
        \otimes(GG^\dagger)^{\otimes t}\,\dd\gamma_{d,\ell}(G)
      -\ell\int (GG^\dagger)_a\otimes\Id_b
        \otimes(GG^\dagger)^{\otimes t}\,\dd\gamma_{d,\ell}(G)\\
    &\qquad
      +\ell^2\Id_{ab}\otimes
        \int(GG^\dagger)^{\otimes t}\,\dd\gamma_{d,\ell}(G)\\
    &\quad=
      \sum_{\sigma\in\mathfrak S_{\{a,b\}\cup[t]}}
        \ell^{c(\sigma)}U_\sigma
      -\sum_{\substack{\sigma\in\mathfrak S_{\{a,b\}\cup[t]}\\
                        \sigma(a)=a}}
        \ell^{c(\sigma)}U_\sigma
      -\sum_{\substack{\sigma\in\mathfrak S_{\{a,b\}\cup[t]}\\
                        \sigma(b)=b}}
        \ell^{c(\sigma)}U_\sigma
      +\sum_{\substack{\sigma\in\mathfrak S_{\{a,b\}\cup[t]}\\
                        \sigma(a)=a,\ \sigma(b)=b}}
        \ell^{c(\sigma)}U_\sigma\\
    &\quad=
      \sum_{\substack{\sigma\in\mathfrak S_{\{a,b\}\cup[t]}\\
                       \sigma(a)\ne a,\ \sigma(b)\ne b}}
        \ell^{c(\sigma)}U_\sigma.
  \end{align*}
  Deleting \(a\) and \(b\) from the cycles of a permutation \(\sigma\)
  in the remaining sum leaves a unique permutation
  \(\pi\in\mathfrak S_t\).  To recover \(\sigma\), there are four
  possibilities: insert them before two
  distinct input labels \(i\ne j\); insert them consecutively before one input
  label \(i\), in either order; or place them together in the two-cycle
  \((a\ b)\).  Here inserting \(a\) before \(i\) replaces an arrow
  \(k\to i\) in a cycle of \(\pi\) by \(k\to a\to i\).  Writing
  \(\widetilde\pi\) for the extension of \(\pi\) that fixes \(a\) and \(b\),
  these possibilities are listed below.
  \[
    \begin{array}{c|c|c|c}
      \text{configuration}&\sigma&c(\sigma)&U_\sigma\\ \hline
      \text{distinct positions }i\ne j
        &(a\ i)(b\ j)\widetilde\pi&c(\pi)
        &F_{ai}F_{bj}(\Id_{ab}\otimes U_\pi)\\
      a\to b\to i
        &(a\ b)(b\ i)\widetilde\pi&c(\pi)
        &F_{ab}F_{bi}(\Id_{ab}\otimes U_\pi)\\
      b\to a\to i
        &(a\ b)(a\ i)\widetilde\pi&c(\pi)
        &F_{ab}F_{ai}(\Id_{ab}\otimes U_\pi)\\
      \text{isolated two-cycle }(a\ b)
        &(a\ b)\widetilde\pi&c(\pi)+1
        &F_{ab}(\Id_{ab}\otimes U_\pi)
    \end{array}
  \]
  Summing over \(\pi\in\mathfrak S_t\) in each row gives
  \begin{align*}
    &\sum_{\substack{\sigma\in\mathfrak S_{\{a,b\}\cup[t]}\\
                     \sigma(a)\ne a,\ \sigma(b)\ne b}}
      \ell^{c(\sigma)}U_\sigma\\
    &\quad=
      \sum_{\substack{i,j\in[t]\\i\ne j}}
        F_{ai}F_{bj}(\Id_{ab}\otimes\Gamma_{\ell,t})
      +\sum_{i=1}^t
        \bigl(F_{ab}F_{bi}+F_{ab}F_{ai}\bigr)
        (\Id_{ab}\otimes\Gamma_{\ell,t})
      +\ell F_{ab}(\Id_{ab}\otimes\Gamma_{\ell,t}),
  \end{align*}
  which proves \cref{eq:two-output-centered-gaussian-moment}.

  When \(t=1\), there are no distinct input labels \(i\ne j\).  The only
  permutations of \(\{a,b,1\}\) that fix neither \(a\) nor \(b\) are
  \((a\ b\ 1)\), \((a\ 1\ b)\), and \((a\ b)\).  Since
  \(\ell_{\max}=1\) and \(\Gamma_{1,1}=\Id_d\), their contributions give
  \[
    \int
      (GG^\dagger-\Id_d)_a\otimes(GG^\dagger-\Id_d)_b
        \otimes(GG^\dagger)_1\,\dd\gamma_{d,1}(G)
    =F_{ab}F_{b1}+F_{ab}F_{a1}+F_{ab}.
  \]
  Thus only the last two terms on the right side of
  \cref{eq:two-output-centered-gaussian-moment} remain.
\end{proof}

\paragraph{The contribution from a fixed value of \(J\).}

We now use the preceding Gaussian identities to compute the contribution of
the event \(J=\ell\) to the moments of the estimator.  Fix a state \(\rho\).
For \(1\le\ell\le\ell_{\max}\), define
\[
  \omega_{\ell,t}(\rho)
  \coloneqq\Delta_{\ell,t}\rho^{\otimes t},
  \qquad
  \Theta_{\ell,\rho}
  \coloneqq
  (\Gamma_{\ell,t}^+)^{1/2}\omega_{\ell,t}(\rho)
  (\Gamma_{\ell,t}^+)^{1/2}.
\]
By \Cref{lem:gaussian-moment-support}, \(\Delta_{\ell,t}\) commutes with
\(\rho^{\otimes t}\), so \(\omega_{\ell,t}(\rho)\) is the unnormalized
positive semidefinite component of \(\rho^{\otimes t}\) on
\(\operatorname{Im}(\Delta_{\ell,t})\).  Its trace is
\[
  \Tr\bigl(\omega_{\ell,t}(\rho)\bigr)
  =\Tr\bigl(\Delta_{\ell,t}\rho^{\otimes t}\bigr)
  =\Prb_\rho[J=\ell],
\]
by \cref{eq:support-index-distribution}.  The operator
\(\Theta_{\ell,\rho}\) incorporates the pseudoinverse factors from the
POVM so that the density of outcomes with \(J=\ell\), relative to
\(\gamma_{d,\ell}\), is
\(\Tr(\Theta_{\ell,\rho}(GG^\dagger)^{\otimes t})\).

\begin{lemma}[Contribution from a fixed value of \(J\)]
  \label{lem:fixed-index-moments}
  Fix a state \(\rho\) and \(1\le\ell\le\ell_{\max}\).  On the event
  \(J=\ell\), we have \(Y=(GG^\dagger-\ell\Id_d)/t\).  If
  \(\mathbf 1_{\{J=\ell\}}\) denotes the indicator of this event, then
  \begin{equation}
    \E_\rho\left[Y\mathbf 1_{\{J=\ell\}}\right]
    =\Tr_{[t]\setminus\{1\}}\bigl(\omega_{\ell,t}(\rho)\bigr)
    \label{eq:fixed-index-first-moment}
  \end{equation}
  and
  \begin{equation}
    \begin{aligned}
      \E_\rho\left[
        (Y\otimes Y)\mathbf 1_{\{J=\ell\}}
      \right]
      &=\frac{t-1}{t}
        \Tr_{[t]\setminus\{1,2\}}\bigl(\omega_{\ell,t}(\rho)\bigr)\\
      &\quad+\frac1t\left(
          \Tr_{[t]\setminus\{1\}}\bigl(\omega_{\ell,t}(\rho)\bigr)
            \otimes\Id_d
          +\Id_d\otimes
          \Tr_{[t]\setminus\{1\}}\bigl(\omega_{\ell,t}(\rho)\bigr)
        \right)F\\
      &\quad+\frac{\ell}{t^2}\Prb_\rho[J=\ell]F,
    \end{aligned}
    \label{eq:fixed-index-second-moment}
  \end{equation}
  where \(F\) is the swap operator on \(\C^d\otimes\C^d\).
\end{lemma}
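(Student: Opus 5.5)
The proof is a Born rule computation on the event \(J=\ell\), followed by an application of \Cref{lem:centered-gaussian-moments}. The outcome density relative to \(\gamma_{d,\ell}\) is \(\Tr(\Theta_{\ell,\rho}(GG^\dagger)^{\otimes t})\); this uses cyclicity of the trace and the fact that \(\Delta_{\ell,t}\) commutes with \(\rho^{\otimes t}\). Hence
\[
  \E_\rho\bigl[Y\mathbf 1_{\{J=\ell\}}\bigr]
  =\frac1t\Tr_{[t]}\Bigl((\Id_a\otimes\Theta_{\ell,\rho})
    \int(GG^\dagger-\ell\Id_d)_a\otimes(GG^\dagger)^{\otimes t}\,\dd\gamma_{d,\ell}(G)\Bigr).
\]
The second moment has the same form, with two output factors \(a,b\) and prefactor \(1/t^2\).

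Substituting \cref{eq:one-output-centered-gaussian-moment} turns the first moment into
\[
  \frac1t\sum_{i=1}^t\Tr_{[t]}\bigl((\Id_a\otimes\Theta_{\ell,\rho})F_{ai}(\Id_a\otimes\Gamma_{\ell,t})\bigr).
\]
The key simplification is that \(\Gamma_{\ell,t}\) commutes with \(\Gamma_{\ell,t}^+\) and \(\Delta_{\ell,t}\), by \Cref{lem:gaussian-moment-identity} and the support statement of \Cref{lem:gaussian-moment-support}. Therefore
\[
  \Gamma_{\ell,t}(\Gamma^+_{\ell,t})^{1/2}\omega(\Gamma^+_{\ell,t})^{1/2}
  =\Pi_{\ell,t}\,\omega=\omega,
\]
since \(\omega_{\ell,t}(\rho)\) is supported in \(\operatorname{Im}\Delta_{\ell,t}\subseteq\mathcal V_{\ell,t}\). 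To place \(\Gamma_{\ell,t}\) next to \(\Theta_{\ell,\rho}\), I would use cyclicity of the partial trace over \([t]\) for operators acting only on the traced factors. The operator \(\Id_a\otimes\Gamma_{\ell,t}\) acts only on \([t]\), so it can be moved to the front, but \(F_{ai}\) lies in between. Instead, I would use that \(\Gamma_{\ell,t}\), being symmetric, commutes with \(\Theta_{\ell,\rho}\): both commute with \(U_\pi\), and \(\Gamma\) is a function of the \(U_\pi\) while \(\Theta\) commutes with them. Then the standard identity applies: for \(Z\) on \([t]\) commuting through, \(\Tr_{[t]}((\Id_a\otimes Z)F_{ai}(\Id_a\otimes W))=\Tr_{[t]}((\Id_a\otimes WZ)F_{ai})\). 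With \(WZ=\omega\), this gives \(\Tr_{[t]}((\Id_a\otimes\omega)F_{ai})\). The identity \(\Tr_{[t]}((\Id\otimes K)F_{ai})=\Tr_{[t]\setminus\{i\}}K\), with the surviving factor relabelled as \(a\), follows from \cref{eq:swap-trace-identity} applied factorwise.

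Permutation invariance of \(\omega\), which holds because \(\Delta_{\ell,t}\) and \(\rho^{\otimes t}\) commute with each \(U_\pi\), makes all \(t\) terms equal to \(\Tr_{[t]\setminus\{1\}}\omega\). Together with the prefactor \(1/t\), this yields \cref{eq:fixed-index-first-moment}.

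For the second moment I would apply \cref{eq:two-output-centered-gaussian-moment} and treat the three groups of terms with the same cancellation.
\begin{itemize}
\item The \(t(t-1)\) terms \(F_{ai}F_{bj}\) each give \(\Tr_{[t]\setminus\{1,2\}}\omega\), again by permutation invariance. With the prefactor \(1/t^2\) this is \((t-1)/t\) times that marginal.
\item The terms \(F_{ab}F_{bi}\) give \(F_{ab}\bigl(\Id_a\otimes\Tr_{[t]\setminus\{1\}}\omega\bigr)\) placed on factor \(b\). Likewise \(F_{ab}F_{ai}\) gives the marginal on \(a\).
\item The terms \(\ell F_{ab}\) give \(\ell\Tr(\omega)F=\ell\,\Prb_\rho[J=\ell]F\).
\end{itemize}
Finally I would check the ordering: \(F(\sigma\otimes\Id)\) versus \((\Id\otimes\sigma)F\). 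The two middle terms are symmetric, so their sum is \((\sigma\otimes\Id+\Id\otimes\sigma)F\), which matches the statement.

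I expect the main obstacle to be bookkeeping rather than ideas. The delicate points are justifying the cancellation of \(\Gamma_{\ell,t}\) against the pseudoinverse square roots inside a partial trace that also contains swap operators with the output factors, and getting the order of \(F\) and the marginals right in the middle terms. I would handle the first point by writing \(\Theta\) and \(\Gamma\) as commuting operators on \([t]\) only, and the second by checking the middle terms on product operators \(\omega=\omega_1\otimes\cdots\otimes\omega_t\) and extending by linearity.
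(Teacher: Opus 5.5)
Your proposal is correct and follows essentially the same route as the paper. Like the paper, you write the contribution using the density \(\Tr(\Theta_{\ell,\rho}(GG^\dagger)^{\otimes t})\), substitute the centered Gaussian moment identities, cancel \(\Gamma_{\ell,t}\) against the pseudoinverse factors to obtain \(\omega_{\ell,t}(\rho)\), and then evaluate the swap partial traces using permutation invariance.

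One small remark: the detour through commutation of \(\Gamma_{\ell,t}\) with \(\Theta_{\ell,\rho}\) is unnecessary. Cyclicity of \(\Tr_{[t]}\) alone already moves \(\Id_a\otimes\Gamma_{\ell,t}\) around \(F_{ai}\) so that it sits next to \(\Id_a\otimes\Theta_{\ell,\rho}\), giving \(\Gamma_{\ell,t}\Theta_{\ell,\rho}\); this is exactly the step the paper takes.
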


\begin{proof}
  The outcome law on the event \(J=\ell\) has density
  \begin{align}
    p_{\ell,\rho}(G)
    &\coloneqq\Tr\left(
      \rho^{\otimes t}\Delta_{\ell,t}(\Gamma_{\ell,t}^+)^{1/2}
      (GG^\dagger)^{\otimes t}(\Gamma_{\ell,t}^+)^{1/2}\Delta_{\ell,t}
    \right) \notag\\
    &=\Tr\left(
      \omega_{\ell,t}(\rho)(\Gamma_{\ell,t}^+)^{1/2}
      (GG^\dagger)^{\otimes t}(\Gamma_{\ell,t}^+)^{1/2}
    \right) \notag\\
    &=\Tr\left(
      \Theta_{\ell,\rho}(GG^\dagger)^{\otimes t}
    \right)
    \label{eq:fixed-index-outcome-law}
  \end{align}
  relative to \(\gamma_{d,\ell}\).  This density has total mass
  \(\Prb_\rho[J=\ell]\); dividing by this probability, when positive, gives
  the conditional density of \(G\).

  Combining
  \cref{eq:fixed-index-outcome-law,eq:one-output-centered-gaussian-moment},
  we rewrite the contribution to the first moment as
  \[
    \E_\rho\left[Y\mathbf 1_{\{J=\ell\}}\right]
    =\frac1t\sum_{i=1}^t
      \Tr_{[t]}\left(
        (\Id_a\otimes\Theta_{\ell,\rho})F_{ai}
        (\Id_a\otimes\Gamma_{\ell,t})
      \right).
  \]
  To simplify each summand, cyclicity on the traced-out input factors gives
  \[
    \Tr_{[t]}\left(
      (\Id_a\otimes\Theta_{\ell,\rho})F_{ai}
      (\Id_a\otimes\Gamma_{\ell,t})
    \right)
    =\Tr_{[t]}\left(
      F_{ai}(\Id_a\otimes\Gamma_{\ell,t}\Theta_{\ell,\rho})
    \right).
  \]
  Thus we need to evaluate
  \(\Gamma_{\ell,t}\Theta_{\ell,\rho}\).  By
  \Cref{lem:gaussian-moment-identity}, \(\Gamma_{\ell,t}\) commutes with
  \(\rho^{\otimes t}\) and, when \(\ell>1\), with
  \(\Gamma_{\ell-1,t}\), hence with its support projector
  \(\Pi_{\ell-1,t}\).  The case \(\ell=1\) is immediate because
  \(\Pi_{0,t}=0\).  It also commutes with its own support projector
  \(\Pi_{\ell,t}\).  Thus \(\Gamma_{\ell,t}\) and
  \((\Gamma_{\ell,t}^+)^{1/2}\) commute with
  \(\Delta_{\ell,t}\) and \(\omega_{\ell,t}(\rho)\).  Moreover,
  \(\Delta_{\ell,t}\preceq\Pi_{\ell,t}\).  Hence
  \[
    \Gamma_{\ell,t}\Theta_{\ell,\rho}
    =\Gamma_{\ell,t}\Gamma_{\ell,t}^+\omega_{\ell,t}(\rho)
    =\Pi_{\ell,t}\omega_{\ell,t}(\rho)
    =\omega_{\ell,t}(\rho).
  \]
  Consequently, each summand is
  \[
    \Tr_{[t]}\left(
      F_{ai}(\Id_a\otimes\omega_{\ell,t}(\rho))
    \right).
  \]
  The swap \(F_{ai}\) moves the \(i\)-th input factor to the output factor
  \(a\) before the input factors are traced out.  Therefore,
  \[
    \Tr_{[t]}\left(
      F_{ai}(\Id_a\otimes\omega_{\ell,t}(\rho))
    \right)
    =\Tr_{[t]\setminus\{i\}}\bigl(\omega_{\ell,t}(\rho)\bigr)
    =\Tr_{[t]\setminus\{1\}}\bigl(\omega_{\ell,t}(\rho)\bigr),
  \]
  where the last equality follows from permutation invariance.  All \(t\)
  summands are therefore equal, proving
  \cref{eq:fixed-index-first-moment}.

  For the second moment, substituting
  \cref{eq:two-output-centered-gaussian-moment} and following the same
  simplification gives
  \begin{align*}
    \E_\rho\left[
      (Y\otimes Y)\mathbf 1_{\{J=\ell\}}
    \right]
    &=\frac1{t^2}
      \sum_{\substack{i,j\in[t]\\i\ne j}}
      \Tr_{[t]}\left(
        F_{ai}F_{bj}(\Id_{ab}\otimes\omega_{\ell,t}(\rho))
      \right)\\
    &\quad+\frac1{t^2}\sum_{i=1}^t
      \Tr_{[t]}\left(
        (F_{ab}F_{bi}+F_{ab}F_{ai})
        (\Id_{ab}\otimes\omega_{\ell,t}(\rho))
      \right)\\
    &\quad+\frac{\ell}{t^2}
      \Tr_{[t]}\left(
        F_{ab}(\Id_{ab}\otimes\omega_{\ell,t}(\rho))
      \right).
  \end{align*}
  For \(t\ge2\), the three kinds of terms evaluate as follows:
  \begin{align*}
    \Tr_{[t]}\left(
      F_{ai}F_{bj}(\Id_{ab}\otimes\omega_{\ell,t}(\rho))
    \right)
      &=\Tr_{[t]\setminus\{1,2\}}\bigl(\omega_{\ell,t}(\rho)\bigr)
        \quad(i\ne j),\\
    \Tr_{[t]}\left(
      (F_{ab}F_{bi}+F_{ab}F_{ai})
      (\Id_{ab}\otimes\omega_{\ell,t}(\rho))
    \right)
      &=\begin{aligned}[t]
        \Bigl(&
          \Tr_{[t]\setminus\{1\}}\bigl(\omega_{\ell,t}(\rho)\bigr)
            \otimes\Id_d\\
        &+\Id_d\otimes
          \Tr_{[t]\setminus\{1\}}\bigl(\omega_{\ell,t}(\rho)\bigr)
        \Bigr)F_{ab},
      \end{aligned}\\
    \Tr_{[t]}\left(
      F_{ab}(\Id_{ab}\otimes\omega_{\ell,t}(\rho))
    \right)
      &=\Tr\bigl(\omega_{\ell,t}(\rho)\bigr)F_{ab}
       =\Prb_\rho[J=\ell]F_{ab}.
  \end{align*}
  The first sum contains \(t(t-1)\) ordered pairs \(i\ne j\), giving the
  coefficient \(t(t-1)/t^2=(t-1)/t\).  When \(t=1\), this coefficient is
  zero and the sum is empty, so the first term is omitted.  The remaining
  terms are evaluated in the same way, so
  \cref{eq:fixed-index-second-moment} holds for all \(t\ge1\).
\end{proof}

\begin{proof}[Proof of \Cref{prop:gaussian-block-estimator}]
  By \Cref{lem:gaussian-block-povm-normalization}, the densities in
  \cref{eq:gaussian-block-povm} form a POVM.  Fix a state \(\rho\).  We first
  sum the contributions from
  \Cref{lem:fixed-index-moments} over \(\ell\).  The projectors
  \(\Delta_{\ell,t}\) sum to the identity, so
  \[
    \sum_{\ell=1}^{\ell_{\max}}\omega_{\ell,t}(\rho)
    =\rho^{\otimes t}.
  \]
  Taking one- and two-factor marginals of this identity gives
  \[
    \sum_{\ell=1}^{\ell_{\max}}
      \Tr_{[t]\setminus\{1\}}\bigl(\omega_{\ell,t}(\rho)\bigr)
      =\rho,
    \qquad
    \sum_{\ell=1}^{\ell_{\max}}
      \Tr_{[t]\setminus\{1,2\}}\bigl(\omega_{\ell,t}(\rho)\bigr)
      =\rho^{\otimes2}\quad\text{when \(t\ge2\)}.
  \]
  Summing \cref{eq:fixed-index-first-moment,eq:fixed-index-second-moment}
  over \(\ell\) gives
  \begin{align*}
    \E_\rho[Y]
    &=\sum_{\ell=1}^{\ell_{\max}}
      \Tr_{[t]\setminus\{1\}}\bigl(\omega_{\ell,t}(\rho)\bigr)
    =\rho,\\
    \E_\rho[Y\otimes Y]
    &=\frac{t-1}{t}\rho^{\otimes2}
      +\frac1t(\rho\otimes\Id_d+\Id_d\otimes\rho)F
      +\frac1{t^2}
        \left(
          \sum_{\ell=1}^{\ell_{\max}}
            \ell\Tr(\omega_{\ell,t}(\rho))
        \right)F.
  \end{align*}
  By \cref{eq:support-index-distribution},
  \[
    \sum_{\ell=1}^{\ell_{\max}}
      \ell\Tr(\omega_{\ell,t}(\rho))
    =\sum_{\ell=1}^{\ell_{\max}}\ell\Prb_\rho[J=\ell]
    =\E_\rho[J].
  \]

  Multiplying the second moment identity by \(H\otimes H\), taking the trace,
  and using \cref{eq:swap-trace-identity} gives
  \begin{align*}
    \E_\rho\bigl[\Tr(HY)^2\bigr]
    &=\frac{t-1}{t}\bigl(\Tr(H\rho)\bigr)^2
      +\frac2t\Tr(\rho H^2)
      +\frac{\E_\rho[J]}{t^2}\norm H_\F^2,\\
    \operatorname{Var}_\rho\bigl(\Tr(HY)\bigr)
    &=\E_\rho\bigl[\Tr(HY)^2\bigr]
      -\bigl(\Tr(H\rho)\bigr)^2\\
    &=-\frac1t\bigl(\Tr(H\rho)\bigr)^2
      +\frac2t\Tr(\rho H^2)
      +\frac{\E_\rho[J]}{t^2}\norm H_\F^2\\
    &\le\frac2t\Tr(\rho H^2)
      +\frac{\E_\rho[J]}{t^2}\norm H_\F^2.
  \end{align*}
  If \(\rho\) has rank at most \(r\),
  \Cref{lem:gaussian-support-index-mean} gives
  \(\E_\rho[J]=O(\min\{r,\sqrt t\})\).

  It remains to prove the assertion about \(\Pi^\perp G\).  Let \(\Pi\) be
  any orthogonal projector satisfying \(\Pi\rho=\rho\).
  For a fixed \(\ell\) with \(\Prb_\rho[J=\ell]>0\), we compare the Gaussian
  reference law \(\gamma_{d,\ell}\) with the outcome law of \(G\) conditional
  on \(J=\ell\).  We show that the density of this outcome law relative to
  \(\gamma_{d,\ell}\) depends only on \(\Pi G\).

  By
  \Cref{lem:gaussian-moment-support,lem:gaussian-moment-identity}, both
  \(\Delta_{\ell,t}\) and \((\Gamma_{\ell,t}^+)^{1/2}\) commute with
  \(\Pi^{\otimes t}\).  Therefore,
  \[
    \rho^{\otimes t}
      =\Pi^{\otimes t}\rho^{\otimes t}\Pi^{\otimes t}
    \quad\Longrightarrow\quad
    \omega_{\ell,t}(\rho)
      =\Pi^{\otimes t}\omega_{\ell,t}(\rho)\Pi^{\otimes t}
    \quad\Longrightarrow\quad
    \Theta_{\ell,\rho}
    =\Pi^{\otimes t}\Theta_{\ell,\rho}\Pi^{\otimes t}.
  \]
  The density in \cref{eq:fixed-index-outcome-law} satisfies
  \[
    p_{\ell,\rho}(G)
    =\Tr\left(
      \Theta_{\ell,\rho}(\Pi GG^\dagger\Pi)^{\otimes t}
    \right)
    =p_{\ell,\rho}(\Pi G).
  \]
  Put \(G_\parallel\coloneqq\Pi G\) and
  \(G_\perp\coloneqq\Pi^\perp G\).

  Let \(\gamma_{\Pi,\ell}\) and \(\gamma_{\Pi^\perp,\ell}\) denote the
  standard complex Gaussian laws on matrices whose columns lie in
  \(\operatorname{Im}(\Pi)\) and \(\operatorname{Im}(\Pi^\perp)\),
  respectively.  Under \(\gamma_{d,\ell}\), the two components are independent,
  and their joint law factors as
  \[
    \dd\gamma_{d,\ell}(G)
    =\dd\gamma_{\Pi,\ell}(G_\parallel)
      \dd\gamma_{\Pi^\perp,\ell}(G_\perp).
  \]
  Therefore, the conditional distribution of
  \((G_\parallel,G_\perp)\) given \(J=\ell\) is
  \[
    \frac{p_{\ell,\rho}(G_\parallel)}
      {\Prb_\rho[J=\ell]}
    \,\dd\gamma_{\Pi,\ell}(G_\parallel)
    \,\dd\gamma_{\Pi^\perp,\ell}(G_\perp).
  \]
  The factor involving \(p_{\ell,\rho}\) depends only on \(G_\parallel\).
  Thus, under the outcome law conditional on \(J=\ell\),
  \(G_\perp=\Pi^\perp G\) is independent of \(G_\parallel=\Pi G\) and has
  the same law \(\gamma_{\Pi^\perp,\ell}\) as under the Gaussian reference
  measure.  Equivalently, its columns
  are independent standard complex Gaussian vectors in
  \(\operatorname{Im}(\Pi^\perp)\).  This completes the proof.
\end{proof}

\subsection{Analysis of the estimator}
\label{sec:upper-bound-aggregation}

With \Cref{prop:gaussian-block-estimator} established, we now analyze the
estimator defined in
\cref{eq:upper-bound-protocol-average,eq:upper-bound-protocol-output}.  Choose
an \(r\)-dimensional subspace containing the support of the unknown state
\(\rho\), and
let \(\Pi\) be the orthogonal projector onto this subspace.  We write the error
\(E\coloneqq\overline Y-\rho\) in the four blocks determined by the direct sum
decomposition \(\C^d=\operatorname{Im}(\Pi)\oplus\operatorname{Im}(\Pi^\perp)\).
The scalar variance bound in
\cref{eq:gaussian-block-scalar-variance} controls the supported and
off-diagonal blocks in Frobenius norm.  The conditional law of
\(\Pi^\perp G\) controls the block \(\Pi^\perp E\Pi^\perp\) in operator
norm.  The following projection bound combines these three estimates.
The protocol does not need to know \(\Pi\); this decomposition is used only
in the analysis.

\subsubsection{Projection onto low-rank states}

\begin{lemma}[Rank-constrained Frobenius projection]
  \label{lem:rank-constrained-frobenius-projection}
  Let \(\rho\in\cD_r(\C^d)\), let \(M\in\C^{d\times d}\) be Hermitian, and put
  \(E\coloneqq M-\rho\).  Let
  \[
    \widehat\rho\in
    \underset{\sigma\in\cD_r(\C^d)}{\arg\min}\,
      \norm{M-\sigma}_\F.
  \]
  If \(\Pi\) is a rank-\(r\) orthogonal projector satisfying
  \(\Pi\rho=\rho\), then
  \begin{equation}
    \norm{\widehat\rho-\rho}_1
    \le 2\sqrt{2r}\left(
      \norm{\Pi E\Pi}_\F^2
      +2\norm{\Pi^\perp E\Pi}_\F^2
      +2r\norm{\Pi^\perp E\Pi^\perp}_{\mathrm{op}}^2
    \right)^{1/2}.
    \label{eq:rank-constrained-projection-bound}
  \end{equation}
\end{lemma}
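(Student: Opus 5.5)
We compare \(\widehat\rho\) with \(\rho\) through the optimality of \(\widehat\rho\), and then convert Frobenius error into trace norm error using the low rank of both states. The difference \(\widehat\rho-\rho\) has rank at most \(2r\), so
\[
  \norm{\widehat\rho-\rho}_1\le\sqrt{2r}\,\norm{\widehat\rho-\rho}_\F .
\]
It therefore suffices to show
\[
  \norm{\widehat\rho-\rho}_\F^2\le4\Phi,
  \qquad
  \Phi\coloneqq\norm{\Pi E\Pi}_\F^2+2\norm{\Pi^\perp E\Pi}_\F^2+2r\norm{\Pi^\perp E\Pi^\perp}_{\mathrm{op}}^2 .
\]

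\emph{Step 1 (basic inequality).} Write \(\Delta\coloneqq\widehat\rho-\rho\). Since \(\rho\) is feasible, \(\norm{M-\widehat\rho}_\F^2\le\norm{M-\rho}_\F^2\). Substituting \(M-\widehat\rho=E-\Delta\) and expanding gives
\[
  \norm\Delta_\F^2\le2\Tr(E\Delta).
\]
Thus the task is to bound the linear functional \(\Tr(E\Delta)\) by \(\norm\Delta_\F\sqrt\Phi\) times a constant. That suffices, because it yields \(\norm\Delta_\F\le2\sqrt\Phi\), which is exactly the required bound \(\norm{\widehat\rho-\rho}_\F^2\le4\Phi\). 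I will split \(\Tr(E\Delta)\) into the four blocks relative to \(\Pi\oplus\Pi^\perp\).

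\emph{Step 2 (blocks).} The three blocks that touch \(\operatorname{Im}\Pi\) are handled by Cauchy--Schwarz in Frobenius norm. Writing \(\Delta_{11}\coloneqq\Pi\Delta\Pi\), \(\Delta_{21}\coloneqq\Pi^\perp\Delta\Pi\), and so on,
\[
  \Tr(\Pi E\Pi\,\Delta_{11})+2\operatorname{Re}\Tr(\Pi E\Pi^\perp\,\Delta_{21})
  \le\norm{\Pi E\Pi}_\F\norm{\Delta_{11}}_\F+2\norm{\Pi^\perp E\Pi}_\F\norm{\Delta_{21}}_\F .
\]
For the complementary block \(\Delta_{22}=\Pi^\perp\widehat\rho\Pi^\perp\), I use that \(\Pi^\perp\rho=0\), so \(\Delta_{22}\succeq0\) has rank at most \(r\). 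Hence
\[
  \abs{\Tr(\Pi^\perp E\Pi^\perp\Delta_{22})}
  \le\norm{\Pi^\perp E\Pi^\perp}_{\mathrm{op}}\norm{\Delta_{22}}_1
  \le\sqrt r\,\norm{\Pi^\perp E\Pi^\perp}_{\mathrm{op}}\norm{\Delta_{22}}_\F .
\]
Cauchy--Schwarz over the vector of the three block terms gives
\[
  \Tr(E\Delta)\le\sqrt{\Phi'}\,\bigl(\norm{\Delta_{11}}_\F^2+2\norm{\Delta_{21}}_\F^2+\norm{\Delta_{22}}_\F^2\bigr)^{1/2}=\sqrt{\Phi'}\,\norm\Delta_\F ,
\]
where \(\Phi'\coloneqq\norm{\Pi E\Pi}_\F^2+2\norm{\Pi^\perp E\Pi}_\F^2+r\norm{\Pi^\perp E\Pi^\perp}^2_{\mathrm{op}}\le\Phi\). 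The weights match because the off-diagonal block of \(\Delta\) appears twice in \(\norm\Delta_\F^2\). Combining with Step 1 gives \(\norm\Delta_\F^2\le2\sqrt\Phi\,\norm\Delta_\F\), hence \(\norm\Delta_\F\le2\sqrt\Phi\) and \(\norm{\widehat\rho-\rho}_1\le2\sqrt{2r\Phi}\), which is the claim with room to spare in the constants.

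\emph{Main obstacle.} The delicate point is the rank control of \(\Pi^\perp\widehat\rho\Pi^\perp\). This is where it matters that the minimization is over rank-\(\le r\) states. Without that constraint the complementary block could carry large Frobenius mass spread over \(d-r\) dimensions, and the operator norm of \(\Pi^\perp E\Pi^\perp\) would not suffice. The rank bound follows because compressing a PSD matrix of rank at most \(r\) preserves both positivity and the rank bound.
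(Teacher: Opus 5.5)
Your proof is correct and follows the same route as the paper. Both arguments use the basic inequality \(\norm{\Delta}_\F^2\le 2\Tr(E\Delta)\), bound the blocks touching \(\operatorname{Im}(\Pi)\) by Frobenius Cauchy--Schwarz and \(\Pi^\perp\Delta\Pi^\perp\) by trace/operator-norm duality, apply a weighted Cauchy--Schwarz matched to \(\norm{\Delta}_\F^2\), and finish with \(\norm{\Delta}_1\le\sqrt{2r}\norm{\Delta}_\F\). The only difference is that you bound \(\rank(\Pi^\perp\Delta\Pi^\perp)=\rank(\Pi^\perp\widehat\rho\Pi^\perp)\le r\), whereas the paper uses \(\rank(\Delta)\le 2r\); this gives you the slightly sharper weight \(r\) instead of \(2r\) on the operator-norm term.
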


\begin{proof}
  Put \(D\coloneqq\widehat\rho-\rho\).  Since \(\rho\) is feasible in
  the minimization that defines \(\widehat\rho\),
  \[
    \norm{E-D}_\F^2\le\norm E_\F^2.
  \]
  Expanding the left-hand side gives
  \begin{equation}
    \norm D_\F^2\le2\Tr(ED).
    \label{eq:rank-projection-optimality}
  \end{equation}
  Decomposing the inner product according to
  \(\Id_d=\Pi+\Pi^\perp\), and using the property that \(E\) and \(D\) are
  Hermitian, gives
  \begin{align}
    \Tr(ED)
    &=\Tr(\Pi E\Pi D\Pi)
      +2\operatorname{Re}\left(
        \Tr(\Pi E\Pi^\perp D\Pi)
      \right)
      +\Tr(\Pi^\perp E\Pi^\perp D\Pi^\perp) \notag\\
    &\le
      \norm{\Pi E\Pi}_\F\norm{\Pi D\Pi}_\F
      +2\norm{\Pi^\perp E\Pi}_\F
        \norm{\Pi^\perp D\Pi}_\F
      +\norm{\Pi^\perp E\Pi^\perp}_{\mathrm{op}}
        \norm{\Pi^\perp D\Pi^\perp}_1.
    \label{eq:rank-projection-block-trace-bound}
  \end{align}
  Both \(\rho\) and \(\widehat\rho\) have rank at most \(r\), so
  \begin{equation}
    \rank(D)\le\rank(\rho)+\rank(\widehat\rho)\le2r,
    \qquad
    \norm{\Pi^\perp D\Pi^\perp}_1
    \le\sqrt{2r}\norm{\Pi^\perp D\Pi^\perp}_\F.
    \label{eq:rank-projection-block-trace-norm}
  \end{equation}
  The four blocks are orthogonal in the Frobenius (Hilbert--Schmidt) inner
  product, so
  \[
    \norm D_\F^2
    =\norm{\Pi D\Pi}_\F^2
      +2\norm{\Pi^\perp D\Pi}_\F^2
      +\norm{\Pi^\perp D\Pi^\perp}_\F^2.
  \]
  Using \cref{eq:rank-projection-block-trace-norm} in the right-hand side of
  \cref{eq:rank-projection-block-trace-bound} and applying Cauchy--Schwarz
  to the three products gives
  \begin{align*}
    \Tr(ED)
    &\le\left(
      \norm{\Pi E\Pi}_\F^2
      +2\norm{\Pi^\perp E\Pi}_\F^2
      +2r\norm{\Pi^\perp E\Pi^\perp}_{\mathrm{op}}^2
    \right)^{1/2}\\
    &\qquad\cdot\left(
      \norm{\Pi D\Pi}_\F^2
      +2\norm{\Pi^\perp D\Pi}_\F^2
      +\norm{\Pi^\perp D\Pi^\perp}_\F^2
    \right)^{1/2}\\
    &=\left(
      \norm{\Pi E\Pi}_\F^2
      +2\norm{\Pi^\perp E\Pi}_\F^2
      +2r\norm{\Pi^\perp E\Pi^\perp}_{\mathrm{op}}^2
    \right)^{1/2}\norm D_\F.
  \end{align*}
  If \(D=0\), the result is immediate.  Otherwise, combining this
  estimate with \cref{eq:rank-projection-optimality} and dividing by
  \(\norm D_\F\) gives
  \[
    \norm D_\F
    \le2\left(
      \norm{\Pi E\Pi}_\F^2
      +2\norm{\Pi^\perp E\Pi}_\F^2
      +2r\norm{\Pi^\perp E\Pi^\perp}_{\mathrm{op}}^2
    \right)^{1/2}.
  \]
  Finally, \(\norm D_1\le\sqrt{2r}\norm D_\F\), which proves
  \cref{eq:rank-constrained-projection-bound}.
\end{proof}

\subsubsection{Bounds for the four blocks}

We next bound the operator norm of the block
\(\Pi^\perp E\Pi^\perp\).  For a standard complex Gaussian matrix
\(Z\in\C^{m\times k}\), we have \(\E[ZZ^\dagger]=k\Id_m\), so
\(ZZ^\dagger-k\Id_m\) in \cref{eq:gaussian-covariance-operator-norm} below is the
deviation of \(ZZ^\dagger\) from its expectation. The following bound on the expected deviation follows as in the proof of Theorem~4.6.1 and Remark~4.6.2 (in Sec.~4.6) of the text by Vershynin~\cite{Vershynin26-high-dim-probability}. (The results are stated for real subgaussian matrices, but can be adapted to the complex case in a straightforward manner.)

\begin{lemma}
  \label{lem:gaussian-covariance-operator-norm}
  Let \(m,k\) be positive integers, and let
  \(Z\in\C^{m\times k}\) be a standard complex Gaussian matrix.  Then
  \begin{equation}
    \E\norm{ZZ^\dagger-k\Id_m}_{\mathrm{op}}^2
    =O(mk+m^2).
    \label{eq:gaussian-covariance-operator-norm}
  \end{equation}
\end{lemma}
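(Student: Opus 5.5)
We must show \(\E\norm{ZZ^\dagger-k\Id_m}_{\mathrm{op}}^2=O(mk+m^2)\). The plan is a net argument with a Bernstein tail bound, and then integration of the tail.

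First, reduce to real quadratic forms. Let \(\mathcal N\) be a \(1/4\)-net of the unit sphere of \(\C^m\), regarded as \(\R^{2m}\). Such a net exists with \(\abs{\mathcal N}\le 9^{2m}\). For the Hermitian matrix \(A\coloneqq ZZ^\dagger-k\Id_m\), the standard net lemma gives
\[
  \norm A_{\mathrm{op}}\le 2\max_{\vct u\in\mathcal N}\abs{\ip{\vct u}{A\vct u}}.
\]
Fix a unit vector \(\vct u\). Then \(\ip{\vct u}{ZZ^\dagger\vct u}=\norm{Z^\dagger\vct u}^2=\sum_{j=1}^k\abs{g_j}^2\), where \(g_j=\ip{\vct z_j}{\vct u}\) and \(\vct z_j\) is the \(j\)-th column of \(Z\). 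By unitary invariance of the standard complex Gaussian law, the \(g_j\) are i.i.d.\ standard complex Gaussians. Hence each \(\abs{g_j}^2\) is an exponential random variable with mean one. This is the only step where complexity enters, and it makes the adaptation from the real case transparent.

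Next, control the fluctuation of the sum. The quantity \(\sum_j(\abs{g_j}^2-1)\) is a sum of \(k\) independent centered subexponential variables with \(\psi_1\)-norms bounded by a universal constant. Bernstein's inequality then gives
\[
  \Prb\left[\abs{\ip{\vct u}{A\vct u}}>s\right]\le 2\exp\left(-c\min\left\{\frac{s^2}{k},s\right\}\right).
\]
A union bound over the net yields
\[
  \Prb\left[\norm A_{\mathrm{op}}>2s\right]\le 2\cdot 81^m\exp\left(-c\min\left\{\frac{s^2}{k},s\right\}\right).
\]
Now take \(s=C(\sqrt{mk}+m)+x\) with a large universal constant \(C\). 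With this choice, \(c\min\{s^2/k,s\}\) exceeds \(m\log 81\) plus \(c'\min\{x^2/k,x\}\). This gives the tail bound
\[
  \Prb\left[\norm A_{\mathrm{op}}>2C(\sqrt{mk}+m)+2x\right]\le 2\exp\left(-c'\min\left\{\frac{x^2}{k},x\right\}\right).
\]

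Finally, integrate the tail. Write \(\E\norm A_{\mathrm{op}}^2=\int_0^\infty 2y\,\Prb[\norm A_{\mathrm{op}}>y]\,\dd y\) and split the integral at \(y_0=2C(\sqrt{mk}+m)\). The part below \(y_0\) contributes at most \(y_0^2=O(mk+m^2)\). Above \(y_0\), the Gaussian regime of the tail contributes \(O(y_0\sqrt k+k)\), and the exponential regime contributes \(O(y_0+1)\). Since \(k\le mk\) and \(y_0\sqrt k\le y_0^2\) when \(m\ge1\), all these terms are \(O(mk+m^2)\), which gives the claimed bound.

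The main obstacle is checking the complex modifications of Vershynin's argument: the net size over \(\R^{2m}\), the net-approximation inequality for Hermitian quadratic forms, and the exponential law of \(\abs{g_j}^2\). None of these is deep, but each needs a sentence. The tail integration itself is routine.
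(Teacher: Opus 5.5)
Your proof is correct and is essentially the argument the paper relies on. The paper gives no separate proof: it cites the net-plus-Bernstein proof of Vershynin's Theorem~4.6.1 and Remark~4.6.2 and notes that the complex adaptation is straightforward. You carry out exactly that adaptation (the net in \(\R^{2m}\), the quadratic-form net inequality for Hermitian matrices, and \(\abs{g_j}^2\) exponential with mean one by unitary invariance), and your tail integration correctly gives the second-moment bound \(O(mk+m^2)\) that the lemma states.
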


We now apply \cref{eq:gaussian-block-scalar-variance}.  It gives the first two
bounds below.  For the third, we use the conditional law from
\Cref{prop:gaussian-block-estimator}: the columns of \(\Pi^\perp G\) remain
independent standard complex Gaussian vectors in
\(\operatorname{Im}(\Pi^\perp)\) after conditioning on \(J\).

\begin{lemma}[Error in the four blocks]
  \label{lem:gaussian-estimator-support-blocks}
  Consider the measurement and averaging step of
  \Cref{sec:gaussian-block-estimator-target} with arbitrary positive integers
  \(B,s\), and let \(\rho\in\cD_r(\C^d)\).  Let
  \(\Pi\) be a rank-\(r\) orthogonal projector satisfying
  \(\Pi\rho=\rho\).  Put \(m\coloneqq d-r\), and apply the joint measurement
  \(\mathsf M_s\) independently \(B\) times, each time on \(s\) fresh samples
  with joint state \(\rho^{\otimes s}\).  For \(b\in[B]\), write
  \((J_b,G_b)\) for the \(b\)-th outcome and
  \(Y_b\coloneqq(G_bG_b^\dagger-J_b\Id_d)/s\).  Set
  \[
    \overline Y\coloneqq\frac1B\sum_{b=1}^B Y_b,
    \qquad E\coloneqq\overline Y-\rho,
    \qquad N\coloneqq Bs.
  \]
  We have
  \begin{align}
    \E_\rho\norm{\Pi E\Pi}_\F^2
    &\le\frac{2r}{N}+\frac{r^2\E_\rho[J_1]}{Ns},
    \label{eq:upper-support-block-risk}\\
    \E_\rho\norm{\Pi^\perp E\Pi}_\F^2
    &\le\frac{m}{N}+\frac{mr\E_\rho[J_1]}{Ns},
    \label{eq:upper-cross-block-risk}\\
    \E_\rho\norm{\Pi^\perp E\Pi^\perp}_{\mathrm{op}}^2
    &=O\left(\frac{m\E_\rho[J_1]}{Ns}+\frac{m^2}{N^2}\right).
    \label{eq:upper-complement-block-risk}
  \end{align}
\end{lemma}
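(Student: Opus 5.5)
For the two Frobenius blocks I will expand each squared norm into squared traces against an orthonormal basis of test matrices and apply the scalar variance bound \cref{eq:gaussian-block-scalar-variance} from \Cref{prop:gaussian-block-estimator}, with \(t=s\). By unbiasedness and independence of the \(B\) rounds, every Hermitian \(H\) satisfies
\[
  \E_\rho\bigl[\Tr(HE)^2\bigr]
  =\frac1B\operatorname{Var}_\rho\bigl(\Tr(HY_1)\bigr)
  \le\frac2N\Tr(\rho H^2)+\frac{\E_\rho[J_1]}{Ns}\norm H_\F^2 .
\]
\emph{Block \(\Pi E\Pi\).} Choose an orthonormal basis \(\{H_k\}_{k=1}^{r^2}\) of the real space of Hermitian operators supported on \(\operatorname{Im}(\Pi)\), for instance the Hermitian unit matrices built from an eigenbasis of \(\Pi\). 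Then \(\norm{\Pi E\Pi}_\F^2=\sum_k\Tr(H_kE)^2\), and each \(\norm{H_k}_\F=1\). The identity \(\sum_kH_k^2=r\,\Pi\) holds because it is basis-independent: \(\sum_kH_k\otimes H_k\) is the swap restricted to \(\operatorname{Im}(\Pi)^{\otimes2}\). Hence \(\sum_k\Tr(\rho H_k^2)=r\). Summing over \(k\) gives \(2r/N+r^2\E_\rho[J_1]/(Ns)\), which is \cref{eq:upper-support-block-risk}.

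\emph{Block \(\Pi^\perp E\Pi\).} Take orthonormal bases \(\vct u_j\) of \(\operatorname{Im}(\Pi)\) and \(\vct f_i\) of \(\operatorname{Im}(\Pi^\perp)\). Use the Hermitian test matrices \(K_{ij}=(|\vct f_i\rangle\langle\vct u_j|+|\vct u_j\rangle\langle\vct f_i|)/\sqrt2\) and \(K'_{ij}=\ii(|\vct f_i\rangle\langle\vct u_j|-|\vct u_j\rangle\langle\vct f_i|)/\sqrt2\), both of unit Frobenius norm. Because \(E\) is Hermitian, \(\Tr(K_{ij}E)^2+\Tr(K'_{ij}E)^2=2\abs{\langle\vct f_i,E\vct u_j\rangle}^2\), so
\[
  \norm{\Pi^\perp E\Pi}_\F^2=\frac12\sum_{i,j}\bigl(\Tr(K_{ij}E)^2+\Tr(K'_{ij}E)^2\bigr).
\]
For the \(\rho\)-terms, \(K_{ij}^2\) restricted to \(\operatorname{Im}(\Pi)\) equals \(|\vct u_j\rangle\langle\vct u_j|/2\), and \(\rho\) lives there, so \(\Tr(\rho K_{ij}^2)=\langle\vct u_j,\rho\vct u_j\rangle/2\), and likewise for \(K'_{ij}\). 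Summing over \(i\in[m]\), \(j\in[r]\) gives
\[
  \frac12\Bigl(\frac2N\cdot m+\frac{\E_\rho[J_1]}{Ns}\cdot2mr\Bigr)
  =\frac mN+\frac{mr\E_\rho[J_1]}{Ns},
\]
which is \cref{eq:upper-cross-block-risk}.

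\emph{Block \(\Pi^\perp E\Pi^\perp\).} Since \(\Pi^\perp\rho\Pi^\perp=0\), we have \(\Pi^\perp E\Pi^\perp=\Pi^\perp\overline Y\Pi^\perp\). In coordinates of an orthonormal basis of \(\operatorname{Im}(\Pi^\perp)\) this equals
\[
  \frac1{Bs}\sum_b\bigl(Z_bZ_b^\dagger-J_b\Id_m\bigr)=\frac1N\bigl(ZZ^\dagger-K\Id_m\bigr),
\]
where \(Z_b\) is the coordinate matrix of \(\Pi^\perp G_b\), \(Z=[Z_1\cdots Z_B]\), and \(K=\sum_bJ_b\).

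This is the step I expect to be the main obstacle. I need \(Z\) to be standard Gaussian conditional on all of \(J_1,\dots,J_B\) jointly. This follows from the conditional law in \Cref{prop:gaussian-block-estimator} together with independence across rounds: conditional on \(J_b=\ell_b\), each \(Z_b\) is standard Gaussian in \(\C^{m\times\ell_b}\), and the rounds are independent, so the \(Z_b\) are conditionally independent and their concatenation is a standard complex Gaussian matrix in \(\C^{m\times K}\).

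Applying \Cref{lem:gaussian-covariance-operator-norm} conditionally gives
\[
  \E\bigl[\norm{ZZ^\dagger-K\Id_m}_{\mathrm{op}}^2\,\big|\,J\bigr]=O(mK+m^2).
\]
Taking expectations with \(\E K=B\E_\rho[J_1]\) and dividing by \(N^2\) yields
\[
  \E_\rho\norm{\Pi^\perp E\Pi^\perp}_{\mathrm{op}}^2
  =O\Bigl(\frac{mB\E_\rho[J_1]}{N^2}+\frac{m^2}{N^2}\Bigr)
  =O\Bigl(\frac{m\E_\rho[J_1]}{Ns}+\frac{m^2}{N^2}\Bigr),
\]
using \(B/N^2=1/(Ns)\). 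This is \cref{eq:upper-complement-block-risk}. The case \(m=0\) is trivial, since then \(\Pi^\perp=0\) and all three bounds involving \(m\) vanish.
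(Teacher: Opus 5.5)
Your proposal is correct and follows the paper's proof essentially step for step. You use the same averaged scalar variance bound and the same Hermitian test-matrix bases for the \(\Pi E\Pi\) and \(\Pi^\perp E\Pi\) blocks (your swap-operator justification of \(\sum_kH_k^2=r\Pi\) replaces the paper's explicit basis computation and gives the same result), and the same concatenated matrix \(Z\), conditionally standard Gaussian given \(J_1,\ldots,J_B\) by \Cref{prop:gaussian-block-estimator} and independence across rounds, fed into \Cref{lem:gaussian-covariance-operator-norm}.
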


\begin{proof}
  For every Hermitian matrix \(H\),
  unbiasedness of \(Y_1\), independence of the \(B\) outcomes, and
  \cref{eq:gaussian-block-scalar-variance} give
  \begin{equation}
    \E_\rho\bigl[\Tr(HE)^2\bigr]
    =\frac1B\operatorname{Var}_\rho(\Tr(HY_1))
    \le\frac1B\left(
      \frac2s\Tr(\rho H^2)
      +\frac{\E_\rho[J_1]}{s^2}\norm H_\F^2
    \right).
    \label{eq:averaged-gaussian-scalar-covariance}
  \end{equation}
  To sum this bound over the block on \(\operatorname{Im}(\Pi)\), choose an
  orthonormal basis \(\{\vct{e}_j\}_{j=1}^r\) of this subspace.  Let
  \(H_1,\ldots,H_{r^2}\) be the Frobenius orthonormal basis of the Hermitian
  operators supported on \(\operatorname{Im}(\Pi)\) consisting of
  \[
    |\vct{e}_j\rangle\langle\vct{e}_j|\quad(j\in[r]),
    \qquad
    \frac{|\vct{e}_j\rangle\langle\vct{e}_k|
      +|\vct{e}_k\rangle\langle\vct{e}_j|}{\sqrt2},
    \quad
    \frac{\ii(|\vct{e}_j\rangle\langle\vct{e}_k|
      -|\vct{e}_k\rangle\langle\vct{e}_j|)}{\sqrt2}
    \quad(1\le j<k\le r).
  \]
  For each pair \(j<k\), the squares of the two off-diagonal basis matrices
  sum to \(|\vct{e}_j\rangle\langle\vct{e}_j|
    +|\vct{e}_k\rangle\langle\vct{e}_k|\).  Hence
  \[
    \sum_{u=1}^{r^2}H_u^2
    =\sum_{j=1}^r|\vct{e}_j\rangle\langle\vct{e}_j|
      +\sum_{1\le j<k\le r}\left(
        |\vct{e}_j\rangle\langle\vct{e}_j|
        +|\vct{e}_k\rangle\langle\vct{e}_k|
      \right)
    =r\Pi.
  \]
  This identity and Parseval's identity give
  \[
    \norm{\Pi E\Pi}_\F^2
    =\sum_u\Tr(H_uE)^2,
    \qquad
    \sum_u\Tr(\rho H_u^2)=r,
    \qquad
    \sum_u\norm{H_u}_\F^2=r^2.
  \]
  Summing \cref{eq:averaged-gaussian-scalar-covariance} over \(u\) gives
  \[
    \E_\rho\norm{\Pi E\Pi}_\F^2
    \le\frac1B\left(\frac{2r}{s}
      +\frac{r^2\E_\rho[J_1]}{s^2}\right)
    =\frac{2r}{N}+\frac{r^2\E_\rho[J_1]}{Ns},
  \]
  which is \cref{eq:upper-support-block-risk}.

  If \(m=0\), the remaining two blocks vanish, so their bounds are
  immediate.  Assume from now on that \(m\ge1\).
  For the two off-diagonal blocks, also choose an orthonormal basis
  \(\{\vct{f}_i\}_{i=1}^{m}\) of \(\operatorname{Im}(\Pi^\perp)\).  For
  \(i\in[m]\) and \(j\in[r]\), define
  \[
    H_{ij}^{\mathrm R}
    \coloneqq\frac{
      |\vct{f}_i\rangle\langle\vct{e}_j|
      +|\vct{e}_j\rangle\langle\vct{f}_i|
    }{\sqrt2},
    \qquad
    H_{ij}^{\mathrm I}
    \coloneqq\frac{\ii(
      |\vct{f}_i\rangle\langle\vct{e}_j|
      -|\vct{e}_j\rangle\langle\vct{f}_i|
    )}{\sqrt2}.
  \]
  A direct calculation gives
  \begin{align*}
    \sum_{i,j}\left(
      \Tr(H_{ij}^{\mathrm R}E)^2+\Tr(H_{ij}^{\mathrm I}E)^2
    \right)
    &=2\norm{\Pi^\perp E\Pi}_\F^2,\\
    \sum_{i,j}\left((H_{ij}^{\mathrm R})^2+(H_{ij}^{\mathrm I})^2\right)
    &=r\Pi^\perp+m\Pi.
  \end{align*}
  Moreover, the \(2mr\) matrices \(H_{ij}^{\mathrm R},H_{ij}^{\mathrm I}\)
  all have Frobenius norm one.  Since \(\Pi\rho=\rho\), summing
  \cref{eq:averaged-gaussian-scalar-covariance} over them therefore gives
  \[
    2\E_\rho\norm{\Pi^\perp E\Pi}_\F^2
    \le\frac1B\left(\frac{2m}{s}
      +\frac{2mr\E_\rho[J_1]}{s^2}\right),
  \]
  which is \cref{eq:upper-cross-block-risk}.

  It remains to control the block \(\Pi^\perp E\Pi^\perp\).  Let
  \(G_{\perp,b}\in\C^{m\times J_b}\) be the coordinate matrix of
  \(\Pi^\perp G_b\) in a fixed orthonormal basis of
  \(\operatorname{Im}(\Pi^\perp)\).  By
  \Cref{prop:gaussian-block-estimator} and independence of the \(B\)
  measurements, conditional on \(J_1,\ldots,J_B\), the matrices
  \(G_{\perp,1},\ldots,G_{\perp,B}\) are independent standard complex
  Gaussian matrices.  Set
  \[
    K\coloneqq\sum_{b=1}^B J_b,
    \qquad
    Z\coloneqq
      \begin{bmatrix}
        G_{\perp,1}&\cdots&G_{\perp,B}
      \end{bmatrix}
      \in\C^{m\times K}.
  \]
  Then
  \[
    Z\mid(J_1,\ldots,J_B)
      \text{ is a standard complex Gaussian matrix},
    \qquad
    ZZ^\dagger=\sum_{b=1}^B
      G_{\perp,b}G_{\perp,b}^\dagger.
  \]
  Since \(\Pi^\perp\rho\Pi^\perp=0\) and \(N=Bs\), in the chosen basis of
  \(\operatorname{Im}(\Pi^\perp)\),
  \[
    \Pi^\perp E\Pi^\perp
    =\frac1{Bs}\sum_{b=1}^B
      \left(G_{\perp,b}G_{\perp,b}^\dagger-J_b\Id_m\right)
    =\frac1N(ZZ^\dagger-K\Id_m).
  \]
  By \Cref{lem:gaussian-covariance-operator-norm},
  \[
    \E_\rho\left[
      \norm{\Pi^\perp E\Pi^\perp}_{\mathrm{op}}^2
      \mathrel{\big|}J_1,\ldots,J_B
    \right]
    =O\left(\frac{mK+m^2}{N^2}\right).
  \]
  Taking expectation over \(J_1,\ldots,J_B\), and using
  \(\E_\rho[K]=B\E_\rho[J_1]\) and \(N=Bs\), gives
  \[
    \E_\rho\norm{\Pi^\perp E\Pi^\perp}_{\mathrm{op}}^2
    =O\left(\frac{mB\E_\rho[J_1]+m^2}{N^2}\right)
    =O\left(\frac{m\E_\rho[J_1]}{Ns}+\frac{m^2}{N^2}\right),
  \]
  proving \cref{eq:upper-complement-block-risk}.
\end{proof}

\subsubsection{Completion of the upper bound}

\begin{proof}[Proof of \Cref{thm:bounded-block-upper-bound}]
  Consider the protocol in
  \cref{eq:upper-bound-protocol-parameters,eq:upper-bound-protocol-average,eq:upper-bound-protocol-output}.
  Fix \(\rho\in\cD_r(\C^d)\), choose a rank-\(r\) orthogonal projector \(\Pi\)
  satisfying \(\Pi\rho=\rho\), and let \(E\coloneqq\overline Y-\rho\).
  By \Cref{prop:gaussian-block-estimator},
  \(\E_\rho[J_1]=O(\sqrt s)\), and the choice in
  \cref{eq:upper-bound-protocol-parameters} gives \(s\le r^2\).
  Combining the three bounds in
  \Cref{lem:gaussian-estimator-support-blocks} gives
  \[
    \E_\rho\left[
      \norm{\Pi E\Pi}_\F^2
      +2\norm{\Pi^\perp E\Pi}_\F^2
      +2r\norm{\Pi^\perp E\Pi^\perp}_{\mathrm{op}}^2
    \right]
    =O\left(
      \frac dN+\frac{dr}{N\sqrt s}+\frac{rd^2}{N^2}
    \right).
  \]
  Squaring \cref{eq:rank-constrained-projection-bound} and taking
  expectations therefore gives
  \begin{equation}
    \E_\rho\norm{\widehat\rho-\rho}_1^2
    =O\left(
      \frac{dr}{N}+\frac{dr^2}{N\sqrt s}
      +\frac{d^2r^2}{N^2}
    \right).
    \label{eq:rank-dependent-trace-risk}
  \end{equation}
  This bound explains the choice \(s=\min\{t,r^2\}\).  For fixed \(N\), the
  sum of the first two terms reaches order \(dr/N\) at \(s=r^2\);
  increasing \(s\) further changes this sum by at most a constant factor.
  Jointly measuring more samples at a time would not improve the resulting
  asymptotic rate.

  By \cref{eq:upper-bound-protocol-parameters},
  \[
    \frac{dr}{N}+\frac{dr^2}{N\sqrt s}
    \le\frac{\eps^2}{C_0},
    \qquad
    \frac{d^2r^2}{N^2}
    \le\frac{\eps^4}{C_0^2}
    \le\frac{\eps^2}{C_0^2}.
  \]
  Thus, for sufficiently large \(C_0\),
  \cref{eq:rank-dependent-trace-risk} and Markov's inequality give
  \[
    \Prb_\rho\left[
      \norm{\widehat\rho-\rho}_1>\eps
    \right]\le\frac13.
  \]

  By the definition of \(B\) in \cref{eq:upper-bound-protocol-parameters},
  \[
    N<C_0\frac{dr}{\eps^2}
      \left(1+\frac r{\sqrt s}\right)+s.
  \]
  Since \(s\le r^2\le dr/\eps^2\) and
  \(r/\sqrt s=\max\{1,r/\sqrt t\}\), this gives
  \[
    N=O\left(
      \frac{dr}{\eps^2}
      \max\left\{1,\frac r{\sqrt t}\right\}
    \right).
  \]
  This proves \Cref{thm:bounded-block-upper-bound}.
\end{proof}

\section{Discussion}
\label{sec:discussion}

The matching upper and lower bounds show the role rank plays in the performance of
joint measurements in algorithms for state tomography.  Relative to single-sample tomography, jointly measuring
\(t\) samples improves the optimal sample complexity by a factor of order
\(\sqrt t\) until \(t\) reaches order \(r^2\).  At that point, the sample
complexity matches that of unrestricted collective measurements.  The upper
bound is achieved by a nonadaptive protocol, so classical adaptivity offers
no further improvement beyond constant factors.

Two ingredients of the lower bound may be useful for future work on quantum
estimation.  The first is our bound on the Fisher information trace for the
support rotation family, which holds for every joint measurement on
\(t\) samples.  The second is the use of the conditional score chain rule to
extend such bounds to adaptive protocols.  This allows the analysis of
adaptivity to build on a bound for each joint measurement.

The upper bound controls the error in the supported and off-diagonal blocks of the estimator
in Frobenius norm, and the error in the block on the orthogonal complement of
the support of the unknown state in operator norm.  The first two bounds follow from the
estimator's second moment identity; the third uses the conditional Gaussian
law outside the state's
support.  The expected number of columns of the matrix outcome \(G\) appears
in both the Frobenius and operator norm error bounds.  The rank-constrained
projection bound combines these estimates to
give the final trace norm guarantee.  This approach may be useful in other variants of tomography.

In retrospect, the Gaussian protocol appears related to several previous mixed state
tomography algorithms.  Clarifying the precise relationship
between these protocols is a direction for future work.

Our protocol uses joint measurements with continuous outcomes.  A natural
direction is to construct explicit versions with finitely many outcomes that
preserve the optimal sample complexity, and to determine how many outcomes
are needed.

Our results concern protocols that retain only classical information between
measurement rounds.  With persistent quantum memory, different rounds can
combine into a larger coherent measurement, so the largest number of fresh
samples measured in one round no longer captures the available quantum
resource.  Understanding tradeoffs that simultaneously constrain the number
of samples measured in each round, quantum memory, sample complexity, and
computational efficiency remains a natural direction.

\section{AI disclosure}
We used GPT-5.5 and GPT-5.6 Sol for mathematical assistance in this work.
Their main mathematical contributions concerned two parts of the proofs.
For the lower bound, the models provided substantial assistance with the
proofs of
\Cref{lem:raising-map-occupation-reduction,lem:occupation-number-envelope}.
These lemmas are crucial to our Fisher information bound for joint
measurements on \(t>1\) samples and to the resulting dependence of the
sample complexity lower bound on \(t\).

For the upper bound, we used the models to reformulate earlier tomography
algorithms and their analyses
without using representation theory.
Our key additional contribution is a rank-dependent error analysis of the
resulting Gaussian formulation, presented in \Cref{sec:upper-bound-proof}.

We also used these models for discussions of general mathematical questions,
literature searches, and editorial revisions.
The authors independently verified all mathematical arguments and references
and take full responsibility for the content of the paper.

\section{Acknowledgements}
The authors thank Rain Ziming Yang and Jack Spalding-Jamieson for helpful
discussions.
Ashwin Nayak's research is supported in part by NSERC grants
RGPIN-2023-03731 and ALLRP-578455-2022.
Xingyu Zhou is supported by NSERC Grant RGPIN-2024-06493.

\bibliographystyle{alpha}
\bibliography{references}

\appendix
\section{Statistical formalism}
\label{app:statistical-formalism}

This appendix collects the standard statistical formalism used in the main
proof.  We first justify the density notation for continuous-outcome POVMs
and adaptive transcripts on the general outcome spaces allowed by our model.
We then verify the likelihood regularity used by the Fisher information
argument and give the van Trees proof.  In particular, all reference measures below
are independent of the unknown state, so parameter derivatives and Fisher
information are computed within one fixed dominated model.

\subsection{Dominated models and finite-dimensional POVMs}
\label{app:single-povm-domination}

Let \(\{\Prb_{\vct{\theta}}:\vct{\theta}\in\Theta\}\) be probability
measures on \((\mathsf Z,\mathcal Z)\).  A measure \(\nu\) dominates this
family if \(\Prb_{\vct{\theta}}\ll\nu\) for every parameter, meaning that
for every \(\mathsf E\in\mathcal Z\),
\[
  \nu(\mathsf E)=0
  \quad\Longrightarrow\quad
  \Prb_{\vct{\theta}}(\mathsf E)=0.
\]
We assume \(\nu\) is \(\sigma\)-finite, meaning that \(\mathsf Z\) is a
countable union of measurable sets of finite \(\nu\)-measure.  This holds
for all the reference probability measures constructed below.
The Radon--Nikodym theorem then gives likelihood densities
\[
  q_{\vct{\theta}}(z)
  \coloneqq
  \frac{\dd{\Prb_{\vct{\theta}}}}{\dd\nu}(z),
\]
which satisfy \cref{eq:dominated-model}.  An event \(N\in\mathcal Z\) is
\(\nu\)-null if \(\nu(N)=0\), and a property holds \(\nu\)-almost
everywhere if it fails only on such an event.  Because the model is
dominated, every \(\nu\)-null event also has probability zero under every
\(\Prb_{\vct{\theta}}\).

For a POVM \(\mathsf M\) on \(\C^d\), the measure
\(\nu\) in \cref{eq:povm-dominating-measure} is a probability measure:
\[
  \nu(\mathsf Z)
  =d^{-1}\Tr\bigl(\mathsf M(\mathsf Z)\bigr)
  =d^{-1}\Tr(\Id_d)
  =1.
\]
It dominates every state-induced outcome law.  Indeed, if \(\nu(\mathsf E)=0\),
then the positive semidefinite operator \(\mathsf M(\mathsf E)\) has trace zero
and is therefore the zero operator.  Hence
\(\Tr(\rho\mathsf M(\mathsf E))=0\) for every state \(\rho\).
Each scalar measure of a matrix entry of \(\mathsf M\) is consequently absolutely
continuous with respect to \(\nu\).  Applying the scalar Radon--Nikodym
theorem entrywise gives a measurable operator-valued function
\(z\mapsto M_z\) satisfying \cref{eq:povm-density}.  Positivity holds almost
everywhere because, for a countable dense set of vectors \(\vct{u}\), the
scalar measure
\(\mathsf E\mapsto\vct{u}^\dagger\mathsf M(\mathsf E)\vct{u}\) is positive.
Taking the trace in \cref{eq:povm-density} gives
\[
  \int_{\mathsf E}\Tr(M_z)\,\dd\nu(z)
  =\Tr\bigl(\mathsf M(\mathsf E)\bigr)
  =d\nu(\mathsf E)
\]
for every measurable event \(\mathsf E\), so \(\Tr(M_z)=d\)
\(\nu\)-almost everywhere.  This is the standard trace-dominated
POVM representation used in quantum statistics
\cite{BarndorffNielsenGill2000}.

Finally, if a nonnegative likelihood is differentiable at an interior
parameter value where it vanishes, its gradient vanishes there.  Thus the
score and the Fisher integrands may be set to zero on a zero-likelihood set,
as done in \Cref{sec:classical-fisher-information}.

\subsection{Adaptive transcript domination}
\label{app:continuous-adaptive-likelihoods}

Fix a protocol with pathwise sample bound \(N\), and let
\((\mathsf U,\Sigma_{\mathsf U},\kappa)\) be the probability space of its
private random seed.  We index the protocol by \(N\) rounds, padding it with zero-sample
rounds after it halts.  Write
\[
  \mathsf H_i
  \coloneqq
  \mathsf U\times\mathsf Z_1\times\cdots\times\mathsf Z_{i-1}
\]
for the history space before round \(i\).  At \(h\in\mathsf H_i\), let
\(t_i(h)\in\{0,1,\ldots,t\}\) be the number of samples measured in round \(i\), and let
\(\mathsf M_i(\cdot\mid h)\) be the selected POVM on
\((\C^d)^{\otimes t_i(h)}\).  The protocol definition requires \(t_i\) and
the matrix entries of \(\mathsf M_i(\mathsf E\mid h)\), on each set of histories
where \(t_i(h)\) is fixed, to be measurable in \(h\) for each
\(\mathsf E\in\mathcal Z_i\).  For tuples,
write \(z_{<i}\coloneqq(z_1,\ldots,z_{i-1})\).

\begin{lemma}[Measurable domination of adaptive measurements]
  Define the probability kernel
  \begin{equation}
    \nu_i(\mathsf E\mid h)
    \coloneqq
    d^{-t_i(h)}\Tr\bigl(\mathsf M_i(\mathsf E\mid h)\bigr).
    \label{eq:block-reference-kernel}
  \end{equation}
  For each \(t'\in\{0,\ldots,t\}\), on the history slice
  \(t_i(h)=t'\), there is a jointly measurable
  positive semidefinite operator density \(M_i(z\mid h)\) on
  \((\C^d)^{\otimes t'}\) such that
  \begin{equation}
    \mathsf M_i(\mathsf E\mid h)
    =\int_{\mathsf E} M_i(z\mid h)\,\nu_i(\dd z\mid h),
    \qquad
    \Tr(M_i(z\mid h))=d^{t'},
  \end{equation}
  for every such history \(h\), every \(\mathsf E\in\mathcal Z_i\), and
  every \(z\in\mathsf Z_i\).
  Consequently, the parameter-independent measure
  \begin{equation}
    \lambda_N(\dd u\,\dd z_1\cdots\dd z_N)
    \coloneqq
    \kappa(\dd u)\prod_{i=1}^N
      \nu_i(\dd z_i\mid u,z_{<i})
  \end{equation}
  dominates the transcript law for every input state.
\end{lemma}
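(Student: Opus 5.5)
The plan is to build the operator densities slice by slice, using a single reference measure per history, and then assemble the transcript measure by iterated kernels.

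\emph{Step 1: the kernel.} First check that \cref{eq:block-reference-kernel} defines a probability kernel. For fixed \(h\), countable additivity of \(\mathsf M_i(\cdot\mid h)\) and the normalization \(\Tr(\mathsf M_i(\mathsf Z_i\mid h))=d^{t_i(h)}\) make \(\nu_i(\cdot\mid h)\) a probability measure. For fixed \(\mathsf E\), measurability in \(h\) follows from measurability of the matrix entries on each of the finitely many measurable slices \(\{t_i(h)=t'\}\).

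\emph{Step 2: the density on a slice.} Fix a slice \(t_i(h)=t'\) and write \(D=d^{t'}\). Each entry measure \(\mathsf E\mapsto\mathsf M_i(\mathsf E\mid h)_{ab}\) is absolutely continuous with respect to \(\nu_i(\cdot\mid h)\), exactly as in \Cref{app:single-povm-domination}. The pointwise Radon--Nikodym argument there, however, gives no joint measurability in \((z,h)\), and obtaining it is the main obstacle. I would handle it with the standard martingale (Doob) construction of measurable Radon--Nikodym derivatives, which is where countable generation of \(\mathcal Z_i\) enters. Choose an increasing sequence of finite partitions \(\mathcal P_n\) of \(\mathsf Z_i\) that generates \(\mathcal Z_i\). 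Define
\[
  M^{(n)}_i(z\mid h)\coloneqq\sum_{P\in\mathcal P_n}\mathbf 1_P(z)\,\frac{\mathsf M_i(P\mid h)}{\nu_i(P\mid h)},
\]
with the convention \(0/0=0\). Each \(M^{(n)}_i\) is jointly measurable, positive semidefinite, and has trace exactly \(D\) on cells with \(\nu_i(P\mid h)>0\). For fixed \(h\), each entry of \(M^{(n)}_i(\cdot\mid h)\) is a uniformly bounded martingale with respect to \(\nu_i(\cdot\mid h)\); the bound holds because every entry of a positive matrix of trace \(D\) is at most \(D\) in modulus. By the martingale convergence theorem, these entries converge \(\nu_i(\cdot\mid h)\)-almost everywhere, and the limits are the Radon--Nikodym derivatives. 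Set
\[
  M_i(z\mid h)\coloneqq\lim_n M^{(n)}_i(z\mid h)
\]
where the limit exists and the result is positive semidefinite with trace \(D\). This set is jointly measurable, being defined by countably many measurable conditions. Off this set, put \(M_i(z\mid h)\coloneqq\Id\). Joint measurability is preserved. Positivity and trace \(D\) then hold for every \(z\), and the integral identity is unaffected because the modification occurs on a null set. The integral identity on generating cells passes to all of \(\mathcal Z_i\) by the \(\pi\)--\(\lambda\) theorem.

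\emph{Step 3: the transcript measure.} Define \(\lambda_N\) by iterated integration of the kernels, using the Ionescu-Tulcea construction. By the Born rule, the conditional outcome law at history \(h\) has density \(p_{i,X}(z\mid h)=\Tr\bigl(M_i(z\mid h)\rho_X^{\otimes t_i(h)}\bigr)\) with respect to \(\nu_i(\cdot\mid h)\). This density is jointly measurable by Step 2. Multiplying these densities over rounds gives a jointly measurable function. Induction on \(i\), with Fubini applied to the kernels, shows that this product integrated against \(\lambda_N\) reproduces the transcript law, which is itself built by the same kernels with the Born-rule conditional laws. Hence \(\lambda_N\) dominates the transcript law for every input state, with density \cref{eq:full-transcript-likelihood}.
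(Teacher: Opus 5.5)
Your proposal is correct and follows the paper's proof essentially step for step: the same finite-partition martingale construction of jointly measurable Radon--Nikodym derivatives on each slice \(t_i(h)=t'\), the same replacement by the identity on a jointly measurable exceptional set whose sections are null, and the Ionescu--Tulcea construction of \(\lambda_N\). The only cosmetic difference is in how positivity and trace \(d^{t'}\) of the limit are obtained: you read them off directly from the positive semidefinite, trace-\(d^{t'}\) approximants, using closedness of the cone, whereas the paper verifies them almost everywhere via a countable dense set of vectors and the integrated trace identity.
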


\begin{proof}
  The expression in \cref{eq:block-reference-kernel} has total mass one.
  If it assigns zero mass to \(\mathsf E\), then the positive semidefinite operator
  \(\mathsf M_i(\mathsf E\mid h)\) has trace zero and is itself zero.  Thus every
  scalar measure of a matrix entry of the POVM kernel is absolutely continuous
  with respect to \(\nu_i(\cdot\mid h)\).

  It remains to choose the Radon--Nikodym derivatives jointly measurably in
  the history and outcome.  Restrict to histories with \(t_i(h)=t'\), choose a
  basis of \((\C^d)^{\otimes t'}\), and let \(\mu_{ab}(\mathsf E\mid h)\) be the
  scalar measure associated with the \((a,b)\)-th entry of
  \(\mathsf M_i(\mathsf E\mid h)\).  Because \(\mathcal Z_i\) is countably
  generated, there is an increasing sequence \((\mathcal P_n)_{n\ge1}\) of
  finite measurable partitions whose cells generate \(\mathcal Z_i\).
  Write \(\boldsymbol{1}_C\) for the indicator of a cell \(C\), equal to one
  on \(C\) and zero outside \(C\), and set
  \[
    f_{ab,n}(h,z)
    \coloneqq
    \sum_{C\in\mathcal P_n}
      \boldsymbol{1}_C(z)
      \frac{\mu_{ab}(C\mid h)}{\nu_i(C\mid h)},
    \qquad 0/0\coloneqq0.
  \]
  These step functions are jointly measurable in \((h,z)\).  For each fixed
  history \(h\), the functions \(f_{ab,n}(h,\cdot)\) are conditional
  expectations of the entrywise Radon--Nikodym derivative under
  \(\nu_i(\cdot\mid h)\), with respect to the finite partitions
  \(\mathcal P_n\).  These increasing partitions generate \(\mathcal Z_i\),
  so martingale convergence gives a limit that is finite almost everywhere
  and equal to the Radon--Nikodym derivative.  On the jointly measurable
  set where every real and imaginary part has a finite limit, assemble these
  limits entrywise into \(M_i(z\mid h)\); set \(M_i(z\mid h)=\Id_{d^{t'}}\)
  elsewhere.

  For every fixed history, positivity follows by intersecting the
  full-measure sets on which
  \(\vct{u}^\dagger M_i(z\mid h)\vct{u}\ge0\) over a countable dense set of
  vectors \(\vct{u}\).  The operator density identity and
  \cref{eq:block-reference-kernel} give
  \[
    \int_{\mathsf E}\Tr(M_i(z\mid h))\,\nu_i(\dd z\mid h)
    =\Tr\bigl(\mathsf M_i(\mathsf E\mid h)\bigr)
    =d^{t'}\nu_i(\mathsf E\mid h)
  \]
  for every measurable event \(\mathsf E\), so
  \(\Tr(M_i(z\mid h))=d^{t'}\) almost everywhere.  The set on which the
  entrywise limits fail to exist finitely, positivity fails, or this trace
  identity fails is jointly measurable, and each of its sections is
  \(\nu_i(\cdot\mid h)\)-null.  Replacing the density by the identity operator
  on this exceptional set leaves every kernel integral unchanged and makes
  the density finite and positive semidefinite with the required trace everywhere.

  Perform this construction on each of the finitely many measurable slices
  \(t_i(h)=t'\), including the scalar \(t'=0\) slice.  The Ionescu--Tulcea
  construction then gives \(\lambda_N\).  Iterating the conditional Born
  rule proves the asserted domination.
\end{proof}

For the support rotation family, the conditional likelihood density at
history \(h\) is
\begin{equation}
  p_{i,X}(z\mid h)
  \coloneqq
  \Tr\left(
    M_i(z\mid h)\rho_X^{\otimes t_i(h)}
  \right).
  \label{eq:conditional-block-likelihood}
\end{equation}
Iterating these conditional densities gives the transcript likelihood in
\cref{eq:full-transcript-likelihood} with respect to \(\lambda_N\).

\subsection{Likelihood regularity}

In this subsection, \(C^1\) means continuously differentiable: the
derivative exists and varies continuously in the norm under discussion.
For a measure \(\mu\), \(L_1(\mu)\) is the space of integrable scalar
functions, with norm
\[
  \norm{f}_{L_1(\mu)}\coloneqq\int\abs{f}\,\dd\mu.
\]
Thus this function space \(L_1\) norm is distinct from the matrix trace norm,
even though both use the index \(1\).

\begin{lemma}[Likelihood regularity]
  \label{lem:block-likelihood-regularity}
  Let \(m,r\) be positive integers, let
  \(\Theta\subseteq\C^{m\times r}\) be open when the matrix space is regarded
  as real, and let \(X\mapsto\rho_X\), for \(X\in\Theta\), be a family of
  states on \(\C^d\) that is \(C^1\) in trace norm.  For each
  history \(h\), the conditional likelihood in
  \cref{eq:conditional-block-likelihood} is pointwise \(C^1\) for
  \(\nu_i(\cdot\mid h)\)-almost every outcome and is \(C^1\) as a
  map into \(L_1(\nu_i(\cdot\mid h))\).  The full likelihood in
  \cref{eq:full-transcript-likelihood} has the corresponding properties
  \(\lambda_N\)-almost everywhere and in \(L_1(\lambda_N)\).
  All derivatives are jointly measurable; conditional derivatives integrate
  to zero under \(\nu_i(\cdot\mid h)\), and full likelihood
  derivatives integrate to zero under \(\lambda_N\).  For a direction
  \(H\in\C^{m\times r}\),
  \[
    \Ddir_Hp_{i,X}(z\mid h)
    =
    \Tr\left(
      M_i(z\mid h)\Ddir_H(\rho_X^{\otimes t_i(h)})
    \right).
  \]
\end{lemma}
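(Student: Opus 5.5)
The proof has two parts: first the single-round conditional likelihood at a fixed history, then an induction over rounds for the transcript likelihood.

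\textbf{Fixed history.} Fix \(h\) and put \(t'\coloneqq t_i(h)\). If \(t'=0\), the likelihood does not depend on \(X\) and every claim is trivial, so assume \(t'\ge1\). The map \(X\mapsto\rho_X^{\otimes t'}\) is \(C^1\) in trace norm. This follows from the product rule for tensor products together with \(\norm{A\otimes B}_1=\norm A_1\norm B_1\). The derivative is
\[
\Ddir_H(\rho_X^{\otimes t'})=\sum_\ell\rho_X^{\otimes(\ell-1)}\otimes\Ddir_H\rho_X\otimes\rho_X^{\otimes(t'-\ell)}.
\]
The key estimate uses the everywhere-defined density \(M_i(z\mid h)\) from the domination lemma, which is positive semidefinite with trace \(d^{t'}\). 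Hence \(\norm{M_i(z\mid h)}_{\mathrm{op}}\le d^{t'}\), and the linear functional \(A\mapsto\Tr(M_i(z\mid h)A)\) is bounded on trace class operators by \(d^{t'}\norm A_1\), uniformly in \(z\).

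This gives the pointwise statement for every outcome \(z\): composing a \(C^1\) map with a bounded linear functional shows that \(p_{i,X}(z\mid h)\) is \(C^1\), with the stated derivative formula. For the \(L_1(\nu_i)\) statement, write \(\Phi(A)\coloneqq\Tr(M_i(\cdot\mid h)A)\). Then \(\Phi\) maps trace class operators into \(L_1(\nu_i(\cdot\mid h))\) with
\[
\norm{\Phi(A)}_{L_1}\le\int\abs{\Tr(M_iA)}\,\dd\nu_i\le\norm A_1 .
\]
To see the bound, decompose \(A\) into Hermitian positive and negative parts and use \(\int M_i\,\dd\nu_i=\Id\); a bound of the form \(4\norm A_1\) suffices. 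So \(\Phi\) is a bounded linear map, and composing it with a \(C^1\) map again gives a \(C^1\) map into \(L_1\), with derivative \(\Phi(\Ddir_H\rho_X^{\otimes t'})\).

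Integrating the derivative over \(z\) gives \(\Tr(\Ddir_H\rho_X^{\otimes t'})=\Ddir_H\Tr(\rho_X^{\otimes t'})=0\). Joint measurability in \((h,z)\) holds because the derivative is an explicit trace of the jointly measurable density against a continuous function of \(X\).

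\textbf{Full transcript.} For the likelihood \(q_{N,X}=\prod_i p_{i,X}\), pointwise \(C^1\) follows from the product rule. The \(L_1(\lambda_N)\) statement is the main obstacle, because a product of \(L_1\)-differentiable functions is not automatically \(L_1\)-differentiable. I will argue by induction on the number of rounds, exploiting the kernel structure of \(\lambda_N\).

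Set \(q_{k,X}\coloneqq\prod_{i\le k}p_{i,X}\). The inductive hypothesis is that \(X\mapsto q_{k,X}\) is \(C^1\) into \(L_1(\lambda_k)\). Consider the increment
\[
q_{k+1,X+H}-q_{k+1,X}=(q_{k,X+H}-q_{k,X})\,p_{k+1,X+H}+q_{k,X}\,(p_{k+1,X+H}-p_{k+1,X}).
\]
I integrate the last outcome first, using Fubini through the kernel \(\nu_{k+1}(\cdot\mid h)\).

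\emph{First term.} The uniform bound \(\int\abs{p_{k+1,Y}}\,\dd\nu_{k+1}\le1\) reduces its control to the \(C^1\) property of \(q_k\) from the inductive hypothesis.

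\emph{Second term.} The fixed-history estimate supplies a bound that is uniform in \(h\): both the remainder in the first-order expansion and the continuity modulus of the derivative are controlled by the trace norm modulus of \(X\mapsto\rho_X^{\otimes t'}\), for \(t'\le t\). This modulus is independent of \(h\). Since \(q_{k,X}\) integrates to \(1\), the second term is then controlled as well.

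The same two estimates show that the derivative is continuous into \(L_1(\lambda_{k+1})\). Finally, the full derivative integrates to zero because \(\int q_{N,X}\,\dd\lambda_N=1\) for all \(X\) and \(L_1\)-differentiability allows exchanging the derivative with the integral.
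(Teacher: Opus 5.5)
Your proposal is correct, but it handles the transcript likelihood by a different route than the paper.

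The paper never treats the two levels separately. It uses the bound \(\norm{M_i(z\mid h)}_{\mathrm{op}}\le d^{t}\), which you also derive, together with a closed ball \(K\subseteq\Theta\). This gives uniform pointwise bounds on \(K\):
\[
  0\le q_{N,X}\le d^{tN},
  \qquad
  \abs{\Ddir_Hq_{N,X}}\le NC_Kd^{tN}\norm H_\F .
\]
Since \(\nu_i(\cdot\mid h)\) and \(\lambda_N\) are probability measures, the paper then applies dominated convergence to difference quotients and to differences of derivatives. That single step gives \(C^1\) regularity into \(L_1\) and differentiation under the integral, for the conditional and full likelihoods alike. So your concern that a product of \(L_1\)-differentiable functions need not be \(L_1\)-differentiable is valid in general, but these \(L^\infty\) bounds remove it here, and the induction over rounds is unnecessary.

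Your route instead rests on the \(L_1\)-contractivity \(\int\abs{\Tr(M_iA)}\,\dd\nu_i\le\norm A_1\) for Hermitian \(A\), which uses only the POVM normalization, combined with Tonelli through the kernels. It costs more bookkeeping, but it buys moduli that are free of the factor \(d^{tN}\). It would also still work for reference measures under which the operator densities are unbounded.

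When you write out the inductive step, make one term explicit that your two-term split hides. Comparing \((q_{k,X+H}-q_{k,X})p_{k+1,X+H}\) with \(\Ddir_Hq_{k,X}\,p_{k+1,X}\) produces the cross term \(\Ddir_Hq_{k,X}(p_{k+1,X+H}-p_{k+1,X})\). It is \(o(\norm H_\F)\) in \(L_1(\lambda_{k+1})\) for two reasons. First, \(\norm{\Ddir_Hq_{k,X}}_{L_1(\lambda_k)}=O(\norm H_\F)\) by the inductive hypothesis. Second, the trace-norm modulus of \(X\mapsto\rho_X^{\otimes t'}\) is uniform over the finitely many \(t'\le t\).
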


\begin{proof}
  The construction in \Cref{app:continuous-adaptive-likelihoods} gives
  positive semidefinite densities with
  \[
    \norm{M_i(z\mid h)}_{\mathrm{op}}
    \le\Tr(M_i(z\mid h))=d^{t_i(h)}\le d^t.
  \]
  Fix a closed ball \(K\) contained in \(\Theta\).  Since
  \(X\mapsto\rho_X^{\otimes k}\) is \(C^1\) for each
  \(k\in\{0,\ldots,t\}\), compactness of \(K\) gives a finite constant
  \(C_K\) such that
  \[
    \norm{\Ddir_H(\rho_X^{\otimes k})}_1
    \le C_K\norm H_\F
    \qquad(X\in K,\ 0\le k\le t).
  \]
  For fixed \((h,z)\), differentiating the trace pairing in
  \cref{eq:conditional-block-likelihood} gives the stated derivative formula
  and pointwise \(C^1\) regularity.  Trace/operator norm duality then gives,
  uniformly in \(h,z\) and \(X\in K\),
  \[
    0\le p_{i,X}(z\mid h)\le d^t,
    \qquad
    \abs{\Ddir_Hp_{i,X}(z\mid h)}
    \le C_Kd^t\norm H_\F.
  \]

  The transcript likelihood \(q_{N,X}\) is a product of \(N\) such
  conditional densities, so it is pointwise \(C^1\), and the product rule gives
  \[
    0\le q_{N,X}\le d^{tN},
    \qquad
    \abs{\Ddir_Hq_{N,X}}\le NC_Kd^{tN}\norm H_\F
    \quad(X\in K).
  \]
  All derivatives are jointly measurable by their trace formulas and the
  finite product rule.  Since \(\nu_i(\cdot\mid h)\) and \(\lambda_N\) are
  probability measures, these local bounds are integrable.  Dominated
  convergence applied to coordinate difference quotients and to differences
  of coordinate derivatives proves \(C^1\) regularity in the corresponding
  \(L_1\) spaces.  It also permits differentiation under the integrals.
  The conditional and full likelihood derivatives therefore integrate to
  zero, since each likelihood has integral one.
\end{proof}

\subsection{The van Trees inequality}
\label{app:vector-van-trees}

For completeness, we prove the van Trees inequality stated in
\Cref{lem:vector-van-trees}.  This is the standard argument; see also
\cite{GillLevit1995}.  Let
\(q_{\vct{\theta}}\) be likelihood densities with respect to a
parameter-independent measure \(\nu\).  We assume that, for \(\nu\)-almost
every observation, the likelihood is pointwise \(C^1\) in the parameter,
and that its coordinate derivatives are jointly measurable, agree with the
\(L_1(\nu)\) derivatives, and integrate to zero.  These are precisely the
properties established for our transcript likelihoods in
\Cref{lem:block-likelihood-regularity}.

\begin{proof}[Proof of \Cref{lem:vector-van-trees}]
  If the denominator in \cref{eq:vector-van-trees} is infinite, the claim
  is immediate.  We may therefore assume it is finite.  The expected squared
  error is finite because \(T\) is bounded and \(\pi\) has compact support.
  Write
  \[
    f(\vct{\theta},y)\coloneqq
    \pi(\vct{\theta})q_{\vct{\theta}}(y)
  \]
  for the joint density of the parameter and observation.  All derivatives
  below are with respect to \(\vct{\theta}\).  At a zero of a differentiable
  nonnegative function, its gradient vanishes.  We therefore set the
  quotients involving \(q_{\vct{\theta}}\), \(\pi\), or \(f\) to zero
  wherever their denominators vanish.

  We first bound the integral of \(\norm{\nabla f}_2^2/f\).  The product
  rule gives
  \(\nabla f=q_{\vct{\theta}}\nabla\pi
    +\pi\nabla q_{\vct{\theta}}\), and hence
  \[
    \frac{\norm{\nabla f}_2^2}{f}
    =
    q_{\vct{\theta}}\frac{\norm{\nabla\pi}_2^2}{\pi}
    +\pi\frac{\norm{\nabla q_{\vct{\theta}}}_2^2}
      {q_{\vct{\theta}}}
    +2\ip{\nabla\pi}{\nabla q_{\vct{\theta}}}.
  \]
  This identity also holds on the zero set of \(f\) under the convention
  above.  Since \(\pi=\psi^2\), we have \(\nabla\pi=2\psi\nabla\psi\), so
  \[
    \int_\Theta\frac{\norm{\nabla\pi}_2^2}{\pi}\,\dd\vct{\theta}
    =4\int_{\{\vct{\theta}\in\Theta:\,\psi(\vct{\theta})\ne0\}}
      \norm{\nabla\psi}_2^2\,\dd\vct{\theta}
    \le4\int_\Theta\norm{\nabla\psi}_2^2\,\dd\vct{\theta}
    =I(\pi)<\infty.
  \]
  For each fixed \(\vct{\theta}\), the likelihood derivatives belong to
  \(L_1(\nu)\) and integrate to zero by the stated regularity assumptions.
  Thus the cross term is integrable in \(y\), and
  \[
    \int
      \ip{\nabla\pi(\vct{\theta})}{\nabla q_{\vct{\theta}}(y)}
      \,\dd\nu(y)
    =\ip{\nabla\pi(\vct{\theta})}
      {\int\nabla q_{\vct{\theta}}(y)\,\dd\nu(y)}
    =0.
  \]
  Integrating the expansion first over \(y\) and then over
  \(\vct{\theta}\), using
  \(\int q_{\vct{\theta}}(y)\,\dd\nu(y)=1\) and the definition of the Fisher
  information matrix, therefore gives
  \begin{align*}
    J\coloneqq\int_\Theta\!\int
      \frac{\norm{\nabla f}_2^2}{f}\,\dd\nu(y)\,\dd\vct{\theta}
    &=
      \int_\Theta\frac{\norm{\nabla\pi}_2^2}{\pi}\,\dd\vct{\theta}
      +\int_\Theta\pi(\vct{\theta})
        \left(\int
          \frac{\norm{\nabla q_{\vct{\theta}}(y)}_2^2}{q_{\vct{\theta}}(y)}
          \,\dd\nu(y)\right)\dd\vct{\theta}\\
    &=
      \int_\Theta\frac{\norm{\nabla\pi}_2^2}{\pi}\,\dd\vct{\theta}
      +\E_{\vct{\theta}}\Tr\bigl(\cI(\vct{\theta})\bigr)\\
    &\le I(\pi)+\E_{\vct{\theta}}\Tr\bigl(\cI(\vct{\theta})\bigr)<\infty.
  \end{align*}

  This bound also gives the absolute integrability needed for Fubini's
  theorem.  For each \(j\in[p]\), Cauchy--Schwarz gives
  \begin{align*}
    \int_\Theta\!\int
      \abs{(T_j(y)-\theta_j)\,\partial_jf(\vct{\theta},y)}
      \,\dd\nu(y)\,\dd\vct{\theta}
    &\le
      \left(\E(T_j-\theta_j)^2\right)^{1/2}
      \left(\int_\Theta\!\int
        \frac{(\partial_jf)^2}{f}\,\dd\nu(y)\,\dd\vct{\theta}
      \right)^{1/2}\\
    &\le\left(\E(T_j-\theta_j)^2\right)^{1/2}\sqrt J<\infty.
  \end{align*}

  We now integrate by parts in \(\theta_j\).  The boundary term vanishes
  because \(\pi\), and hence \(f\), is compactly supported inside
  \(\Theta\).  The estimator \(T_j(y)\) does not depend on
  \(\vct{\theta}\), so \(\partial_j(T_j(y)-\theta_j)=-1\).
  Fubini's theorem lets us put the parameter integral first, and
  integration by parts then gives
  \begin{align*}
    \int_\Theta\!\int
      (T_j(y)-\theta_j)\,\partial_jf(\vct{\theta},y)
      \,\dd\nu(y)\,\dd\vct{\theta}
    &=\int\!\int_\Theta
      (T_j(y)-\theta_j)\,\partial_jf(\vct{\theta},y)
      \,\dd\vct{\theta}\,\dd\nu(y)\\
    &=-\int\!\int_\Theta
      \partial_j(T_j(y)-\theta_j)f(\vct{\theta},y)
      \,\dd\vct{\theta}\,\dd\nu(y)\\
    &=\int\!\int_\Theta
      f(\vct{\theta},y)\,\dd\vct{\theta}\,\dd\nu(y)\\
    &=\int_\Theta\pi(\vct{\theta})
      \left(\int q_{\vct{\theta}}(y)\,\dd\nu(y)\right)
      \dd\vct{\theta}=1.
  \end{align*}
  Summing over \(j\in[p]\) yields
  \[
    p=\int_\Theta\!\int
      \ip{T(y)-\vct{\theta}}{\nabla f(\vct{\theta},y)}
      \,\dd\nu(y)\,\dd\vct{\theta}.
  \]

  Finally, insert \(\sqrt f\) into the identity for \(p\):
  \[
    p
    =\int_\Theta\!\int
      \ip{T(y)-\vct{\theta}}{\nabla f}
      \,\dd\nu(y)\,\dd\vct{\theta}
    =\int_\Theta\!\int
      \ip{(T(y)-\vct{\theta})\sqrt f}{\nabla f/\sqrt f}
      \,\dd\nu(y)\,\dd\vct{\theta}.
  \]
  Cauchy--Schwarz for these two vector-valued functions gives
  \begin{equation}
    \begin{aligned}
      p^2
      &\le
        \left(\int_\Theta\!\int
          \norm{T(y)-\vct{\theta}}_2^2 f
          \,\dd\nu(y)\,\dd\vct{\theta}\right)
        \left(\int_\Theta\!\int
          \frac{\norm{\nabla f}_2^2}{f}
          \,\dd\nu(y)\,\dd\vct{\theta}\right)\\
      &=\E\norm{T-\vct{\theta}}_2^2\,J.
    \end{aligned}
    \label{eq:van-trees-cauchy}
  \end{equation}
  Substituting the bound for \(J\) and
  rearranging proves
  \[
    \E\norm{T-\vct{\theta}}_2^2
    \ge
    \frac{p^2}{
      \E_{\vct{\theta}}\Tr\bigl(\cI(\vct{\theta})\bigr)+I(\pi)},
  \]
  as claimed.
\end{proof}

\section{Measurable projection onto low-rank states}
\label{app:measurable-projection}

The upper bound estimator chooses a state nearest in Frobenius norm to a
Hermitian matrix.  The following standard consequence of the measurable
maximum theorem \cite[Theorem~18.19]{AliprantisBorder2006} ensures that
ties can be resolved measurably.

\begin{lemma}[Measurable nearest state selection]
  \label{lem:measurable-low-rank-projection}
  For every \(1\le r\le d\), there is a Borel map from the Hermitian matrices
  on \(\C^d\) to \(\cD_r(\C^d)\) that assigns to each Hermitian matrix \(M\) a
  minimizer of \(\norm{M-\sigma}_\F\) over
  \(\sigma\in\cD_r(\C^d)\).
\end{lemma}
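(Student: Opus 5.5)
The plan is to invoke the measurable maximum theorem \cite[Theorem~18.19]{AliprantisBorder2006} with a constant correspondence. Take the parameter space to be the real vector space \(\mathcal H_d\) of Hermitian \(d\times d\) matrices, identified with \(\R^{d^2}\) with its Borel sigma-algebra. Take the constraint correspondence to be \(\varphi(M)\coloneqq\cD_r(\C^d)\) for every \(M\). This correspondence has nonempty compact values, and it is weakly measurable because it is constant. For the objective we use \(f(M,\sigma)\coloneqq-\norm{M-\sigma}_\F\), which is jointly continuous and hence Carath\'eodory: measurable in \(M\) and continuous in \(\sigma\).

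The first step is to check that \(\cD_r(\C^d)\) is a nonempty compact subset of a separable metrizable space. It is nonempty because it contains any rank-one projector. It is bounded because it lies inside \(\cD(\C^d)\), which is bounded in Frobenius norm. It is closed because the set of positive semidefinite trace-one matrices is closed, and the rank condition \(\rank\le r\) is closed: it is the common zero set of all \((r+1)\times(r+1)\) minors, which are continuous functions. The theorem then says the argmax correspondence \(\mu(M)\), the set of minimizers of \(\norm{M-\sigma}_\F\), has nonempty compact values and is measurable. By the Kuratowski--Ryll-Nardzewski selection theorem it admits a Borel measurable selector, which is the desired map.

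If one prefers to avoid the selection theorem, an explicit tie-breaking rule would also work. Embed \(\cD_r(\C^d)\) in \(\R^{d^2}\) and, among the minimizers, choose the lexicographically smallest one. This is done coordinate by coordinate: at each stage, minimize the next coordinate over the compact set of minimizers that already agree in all previous coordinates. Each stage again produces a measurable compact-valued correspondence, by the same theorem applied to a continuous objective on a measurable compact-valued constraint.

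The only point needing care is the closedness of the rank constraint, which makes the constraint set compact. Everything else is a direct citation, so I expect no real obstacle beyond matching the hypotheses of the cited theorem.
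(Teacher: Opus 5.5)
Your proposal is correct and follows the paper's proof: apply the measurable maximum theorem to the constant compact correspondence \(M\mapsto\cD_r(\C^d)\) with the continuous objective \((M,\sigma)\mapsto-\norm{M-\sigma}_\F\). Your check that \(\cD_r(\C^d)\) is compact, using closedness of the rank constraint via minors, supplies a detail the paper leaves implicit. The separate appeal to Kuratowski--Ryll-Nardzewski is harmless but redundant, since Theorem~18.19 itself already asserts that the argmax correspondence admits a measurable selector.
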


\begin{proof}
  Apply the measurable maximum theorem to the constant compact
  correspondence \(M\mapsto\cD_r(\C^d)\) and the continuous objective
  \((M,\sigma)\mapsto-\norm{M-\sigma}_\F\).  The theorem gives a Borel measurable
  selection of minimizers.
\end{proof}

\end{document}